\newif\ifextabs
%\extabstrue

\def\llncs{0}
\def\fullpage{1}
\def\anonymous{0}
\def\authnote{0}
\def\notxfont{0}
\def\submission{0}
\def\reply{0}
\def\cameraready{0}

%%%%%%%%%%%% Define modes %%%%%%%%%%%%%
\ifnum\submission=1
\def\anonymous{1}
\def\authnote{0}
\def\llncs{1}
\else
\fi

\ifnum\cameraready=1
\def\llncs{1}
\def\anonymous{0}
\def\authnote{0}
\else
\fi

\ifnum\llncs=1
	\documentclass[runningheads]{llncs}
\else
	\documentclass[letterpaper,hmargin=1.05in,vmargin=1.05in]{article}
			\ifnum\fullpage=1
		\usepackage{fullpage}
		\fi
\fi

\ifnum\reply=1
\usepackage{ulem}
\renewcommand{\emph}{\textit}
\else
\fi

%%%%% Joc revision setting------------------------------------------
% \ifnum\reply=0  %%%%% Remove comments %%%%%
% \newcommand{\revise}[1]{#1}
% \newcommand{\strike}[1]{}
% \else
% \newcommand{\revise}[1]{{\color{red}{#1}}}
% \newcommand{\strike}[1]{\sout{#1}}
% \fi

%=========  Preamble Part ===========

\usepackage[%
%  dvipdfmx, 
  colorlinks=true,
  citecolor=darkgreen,
  linkcolor=darkblue,
  urlcolor=darkblue,
  pagebackref=true
]{hyperref}

\usepackage{amsmath, amsfonts, amssymb, mathtools,amscd}

\usepackage{amsthm}

\usepackage{lmodern}
\usepackage[T1]{fontenc}
\usepackage[utf8]{inputenc}

\usepackage{etex}

\usepackage{arydshln} % In order to use \hdashline
\usepackage{url}
\usepackage{ifthen}
\usepackage{bm}
\usepackage{multirow}
\usepackage[dvips]{graphicx}
\usepackage[usenames]{color}
\usepackage{xcolor,colortbl} % Leave out in case the usepackage cannot be found. Not critical
\usepackage{threeparttable}
\usepackage{comment}
\usepackage{paralist,verbatim}
\usepackage{cases}
\usepackage{booktabs}
\usepackage{braket}
\usepackage{cancel} 
\usepackage{ascmac} 
\usepackage{framed}
\usepackage{authblk}
\usepackage{pifont}
\usepackage{physics}
\definecolor{darkblue}{rgb}{0,0,0.6}
\definecolor{darkgreen}{rgb}{0,0.5,0}
\definecolor{maroon}{rgb}{0.5,0.1,0.1}
\definecolor{dpurple}{rgb}{0.2,0,0.65}
\definecolor{chocolate}{rgb}{0.8,0.4,0.1}

\usepackage[capitalise,noabbrev]{cleveref}
\usepackage[absolute]{textpos}
\usepackage[final]{microtype}
\usepackage[absolute]{textpos}
\usepackage{everypage}
\usepackage{autonum}

\usepackage{enumitem}

\usepackage{dsfont}
\DeclareMathAlphabet{\mathpzc}{OT1}{pzc}{m}{it}

\ifnum\submission=1
\renewcommand*{\backref}[1]{}
\def\notxfont{1}
\pagestyle{plain}
\else
\fi

\ifnum\llncs=1
\renewcommand{\subparagraph}{\paragraph}
\else
\ifnum\notxfont=1
\else
\usepackage{mathpazo}
\usepackage{newtxtext}
\usepackage{helvet}
\fi
\fi

%%%%%%%%%%%%%%% Proof environment in non-LNCS style %%%%%%%%%%%%%%%%%%%
\newtheoremstyle{thicktheorem}%
{\topsep}
{\topsep}
{\itshape}{}%
{\bfseries}%
{.}
{ }%
{\thmname{#1}\thmnumber{ #2}%
		\thmnote{ (#3)}%
}

\newtheoremstyle{remark}%name
{\topsep}
{\topsep}
	{}%body font
	{}%indent amount
	{}%theorem head font
	{.}%punctuation after theorem head
	{ }%space after theorem head
	{\textit{\thmname{#1}}\thmnumber{ #2}%theorem head specs
			\thmnote{ (#3)}%
	}

\ifnum\llncs=0
	\theoremstyle{thicktheorem}
	\newtheorem{theorem}{Theorem}[section]
	\newtheorem{lemma}[theorem]{Lemma}

	\newtheorem{definition}[theorem]{Definition}

	\theoremstyle{remark}
	\newtheorem{claim}[theorem]{Claim}
	\newtheorem{remark}[theorem]{Remark}

\fi
%%%%%%%%%%%%%%%%%%%%%%%%%%%%%%%%%%%%%%%%%%%%%%%%%%%%%%%%%%%%%%%%%%%%%%%%%%%%

	\crefname{theorem}{Theorem}{Theorems}
	\crefname{assumption}{Assumption}{Assumptions}
	\crefname{construction}{Construction}{Constructions}
	\crefname{corollary}{Corollary}{Corollaries}
	\crefname{conjecture}{Conjecture}{Conjectures}
	\crefname{definition}{Definition}{Definitions}
	\crefname{exmaple}{Example}{Examples}
	\crefname{experiment}{Experiment}{Experiments}
	\crefname{counterexample}{Counterexample}{Counterexamples}
	\crefname{lemma}{Lemma}{Lemmata}
	\crefname{observation}{Observation}{Observations}
	\crefname{proposition}{Proposition}{Propositions}
	\crefname{remark}{Remark}{Remarks}
	\crefname{claim}{Claim}{Claims}
	\crefname{fact}{Fact}{Facts}
	\crefname{note}{Note}{Notes}

\ifnum\llncs=1
 \crefname{appendix}{App.}{Appendices}
 \crefname{section}{Sec.}{Sections}
\else
\fi

\ifnum\llncs=1
\pagestyle{plain}
\renewcommand*{\backref}[1]{}
\else
	\renewcommand*{\backref}[1]{(Cited on page~#1.)}
	\ifnum\notxfont=1
	\else
		\usepackage{newtxtext}
	\fi
\fi

%%%%%%%%%%%%%%%%%%%%%%%%%%%%
% Utilities
%%%%%%%%%%%%%%%%%%%%%%%%%%%%
\newcommand*{\keys}[1]{\mathsf{#1}}

\newcommand*{\algo}[1]{\ensuremath{\mathsf{#1}}}
\newcommand*{\qalgo}[1]{\ensuremath{\mathpzc{#1}}}
\newcommand*{\qstate}[1]{\mathpzc{#1}}

%%% Functions

%%% Parallel Game Experiment
\newcounter{expitem}

% usage
%   \begin{tabular}{r@{\ }p{0.5\textwidth}r@{\ }p{0.5\textwidth}}\toprule
% \Heading{A}{B}
% \expitem{a}{b}
% \expitem{c}{d}
%     \bottomrule
%   \end{tabular}

%%%

%%% Theorem environments END

\usepackage{mathtools}
%%% operators

\newcommand{\chosen}{\leftarrow}
\newcommand{\sample}{\leftarrow}
\newcommand{\lrun}{\leftarrow}
\newcommand{\rrun}{\rightarrow}
\newcommand{\la}{\leftarrow}
\newcommand{\ra}{\rightarrow}
\renewcommand{\gets}{\leftarrow}
\newcommand{\defeq}{\mathbin{\stackrel{\textrm{def}}{=}}}
\newcommand{\seteq}{\coloneqq}

\newcommand{\tensor}{\otimes}

\newcommand{\setbracket}[1]{\{#1\}}
\newcommand{\setbk}[1]{\{#1\}}

%%% Font Style

\newcommand{\cA}{\mathcal{A}}

\newcommand{\cM}{\mathcal{M}}

\newcommand{\cX}{\mathcal{X}}
\newcommand{\cY}{\mathcal{Y}}
\newcommand{\cZ}{\mathcal{Z}}

%%% quantum entiries %%%
\newcommand{\qA}{\qalgo{A}}
\newcommand{\qB}{\qalgo{B}}

\newcommand{\qD}{\qalgo{D}}

%%%%%%%%%%%%%%%%%%%%%%%%

\def\makeuppercase#1{
%\expandafter\newcommand\csname cal#1\endcsname{\mathcal{#1}}
%\expandafter\newcommand\csname adv#1\endcsname{\mathcal{#1}}
\expandafter\newcommand\csname sf#1\endcsname{\mathsf{#1}}
\expandafter\newcommand\csname frak#1\endcsname{\mathfrak{#1}}
\expandafter\newcommand\csname bb#1\endcsname{\mathbb{#1}}
\expandafter\newcommand\csname bf#1\endcsname{\textbf{#1}}
}

\def\makelowercase#1{
\expandafter\newcommand\csname frak#1\endcsname{\mathfrak{#1}}
\expandafter\newcommand\csname bf#1\endcsname{\textbf{#1}}
}

\newcounter{char}
\setcounter{char}{1}

\loop
   \edef\letter{\alph{char}}
   \edef\Letter{\Alph{char}}
   \expandafter\makelowercase\letter
   \expandafter\makeuppercase\Letter
   \stepcounter{char}
   \unless\ifnum\thechar>26
\repeat

\def\makeuppercase#1{
\expandafter\newcommand\csname tl#1\endcsname{\widetilde{#1}}
}

\def\makelowercase#1{
\expandafter\newcommand\csname tl#1\endcsname{\widetilde{#1}}
}

%%% Sets

\newcommand{\bit}{\{0,1\}}

%%% groups and bilinear groups
 %base group
 %base group in the left
 %base group in the right
 %target group
 %default generator in \Gl\
 %default generator in \Gl\
 %default generator in \Gl\
 %default generator in \Gl\
 %default generator in \Gl\
 %default generator in \Gr

 % subgroup of composite order group

%%% spaces
\newcommand{\Fs}{\mathcal{F}}
\newcommand{\Ms}{\mathcal{M}}

%%% parameters

\newcommand{\secp}{\lambda}
\newcommand{\secpar}{\secp}
\newcommand{\sep}{\lambda}
\newcommand{\coin}{\keys{coin}}

\newcommand{\cert}{\keys{cert}}

\newcommand{\crs}{\mathsf{crs}}

\newcommand{\aux}{\mathsf{aux}}

%%% Entities/Parties

\newcommand{\A}{\qA}

%%% abbreviated primitive name

%%% security definitions

%%% assumptions
%%% standard assumptions

\newcommand*{\xProblem}[1]{\ensuremath{\mathrm{#1}}}

\newcommand*{\LWE}{\xProblem{LWE}}

%%% asymmeteric pairing

%%% q-type assumption

%%% Games, Experiments

\newcommand{\adva}[2]{\mathsf{Adv}_{#1}^{\mathsf{#2}}}
\newcommand{\advb}[3]{\mathsf{Adv}_{#1}^{\mathsf{#2} \mbox{-} \mathsf{#3}}}
\newcommand{\advc}[4]{\mathsf{Adv}_{#1}^{\mathsf{#2} \mbox{-} \mathsf{#3} \mbox{-} \mathsf{#4}}}

\newcommand{\expa}[2]{\mathsf{Expt}_{#1}^{\mathsf{#2}}}
\newcommand{\expb}[3]{\mathsf{Exp}_{#1}^{ \mathsf{#2} \mbox{-} \mathsf{#3}}}
\newcommand{\expc}[4]{\mathsf{Exp}_{#1}^{ \mathsf{#2} \mbox{-} \mathsf{#3} \mbox{-} \mathsf{#4}}}

\newcommand{\Hyb}{\mathsf{Hyb}}
\newcommand{\hyb}{\mathsf{Hyb}}

%Keys, messages, ciphertext, signatures
\newcommand*{\pk}{\keys{pk}}
\newcommand*{\sk}{\keys{sk}}
\newcommand*{\dk}{\keys{dk}}
\newcommand*{\ek}{\keys{ek}}
\newcommand*{\dvk}{\keys{dvk}}
\newcommand*{\vk}{\keys{vk}}

\newcommand*{\svk}{\keys{svk}}

\newcommand*{\msk}{\keys{msk}}

\newcommand*{\pp}{\keys{pp}}

\newcommand*{\xk}{\keys{xk}}

\newcommand*{\ct}{\keys{ct}}

\newcommand*{\msg}{\mathsf{msg}}

%%% quantum state keys etc.%%%

\newcommand{\qsk}{\qstate{sk}}

\newcommand{\qdk}{\qstate{dk}}

%%%%%%%%%%%%%%%%%%%%%%%%%%%%%%%%%%%%%%%%%%%%%%%%%%%%%%%%%%%%%%%%%%%%%%%%%%%

\newenvironment{boxfig}[2]{\begin{figure}[#1]\fbox{\begin{minipage}{0.97\linewidth}
                        \vspace{0.2em}
                        \makebox[0.025\linewidth]{}
                        \begin{minipage}{0.95\linewidth}
            {{
                        #2 }}
                        \end{minipage}
                        \vspace{0.2em}
                        \end{minipage}}
                        }
                        {\end{figure}}

%Example:
%\protocol{Header}{Caption}{label}{the protocol}

%%%%%%%%%%%%%%%%%%%%%%%%%%%%%%%%%%%%%%%%%%%%%%%%%%%%%%%%%%%%%%%%%%%%%%%%%%%

%%% special messages

%%%%%%%%%%%%%%%%%%%%%%%%%%%%%%%%
%%% Cryptographic Algorithms %%%
%%%%%%%%%%%%%%%%%%%%%%%%%%%%%%%%

%\newcommand{\Regev}{\algo{\Regev}}
%\newcommand{\RegevEnc}{\algo{\RegevEnc}}
%\newcmmand{\RegevDec}{\algo{\RegevDec}} 

\newcommand{\Setup}{\algo{Setup}}

\newcommand{\KeyGen}{\algo{KeyGen}}

\newcommand{\KG}{\algo{KG}}
\newcommand{\Enc}{\algo{Enc}}
\newcommand{\Dec}{\algo{Dec}}

\newcommand{\Sign}{\algo{Sign}}
\newcommand{\Vrfy}{\algo{Vrfy}}
\newcommand{\DelVrfy}{\algo{DelVrfy}}

\newcommand{\SigVrfy}{\algo{SigVrfy}}
\newcommand{\qSign}{\qalgo{Sign}}

%%% quantum version %%%

\newcommand{\qKG}{\qalgo{KG}}

\newcommand{\qDec}{\qalgo{Dec}}

%%% until here %%%

\newcommand{\TrapGen}{\algo{TrapGen}}

\newcommand\PKE{\algo{PKE}}

\newcommand{\GC}{\algo{GC}}
\newcommand{\Garble}{\algo{Grbl}}

\newcommand{\Sim}{\algo{Sim}}
\newcommand{\lab}{\mathsf{lab}} % labels for GC

%% Programmable hash function

%%% Obfuscation

%%% puncturable PRF

\newcommand{\Eval}{\algo{Eval}}

%%% LTDF ABO-TDF

%%% Watermarking

\newcommand{\Mark}{\mathsf{Mark}}

\newcommand{\Extract}{\mathsf{Ext}}

%%% Error Correcting Codes

%%% Commitment

%%% KEM/DEM

%%% Indistinguishability

%\newcommand{\check}{\stackrel{{?}{=}}

\newcommand{\negl}{{\mathsf{negl}}}

%\newcommand{\adva}[2]{\mathsf{Adv}_{#1}^{\mathrm{#2}}}

%%----------------------

%%-------------------------------

%%%%%%%%%%%%%%%%%%%%%%% text macros

%mathop
%%%%%%%%%%%%%%%%%%%%%%% general useful macros

\newcommand{\poly}{{\mathrm{poly}}}

\newcommand{\zo}[1]{\{0,1\}^{#1}}
\newcommand{\bin}{\{0,1\}}

\newcommand{\xor}{\oplus}

 % and their complements

%\newcommand{\NC}{\class{NC}}

%\newcommand{\Diam}{\mathrm{Diam}}
%\newcommand{\Cut}{\mathrm{Cut}}
%\newcommand{\pf}{\mathit{pf}}
%\newcommand{\Col}{\mathrm{Col}}
%\newcommand{\Supp}{\mathrm{Supp}}

\newcommand{\Ext}{\mathrm{Ext}}

%\newcommand{\eps}{\varepsilon}

%%%%%%%%%%%%%%%%%%%%%%%%%%%%%%%%%%%%%%%%%%%%%%
%%%%%%%%%%%%%%%%%%%%%%%%%%%%%%%%%%%%%%%%%%%%%%
%%%%%%%%%%%%%%%%%%%%%%%%%%%%%%%%%%%%%%%%%%%%%%

%%%fonts

\newcommand{\calM}{\mathcal{M}}

\newcommand{\calO}{\mathcal{O}}

\newcommand{\calZ}{\mathcal{Z}}

%%%hat

%\newcommand{\fhat}{\widehat{f}}

%%%sf
%\newcommand{\sfp}{\mathsf{p}}

%%%tilde

\newcommand{\Ctilde}{{\widetilde{C}}}

%%%star

\newcommand{\Domprf}{D_{\mathsf{prf}}}
\newcommand{\Ranprf}{R_{\mathsf{prf}}}
\newcommand{\qLEval}{\qalgo{LEval}}
\newcommand{\prfinp}{s}
\newcommand{\prfout}{t}

\newcommand{\UPFSKL}{\mathsf{UPFSKL}}

\newcommand{\chall}{\mathsf{chall}}

\newcommand{\ans}{\mathsf{ans}}
%%%%%%%%%%%%%%%%%%%%%%%%%%%%%%%%%%%%%%%%%
%\newcommand{\PKESKL}{\algo{PKE}\textrm{-}\algo{SKL}}
\newcommand{\PKESKL}{\algo{PKESKL}}
\newcommand{\intPKESKL}{\PKESKL_{\mathsf{int}}}
\newcommand{\qIntKeyGen}{\qalgo{IntKeyGen}}
\newcommand{\PRFSKL}{\algo{PRFSKL}}
\newcommand{\DSSKL}{\algo{DSSKL}}

\newcommand{\OT}{\mathsf{OT}}

\newcommand{\st}{\mathsf{st}}

%%% Macros for NTCF
\newcommand{\NTCF}{\algo{NTCF}}

\newcommand{\GoodSet}{\algo{GoodSet}}
\newcommand{\Supp}{\mathrm{Supp}}

\newcommand{\lenx}{w}
\newcommand{\lenst}{u}
\newcommand{\lenmsg}{v}
\newcommand{\lenm}{\ell}

\newcommand{\FuncGen}{\mathsf{FuncGen}}
\newcommand{\StateGen}{\qalgo{StateGen}}
\newcommand{\Chk}{\algo{Check}} % The name check not allowed
\newcommand{\mode}{\mathsf{mode}}

\newcommand{\sigvk}{\mathsf{svk}}

\newcommand{\qsigk}{\qstate{sigk}}

\newcommand{\pke}{\mathsf{pke}}

\newcommand{\SFE}{\mathsf{SFE}}
\newcommand{\StaRcv}{\mathsf{StaRcv}}

\newcommand{\WPKE}{\mathsf{WPKE}}
\newcommand{\WPRF}{\mathsf{WUPF}}
\newcommand{\WDS}{\mathsf{WDS}}
\newcommand{\CS}{\mathsf{CS}}
\newcommand{\qP}{\qalgo{P}}
\newcommand{\TEPRF}{\mathsf{TEPRF}}
\newcommand{\hatm}{\hat{m}}
\newcommand{\hats}{\hat{s}}
\newcommand{\hatt}{\hat{t}}
\newcommand{\hatsigma}{\hat{\sigma}}

\newcommand{\CRSGen}{\mathsf{CRSGen}}
\newcommand{\Receive}[1]{\mathsf{Rec}_{#1}}
\newcommand{\qReceive}[1]{\qalgo{Rec}_{#1}}
\newcommand{\Send}{\mathsf{Send}}

\newcommand{\key}{\mathsf{key}}
\newcommand{\extk}{\mathsf{xk}}
\newcommand{\Constrain}{\mathsf{Constrain}}

\newcommand{\qF}{\qalgo{F}}

\newcommand{\qDel}{\qalgo{Del}}
\newcommand{\Del}{\algo{Del}}

\newcommand{\IsMessy}{\algo{IsMessy}}

\newcommand{\qExt}{\qalgo{Ext}}

\newcommand{\event}{\mathsf{E}}

%%PKE-SKL%

\newcommand{\verdict}{\mathsf{vrd}}

%%%%commands for FE (moved from PKFE.SKL.tex)

%
% Old version
%\newcommand{\ADAINDKLA}{\mathsf{Ada}\textrm{-}\mathsf{IND}\textrm{-}\mathsf{KLA}}
%\newcommand{\SELINDKLA}{\mathsf{Sel}\textrm{-}\mathsf{IND}\textrm{-}\mathsf{KLA}}
%
%%%%commands for ABE

%\newcommand{\TrapGen}{\mathsf{TrapGen}}

\newcommand{\Invert}{\mathsf{Invert}}

\newcommand{\sigk}{\mathsf{sigk}}
\newcommand{\td}{\mathsf{td}}

\newcommand{\QSign}{\qalgo{QSign}}

%taken from KNYY
 
\newcommand{\NN}{\mathbb{N}}   
\newcommand{\ZZ}{\mathbb{Z}}
\newcommand{\RR}{\mathbb{R}}  
\newcommand{\CC}{\mathbb{C}} 
\newcommand{\mA}{\mathbf{A}}    \newcommand{\mB}{\mathbf{B}}
    
\newcommand{\mH}{\mathbf{H}}    
    
\newcommand{\mT}{\mathbf{T}}    
    
\newcommand{\mC}{\mathbf{C}}

\newcommand{\ve}{\mathbf{e}}    
    
\newcommand{\vs}{\mathbf{s}}    \newcommand{\vt}{\mathbf{t}}
\renewcommand{\vu}{\mathbf{u}} 
\newcommand{\vv}{\mathbf{v}} 
\newcommand{\vx}{\mathbf{x}}    
    \newcommand{\vw}{\mathbf{w}}

 \newcommand{\vr}{\mathbf{r}}

\newcommand{\trans}{\top}

\newcommand{\bd}{\begin{description}}
\newcommand{\ed}{\end{description}}
\newcommand{\bt}{\begin{theorem}}
\newcommand{\et}{\end{theorem}}
\newcommand{\bc}{\begin{claim}}
\newcommand{\ec}{\end{claim}}
\newcommand{\bl}{\begin{lemma}}
\newcommand{\el}{\end{lemma}}
\newcommand{\br}{\begin{remark}}
\newcommand{\er}{\end{remark}}
\newcommand{\bde}{\begin{definition}}
\newcommand{\ede}{\end{definition}}
\newcommand{\bi}{\begin{itemize}}
\newcommand{\ei}{\end{itemize}}
\newcommand{\be}{\begin{enumerate}}
\newcommand{\ee}{\end{enumerate}}

\newcommand{\authornote}[3]{\textcolor{#3}{[\textbf{#1:} {#2}]}}
\ifnum\authnote=1
\newcommand{\fuyuki}[1]{\authornote{fuyuki}{#1}{chocolate}}
\newcommand{\takashi}[1]{\authornote{takashi}{#1}{dpurple}}
\newcommand{\shota}[1]{\authornote{shota}{#1}{blue}}
\newcommand{\synote}[1]{\authornote{shota}{#1}{blue}}
\newcommand{\jiahui}[1]{\authornote{jiahui}{#1}{magenta}}
\else
\newcommand{\fuyuki}[1]{}
\newcommand{\takashi}[1]{}
\newcommand{\shota}[1]{}
\newcommand{\synote}[1]{}
\newcommand{\jiahui}[1]{}
\fi

%-------- START: LNCS branch ----------
\ifnum\llncs=1
%  See the following url for the meaning of the commands.   https://tex.stackexchange.com/questions/160053/vec-conflict-between-llncs-cls-and-mnsymbol
\let\oldvec\vec% Store \vec in \oldvec
\let\vec\oldvec% Restore \vec from \oldvec
%
% for page number llncs
%\pagestyle{headings} 
%
% make a proper TOC despite llncs
\setcounter{tocdepth}{2}
\makeatletter
\renewcommand*\l@author[2]{}
\renewcommand*\l@title[2]{}
\makeatletter
\fi    
%-------- END: LNCS branch ----------

\theoremstyle{remark}

%================================
%=========  MAIN BODY ============    
%================================

\ifnum\submission=1
\title{
\textbf{A Unified Approach to Quantum Key Leasing\\ with a Classical Lessor}
}
\else
\title{
\textbf{A Unified Approach to Quantum Key Leasing\\ with a Classical Lessor}
}
\fi

%\date{}
\begin{document}
%\author{}
%\institute{}

\ifnum\anonymous=1 
\ifnum\llncs=1
% for anonymized LNCS version
\author{\empty}\institute{\empty}
\else
% for anonymized full version
\author{}
\fi
\else
%
%  For camera ready version.
%
\ifnum\llncs=1
\author{
Fuyuki Kitagawa\inst{1,2} \and 
Jiahui Liu\inst{3} \and
Shota Yamada\inst{4} \and
Takashi Yamakawa\inst{1,2}
}
\institute{ 
NTT Social Informatics Laboratories, Tokyo, Japan \and
NTT Research Center for Theoretical Quantum Information, Atsugi, Japan \and
Fujitsu Research, Santa Clara, USA \and
AIST, Tokyo, Japan
}
\else
%
%   For full/eprint version, etc.
%
\author[1,2]{\hskip 1em Fuyuki Kitagawa}
\author[3]{\hskip 1em Jiahui Liu}
\author[4]{\hskip 1em Shota Yamada}
\author[1,2]{\hskip 1em Takashi Yamakawa}
\affil[1]{{\small NTT Social Informatics Laboratories, Japan}\authorcr{\small \{fuyuki.kitagawa,takashi.yamakawa\}@ntt.com}}
\affil[2]{{\small NTT Research Center for Theoretical Quantum Information, Japan}}
\affil[3]{{\small Fujitsu Research of America, USA}\authorcr{\small  jiahuiliu.crypto@gmail.com}}
\affil[4]{{\small AIST, Japan}\authorcr{\small  yamada-shota@aist.go.jp}}

\fi %%%%% END OF LNCS branch
\fi

\ifnum\llncs=1
\date{}
\else
\ifnum\anonymous=0
\date{\today}
\else
\date{}
\fi
\fi

\maketitle

\begin{abstract}
Secure key leasing allows a cryptographic key to be leased as a quantum state in such a way that the key can later be revoked in a verifiable manner. In this work, we propose a modular framework for constructing secure key leasing with a classical-lessor, where the lessor is entirely classical and, in particular, the quantum secret key can be both leased and revoked using only classical communication. Based on this framework, we obtain classical-lessor secure key leasing schemes for public-key encryption (PKE), pseudorandom function (PRF), and digital signature. We adopt the strong security notion known as security against verification key revealing attacks (VRA security) proposed by Kitagawa et al. (Eurocrypt 2025) into the classical-lessor setting, and we prove that all three of our schemes satisfy this notion under the learning with errors assumption. Our PKE scheme improves upon the previous construction by Chardouvelis et al. (Eurocrypt 2025), and our PRF and digital signature schemes are respectively the first PRF and digital signature with classical-lessor secure key leasing property.

Along the way, we also construct a watermarking scheme and a dual-mode secure function evaluation scheme that satisfy certain useful properties, which may be of independent interest.

\fuyuki{Do we need to say something about watermarking or SFE??}\jiahui{added one sentence}
\fuyuki{\cite{AC:PWYZ24} achieved classical communication PKE-SKL assuming poly modulus LWE, so their PKE-SKL and ours are incomparable. Is it misleading to say ours improves \cite{chardouvelis2025quantum} ignoring \cite{AC:PWYZ24}?} \jiahui{just added \cite{AC:PWYZ24} to the table as well}
\end{abstract}

\ifnum\llncs=1
\else
\newpage
\setcounter{tocdepth}{2}
\tableofcontents

\newpage
\fi
\ifnum\llncs=0
% !TEX root = main.tex

\section{Introduction}
\paragraph{Delegation and Revocation of Cryptographic Functionalities}
Delegation of computation is a central theme in cryptography and the practice of cloud computing. Beyond efficiency and correctness, many real-world scenarios demand revocability: a client who temporarily outsources a capability must be able to withdraw it cleanly, without incurring operational hazards or reconfiguring public parameters. Consider a manager who leases signing capability to a delegate while on vacation; upon return, she must revoke the delegatee's signing power. The classical solution to this problem requires not only updating the secret signing key, but also updating and re-announcing public verification parameters.  This procedure can be costly and can create vulnerability during propagation. This motivates us to look for a cryptographic mechanism that can revoke functionality without updating public parameters or incurring propagation delays

A formalization of the above protocol is called secure key leasing (also known as key-revocable cryptography), first put forward by \cite{EC:AKNYY23,TCC:AnaPorVai23}: a lessor issues a quantum secret key that enables some functionality (e.g., decryption, PRF evaluation, or signing) to a lessee (server), and later executes a deletion protocol: the lessee (server) must return the key or a proof of deleting the key; this deletion protocol is supposed to provably strip the lessee of the capability of computing the functionality. Naturally, secure key leasing (SKL) also captures temporary licensing for general programs apart from delegation of computing: a software lessor leases quantum software whose functionality can later be invalidated upon successful return and verification, for instance, when the lessee's subscription to the software has expired. Furthermore, a secure leasing scheme for a cryptographic function can potentially lead to a secure leasing scheme for
a software entity of which this cryptographic function is a component.
%\jiahui{Will add later a little discussions on applications of leasing cryptographic functions }

However, this goal is classically impossible as the lessee/server can always make a copy of the key or software. As shown in a sequence of previous works, \takashi{I'm not sure which paper we should cite since it seems that we are talking about the general concept of unclonability here. Perhaps, we do not need to cite anything here.} \jiahui{added a generic saying}leveraging the unclonability of quantum information  will help us construct protocols that satisfy all the above requirements.

\paragraph{Importance of a Classical Client/Lessor} While we need to utilize the property of quantum information to realize the secure key leasing protocol, minimizing the cost and quantum resources used is also indispensable. Especially in the near-term quantum computing era and practical deployments, it is crucial to minimize the  required quantum computing resources on the client's side.
For instance, in real-world scenarios of cloud computing, the client will ideally use much less computing resources than the server uses when interacting with the server, otherwise the task of delegation is not very meaningful. 
When designing a protocol for cloud quantum computing or software leasing, 
 the lessor (client) ideally \emph{remains entirely classical and communicates over classical channels} throughout the protocol, while the lessee (server) performs the quantum work locally on its end. 
%This classical-lessor setting has been realized for public-key encryption (PKE)—and extended to fully homomorphic encryption (FHE) from standard lattice assumptions (LWE), with security that no polynomial-time quantum adversary can both furnish a valid (classical) deletion certificate and still distinguish ciphertexts. 

\paragraph{Towards a Modular Approach}
A sequence of works has realized secure key leasing (SKL) for several functionalities—PKE, FHE, PRFs, and digital signatures—with the most recent results based on standard cryptographic assumptions \cite{chardouvelis2025quantum,TCC:AnaHuHua24,kitagawa2025simple,kitagawa2025pke}. However, these constructions are largely ad hoc and tailored to each functionality rather than arising from a common, modular blueprint. Moreover, the only known scheme with a fully classical lessor (client) in \cite{chardouvelis2025quantum} applies to PKE and FHE; their underlying approach seems relatively tailored to encryption schemes and is not known to generalize to other primitives. \cite{kitagawa2025simple} provides a relatively unified approach to construct SKL protocols for several cryptographic functionalities, but the lessor still needs to be fully quantum and there does not seem to be a straightforward way to dequantize the lessor.

Motivated by these limitations, we ask the following question: 
\begin{center}
\emph{Can we develop a modular SKL framework that supports a classical lessor and spans a broad range of widely used cryptographic functionalities?}    
\end{center}

\subsection{Our Results}
\label{sec:our results}

\paragraph{Main Result: Classical-Lessor SKL Framework}
We first state our main result, which answers the above question in a positive way. 

We propose a simple framework for constructing secure key leasing schemes with a fully classical lessor.
Based on our framework, we obtain the following schemes for ``upgrading'' some very generic cryptographic primitives to their corresponding classical-lessor SKL schemes, using only the post-quantum security of standard lattice assumptions. \takashi{We should clarify that we rely on LWE with subexp modulus while PWYZ24 only uses LWE with polynomial modulus.}\jiahui{Added the emphasis in the above form and added it to related works section}
\begin{itemize}
\item PKE-SKL with classical lessor based on (sub-exponential) LWE. The underlying PKE scheme is, however, \emph{not} limited to any specific LWE-based construction.
More concretely, we can ``upgrade'' \emph{any post-quantum IND-CPA secure classical PKE scheme} to a PKE-SKL with classical lessor, using LWE. 
%\footnote{

While we still need the LWE assumption, we use the other two primitives in our construction, the noisy trapdoor claw-free families (NTCFs) and secure function evaluation (SFE) also 
\ifnum\cameraready=1 \emph{in a black-box way}.
\else
\emph{in a black-box way} 
(see Theorems \ifextabs 6.2, 7.2, 8.2 in the technical manuscript.\else ~\ref{thm-proof-PKESKL},~\ref{thm-proof-skl-prf},~\ref{thm-proof-skl-ds}.\fi). \fi 
They have potentially non-LWE-based constructions\footnote{For instance, NTCFs have a candidate construction from group-actions (\cite{alamati2022candidate}); the dual-mode SFE could potentially be constructed from code-based assumptions \cite{dowsley2008oblivious} and LPN variants \cite{david2014universally}.}.  Therefore, it is meaningful to modify the highly LWE-structured PKE-SKL protocol in \cite{chardouvelis2025quantum,AC:PWYZ24} into a more modular protocol.
%} 

Moreover, it is unlikely that we can obtain PKE-SKL with a  classical lessor from any PKE alone without reliance on any other structures, otherwise we will obtain a classically verifiable quantum advantage from any PKE (\cite{kitagawa2025simple}). 
\takashi{As pointed out by Reviewer C, I believe we should not sell  generality of the instantiation.
That is, I'm wondering if we should simply remove the second and third sentences here. The same comment applies to PRF and DS.
} \jiahui{I believe it is still meaningful that our underlying PKE is independent of LWE. It's very probable that the other primitives (NTCF and SFE) we use do not need to rely on LWE either. }

\item PRF-SKL with classical lessor from LWE. The underlying PRF scheme used can be any post-quantum watermarkable PRF, which we upgrade to a classical lessor PRF-SKL with LWE. 

\item DS-SKL with classical lessor based on LWE. %quantum hardness of the 
%short integer solution (SIS) assumption. 
For digital signatures, our construction is less black-box.
We lift a digital signature to a classical-lessor DS-SKL through the same framework as we apply to the PKE and PRF above. But we need a special property of the underlying digital signature scheme, which we can obtain from LWE.

%Our scheme has static signing keys, i.e., the states of a signing key almost does not change before and after signing. In particular, this implies that the sizes of signing keys and signatures are independent of the number of issued signatures.    
%It does not require a state of size depending on the number of issued signatures.
\end{itemize}
Finally, all of our schemes remain secure even if a deletion verification key used by the deletion verification algorithm is leaked after the adversary submits a valid certificate of deletion. This security is called security against verification key revealing attack (VRA security), as proposed in \cite{kitagawa2025simple}.

For a comparison with existing results on the aforementioned properties of SKL, we present Tables~\ref{table:pkeskl},\ref{table:prfskl},\ref{table:dsskl}.

\paragraph{Non-interactive key generation} 
We further improve our classical-lessor SKL scheme to acquire a non-interactive quantum key generation 
algorithm (whereas the vanilla protocol requires a constant round interaction) through a clean technique. This is a non-trivial property for SKL with a fully classical lessor. While this property is easy to achieve for a quantum lessor SKL--simply letting the lessor prepare the quantum key and send it to the lessee, a classical lessor can only send classical information to the lessee 
%to allow it prepare a resulting quantum state that the lessee cannot ``clone''.  
and may need interactive messages from the lessee to prevent its malicious behavior during key generation.\footnote{\cite{chardouvelis2025quantum,AC:PWYZ24} also have non-interactive key generation algorithms for their PKE-SKL with classical lessor, but their non-interaction relies on the LWE-structure of the scheme. Our approach of removing interaction is more black-box. See the technical overview for more discussions. }
\shota{I think PRF and DS cannot be non-interactive. Shall we remove the following?}\jiahui{removed}
%Our approach of removing interaction is more black-box and therefore applicable to our PRF-SKL and DS-SKL protocols. We believe it may be applicable for reducing communication rounds in broader unclonable cryptography protocols in a classical client-quantum server mode.}. %All previous works on classical-lessor SKL (\cite{chardouvelis2025quantum,AC:PWYZ24}) require interactions for key generation\footnote{Their need for interaction is implicit but necessary for full security. See \cite{chardouvelis2025quantum} Section A.1}. Therefore, ours is the first to achieve this property. \takashi{I believe this was not true; the previous works were also non-interactive.}

\paragraph{Side results}
Along the way of constructing our classical-lessor SKL framework, we construct a special dual-mode Secure Function Evaluation (SFE) scheme  and a watermarking scheme with parallel extraction. They satisfy many useful properties, which we believe can be of independent interest.

\begin{table}[t]
\centering
%\begin{tabular}{cc}
%\begin{minipage}{0.5\textwidth}
%\centering
\caption{Comparison of PKE-SKL}
\label{table:pkeskl}
\begin{tabular}{ccccc}
\toprule
& Classical-Lessor &VRA security  &Assumption \\
\midrule
\cite{EC:AKNYY23}& & & PKE\\
\midrule
\cite{TCC:AnaPorVai23,TCC:AnaHuHua24}&  & &LWE\\
\midrule
\cite{chardouvelis2025quantum,AC:PWYZ24} & \checkmark & & LWE (poly-modulus LWE in \cite{AC:PWYZ24}) \\
\midrule
\cite{kitagawa2025simple} &  & \checkmark &  PKE \\
\midrule
Ours& \checkmark & \checkmark & LWE\\
\bottomrule
\end{tabular}
%\end{minipage}
%&
%\begin{minipage}{0.5\textwidth}
\end{table}

\begin{table}[t]
\centering
\caption{Comparison of PRF-SKL}
\label{table:prfskl}
\begin{tabular}{ccccc}
\toprule
& Classical-Lessor &VRA security &Assumption\\
\midrule
\cite{TCC:AnaPorVai23,TCC:AnaHuHua24}& & &LWE\\
\midrule
\cite{kitagawa2025simple} &  & \checkmark & OWF\\
\midrule
Ours & \checkmark & \checkmark & LWE \\
\bottomrule
\end{tabular}
%\end{minipage}
%\end{tabular}
\end{table}

\begin{table}[t]
\centering
\caption{Comparison of DS-SKL}
\label{table:dsskl}
\begin{tabular}{ccccccc}
\toprule
& Classical-Lessor &VRA  security  &Signing key size&Signature size &Assumption\\
\midrule
\cite{TQC:MorPorYam24}& & &$n\cdot \poly(\secp)$& $n\cdot \poly(\secp)$& LWE\\
\midrule
\cite{TQC:MorPorYam24}& & &$n\cdot \poly(\secp)$& $n\cdot \poly(\secp)$& OWGA\\
\midrule
\cite{kitagawa2025simple}& &\checkmark &$\poly(\secp)$&$\poly(\secp)$& SIS\\
\midrule
Ours&\checkmark &\checkmark &$\poly(\secp)$&$\poly(\secp)$& LWE\\
\bottomrule
\end{tabular}\\
$n$: the number of issued signatures, $\secp$: the security parameter.\jiahui{Check this table?}
\synote{Changed our assumption to LWE. However, LWE is quantumly as hard as SIS in certain parameter regime \cite{AC:SSTX09}. \fuyuki{An explanation for the term OWGA would be required? OWGA is introduced in the related work section.}}
\end{table}

\subsection{Related Works}

\paragraph{More on SKL}
The work of~\cite{EC:AKNYY23} realized the first construction of PKE-SKL from any PKE scheme.
They also realized unbounded (resp. bounded) collusion-resistant PKFE-SKL by combining PKE-SKL with unbounded (resp. bounded) collusion-resistant classical PKFE.
Independently, \cite{TCC:AnaPorVai23} proposed constructions of PKE-SKL, FHE-SKL, and PRF-SKL under the LWE assumption, though their security proofs relied on an additional unproven conjecture.
This reliance on unproven conjecture was later eliminated by \cite{TCC:AnaHuHua24}.
Moreover, \cite{EC:BGKMRR24} presented a publicly verifiable PKFE-SKL scheme, where the verification key can be made public, based on indistinguishability obfuscation (iO).

\cite{chardouvelis2025quantum} constructed PKE-SKL and FHE-SKL with a classical-lessor under the LWE assumption.
Their schemes assumed sub-exponential hardness of the LWE problem with sub-exponential modulus, which was subsequently improved to polynomial hardness of the LWE problem with polynomial modulus by~\cite{AC:PWYZ24}.
\fuyuki{Should we say that our work assumes polynomial hardness of the LWE problem with sub-exponential modulus somewhere in the intro?}

\cite{TQC:MorPorYam24} developed two DS-SKL schemes: one based on the LWE assumption and another on one-way group actions (OWGA)~\cite{TCC:JQSY19}.
A limitation of their constructions is that the signing key evolves by each signature generation, leading to signing keys and signatures whose sizes grow linearly in the total number of signatures.

Finally, \cite{kitagawa2025pke} proposed collusion-resistant PKE-SKL and ABE-SKL schemes that support leasing an a-priori unbounded number of quantum secret keys, under the LWE assumption.

\paragraph{Secure software leasing.}
Secure software leasing (SSL) was introduced by \cite{EC:AnaLaP21} as a method to encode classical programs into quantum states that can be leased in a secure manner.
The idea of SKL is directly inspired by SSL and can be regarded as its specialization for cryptographic functionalities.
Much of the existing research on SSL~\cite{EC:AnaLaP21,TCC:KitNisYam21,TCC:BJLPS21} has focused on a weaker security model, where pirated programs are restricted to employ the \emph{honest} evaluation procedure.
In contrast, \cite{EC:BGKMRR24} proposed an SSL construction for all differing-input circuit pairs\footnote{Informally, a pair of circuits $(C_0,C_1)$ is said to be differing input if it is computationally difficult to find an input $y$ such that $C_0(y)\ne C_1(y)$.}, under a stronger model that allows pirated programs to use arbitrary evaluation strategies similarly to SKL.
Their scheme, however, relies on iO.

\paragraph{Encryption with certified deletion}
An encryption scheme with certified deletion~\cite{TCC:BroIsl20} allows a quantum ciphertext to be deleted in a verifiable way.
This notion is closely connected to SKL, as one may view SKL as providing certified deletion of cryptographic keys rather than ciphertexts.
In their initial work, \cite{TCC:BroIsl20} presented an unconditionally secure one-time secret-key encryption scheme with certified deletion.
Subsequent works~\cite{AC:HMNY21,ITCS:Poremba23,C:BarKhu23,EC:HKMNPY24,EC:BGKMRR24} have explored certified deletion for richer forms of encryption, including public-key encryption, attribute-based encryption, functional encryption, fully homomorphic encryption, and witness encryption.
Also, \cite{TCC:KitNisYam23,TCC:BKMPW23} extended these results and obtained publicly verifiable certified deletion schemes under minimal assumptions.

\subsection{Concurrent Work}
Concurrent work by Takeuchi and Xu~\cite{takeuchi2025computationalcertifieddeletionproperty} obtains a result similar to ours. In particular, they also construct PKE-SKL, PRF-SKL, and DS-SKL with classical lessors from the LWE assumption. However, their security proof relies on an additional conjecture concerning parallel repetition of a certified-deletion property of the magic square game. Moreover, their key generation procedure requires polynomially many rounds, whereas ours is non-interactive for PKE-SKL and requires only a constant number of rounds for PRF-SKL and DS-SKL.

\subsection{Technical Overview}
\label{sec:tech overview}

\paragraph{Security notion for secure key leasing}
The security of the secure key leasing scheme roughly guarantees that if the adversary who has performed the key generation algorithm with the challenger to obtain a quantum secret key $\qsk$, later outputs a deletion certificate that is accepted by %$\DelVrfy(\dvk,\cdot)$, 
the deletion verification algorithm $\DelVrfy$, 
the adversary can no longer perform the task that requires $\qsk$.

%As stated in \cref{sec-our-results}, 
Like \cite{kitagawa2025simple}, our schemes remain secure even if the secret deletion verification key is given to the adversary after the adversary outputs a deletion certificate.
Such security definition is called security against verification key revealing attacks (VRA security),  defined in \cite{kitagawa2025simple}.  %VRA security is stronger than security under the existence of the verification oracle for secure key leasing schemes with classical revocation. 
We refer the readers to \cite{kitagawa2025simple} for more detailed discussions on the definitions.

%that VRA security is stronger than security under the existence of the verification oracle for secure key leasing schemes with classical revocation, which was studied in a previous work~\cite{EC:AKNYY23}. \ifnum\llncs=0(See \Cref{sec:comparison_def}.)\fi

\jiahui{check this part}\fuyuki{Checked and modified. (The upgrade from UPF-SKL to PRF-SKL using quantum GL was shown in \cite{kitagawa2025simple}.)}
We can define both an unpredictability style security definition and an indistinguishability style one for PKE-SKL and PRF-SKL, though it is inherently unpredictability style for DS-SKL.
The former can be upgraded into the latter by using the standard (quantum) Goldreich-Levin technique as shown in previous works \cite{EC:AKNYY23,kitagawa2025simple}.
Thus, in this work, we primarily focus on constructing secure key leasing primitives satisfying unpredictability style security notions.

The unpredictability style  security experiment for X-SKL is abstracted by using a predicate $P$ that determines the adversary's goal.
%In the experiment, an adversary is first given a quantum secret key $\qsk$ and then outputs a deletion certificate $\cert$.
%If $\cert$ is accepted, the challenger generates a challenge instance $\chall$ and corresponding auxiliary information $\aux$, and sends $\chall$ to the adversary together with the verification key $\dvk$.
%The adversary's goal is to output an answer $\ans$ to the challenge that satisfies the predicate $P(\aux,\cdot)$.
It is described as follows.

\begin{enumerate}
\item The challenger and the adversary $\qA$ generate $\qsk$ %and $\dvk$ 
by executing the (possibly interactive) key generation algorithm $\qIntKeyGen$. If $\qA$ follows the protocol, it gets $\qsk$, but it may deviate from the honest protocol.\synote{I changed the explanation a bit. Previously, it is written $\qsk$ is sent to $\qA$. } (If X is PKE or DS, $\qA$ is also given the corresponding public key output by $\qIntKeyGen$.)
\item $\qA$ outputs a classical string $\cert$. If $\DelVrfy(\tau, \dvk,\cert)=\bot$, where $\tau$ is the transcript of the interaction and $\dvk$ is the deletion verification key, the experiment ends. Otherwise, the challenger generates a challenge $\chall$ and corresponding auxiliary information $\aux$, and sends $\dvk$ and $\chall$ to $\qA$. Note that $\aux$ is used by the predicate $P$ to determine whether $\qA$ wins the experiment or not.
\item $\qA$ outputs $\ans$.
\end{enumerate}
$\qA$ wins the experiment if both $\DelVrfy(\tau,\dvk,\cert)=\top$ and $P(\aux,\ans)=1$ holds simultaneously.
For example, if X is PKE, $\chall$ is a ciphertext of a uniformly random message $m$, $\aux$ is $m$, and $\ans$ is the guess $m^\prime$ for $m$ by $\qA$. Also, $P(\aux=m,\ans=m^\prime)$ outputs $1$ if and only if $m^\prime=m$. %\takashi{It may be useful to explain it also for PRF and DS to understand the intution of what $\ans$ and $P$ mean.}
It guarantees that the winning probability of any QPT $\qA$ is negligible in the security parameter.

 For the next few paragraphs in the technical overview, we first focus on secure key leasing for decryption keys for a PKE scheme and discuss how we generalize to other cryptographic functionalities later.

\paragraph{BB84 states based quantum key leasing}
We start by basing our initial ideas on the secure key leasing framework in \cite{kitagawa2025simple}, which is built on the properties of BB84 states~\cite{BB84}. 
BB84 states are states with the following format: for strings $x\in\bit^n$ and $\theta\in\bit^n$, we let a BB84 state  $\ket{x^\theta} \seteq H^{\theta[1]}\ket{x[1]}\tensor\cdots \tensor H^{\theta[n]}\ket{x[n]}$, where  
$H$ is the Hadamard operator, and $x[i]$ and $\theta[i]$ are the $i$-th bits of $x$ and $\theta$, respectively.
%A state of the form $\ket{x^\theta}$ is called a BB84 state~\cite{BB84}. 
%\jiahui{simplify this part since we are not using this property}
To put in a very informal manner, BB84 states are unclonable and have been used in many quantum cryptographic protocols. For example, \cite{kitagawa2025simple} deployed the following uncertainty-principle-like property of BB84 states (called certified deletion property in \cite{EPRINT:BehSatShi21}): if an adversary $\qA$ without the knowledge of $x,\theta$ can correctly output all the value $x[i]$ encoded in Hadamard basis (i.e. $x[i]$ for all $i$ where $\theta[i] = 1$), it is impossible for $\qA$ to obtain the value of $x[i]$ for the positions $i$ encoded in the computational basis (i.e. $x[i]$ for all $i$ where $\theta[i] = 0$). \footnote{As we will discuss later, we will not directly use the property of BB84 states as \cite{kitagawa2025simple} does, so we will not go into more details here.}

\iffalse
Here, they take use of a specific unclonable property of BB84 states called \emph{certified deletion property} in \cite{EPRINT:BehSatShi21}.
Consider the following experiment.
\begin{enumerate}
\item The challenger picks uniformly random strings $x\in\bit^n$ and $\theta\in\bit^n$ and sends the BB84 state $\ket{x^\theta}$ to $\qA$. 
\item $\qA$ outputs a classical string $y\in\bit^n$.
\item $\qA$ is then given $\theta$ and $(x[i])_{i\in[n]:\theta[i]=1}$, and outputs a classical string $z\in\bit^n$.
\end{enumerate}
$\qA$ wins the experiment if $y[i]=x[i]$ for every $i\in[n]$ such that $\theta[i]=1$ and $z[i]=x[i]$ for every $i\in[n]$ such that $\theta[i]=0$.
The certified deletion property of BB84 states guarantees that the winning probability of any adversary $\qA$ is negligible in $n$.

%The string $y$ output by $\qA$ can be seen as a deletion certificate that is accepted if it agrees with $x$ in the Hadamard basis positions.  The verification of the deletion certificate is done by the deletion verification key $(\theta, (x[i])_{i\in[n]:\theta[i]=1})$.
Note that $\qA$ can generate an accepting deletion certificate, the string $y$, by just measuring the given BB84 state $\ket{x^\theta}$ in the Hadamard basis. Anyone with the information of $(\theta, (x[i])_{i\in[n]:\theta[i]=1})$ can verify the outcome.
Then, the property says that once $\qA$ outputs an accepting deletion certificate, it is impossible for $\qA$ to obtain the value of $x$ in all of the computational basis positions even given the deletion verification key $(\theta,(x[i])_{i\in[n]:\theta[i]=1})$.
\fi

However, as we can see, BB84 states alone are simply random strings encoded in random bases. To formulate a secure key leasing scheme for decryption keys of a PKE, one needs to carefully embed a decryption key into the structure of BB84 states  so that we can obtain an arbitrary-polynomially many-time  reusable quantum decryption key and meanwhile maintain the unclonable property.
To achieve this, \cite{kitagawa2025simple} generates $2n$ i.i.d classical keys for a classical PKE and entangles them respectively with each qubit in the BB84 states $\ket{x^\theta}$ in the following manner:
%certified deletion
Generates $2n$ classical secret keys $(\sk_{i,b})_{i\in[n],b\in\bit}$ and 
$(x,\theta)\gets\bit^n\times\bit^n$. 
It then applies the map $M_i$ that acts as
$\ket{b}\ket{c} \ra \ket{b}\ket{c \oplus \sk_{i,c}}$
to the $i$-th qubit of the BB84 state $\ket{x^\theta}$ and $\ket{0\cdots0}$, and obtains a quantum state $\qsk_i$ for every $i\in[n]$.
Namely, we have
\begin{equation}
\label{eqn:qsk_format}
\begin{aligned}
    \qsk_i\seteq
    \begin{cases}
    \ket{x[i]}\ket{\sk_{i,x[i]}} & (\textrm{if~}\theta[i]=0)\\   
    \frac{1}{\sqrt{2}}\left(\ket{0}\ket{\sk_{i,0}}+(-1)^{x[i]}\ket{1}\ket{\sk_{i,1}}\right) & (\textrm{if~}\theta[i]=1). 
    \end{cases}
\end{aligned}
\end{equation}
The resulting secret key is set to $\qsk:=(\qsk_1,\ldots,\qsk_n)$. The corresponding deletion verification key is the basis information, and the value of $x$ and the classical secret keys at the indices encoded in Hadamard basis: $\dvk\seteq (\theta,(x[i])_{i\in[n]:\theta[i]=1},(\sk_{i,0},\sk_{i,1})_{i\in[n]:\theta[i]=1})$.

\jiahui{add a little on how the encryption and decryption work}\fuyuki{Slightly modified.}
To encrypt a message, anyone with access to the public key $\pk=(\pke.\pk_{i,b})_{i\in [n],b\in\bit}$ and a message $m=m_1\|m_2\|\ldots \|m_{n}$ (padding the message w.l.o.g.), where $\pke.\pk_{i,b}$ is the public key corresponding to $\sk_{i,b}$,
can generate $\pke.\ct_{i,b} \la \PKE.\Enc(\pke.\pk_{i,b},m_i)$ for $i\in[n]$ and $b\in\bit$ and output $\ct=(\pke.\ct_{i,b})_{i\in [n],b\in\bit}$.

To decrypt with the quantum key $\qsk:=(\qsk_1,\ldots,\qsk_n)$ on a ciphertext $\ct=\{\pke.\ct_{i,b}\}_{i\in [n],b\in\bit}$: for each $i\in[n]$, decrypt $(\pke.\ct_{i,0},\pke.\ct_{i,1})$ coherently with $\qsk_i$\footnote{In more detail, we execute the circuit that has $(\pke.\ct_{i,0},\pke.\ct_{i,1})$ hardwired, takes $(b,\sk)$ as input, and outputs $\PKE.\Dec(\sk,\pke.\ct_{i,b})$ in superposition and measures the outcome written onto an additional register.} to get $m_i$ and output  $m=m_1\|m_2\|\ldots \|m_{n}$. 

We will specify more details on deletion, revocation and verification in the following paragraphs after we get a full picture of our construction.

%We can then observe that the above construction in \cite{kitagawa2025simple} can  already help one easily dequantize the \emph{deletion part} of the protocol based on the above BB84 state quantum key. The deletion algorithm $\qDel$, given the above $\qsk=(\qsk_1,\ldots,\qsk_n)$, measures $\qsk_i$ in the Hadamard basis and obtains $(e_i,d_i)$ for every $i\in[n]$. The resulting deletion certificate is set to $\cert:=(e_i,d_i)_{i\in[n]}$. The deletion verification algorithm, given a deletion verification key $\dvk\seteq (\theta,(x[i])_{i\in[n]:\theta[i]=1},\allowbreak (\sk_{i,0},\sk_{i,1})_{i\in[n]:\theta[i]=1})$ and a deletion certificate $\cert^\prime=(e^\prime_i,d^\prime_i)_{i\in[n]}$, outputs $\top$ if and only if $e^\prime_i=x[i]\oplus d^\prime_i \cdot (\sk_{i,0}\oplus\sk_{i,1})$ holds for every $i\in[n]$ such that $\theta[i]=1$. One can check the deletion verification correctness of the construction by simple calculations.

\paragraph{Dequantizing the lessor entirely}
%\jiahui{I will first not describe how \cite{kitagawa2025simple} did the classical revocation part in detail, since the properties we reduce to are different}
The next major question is how we can dequantize the preparation of the quantum key as well as the revocation of the key for the lessor, making the lessee do all the quantum work. In particular, the step of preparing the key is challenging to dequantize--the lessee has to prepare a ``computationally unclonable'' quantum state from only classical information and without knowing its description.

Naturally, we look for inspiration from methods of dequantizing the client in the context of delegating quantum computation and state preparation. 
\jiahui{discussed more barriers here from the perspective of RSP}
One natural approach is to run a remote state preparation (RSP) protocol that allows a classical client to instruct a quantum server to prepare a quantum state whose description is known to the client but not to the server.\footnote{More precisely, what we need is an RSP protocol with verification, in which the quantum client can verify that the quantum server has indeed prepared the desired state up to isometry.} However, existing RSP protocols~\cite{FOCS:GheVid19,FOCS:Zhang22,ICALP:GheMetPor23,ITCS:Zhang25} suffer from the drawback that they achieve only inverse-polynomial security. Consequently, directly using these protocols to replace quantum communication in existing SKL schemes yields a scheme whose security bound is fixed in advance to be inverse-polynomial.  It remains unclear how to amplify this to the standard notion of negligible security. \takashi{added this issue.} In addition, existing RSP protocols are highly costly in both communication and rounds, typically requiring polynomially many rounds. This contradicts the goal of designing a lightweight protocol. The cost of generic-purpose RSP protocols stems from their very strong security requirement: more concretely, a rigidity 
statement asserting that the server must exactly perform certain quantum operations up to isometry. However, such strong guarantees are not necessary for our purposes.  We simply need the server (lessee) to prepare a classically functional quantum state and later delete the function.
\fuyuki{I think not only RSP is inefficient, but also it is highly non-trivial to adapt existing RSP protocols for our purpose mainly because they achieve only inverse polynomial soundness (and soundness amplification seems hard to perform due to polynomially many rounds). We may move this paragraph to related work section and add more detailed discussions.}\takashi{added}

Another possible approach is delegating the computation via a quantum fully homomorphic encryption (QFHE) scheme. By sending QFHE's ciphertext of $x,\theta, \{\sk_{i,0},\sk_{i,1}\}_{i \in [n]}$, the lessor can let the lessee prepare the above quantum key by homomorphically evaluating on the ciphertext. However, in our case, the obtained key on the server (lessee)'s end will be encrypted and unusable for the decryption functionality. On the other hand, revealing the QFHE decryption information will result in insecurity.

The seminal works \cite{FOCS:BCMVV18,FOCS:Mahadev18a} introduced a protocol where a classical client can verify the quantumness and even any BQP computation of a quantum server.
    In particular, during the protocol of \cite{FOCS:BCMVV18,FOCS:Mahadev18a}, after receiving only classical information of a function $f$ from the client, the server has to create a (computationally) unclonable state on its own end, with the following structure: 
\begin{align}
    \frac{1}{\sqrt{2}}(\ket{0,a_0}+\ket{1,a_1}), f(a_0) = f(a_1) = y
\end{align}  
    Such a function family $f$ is called a \emph{(noisy) trapdoor claw-free family} \footnote{The pre-images $a_0,a_1$ are conventionally denoted as $x_0, x_1$. We do the same in the actual construction and proof. Here, we use $a$ to distinguish from the string $x$ in the previously mentioned BB84 states.}. After preparing the above state (which we will refer to from now on as the \emph{claw state}), the server has to "commit" to this state by sending back the classical image $y$. 
    
    In the next step of the protocol, the client will request the server to measure this state randomly in either the computational or Hadamard basis and send back the result. The client, using a trapdoor for the corresponding $f$ (which it generates together with $f$) can verify if the server has done the requested measurement.
    
Informally, the trapdoor claw-free function has the following properties: 
\begin{enumerate}
    \item It is computationally hard to find two pre-images $a_0, a_1$ such that $f(a_0) = f(a_1) = y$;

    \item Once the server has measured the state in the Hadamard basis, and obtains a string $(e,d) \in \{0,1\}^{n+1}$ such that $e = d \cdot ( a_0 \oplus  a_1)$ mod 2 that it sends back to the client for verification, it will (computationally) be deprived of the preimage information $a_0$ and $a_1$ encoded in the computational basis. 

    \item It can also be extended to have a dual-mode property: in the two-to-one mode, there are two pre-images $a_0,a_1$ for each image $y$ (as shown above); in the injective mode, there is only one preimage $a$ for each image $y$.
    In other words, the claw state prepared by the server in this mode will be $\ket{b,a_b}$, where $ f(a_b) = y$ for some random $b\in \{0,1\}$. %\jiahui{check this?}
    The two modes are computationally indistinguishable.
\end{enumerate}

The second property is formalized as \emph{adaptive hardcore bit property} in \cite{FOCS:BCMVV18,FOCS:Mahadev18a}.
We can observe that this adaptive hardcore bit property aligns with the property of the BB84 states described above: if one measures the keys in the Hadamard basis, one loses all the information, i.e. the classical description for the secret keys, in the computational basis.

To make this intuition work,  we will look for an approach where we can let the server prepare the secret key $\qsk$ in the above BB84 format \eqref{eqn:qsk_format}. The lessor will pick uniformly random $\theta \gets \{0,1\}^n$; it then 
generates $n$ NTCF functions (with corresponding trapdoors) $(f_1,\cdots,f_n)$ such that $f_i$ is in injective mode for positions $\theta[i] = 0$ and  $f_i$ is in two-to-one mode for positions $\theta[i] = 1$; then it sends $(f_1,\cdots,f_n)$  to the lessee. 
After some key generation procedure that helps the lessee go from preparing claw states to preparing quantum keys (which we will go into later), the lessee obtains the $\qsk$ where: for indices $i$ with $\theta[i] = 0$, the quantum key $\qsk_i := \ket{b_i}\ket{\sk_{i, b_i}}$ is prepared by the injective mode $f_i$; for indices $i$, $\theta[i] = 1$, the quantum key $\qsk_i := \frac{1}{\sqrt{2}}(\ket{0}\ket{\sk_{i, 0}}+ \ket{1}\ket{\sk_{i,1}})$ is prepared by the two-to-one  mode $f_i$.
\fuyuki{I thought this paragraph might be confusing; $x$ is not chosen by the lessor, it is fixed as a result of the computation by lessee. For the two-to-one mode position, the phase $(-1)^{x[i]} $ is not added to $\qsk_i$.} \jiahui{sorry I got myself confused with the notations. Changed}

To delete the key, the lessor asks the lessee to measure all the keys  $\qsk := (\qsk_1, \cdots, \qsk_n)$ in the Hadamard basis and send back the measurement outcomes. Using the verification key which contains the NTCF trapdoors for the indices encoded in Hadamard basis, the lessor can verify the deletion (measurement) outcomes sent by the lessee\footnote{As an informed reader may notice, the classical-lessor PKE-SKL in \cite{chardouvelis2025quantum} is also based on the NTCF approach in  \cite{JACM:BCMVV21,FOCS:Mahadev18a}. In fact, they directly use the claw state as the quantum secret key. But we will soon discuss in the upcoming technical overview that using exactly the claw state may not be enough for a modular SKL scheme and may be difficult  to apply on more generic cryptographic primitives. }.
\jiahui{add a comment on what \cite{chardouvelis2025quantum} does}

%The hope is that we can leverage the properties of the \cite{FOCS:BCMVV18,FOCS:Mahadev18a} protocol in an intricate way so that we get a classically preparable decryption key in the above format and show the security 

\paragraph{Preparing the quantum key with secure function evaluation}
The next step is to allow the lessee to go from preparing the claw state to preparing the actual quantum key $\qsk$. That is, we start from generating a state of the form $\ket{0}\ket{a_0}+\ket{1}\ket{a_1}$ by using an NTCF, and convert it into a state of the form $\ket{0}\ket{\sk_0}+\ket{1}\ket{\sk_1}$ for secret keys $\sk_0$ and $\sk_1$ \footnote{In the technical overview, for the sake of presentation simplicity, we will sometimes only focus on the $\ket{+}$-format of the claw state and secret key (i.e. $\ket{0}\ket{a_0}+\ket{1}\ket{a_1}$  and correspondingly $\ket{0}\ket{\sk_0}+\ket{1}\ket{\sk_1}$ ) . But it is easy to see that the following discussions apply to the $\ket{0}, \ket{1}$-versions of the state.}. %Jiahui described an idea to do that based on the encrypted CNOT technique. 
%\jiahui{modified from the notes}
Our idea is to achieve this by secure function evaluation (SFE). Roughly, SFE is a two-round protocol  between a sender and receiver that works as follows:
\begin{itemize}
\item The receiver takes a message $m$ as input and sends a ciphertext $\ct$ to the sender while keeping a state information $\st$. %\takashi{I quit to explicitly mention the randomness $r$ and instead use a state $\st$ (which I believe is a more common formalization). This is because the randomness in the LWE-based construction isn't actually unique if we go back to the randomness of the Gaussian noise sampler.}
\item The sender takes a circuit $C$ as input, receives $\ct$, computes $\ct'$ and sends $\ct'$ to the receiver.
\item The receiver recovers $C(m)$ from $\st$ and $\ct'$. 
\end{itemize}

In particular, we consider SFE in the CRS model. 
The security roughly requires that the sender learns nothing about $m$, and the receiver learns nothing about $C$ beyond $C(m)$.  We will specify here some of the concrete security properties we require from the SFE. Notably, we will show that we can obtain SFE with all the properties we need from LWE.
In particular, we require the following \emph{dual-mode} property. \footnote{State recoverability in the hiding mode is not a standard requirement for dual-mode SFE, but the LWE-based construction satisfies it.}%\footnote{There is actually an additional correctness requirement about the possibility of running the receiver in superposition without causing entanglement with the randomness. The LWE-based instantiation satisfies it (essentially because we can generate Gaussian superposition without having any entanglement with "randomness" of the sampler). See also \Cref{footnote:superposition_execution_SFE}.}
%\takashi{I made it clear the dual-mode requirement for SFE.}
\begin{itemize} 
\item {\bf Mode indistinguishability.}
There are two modes for generation of CRS, 
\emph{hiding mode} and
\emph{extraction mode}. 
In both modes, an associated trapdoor is generated along with CRS. 
CRSs generated in different modes are computationally indistinguishable. 
\item {\bf Statistical security against malicious senders in the hiding mode}:  
In the hiding mode, $\ct$ statistically hides $m$.
\item {\bf State recoverability in the hiding mode:}
In the hiding mode, 
$\ct$ and $m$ uniquely determine the corresponding $\st$. Moreover, the trapdoor enables us to efficiently recover $\st$ from $\ct$ and $m$. 
\item 
{\bf Extractability in the extractable mode}: 
In the extractable mode, we can extract $m$ from (possibly maliciously generated) $\ct$ using the trapdoor, so that $\ct'$ can be simulated only using $C(m)$ (and not using $C$). 
Note that this automatically implies (computational) security against malicious receivers in the extractable mode. \fuyuki{''statistical'' is replaced with ``computational''. (We use GC.)}
\end{itemize}
In addition to the above, we need two additional properties: one is the "efficient superposition generation property" and the other is the "decomposability of the state". 
We explain those additional properties when we need them.\footnote{The former appears in the description of the sub-protocol below; the latter appears in the idea for the security proof.}

Using SFE, consider the following sub-protocol to generate the key. For simplicity, we consider the "stand-alone" generation of a single key $\qsk_i$ as part of the final $\qsk$, where $\theta[i] = 1$ and $x[i] = 0$.
\begin{enumerate} 
    \item The lessor generates 
    CRS of SFE in the hiding mode and a two-to-one function $f$ from NTCF  and sends $(CRS,f)$ to the lessee.  
    \item The lessee prepares the claw state $\ket{0}\ket{a_0}+\ket{1}\ket{a_1}$ along with $y = f(a_0) = f(a_1)$ by the standard usage of NTCF and sends $y$ to the lessor. 
    \item The lessee 
     runs the receiver's algorithm of SFE in superposition, where the second register of   $\ket{0}\ket{a_0}+\ket{1}\ket{a_1}$ is treated as the message,  
    and measures the resulting $\ct$.  
   Since $\ct$ statistically hides the message (which is either $a_0$ or $a_1$), this measurement almost never collapses the two branches. Moreover, since $\ct$ and the message uniquely determines the corresponding state, the resulting state (un-normalized for convenience) looks like: 
   \[
\ket{0}\ket{a_0}\ket{\st_0}+\ket{1}\ket{a_1}\ket{\st_1}
   \]
   where $\st_0$ and $\st_1$ are state information corresponding to $(a_0,\ct)$ and $(a_1,\ct)$, respectively.
   \footnote{
   While this might be unclear from the notation, 
   $\st_b$ depends on $x_b$ (and not only on $b$.)
   We write $\st_b$ instead of $\st_{x_b}$ for simplicity.
   }
Here, we are relying on an additional property of our SFE "efficient superposition generation property", which makes sure that we can generate the above state without remaining entanglement with the randomness used by the receiver. 
%The LWE-based instantiation satisfies this.\footnote{This is essentially due to the fact that we can generate Gaussian superposition.}  
    
   The lessee sends the measured ciphertext $\ct$ to the lessor.

\item In the next step, the lessor can define a circuit $C$ that maps $a_b$ to some secret key $\sk_b$ for $b \in \{0,1\}$ respectively.  Then the lessor runs $C$ on $\ct$ and sends the result $\ct'$ to the lessee and the lessee runs the output derivation algorithm coherently, writing the outcome on an additional register.
By the correctness of the SFE, the lessee will finally obtain $\ket{0}\ket{a_0}\ket{\st_0}\ket{\sk_0}+\ket{1}\ket{a_1}\ket{\st_1}\ket{\sk_1}$. 

However, in the actual construction, the circuit $C$ will be slightly more sophisticated than described here. We will elaborate in the next subsection when we discuss the security proof.

\end{enumerate}
While this is not exactly the state of the form  $\ket{0}\ket{\sk_0}+\ket{1}\ket{\sk_1}$ as we originally discussed, it is already good enough for decryption since both branches contain the information of a secret key, so that during decryption, the lessee can use the secret key in the last register to decrypt coherently. 
Moreover, if we define the deletion algorithm as just a Hadamard basis measurement of the above state, then we can still define a deletion verification algorithm, where we verify that the certificate $(e,d,c)$ satisfies:
$e = (d \| c) \cdot (a_0\|\st_0\|\sk_0 \oplus a_1\|\st_1\|\sk_1)$. This can also be verified with the trapdoor of the NTCF, the trapdoor of the SFE and the circuit $C$. We will make some further minor changes to the deletion and verification algorithms, but for the purpose of intuition, we leave them as they are now.
%is orthogonal to $0\|a_0\|\st_0\|\sk_0\oplus 1\|a_1\|\st_1\|\sk_1$.\footnote{Since the lessor knows the trapdoor, it can recover $\st_0$ and $\st_1$.}

 \paragraph{Watermarking: extraction of the pre-images}
For the security proof, we would like to make the following argument. Recall the security states that the lessee, after successfully passing the revocation verification, will not be able to distinguish the encryptions of two messages (an IND-CPA style security game). 
Suppose there exists a quantum adversary that passes both the revocation verification and the IND-CPA game afterwards, then we should be able to break the security of the NTCF used, namely, the adaptive hardcore bit property. We can argue in a relatively straightforward manner that the adversary provides the reduction with the measurement outcomes in the Hadamard basis, if it passes the revocation verification. The trickier part is to argue that by passing the IND-CPA game, we can extract the preimage $a_0$ or $a_1$ from this quantum decryptor. Finally, by gathering both the measurements in the Hadamard basis and the preimages $a_0$ or $a_1$, the reduction can break the adaptive hardcore bit property of the NTCF.

The previous work on fully classical lessor SKL \cite{chardouvelis2025quantum} also uses an NTCF based construction and need to extract the preimages in order to break the adaptive hardcore bit security. They design an algorithm that carefully extracts the preimage information from a possibly destructive quantum decryptor, by using quantum extraction techniques as well as LWE properties. However, this approach is likely limited to analyzing the LWE based PKE scheme in \cite{chardouvelis2025quantum}, and does not generalize to any PKE and other cryptographic functionalities. 

In this work we provide a simple approach through \emph{watermarking}.
This approach is not only is cleaner, but also works for any PKE scheme and is generalizable to other cryptographic functionalities such as PRF and signatures presented in this work.
A PKE with watermarking (often called watermarkable PKE) has syntax similar to standard PKE except that the $\KeyGen$ algorithm generates a marking key along with the encryption/decryption keys and we have an additional "marking" algorithm $\Mark$. $\Mark$ will embed a watermark into an input cryptographic function; the watermarking security roughly states that trying to remove the watermark will destroy the functionality and thereby any functional program output by an adversary who receives a marked program will contain the original mark with high probability. We don't need additional assumptions for a watermarkable PKE scheme: any PKE scheme can be upgraded to have the watermarking security requirements we need.

Next, we go back to modify step 4 of the construction when the lessor receives the ciphertext $\ct$ of SFE in step 3 from the lessee. It does the following instead:
\begin{itemize}
        \item 
    The lessor defines a circuit $C$ that takes $a$ as input and outputs a "watermarked" secret key $\sk(a)$ in which a "mark" $a$ is embedded. %\takashi{Just as a matter of the notation, I'm wondering which of $\sk(x)$ or $\sk[x]$ (or anything else) is good to denote a watermarked secret key. I avoided $\sk[x]$ since we often use the notation of $x[i]$ to mean the $i$-th bit of $x$.} 
    The precise requirement for the watermarking will become clear later. %\takashi{I clarified that $C$ is a mark embedding function.} 
    Then runs the sender's algorithm of SFE on $C$ and $\ct$ to generate $\ct'$ and sends it to the lessee.
    \item The lessee runs the output derivation algorithm of SFE in superposition to get
\begin{align}
&\ket{0}\ket{a_0}\ket{\st_0}\ket{C(a_0)}+\ket{1}\ket{a_1}\ket{\st_1}\ket{C(a_1)}\\
=&\ket{0}\ket{a_0}\ket{\st_0}\ket{\sk(a_0)}+\ket{1}\ket{a_1}\ket{\st_1}\ket{\sk(a_1)}
\label{eqn:qsk_format_ours}.
\end{align}
\end{itemize}

%Watermarkable PKE (with parallel extraction): The syntax is similar to standard PKE except that the KeyGen generates $(\pk,\msk)$ and we have an additional "marking" algorithm that takes $\msk$ and $x$ as input and outputs a marked secret key $\sk(x)$. 

Now going back to the security proof: the watermarking security requires that if an adversary given $\sk(a)$ (where the mark $a$ is chosen by the adversary) generates a (possibly quantum) "decryptor" that can correctly decrypt a ciphertext of a uniformly random message with probability $\epsilon$, then we can extract $a$ from the decryptor with probability $\epsilon-\negl(\secpar)$.\footnote{This security guarantee is weaker than standard watermarking security, but is sufficient for our purpose.\fuyuki{I added this footnote.}}
This even holds if the decryptor is a QPT algorithm that has an auxiliary  quantum state, as in our scenario. Therefore, we can use a successful quantum decryptor to extract the marked pre-image $a_0$ or $a_1$.

Moreover, we show that a similar extractability holds even in the parallel setting where many instances of the scheme under independently generated keys. 
 That is, if the decryptor succeeds in decryption simultaneously for all instances with probability $\epsilon$, then we should succeed in simultaneously extracting the marks with probability $\epsilon-\negl(\secp)$. Recall that in our final SKL protocol, the secret key $\qsk$ contains $n$ independently generated keys $\qsk_i$. We will extract the watermarked preimages $a_{i,b_i}$ from all of them using the parallel extraction property. %In the next paragraph, we briefly discuss how this reduction breaks a 

\paragraph{Cut-and-choose adaptive hardcore bit}
To make the final security reduction go through, we need to perform a reduction where the reduction plays as the adversary in a security game regarding the NTCF and hopes to leverage the PKE-SKL adversary to break the security of NTCF, namely the adaptive hardcore bit property previously mentioned.

One subtlety is that not all our quantum keys $\qsk_i, i\in [n]$ are  "claw states" that contain a superposition of two claws, but some are classical states that only contain one pre-image, where the adaptive hardcore bit security is not well defined.
On the other hand, let us 
recall that the BB84 states have a certified deletion property shown in \cite{EPRINT:BehSatShi21}): if an adversary $\qA$ without the knowledge of $x,\theta$ can correctly output all the value $x[i]$ encoded in Hadamard basis (i.e. $x[i]$ for all $i$ where $\theta[i] = 1$), it is impossible for $\qA$ to obtain the value of $x[i]$ for the positions $i$ encoded in the computational basis.
%Our observation is that the cut-and-choose adaptive hardcore bit property is very 
We therefore would like to find a property of the NTCF/claw states similar to the certified deletion property of BB84 states used in \cite{kitagawa2025simple}, and thus we may be able to replace the latter with the former in the security proof.

Here, we use a property called \emph{cut-and-choose adaptive hardcore bit} property for dual-mode NTCF
introduced by
Hiroka et al.~\cite{AC:HMNY21}. 
They prove that \emph{any} dual-mode NTCF that satisfies the adaptive hardcore bit property also satisfies the   cut-and-choose adaptive hardcore bit property. 
Our final construction is designed in such a way that its security can be reduced to the cut-and-choose adaptive hardcore bit property. 

%In the cut-and-choose adaptive hardcore bit property, 
We consider the following experiment between an adversary  $\qA$ and challenger: 
\begin{enumerate}
    \item The challenger chooses a uniform subset $S\subseteq [2n]$ such that $|S|=n$. 
    \item 
    For each $i\in [2n]$, the challenger generates a function $f_i$ along with its trapdoor, where:
    \begin{itemize}
    \item If $i\in S$, $f_i$ is in the injective mode,
    \item If $i\notin S$, $f_i$ is in the two-to-one mode. 
    \end{itemize}
    The challenger 
    sends $\{f_i\}_{i\in[2n]}$ to $\qA$. 
    \item $\qA$ sends 
    $\{y_i,d_i,e_i\}_{i\in[2n]}$ to the challenger. 
    \item 
    For each $i\notin S$, 
    the challenger recovers the preimages $(a_{i,0},a_{i,1})$, and checks if $e_i=d_i\cdot(a_{i,0}\oplus a_{i,1})$ and $d_i\ne \mathbf{0}$ hold. 
    If they do not hold for some $i\notin S$, the challenger immediately aborts and the experiment returns $0$. 
    \item  \label{step:reveal_S} 
    The challenger sends $S$ to $\qA$.
    \item $\qA$ sends $\{b_i,a_i\}_{i\in S}$ to the challenger. 
    \item 
    The challenger checks if $f_{i}(b_i,a_i)=y_i$ holds for all $i\in S$.
    If this holds for all $i\in S$, the experiment returns $1$. Otherwise, it returns $0$.
\end{enumerate}
The cut-and-choose adaptive hardcore bit property claims that 
for any QPT adversary $\qA$, the probability that the above experiment returns $1$ is negligible (for sufficiently large $n$).

\paragraph{Putting things together for the security analysis}
%\jiahui{Taken from the notes for now. Modify later}
Now our goal is to reduce the PKE-SKL security to the cut-and-choose adaptive hardcore bit property, in a manner similar to that of \cite{kitagawa2025simple} with the certified deletion property of BB84 states. 
To do so, we mainly have to prove the following two steps:
\begin{enumerate}
\item 
If one knows the trapdoors of the SFE (but not that of NTCF), 
any valid deletion certificate
$\cert=(e_i,d_i,c_i)_{i\in [2n]}$ can be used to compute $(\hat{e}_i, \hat{d}_i)_{i\notin S}$ such that $\hat{e}_i=\hat{d}_i\cdot (a_{i,0}\oplus a_{i,1})$ and $\hat{d}_i\ne \mathbf{0}$. 
\item 
If a  "decryptor" (the state after passing the deletion verification)  correctly decrypts the special ciphertext with probability $\epsilon$, we can extract $\{a_i\}_{i\in S}$ with probability at least $\epsilon-\negl(\secp)$.  
\end{enumerate}

Recall that in the PKE-SKL security game, the adversary has to hand in a valid deletion certificate and then succeeds in decrypting. Therefore, the goal is to use such an adversary
to obtain both of the above items; then we can use them to break the cut-and-choose adaptive hardcore bit property of the NTCF.

%\jiahui{add a few words why}
%We refer to \cite{cryptoeprint:2024/1564} on why the above are sufficient for completing the reduction. 

Now we argue how to show each item above.

As we have discussed in the watermarking paragraph, the second item straightforwardly follows from the security of watermarkable PKE. 

%noting that the adversary is given just one watermarked key $\sk_i(x_i)$ for each $i\in S$.  

However, the first item is less obvious. First, recall that for all $i \notin S$ (indices for 2-to-1 $f_i$), the (unnormalized) quantum keys will look like 
\begin{align}
   \qsk_i &= \ket{0}\ket{a_{i,0}}\ket{\st_{i,0}}\ket{\sk(a_{i,0})}+\ket{1}\ket{a_{i,1}}\ket{\st_{i,1}}\ket{\sk(a_{i,1})}  \\
\end{align}
We make some minor modifications on how we delete a key to make the analysis simpler: in the deletion algorithm, the honest lessee first uncomputes the register containing the secret keys $\ket{\sk(a_{i,b})}, b\in\{0,1\}$ using the circuit $C$ from SFE which generates the key.
Afterwards, it measures all registers in the Hadamard basis.
Now we know that when a deletion certificate $\cert=(e_i,d_i,c_i)_{i\in [2n]}$ received from the adversary is valid, we have the following for all $i \notin S$:
$$e_i= d_i\cdot(a_{i,0}\oplus a_{i,1})\oplus c_i\cdot(\st_{i,0}\oplus \st_{i,1}).$$ 

%\jiahui{I changed all $x$ to $a$ for consistency and avoid the confusion with previous $x$ in BB84 states}
Here, we need to use another property of the SFE, named "decomposability of the state", which we have mentioned but did not go into details. Elaborated below, this property allows us to obtain item 1 in a relatively clean way.

Recall that the receiver in the SFE protocol takes a message $a$ and a common reference string CRS, then  outputs a ciphertext $\ct$ to the sender while also keeping a (classical) secret state information $\st$.
This        ``decomposability of the state'' property is with respect to the state $\st$ of the SFE: intuitively, it says that $\st$ corresponding to input message $a$ and output $\ct$ (sometimes we say the state $\st$ corresponds to $(a,\ct)$ for simplicity) is of the following form
\[
\st=\st_{1,a[1]}\|\st_{2,a[2]}\|\ldots\|\st_{m,a[m]}
\]
\jiahui{The index $m$ here may appear a little sudden. Is there a better way to explain it?}
where 
$m$ is the bit-length of $a$, 
$a[j]$ is the $j$-th bit of $a$, 
and $(\st_{j,b})_{j\in [m],b\in \bit}$ is a family of strings that only depend on $\ct$ and CRS (and not on $a$). 
Using this, 
for each $b\in \bit$ we can write $\st_{i,b}$ as
\[
\st_{i,b}=\st_{i,1,a_{i,b}[1]}\|\st_{i,2,a_{i,b}[2]}\|\ldots\|\st_{i,m,a_{i,b}[m]}.
\]
For each $i\in [2n]$, $b\in \bit$, and $j \in [m]$, let $\st_{i,b}[j]$ be the $j$-th block of $\st_{i,b}$, i.e.,
\[
\st_{i,b}[j]=\st_{i,j,a_{i,b}[j]}.
\]
%takashi{Sorry for the super confusing notation. We may need to improve it later.}
We divide the string $c_i$ in the deletion certificate $\cert = \{(e_i,c_i,d_i)\}_{i \in [2n]} $into $m$ blocks:
\[
c_i=c_{i,1}\|c_{i,2}\|\ldots\|c_{i,m}. 
\]
For $i\in [2n]$ and $j\in [m]$, we have
\begin{align}
&~~~c_{i,j}\cdot (\st_{i,0}[j]\oplus \st_{i,1}[j])\\
&=
c_{i,j}\cdot (\st_{i,j,a_{i,0}[j]}\oplus \st_{i,j,a_{i,1}[j]})\\
&=
\begin{cases}
    0&\text{~if~}a_{i,0}[j]=a_{i,1}[j]\\
    c_{i,j}\cdot (\st_{i,j,0}\oplus \st_{i,j,1})&\text{~if~}a_{i,0}[j]\ne a_{i,1}[j]
\end{cases}\\
&=( c_{i,j}\cdot (\st_{i,j,0}\oplus \st_{i,j,1}))\cdot(a_{i,0}[j]\oplus a_{i,1}[j]).
\end{align}

By observing the above calculation, we define a string $d^*_i$ as 
\[
d^*_i[j]=c_{i,j}\cdot (\st_{i,j,0}\oplus \st_{i,j,1})
\]
for $j\in [m]$, then we have 
\[
c_{i,j}\cdot (\st_{i,0}[j]\oplus \st_{i,1}[j])=d_i^*[j]\cdot(a_{i,0}[j]\oplus a_{i,1}[j]).
\]
This implies
\[
c_i\cdot (\st_{i,0}\oplus \st_{i,1})=d^*_i\cdot (a_{i,0}\oplus a_{i,1}). 
\]
%\shota{Should it be $d^*_i(a_0\oplus a_1) = c_i(\st_0\oplus \st_1)$? Also, should $\st$ be indexed by $i$?}\takashi{Yes, you are right. I corrected it and added more explanation.}
Substituting this relation into $e_i= d_i\cdot(a_{i,0}\oplus a_{i,1})\oplus c_i\cdot(\st_{i,0}\oplus \st_{i,1})$,  we have 
$$e_i= (d_i\oplus d^*_i)\cdot(a_{i,0}\oplus a_{i,1}).$$ 
We can then observe that the reduction works as long as $d_i\ne d^*_i$, which happens with overwhelming probability. We therefore obtain a string $(e_i, d_i\oplus d_i^*)$ to help us break the cut-and-choose adaptive hardcore bit property, along with the computational basis information $\{a_{i,b}\}_{i \in S}$ we extract using the watermarking security from the successful decryptor.
%for each $i$ with the indices $i \notin S$ 

\paragraph{Summary so far: classical-lessor SKL from ``parallel extractable'' watermarking}
To summarize, we obtain a compiler that transforms a parallel-extractable watermarkable PKE scheme into a classical-lessor PKE-SKL scheme using NTCF and our special SFE, both of which can be based on the LWE assumption.
In this compiler, we rely on the implicit fact that the functionality of PKE is preserved under parallel composition (i.e., the encryption and decryption of the resulting classical-lessor PKE-SKL scheme respectively correspond to $2n$ parallel executions of the encryption and decryption of the underlying watermarkable PKE).
More generally, this compiler can be applied not only to watermarkable PKE but also to any cryptographic primitive equipped with parallel-extractable watermarking, provided that the functionality of the primitive is preserved under parallel composition.

\paragraph{Classical-lessor SKL for PRF and digital signature}
Using the above compiler, we obtain classical-lessor SKL for unpredictable functions (UPFs) and digital signature, where the former can then be transformed into a classical-lessor PRF-SKL via the quantum Goldreich-Levin technique, as discussed earlier.
Following the previous works~\cite{TCC:AnaPorVai23,TCC:AnaHuHua24,TQC:MorPorYam24,kitagawa2025simple}, we consider the security notion for UPF-SKL and DS-SKL where the challenge input/message is chosen uniformly at random and after the adversary outputs the deletion certificate and is given the challenge, it is not given oracle access to the leased functionality (e.g., signing oracle in the DS-SKL case).
Under this ``non-adaptive'' setting, the functionality of UPFs and digital signature is preserved under parallel composition.
Thus, relying on our compiler, to achieve classical-lessor UPF-SKL and DS-SKL, the main task is to construct a parallel-extractable watermarkable UPF and digital signature.
\begin{description}
\item[UPFs:] Building on the technique of~\cite{kitagawa2025simple}, we achieve parallel extractable watermarkable UPFs using two-key equivocal PRFs (TEPRFs)~\cite{C:HJOSW16}, which can be based on OWFs.
By combining it with our compiler, we obtain the first classical-lessor PRF-SKL based on the LWE assumption.
\item[Digital signature:] 
To apply our compiler, we require a parallel extractable watermarkable signature scheme with an additional property known as coherent signability, introduced by~\cite{kitagawa2025simple} in the context of constrained signatures.
Coherent signability is defined for digital signatures in which a signing key (e.g., a constrained or watermarked key) can be generated from a master secret key.
Intuitively, it ensures that for any message $m$ and two signing keys $\sk_0$ and $\sk_1$, one can generate a valid classical signature for $m$ using the superposed signing key $\alpha\ket{0}\ket{\sk_{0}}\ket{\psi_0}+ \beta\ket{1}\ket{\sk_{1}}\ket{\psi_1}$ with $|\alpha|^2 + |\beta|^2 = 1$, almost without disturbing the superposed state, provided that both $sk_0$ and $sk_1$ can individually generate valid signatures for $m$ using the standard (classical) signing algorithm.
This property is necessary when our compiler is applied because, in a $2$-to-$1$ mode position, a signature must be generated using such a superposed signing key (see~\cref{eqn:qsk_format_ours}).

\cite{kitagawa2025simple} showed how to construct a constrained signature scheme with coherent signability under the SIS assumption.
We extend this by showing that a parallel-extractable watermarkable signature scheme with coherent signability can be obtained by combining any constrained signature scheme with coherent signability and TEPRFs.
The construction builds on that of parallel-extractable watermarkable UPFs.
By further applying our compiler, we obtain the first classical-lessor DS-SKL based on the LWE assumption.
\end{description}

\paragraph{Removing interaction during key generation for PKE-SKL} 
Finally, note that our constructions above require a constant-round interaction between the lessor and the lessee for key generation. This is not ideal for practice.
We improve this aspect by making the key generation algorithm non-interactive in the following way: instead of letting the lessee prepare the full key in the key generation procedure, it only prepares the claw state and executes the first SFE receiver operation.  We delay the generation for the watermarkable PKEs%(or PRF, DS, resp.) 
\synote{I edited here a bit, since the trick for making non-interactive cannot applied to DS and PRF.}
keys to the \emph{encryption} algorithm $\Enc$. The ciphertext produced by $\Enc$ will contain some auxiliary information about the watermarkable PKE so that the lessee, once receiving the ciphertext, can use the auxiliary information to prepare the full key (the same one as in the interactive construction), by running the second SFE receiver operation on its previous half-baked quantum key. The resulting key will be able to decrypt the corresponding ciphertext.

\paragraph{Construction of the special SFE}
In general, SFE can be constructed from oblivious transfer (OT) via Yao's well-known garbled-circuits protocol~\cite{FOCS:Yao86}, which in turn can be based on OWFs.
The same applies in our setting: the task of realizing the required SFE in our compiler reduces to constructing an OT scheme that satisfies all the properties introduced above. (Since OT is a special case of SFE, the properties for SFE can be naturally translated into OT setting.)
We show that the OT scheme of~\cite{C:PeiVaiWat08}, based on the LWE assumption, can be naturally extended to achieve these required properties.
This gives us the special SFE needed for our compiler under the LWE assumption.

% !TEX root = main.tex

\subsection{Discussion}

One might expect that the constructions of PKE-SKL with a classical lessor in \cite{chardouvelis2025quantum,AC:PWYZ24} can be directly generalized to other primitives, such as PRF-SKL and DS-SKL. Here, we explain why such a generalization may be difficult.

We begin by briefly recalling the basic idea of \cite{chardouvelis2025quantum,AC:PWYZ24}. Using an NTCF $f$, one generates a quantum state of the form 
\begin{align}
    \frac{1}{\sqrt{2}}(\ket{0,a_0}+\ket{1,a_1}), f(a_0) = f(a_1) = y.
\end{align}
Then they observe that, for a specific LWE-based NTCF, 
each of $a_0$ and $a_1$ can be viewed as a decryption key of a certain PKE scheme. Thus, to encrypt a message $m$, one generates ciphertexts $\ct_0$ and $\ct_1$ under the corresponding public keys. Given $\ct_0$ and $\ct_1$, the state $\frac{1}{\sqrt{2}}(\ket{0,a_0}+\ket{1,a_1})$ can then be used to decrypt to $m$. The crucial property of PKE here is that decrypting $\ct_0$ with $a_0$ and decrypting $\ct_1$ with $a_1$ yield the same message $m$, by correctness of the PKE scheme. Therefore, decryption can be performed coherently in superposition without collapsing the state $\frac{1}{\sqrt{2}}(\ket{0,a_0}+\ket{1,a_1})$. 

However, this situation does not extend to general primitives such as PRFs or signature schemes. Since the issue is essentially the same for both, let us focus on signatures. Suppose that $a_0$ and $a_1$ are interpreted as secret keys of some classical signature scheme. One might then hope to regard $\frac{1}{\sqrt{2}}(\ket{0,a_0}+\ket{1,a_1})$ as a quantum signing key for a DS-SKL scheme. However, this does not work in general, because signing the same message using $a_0$ and $a_1$ may produce different signatures. In that case, the output reveals the branch, and measuring it collapses the state to one side, making the state unusable for generating the deletion certificate later. 

To circumvent this issue in the quantum communication setting, \cite{kitagawa2025simple} relies on a special signature scheme that has special pairs $(\sk_0, \sk_1)$ such that $\frac{1}{\sqrt{2}}(\ket{0,\sk_0}+\ket{1,\sk_1})$ can be used to generate a signature without collapsing the state, for almost all messages. In addition, their security proof essentially relies on another special property: a signature on a hidden special message reveals which of $\sk_0$ or $\sk_1$ was used. In our work, we abstract a signature scheme with these properties as a watermarkable signature scheme.

This makes the difficulty of directly applying the techniques of \cite{chardouvelis2025quantum,AC:PWYZ24} to signatures clear. The main obstacle is that it is unclear how to construct a watermarkable signature scheme in which the preimages $(a_0, a_1)$ of the LWE-based NTCF themselves serve as the required special signing-key pair. This is why we need an additional mechanism based on SFE: it allows us to further entangle auxiliary strings in each branch and thereby embed the signing keys of the watermarkable signature scheme into the superposition.

On the other hand, it is plausible that the constructions of \cite{chardouvelis2025quantum,AC:PWYZ24} can be directly adapted to achieve VRA security using the idea, developed in this paper, of reducing security to the cut-and-choose adaptive hardcore-bit property of NTCFs. More precisely, one could modify the constructions of \cite{chardouvelis2025quantum,AC:PWYZ24} so that the NTCF is used in injective mode on half of the coordinates and in two-to-one mode on the remaining coordinates. If this idea works for the construction of \cite{AC:PWYZ24}, it would yield a PKE-SKL scheme with a classical lessor from the \emph{polynomial}-modulus LWE assumption, which would improve on our assumptions. In this paper, we do not pursue this direction, since our goal is to provide a general framework applicable to various primitives, rather than to optimize the construction of PKE-SKL.

\ifnum\submission=1
\subsection{Paper Organization}
Due to page limitations, in the main body we present only our construction of classical-lessor PKE-SKL together with its security proof.
All other contents are provided in the supplemental materials, organized as follows.
\begin{itemize}
\item In \cref{sec-prelim}, we provide preliminaries and review the definitions of standard cryptographic primitives.
\item In \cref{sec-watermarking}, we introduce our definition of watermarkable cryptographic primitives and present concrete constructions.
\item In \cref{sec:dual_mode_SFE}, we formally define our special SFE.
\item In \cref{sec-def-SKL}, we provide the definitions of classical-lessor SKL.
\item In \cref{sec-PRFSKL} and \cref{sec-DSSKL}, we present our constructions of classical-lessor PRF-SKL and DS-SKL, together with their security proofs.
\item In \cref{sec:dual-mode-SFE-from-LWE}, we show how to realize our special SFE.
\end{itemize}
\else\fi

% !TEX root = main.tex

\section{Preliminaries}\label{sec-prelim}
\paragraph{Notations and conventions.}
In this paper, standard math or sans serif font stands for classical algorithms (e.g., $C$ or $\algo{Gen}$) and classical variables (e.g., $x$ or $\keys{pk}$).
Calligraphic font stands for quantum algorithms (e.g., $\qalgo{Gen}$) and calligraphic font and/or the bracket notation for (mixed) quantum states (e.g., $\qstate{q}$ or $\ket{\psi}$).

Let $[\ell]$ denote the set of integers $\{1, \cdots, \ell \}$, $\secp$ denote a security parameter, and $y \seteq z$ denote that $y$ is set, defined, or substituted by $z$.
For a finite set $X$ and a distribution $D$, $x \chosen X$ denotes selecting an element from $X$ uniformly at random, $x \chosen D$ denotes sampling an element $x$ according to $D$. Let $y \gets \algo{A}(x)$ and $y \gets \qalgo{A}(\qstate{x})$ denote assigning to $y$ the output of a probabilistic or deterministic algorithm $\algo{A}$ and a quantum algorithm $\qalgo{A}$ on an input $x$ and $\qstate{x}$, respectively. When we explicitly show that $\algo{A}$ uses randomness $r$, we write $y \gets \algo{A}(x;r)$.
For a classical algorithm $\algo{A}$, we write $y\in \Supp(\algo{A}(x))$ or $y\in \algo{A}(x)$ to mean that $y$ is in the range of $\algo{A}(x)$, i.e., $\Pr[\algo{A}(x)=y]>0$. 
PPT and QPT algorithms stand for probabilistic polynomial-time algorithms and polynomial-time quantum algorithms, respectively.
Let $\negl$ denote a negligible function and $\poly$ denote a positive polynomial. 
For a string $x\in\bit^n$, $x[i]$ is its $i$-th bit. 
For strings $x,y\in \bit^n$, $\langle x, y \rangle$ denotes $\bigoplus_{i\in[n]} x[i]\cdot y[i]$. For quantum states $\rho$ and $\sigma$, $\|\rho-\sigma\|_{tr}$ means their trace distance. 
When we consider security, we consider non-uniform QPT adversary by default. 
%where $x_i$ and $y_i$ denote the $i$th bit of $x$ and $y$, respectively. 
%Similarly, for vectors $x,y\in \mathbb{F}_p^n$, $x\cdot y$ denotes $\sum_{i\in[n]} x_i y_i$ where $x_i$ and $y_i$ denote the $i$th component of $x$ and $y$, respectively. 
%For a projection $\Pi$, a binary-outcome measurement w.r.t. $\Pi$ means a projective measurement $\{I-\Pi,\Pi\}$ that returns $1$ if the state is projected onto $\Pi$ and otherwise $0$.  
%\takashi{I commented out notations which I believe are not used in the current version.}

\subsection{Cryptographic Primitives}\label{sec:standard_crypto}

\paragraph{Public-key encryption.}
\begin{definition}[PKE]
A PKE scheme $\PKE$ is a tuple of three algorithms $(\KG, \Enc, \Dec)$. 
Below, let $\Ms$  be the message space of $\PKE$. 
\begin{description}
%\item[$\Setup(1^\secp,1^{\numkey})\ra\msk$:] The setup algorithm takes a security parameter $1^\lambda$ and a collusion bound $1^{\numkey}$, and outputs a master secret key $\msk$.
\item[$\KG(1^\secp)\ra(\ek,\dk)$:] The key generation algorithm is a PPT algorithm that takes a security parameter $1^\lambda$, and outputs an encryption key $\ek$ and a decryption key $\dk$. 

\item[$\Enc(\ek,m)\ra\ct$:] The encryption algorithm is a PPT algorithm that takes an encryption key $\ek$ and a message $m \in \Ms$, and outputs a ciphertext $\ct$.

\item[$\Dec(\dk,\ct)\ra\tilde{m}$:] The decryption algorithm is a deterministic classical 
polynomial-time algorithm that takes a decryption key $\dk$ and a ciphertext $\ct$, and outputs a value $\tilde{m}$.

\item[Correctness:]For every $m \in \Ms$, we have
\begin{align}
\Pr\left[
\Dec(\dk, \ct) \allowbreak = m
\ \middle |
\begin{array}{ll}
(\ek,\dk)\gets\KG(1^\secp)\\
\ct\gets\Enc(\ek,m)
\end{array}
\right] 
=1.
\end{align}
\end{description}
\end{definition}

\begin{definition}[IND-CPA Security]\label{def:IND-CPA_PKE}
We say that a PKE scheme $\PKE$ with the message space $\Ms$ is IND-CPA secure if it satisfies the following requirement, formalized from the experiment $\expb{\PKE,\qA}{ind}{cpa}(1^\secp,\coin)$ between an adversary $\qA$ and the challenger:
        \begin{enumerate}
            \item  The challenger runs $(\ek,\dk)\gets\KG(1^\secp)$ and sends $\ek$ to $\qA$. 
            \item $\qA$ sends $(m_0^*,m_1^*)\in \Ms^2$ 
            to the challenger. 
            \item The challenger generates $\ct^*\la\Enc(\ek,m_\coin^*)$ and sends $\ct^*$ to $\qA$.
            \item $\qA$ outputs a guess $\coin^\prime$ for $\coin$. The challenger outputs $\coin'$ as the final output of the experiment.
        \end{enumerate}
        For any QPT $\qA$, it holds that
\begin{align}
\advb{\PKE,\qA}{ind}{cpa}(\secp) \seteq \abs{\Pr[\expb{\PKE,\qA}{ind}{cpa} (1^\secp,0) = 1] - \Pr[\expb{\PKE,\qA}{ind}{cpa} (1^\secp,1) = 1] }\leq \negl(\secp).
\end{align}
\end{definition}

\paragraph{Two-key equivocal PRF.}
We rely on a primitive called two-key equivocal PRF (TEPRF) introduced in  \cite{C:HJOSW16}.
%\takashi{The following definition is taken from \cite{EC:KitMorYam25}, which is slightly different from the original one in \cite{C:HJOSW16}.}
The following definition is taken verbatim from  \cite{kitagawa2025simple}. 
% \takashi{Here is a notational collision; $\ell$ in the following definition is different from the one in the description of watermarkable UPF.
% (In TEPRF, $\ell$ is the length of the input, in watermarkable UPF, $\ell$ is the length of the mark.) We need to resolve the notational collision later.
% }
\begin{definition}[Two-Key Equivocal PRF]\label{def:TEPRF} 
A two-key equivocal PRF (TEPRF) with input length $\ell$ (and output length $1$)\footnote{Though we can also define it for larger output length, we assume the output length to be $1$ by default similarly to \cite{C:HJOSW16}.} is a tuple of two algorithms $(\KG,\Eval)$.
\begin{description}
\item[$\KG(1^\secp,s^*) \ra (\key_0,\key_1)$:] The key generation algorithm is a PPT algorithm that takes as input the security parameter $1^\secp$ and a string $s^*\in \bit^\ell$, and outputs two keys $\key_0$ and $\key_1$. 
\item[$\Eval(\key,s) \ra b$:] The evaluation algorithm is a deterministic classical polynomial-time algorithm that takes as input a key $\key$ and an input $s\in \bit^\ell$, and outputs a bit $b\in \bit$. 
\end{description}
\begin{description}
    \item[Equality:]
    For all $\secp\in \mathbb{N}$, $s^*\in \bit^\ell$, $(\key_0,\key_1)\gets \KG(1^\secpar,s^*)$, $s\in \bit^\ell\setminus \{s^*\}$, 
    \[
    \Eval(\key_0,s)=\Eval(\key_1,s).
    \]
    \item[Different values on target:] 
     For all $\secp\in \mathbb{N}$, $s^*\in \bit^\ell$, $(\key_0,\key_1)\gets \KG(1^\secpar,s^*)$, 
    \[
    \Eval(\key_0,s^*)\ne \Eval(\key_1,s^*).
    \]
  \item[Differing point hiding.] 
For any (stateful) QPT adversary $\A$, 
    \begin{align}
    \abs{
    \Pr\left[
    \A(\key_b)=1:
    \begin{array}{l}
    (s^*_0,s^*_1,b) \gets \A(1^\secpar)\\
    (\key_0,\key_1)\gets \KG(1^\secpar,s^*_0)\\
    \end{array}
    \right]
    -
   \Pr\left[
    \A(\key_b)=1:
    \begin{array}{l}
    (s^*_0,s^*_1,b) \gets \A(1^\secpar)\\
    (\key_0,\key_1)\gets \KG(1^\secpar,s^*_1)\\
    \end{array}
    \right]
    }\le \negl(\secp).
    \end{align}
\end{description}
\end{definition}
\begin{theorem}[\cite{C:HJOSW16}]\label{thm:OWF_to_TEPRF}
Assuming the existence of OWFs, there exist TEPRFs with input length $\ell$  for any polynomial $\ell=\ell(\secpar)$.
\end{theorem}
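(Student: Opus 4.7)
The plan is to build TEPRFs from OWFs by combining a standard PRF (available from OWFs via the Goldreich--Goldwasser--Micali construction) with a programming gadget that lets the key generation algorithm override the PRF output at a designated input $s^*$. The two TEPRF keys $\key_0$ and $\key_1$ will share all the pseudorandom structure and differ only in the override bit. First I would instantiate two PRF layers: a \emph{detector} layer $G$ whose role is to recognize $s^*$ via a pseudorandomly generated tag $t \seteq G(K', s^*)$, and a \emph{baseline} layer $F$ that provides the pseudorandom outputs at every other input. I would set $\key_b \seteq (K, K', t, b)$, and define $\Eval(\key,s)$ to return $b$ whenever $G(K', s) = t$ and $F(K, s)$ otherwise.

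Next, I would dispatch the two deterministic properties. The equality requirement on any $s \ne s^*$ reduces to showing that the detector does not spuriously match the tag outside $s^*$; by taking the output length of $G$ to be, say, $3\secp$ and applying a union bound over the at most $2^\ell$ inputs, PRF security ensures that this bad event has negligible probability. The different-values-on-target property is immediate, since on input $s^*$ both keys trigger the override branch but encode opposite bits by construction.

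The hard part will be the differing point hiding property. The naive construction above in fact leaks $s^*$: once $K'$ is revealed inside $\key_b$, one can brute-force the domain to find the preimage of $t$, which is efficient when $\ell = O(\log \secp)$. To overcome this, I would follow the construction of~\cite{C:HJOSW16}, which hides the detection mechanism behind an additional symmetric-key encryption (or commitment) layer whose opening information is never fully contained in any single TEPRF key, so that the equality test used during evaluation is itself computationally opaque. The hiding proof then proceeds by a sequence of hybrids that successively replace $s^*$-dependent quantities with uniform ones, reducing each step to either the pseudorandomness of the underlying PRF or the security of the encryption layer. The bulk of the technical work lies in orchestrating these hybrids so that every intermediate step is polynomial-time reducible and uses only standard OWF-based primitives.
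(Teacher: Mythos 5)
The paper does not prove this theorem at all---it is imported verbatim from \cite{C:HJOSW16}---so the only question is whether your proposed construction would actually work. It would not, and the failure is not the one you flag. In the differing point hiding game the adversary \emph{chooses both candidates} $s^*_0,s^*_1$ and then receives $\key_b=(K,K',t,b)$ with $t=G(K',s^*)$. It does not need to brute-force anything, for any input length $\ell$: it simply computes $G(K',s^*_0)$ and $G(K',s^*_1)$ and checks which one equals $t$, distinguishing with probability essentially $1$. More fundamentally, \emph{any} design in which $\Eval$ runs an explicit ``is $s=s^*$?'' detector is doomed: whatever information the honest evaluator needs to run that test is by definition contained in the single key the adversary holds, so the adversary can run the same test on its two chosen candidates. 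Your proposed fix---wrapping the detector in an encryption or commitment layer ``whose opening information is never fully contained in any single TEPRF key''---is therefore incoherent as stated: if the opening is not in the key, the key cannot evaluate; if it is, the attack persists. The actual construction avoids explicit detection entirely. It is a GGM-tree/distributed-point-function style construction: the two keys are correlated expansions of a pseudorandom tree (root seeds plus per-level correction words) arranged so that the two evaluations coincide on every leaf except $s^*$, where they differ, while each key on its own is pseudorandom and hence reveals nothing about $s^*$. That is a genuinely different idea from ``baseline PRF plus override branch,'' and it is the missing one.

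Two smaller issues. First, the paper's equality property is perfect (``for all $(\key_0,\key_1)\gets\KG(1^\secp,s^*)$ and all $s\ne s^*$''), whereas your union bound only rules out spurious detector matches with overwhelming probability over $\KG$'s coins; a collision $G(K',s)=t$ at some $s\ne s^*$ would make the two keys output $0$ and $1$ respectively at that point, violating equality, and you cannot efficiently certify the absence of such collisions over a super-polynomial domain. Second, attributing the ``encryption layer'' fix to \cite{C:HJOSW16} misrepresents that construction; deferring the hard part to the cited paper while describing a mechanism it does not use leaves the core of the proof unestablished.
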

We also need the following property, which follows from the differing point hiding property defined above for $\ell = \omega(\log \secp)$.
\begin{description}
\item[Hard-to-find differing-point.] 
For any $b\in \bin$ and for any QPT adversary $\qA$, we have
\begin{align}
   \Pr\left[
    \A(\key_b)=s^*:
    \begin{array}{l}
    s^*\lrun \bin^\ell \\
    (\key_0,\key_1)\gets \KG(1^\secpar,s^*)\\
    \end{array}
    \right]
    \le \negl(\secp).
\end{align}
\end{description}

\paragraph{Coherently-signable constrained signatures.}
The following definition is taken from  \cite{kitagawa2025simple}. 

\begin{definition}[Constrained signatures]\label{def:cs}
A constrained signature (CS) scheme with the message space $\Ms$ and constraint class $\Fs= \setbk{f \colon \Ms \ra \zo{}}$ is a tuple of four algorithms $(\Setup,\Constrain,\Sign,\Vrfy)$.
\begin{description}
\item[$\Setup(1^\secp) \ra (\vk,\msk)$:] The setup algorithm is a PPT algorithm that takes as input the security parameter $1^\secp$, and outputs a master signing key $\msk$ and a verification key $\vk$. 
\item[$\Constrain(\msk,f)\ra \sigk_f$:] The constrain algorithm is a PPT algorithm that takes as input the master signing key $\msk$ and a constraint $f\in \Fs$, and outputs a constrained signing key $\sigk_f$.
\item[$\Sign(\sigk_f,m) \ra \sigma$:] The signing algorithm is a PPT algorithm that takes as input a constrained key $\sigk_f$ and an message $m \in \Ms$, and outputs a signature $\sigma$.
\item[$\Vrfy(\vk,m,\sigma) \ra \top/\bot$:] The verification algorithm is a deterministic classical polynomial-time algorithm that takes as input a verification key $\vk$,  message $m\in\Ms$, and signature $\sigma$, and outputs $\top$ or $\bot$. 
\end{description}
\begin{description}
\item[Correctness:] 
For any $m\in \Ms$ and $f\in \Fs$ such that $f(m)=1$,  we have 
\begin{align}
    \Pr\left[\Vrfy(\vk,m,\sigma)=\top:
    \begin{array}{l}
    (\vk,\msk)\gets \Setup(1^\secp)\\
    \sigk_f \gets \Constrain(\msk,f)\\
    \sigma \gets \Sign(\sigk_f,m)
    \end{array}
    \right]\ge 1-\negl(\secp).
\end{align}
\end{description}
\end{definition}

\begin{definition}[Function Privacy]
%    \synote{To be checked}
    For any $m\in \Ms$, $f_0, f_1\in \Fs$ such that $f_0(m)=f_1(m)=1$, $(\vk,\msk)\in \Setup(1^\secp)$, 
    $\sigk_{f_0} \in \Constrain(\msk,f_0)$, and 
    $\sigk_{f_1} \in \Constrain(\msk,f_1)$, 
    we require the following statistical indistinguishability: 
    \[
     \sigma_0 
    \approx_s 
    \sigma_1,
    \quad \mbox{where} \quad 
    \sigma_0 \gets \Sign(\sigk_{f_0},m), \quad 
    \sigma_1 \gets \Sign(\sigk_{f_1},m).
    \]
\end{definition}

\begin{definition}[Selective single-key security]
We say that a CS scheme satisfies selective single-key security if for any stateful QPT $\qA$, we have 
\begin{align}
    \Pr\left[\Vrfy(\vk,m,\sigma)=\top
    \land f(m)=0
    :
    \begin{array}{l}
    f \gets \qA(1^\secpar)\\
    (\vk,\msk)\gets \Setup(1^\secp)\\
    \sigk_f \gets \Constrain(\msk,f)\\
    (m,\sigma) \gets \qA(\vk,\sigk_f)
    \end{array}
    \right]\le \negl(\secp).
\end{align} 
\end{definition}

We introduce an additional property which we call coherent signability. Roughly, this ensures that a superposition of two constrained signing keys can be used to generate a signature almost without collapsing the state. 
\begin{definition}[Coherent-signability]\label{def:coherent_signing} 
We say that a CS scheme is coherently-signable if for any polynomial $L=L(\secp)$,  there is a QPT algorithm $\QSign$ that takes a quantum state $\ket{\psi}$ and a classical message $m\in \Ms$ and outputs a quantum state $\ket{\psi'}$ and a signature $\sigma$, satisfying the following: 
\begin{enumerate}
\item \label{item:coherent_one}
For any family $\{ f_z\in \Fs\}_{z\in \bit^L}$, 
for any $z\in \bin^L$, 
$(\vk,\msk)\in \Setup(1^\secp)$,    
$\sigk_{f_z}\in \Constrain(\msk,f_z)$, and $m\in \cM$,  
the output distribution of $\QSign\left(\ket{z
}\ket{\sigk_{f_z}},m\right)$ is identical to that of $\Sign(\sigk_{f_z},m)$. 
\item \label{item:coherent_two}

For any family $\{ f_z\in \Fs\}_{z\in \bit^L}$, 
any amplitudes $\{\alpha_z\}_{z\in \bit^L}$ such that $\sum_{z\in \bit^L}|\alpha_z|^2=1$, 
any $(\vk,\msk)\in \Setup(1^\secp)$,  any $\sigk_{f_z}\in \Constrain(\msk,f_z)$,   and any message $m\in \Ms$ such that $f_z(m)=1$ for every $z$ with $\alpha_z\ne 1$, let $\ket{\psi}$ be the state $\ket{\psi}=\sum_{z\in \bit^L}\alpha_z\ket{z}\ket{\sigk_{f_z}}$. 
Suppose that we run $(\ket{\psi'},\sigma)\gets\QSign(\ket{\psi},m)$. 
Then we have 
\begin{align}
    \|\ket{\psi}\bra{\psi}-\ket{\psi'}\bra{\psi'}\|_{tr}=\negl(\secp).
\end{align}
\end{enumerate}
\end{definition}

\begin{lemma}[Adapted from \cite{kitagawa2025simple}]
If the SIS assumption holds for a subexponential modulus-to-target-norm-bound ratio, 
then there exists coherently-signable constrained signatures with selective security, function privacy, and coherent-signability
that supports any constrained class representable by circuits of bounded arbitrary polynomial depth.
\end{lemma}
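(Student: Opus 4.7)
The plan is to instantiate the homomorphic-trapdoor construction of Kitagawa et al., which is itself in the GVW/BGG+ lineage of lattice-based constrained/homomorphic signatures. Concretely, the verification key will consist of a SIS matrix $\mA \in \ZZ_q^{n\times m}$, a family of matrices $\{\mB_i\}$ encoding the bits of the constraint (together with an input message), and a target syndrome $\vu$; the master signing key $\msk$ is a short Micciancio-Peikert trapdoor for $\mA$. To constrain to $f\in \Fs$, one homomorphically evaluates $f$ on $\{\mB_i\}$ à la Boneh et al., obtaining a matrix $\mB_f$, and then delegates a short basis for the extended matrix $[\mA\|\mB_f]$ (or $[\mA\|\mB_f-f(m)\mG]$ depending on the embedding) via the standard basis-delegation lemma; signing a message $m$ with $f(m)=1$ is then Gaussian preimage sampling of a short $\boldsymbol{\sigma}$ satisfying $[\mA\|\mB_f]\boldsymbol{\sigma}=\vu$, and $\Vrfy$ checks the norm and the syndrome. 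To support constraints representable by circuits of an a-priori bounded polynomial depth $d=d(\secp)$, the plan is to set the modulus-to-target-norm-bound ratio to be sub-exponential in $\secp$, since BGG+ homomorphic evaluation blows up trapdoor quality by a factor that is polynomial in $d$ and the lattice dimensions; this is exactly what the assumption in the statement affords.

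For the two classical properties, the plan is to reuse essentially the arguments of Kitagawa et al. Selective single-key security reduces to SIS via the Agrawal-Boneh-Boyen partitioning trick: the reduction programs $\{\mB_i\}$ so that it retains a trapdoor for $[\mA\|\mB_{f^*}]$ (enabling an honest $\Constrain(\msk,f^*)$), but no trapdoor is available for $[\mA\|\mB_f-f(m)\mG]$ whenever $f^*(m)=0$; any short preimage that the forger produces in that kernel yields a SIS solution for $\mA$. Function privacy follows from the fact that Gaussian preimage sampling with width super-polynomially above the smoothing parameter of the lattice induces an output distribution depending only on $[\mA\|\mB_f]$ and $\vu$, not on the specific trapdoor, so the signatures produced from $\sigk_{f_0}$ and $\sigk_{f_1}$ with $f_0(m)=f_1(m)=1$ are statistically close.

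The hard part will be establishing coherent-signability, since classical Gaussian preimage sampling consumes a large amount of randomness whose naive superposition on $\sum_z \alpha_z\ket{z}\ket{\sigk_{f_z}}$ would entangle the randomness register with the trapdoor register and collapse the superposition. The plan is to implement $\QSign$ via the coherent Gaussian preimage sampler $\QSampGauss$ available in our preliminaries: it coherently prepares a purification of the Gaussian preimage distribution for $([\mA\|\mB_f],\vu)$ branch by branch, and because that distribution is statistically independent of the trapdoor (by function privacy applied branch-wise), the resulting junk register factors as a product state across the $z$ branches up to negligible trace distance. Measuring the signature register then yields a classical $\sigma$ whose marginal distribution matches $\Sign(\sigk_{f_z},m)$ on every branch, verifying \cref{def:coherent_signing}~\ref{item:coherent_one}, while a standard uncomputation of the junk register returns the signing-key register to within negligible trace distance of the input $\ket{\psi}$, verifying \cref{def:coherent_signing}~\ref{item:coherent_two}. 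The main obstacle is accounting for the two sources of error, namely the statistical gap in function privacy and the uncomputation residual, and bounding each by $\negl(\secp)$; both are controlled by taking the Gaussian width sufficiently above the smoothing parameter, which is accommodated within the sub-exponential modulus-to-norm ratio already assumed.
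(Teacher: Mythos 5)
Your proposal is correct and follows essentially the same route as the paper, which simply imports the construction of Kitagawa et al.\ (GVW15-style homomorphic signatures converted to constrained signatures, with coherent signing via a trapdoor-independent Gaussian preimage sampler) and only adds the observation that function privacy follows because the signature distribution is negligibly close to a discrete Gaussian over $\Lambda^\perp$ depending only on the verification key and message. Your Gaussian-preimage-independence argument for function privacy and your branch-wise, trapdoor-independent coherent sampler for $\QSign$ are exactly the mechanisms the cited construction relies on.
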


We note that \cite{kitagawa2025simple} does not explicitly prove function privacy of their construction. 
However, this property follows almost immediately from the way their scheme is obtained: they apply the general conversion from homomorphic signatures to constrained signatures due to \cite{TCC:Tsabary17} to the homomorphic signature scheme of \cite{STOC:GorVaiWic15}. 
Since their homomorphic signature already enjoys the context-hiding property, this yields the function private constrained signatures. 
More concretely, the construction satisfies function privacy because the signature distribution is negligibly close to one that depends only on the verification key and the message—denoted as $D_{\mathbb{Z}_q^m \cap \Lambda^\perp_{\mA|\mB\mH + \mC}}$ in their notation (see Section~8.2 of \cite{kitagawa2025simple}).

\subsection{Noisy Trapdoor Claw-Free Generators}\label{sec:NTCF}
Here, we introduce noisy trapdoor claw-free (NTCF) generators, which is an abstraction of NTCF functions~\cite{FOCS:BCMVV18,FOCS:Mahadev18a} and its associated algorithms. NTCF generators with input space $\cX = \{\cX_\secp\}_\secp$ consist of the following algorithms: 
\begin{description}
  \item[$\FuncGen(1^\lambda,\mode) \to (\pp,\td)$:] 
  The function generation algorithm is a PPT algorithm that takes a security parameter $1^\lambda$ and $\mode\in \{1,2\}$ and outputs a public parameter $\pp$ and a corresponding trapdoor $\td$,
  where $\mode=1$ indicates that the system is in ``injective mode", and $\mode=2$ indicates ``two-to-one mode".
  \item[$\StateGen(\pp) \to (y, \ket{\psi})$ :]
  The state generation algorithm is a QPT algorithm that takes a public parameter $\pp$ and outputs a classical string $y$ and a quantum state $\ket{\psi}$.
  \item[$\Chk(\pp,x,y) \to \top \mbox{ or } \bot$ :]  
  The verification algorithm is a deterministic algorithm that takes a public parameter $\pp$ and classical strings $x\in \cX$ and $y$ and outputs $\top$ indicating accept or $\bot$ indicating reject.
  \item[$\Invert(\pp, \td, y) \to x \mbox{ or } (x_0,x_1) \mbox{ or } \bot:$] 
  The inversion algorithm is a PPT algorithm that takes as input a public parameter $\pp$, a trapdoor $\td$, and a classical string $y$ and outputs a single classical string $x\in \cX$, two classical strings $(x_0,x_1)\in \cX\times \cX$, or $\bot$ indicating that the inversion failed.
\end{description}
We highlight two differences from the original formulation of NTCF functions in \cite{FOCS:BCMVV18,FOCS:Mahadev18a}.
(i) In their definition, an input $x$ is mapped to a distribution over outputs $y$ via a function description (in our case, $\pp$). In contrast, our formulation disregards the definition of the distribution and retains only a verification algorithm $\Chk$, which checks whether a pair $(x,y)$ is consistent, meaning that $x$ maps to $y$ in the noiseless case, or that $y$ belongs to the support associated with $x$ in the noisy case.
(ii) The original works define injective and two-to-one modes as separate families. However, we unify them into a single family, using the parameter $\mode$ to specify the operational mode.

\begin{definition}[NTCF Generator]\label{def:NTCF-generator}
Let $(\FuncGen, \StateGen, \Chk, \Invert)$ be a set of algorithms. 
It is a NTCF generator if it satisfies the following requirements.
In the following, let $\cY_\pp$ be 
\[
\cY_\pp = \{ y : \exists x\in \cX  \mbox{~s.t.~} \Chk(\pp,x,y) = \top \}.
\]
\begin{itemize}

\item{\bf Correctness of function generation and inversion in injective mode:}
For all $(\pp,\td) \in \FuncGen(1^\secp,1)$ and $y \in \cY_\pp$, there exists a unique $x \in \cX$ such that $\Chk(\pp,x,y)=\top$.
Furthermore, $\Invert(\pp,\td,y) $ outputs $ x$.

\item{\bf Correctness of function generation and inversion in 2-to-1 mode:}
For all $(\pp,\td) \in \FuncGen(1^\secp,2)$ and $y \in \cY_\pp$, there exist distinct $x_0, x_1 \in \cX$ such that
$\Chk(\pp,x_0,y) = \Chk(\pp,x_1,y) = \top$, 
and no other $x \notin \{x_0,x_1\}$ satisfies $\Chk(\pp,x,y)=\top$.
Furthermore, $\Invert(\pp,\td,y)$ outputs $(x_0,x_1)$. 

\item{\bf Correctness of state generation:}
For all $\pp$ generated by $\FuncGen(1^\lambda,\mode)$, we have 
\[
\Pr[ y\in \cY_\pp :\StateGen(\pp)\to (y, \ket{\psi})  ] = 1-\negl(\lambda).
\]
Furthermore, with overwhelming probability over the randomness of $\StateGen(\pp)\to (y, \ket{\psi})$, we have 
\begin{align}
\parallel |\psi\rangle\langle\psi|- 
|\psi'\rangle\langle\psi'|\parallel_{tr}\leq \negl(\lambda)
\end{align}
where 
\[
\ket{\psi'} \defeq 
\begin{cases}
  \ket{x} & \text{if } y\in \cY_\pp \text{ and } \mode =1, \text{ where } x= \Invert(\pp, \td, y)\\
  \frac{1}{\sqrt{2}} (\ket{x_0} + \ket{x_1}) & \text{if } y\in \cY_\pp \text{ and } \mode=2, \text{ where } (x_0,x_1)=\Invert(\pp, \td, y)\\
  \bot & \text{Otherwise.} \\
\end{cases}
\] 

\item{\bf Indistinguishability of the modes}
For any QPT algorithm $\cA$, we have 
    \begin{align}
    \abs{
    \Pr[\cA(\pp)=1:(\pp,\td)\lrun \FuncGen(1^\secp,1)]-
    \Pr[\cA(\pp)=1:(\pp,\td)\lrun \FuncGen(1^\secp,2)]
    }\leq  \negl(\secp).
    \end{align}
\item{\bf Adaptive hardcore bit security}
 For all $(\pp,\td)\in \FuncGen(1^\secp, 2)$ and $\lenx=\poly(\secp)$ such that  $\cX \subseteq \bin^w$, the following holds. 
\begin{enumerate}
\item 
There exists an efficient deterministic algorithm 
$\GoodSet$ that takes as input $\td$, $y$, and $d\in \bin^\lenx$ and outputs $\top$ or $\bot$.
We require that for all $(\pp,\td)\in \FuncGen(1^\secp,2)$, $y \in \cY_\pp$,
\[
\Pr_{d\chosen \bin^\lenx }[ \GoodSet(\td, y, d) = \top ] \ge 1- \negl(\secp). 
\]
In the case of $y\not\in \cY_\pp$, $\GoodSet(\td, y, d) $ outputs $\bot$ for any $d\in \bin^\lenx$.

%For all $y\in \cY_\pp$, there exists a set $\Goodset_{\pp,y}\subseteq \cX$ such that $\Pr_{d\chosen \bin^\lenx }[d \notin \Goodset_{\pp,y} ] \le \negl(\secp)$. In addition, there exists a PPT algorithm that checks for membership in $\Goodset_{\pp,y}$ given $\pp,y$, and $\td$.
\item  Let
\begin{align}
 H_{\pp,0}  & \seteq \setbracket{(x_b,y,d) \mid b\in\zo{},(x_0,x_1)=\Invert(\pp,\td,y), \GoodSet(\td, y, d) = \top,~ \langle d, (x_0\xor x_1) \rangle =0 },\\
 H_{\pp,1} & \seteq \setbracket{(x_b,y,d )\mid b\in\zo{}, (x_0,x_1)=\Invert(\pp,\td,y), \GoodSet(\td, y, d) = \top,~ \langle d, (x_0\xor x_1) \rangle =1 },
\end{align}
then for any QPT $\A$,\footnote{Note that the original definition by \cite{FOCS:Mahadev18a} applies a bijective map $J$ to an element of $\cX$ when taking the hardcore bit. We can remove this indirection caused by this map by 
directly defining their $ J(\cX) $ as our $\cX $. This slightly simplifies the notation.
} 
it holds that
\begin{align}
\abs{\Pr_{(\pp,\td)\gets \FuncGen (1^\secp,2)}[\A(\pp)\in H_{\pp,0}] - \Pr_{(\pp,\td)\gets \FuncGen (1^\secp,2)}[\A(\pp)\in H_{\pp,1}]} \le \negl(\secp).
\end{align}
\end{enumerate}
 \end{itemize}
\end{definition}

\begin{lemma}[\cite{FOCS:Mahadev18a}]
If the LWE assumption holds against QPT adversaries, there exists an NTCF generator.
\takashi{May be better to mention parameter for LWE.}
\end{lemma}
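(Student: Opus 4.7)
The plan is to instantiate the NTCF generator directly from the LWE-based construction of \cite{FOCS:BCMVV18,FOCS:Mahadev18a}, and then reconcile it with the streamlined syntax of \cref{def:NTCF-generator}. Concretely, I would set parameters $(n, m, q, \sigma)$ with $q$ a suitable prime modulus and $\sigma$ a suitable discrete Gaussian width so that decisional LWE with secret dimension $n$ and error width $\sigma$ is QPT-hard. The algorithm $\FuncGen(1^\secp,\mode)$ samples a matrix $\mA \in \Z_q^{n\times m}$ together with a lattice trapdoor $\mT$ via the standard trapdoor sampler, and also outputs a vector $\vy_0 \in \Z_q^n$: in $\mode = 1$ (injective) $\vy_0$ is uniform, and in $\mode = 2$ (two-to-one) $\vy_0 = \mA\vs^{*} + \ve^{*}$ for a freshly sampled short $\vs^{*}$ and Gaussian $\ve^{*}$. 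The public parameter is $\pp = (\mA,\vy_0)$ with input space $\cX = \bit \times \{\vx : \|\vx\|\le B\}$, and the underlying function is $f_\pp(b,\vx) = \mA\vx + b\cdot \vy_0 + (\text{Gaussian noise})$. Mode indistinguishability is exactly LWE with secret $\vs^{*}$.

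Next, I would fit the abstract interfaces. Define $\Chk(\pp,(b,\vx),\vy)$ to accept iff $\vy - \mA\vx - b\cdot \vy_0$ has sufficiently small norm and $\vx$ itself is short; define $\Invert(\pp,\td,\vy)$ to use the trapdoor $\mT$ to recover the unique short preimage in injective mode, or to recover both preimages $(0,\vx_0)$ and $(1,\vx_1 = \vx_0 - \vs^{*})$ in two-to-one mode. The $\StateGen$ procedure is the usual one: prepare a Gaussian superposition $\sum_{b,\vx} \sqrt{\rho_\sigma(\vx)}\ket{b,\vx}$, coherently compute $f_\pp$ into a fresh image register together with the Gaussian-noise register, measure the image register to obtain $\vy$, and discard the noise register (up to negligible smudging error). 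The resulting state is $\negl(\secp)$-close in trace distance to $\ket{b,\vx}$ in injective mode and to $\frac{1}{\sqrt{2}}(\ket{0,\vx_0} + \ket{1,\vx_1})$ in two-to-one mode, matching the correctness of state generation requirement.

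For adaptive hardcore bit security, I would port Mahadev's argument essentially verbatim, since our $\Chk$-based formulation is equivalent up to negligible statistical error to the original image-distribution formulation: the set $\cY_\pp$ coincides with the support of the image distribution up to a Gaussian tail. The $\GoodSet$ predicate would be defined following Mahadev as the set of $\vd$ such that $\langle \vd, \vx_0 \oplus \vx_1\rangle$ is a well-defined hardcore bit after accounting for the fact that the preimages live in a bounded integer lattice rather than in $\bit^w$ (specifically, throwing out $\vd$ for which the bit-flip encoding of $\vx_0, \vx_1$ creates carry ambiguities); a standard counting argument shows $\GoodSet$ accepts a $1-\negl(\secp)$ fraction of $\vd$. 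The hardcore bit indistinguishability then follows from Mahadev's reduction, which shows that any QPT distinguisher between $H_{\pp,0}$ and $H_{\pp,1}$ yields an LWE distinguisher against $\vs^{*}$.

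The main obstacle is purely bookkeeping: carefully matching the syntactic conventions of \cref{def:NTCF-generator}, in which the two modes are merged into a single family and the distributional definition of the image is replaced by the predicate $\Chk$, against Mahadev's original formulation that uses two separate function families and explicit output distributions. The substantive step, namely choosing LWE parameters so that the image supports in the two modes are statistically close (so that mode indistinguishability is not broken information-theoretically) and so that Gaussian-smudging and preimage sampling both succeed, is the standard parameter-balancing argument from \cite{FOCS:BCMVV18,FOCS:Mahadev18a}, which concretely requires quasi-polynomial or sub-exponential modulus $q$ relative to the noise width; I would record these parameters explicitly in the statement.
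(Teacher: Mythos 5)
Your proposal is correct and matches what the paper does: the paper gives no proof of this lemma, citing \cite{FOCS:BCMVV18,FOCS:Mahadev18a} directly and noting only the two syntactic adaptations you identify (merging the injective and two-to-one families into a single family indexed by $\mode$, and replacing the image distribution by the predicate $\Chk$, with the map $J$ absorbed into the definition of $\cX$). Your reconstruction of the LWE instantiation, the $\GoodSet$ predicate, and the super-polynomial modulus-to-noise requirement is consistent with the cited construction, and recording the parameters explicitly is exactly what the authors' own margin note suggests.
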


The following lemma is a variant of \cite{AC:HMNY21}. 
In our variant, the challenger provides $\A$ with the trapdoors corresponding to the set $\overline{S}$, whereas in \cite{AC:HMNY21}, these trapdoors are not given.
Therefore, the following statement is slightly stronger, but can be proven by the same proof.

\begin{lemma}[Cut-and-Choose Adaptive Hardcore Property, adapted from Lemma 5.6 in \cite{AC:HMNY21}]
Let $\NTCF = (\FuncGen, \allowbreak \StateGen, \Chk, \Invert)$ be an NTCF generator that satisfies the adaptive hardcore bit security. 
Then it satisfies what we call the \emph{cut-and-choose  adaptive hardcore property} defined below.
For a QPT adversary $\cA$ and a positive integer $n$, 
we consider the following  experiment $\expc{\NTCF,\cA}{cut}{and}{choose}(\secp,n)$.

\begin{enumerate}
    \item The challenger chooses a uniform subset $S\subseteq [2n]$ such that $|S|=n$.
    In the following, $\overline{S}$ denotes $[2n]\backslash S$.
    %\footnote{We can also take $S\subseteq [2n]$ such that $|S|=n$, but we do as above just for convenience  in the proof.} 
    \item The challenger generates 
    $(\pp_i,\td_i)\lrun \FuncGen(1^\secp,1)$ for all $i\in S$ and 
     $(\pp_i,\td_i)\lrun \FuncGen(1^\secp,2)$ for all $i\in \overline{S}$ and
    sends $\{\pp_i\}_{i\in[2n]}$ to $\cA$. 
    \item $\cA$ sends 
    $\{y_i,d_i\}_{i\in[2n]}$ to the challenger. 
    \item The challenger
    computes $(x_{i,0},x_{i,1})\lrun \Invert(\pp_i, \td_i, y_i)$ for all $i\in \overline{S}$ and checks if 
     $\GoodSet({\td_i, y_i,d_i}) = \top$ and
    $d_i\cdot (x_{i,0}\oplus x_{i,1})=0$ hold for all $i\in \overline{S}$.
    If they do not hold for some $i\in \overline{S}$, the challenger immediately aborts and the experiment returns $0$. 
    \item  \label{step:reveal_S} 
    The challenger sends $S$ and $\{ \td_i \}_{i\in \overline{S}}$ to $\cA$.
    \item $\cA$ sends $\{x_i\}_{i\in S}$ to the challenger. 
    \item 
    The challenger checks if $\Chk(\pp_i,x_i,y_i)=1$ holds for all $i\in S$.
    If this holds for all $i\in S$, the experiment returns $1$. Otherwise, it returns $0$.
\end{enumerate}
Then for any $n$ such that
$n\leq \poly(\secp)$ and 
$n=\omega(\log \secp)$, it holds that 
\begin{align}
\advc{\NTCF,\cA}{cut}{and}{choose}(\secp,n)\seteq \Pr[ \expc{\NTCF,\cA}{cut}{and}{choose}(\secp,n)=1] \leq \negl(\secp).
\end{align}
\end{lemma}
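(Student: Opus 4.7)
The plan is to reduce to the adaptive hardcore bit property by a three-step argument: mode switching, introduction of a bookkeeping set $T$, and a two-pronged analysis of the winning event according to the size of $|T^c|$.  First, I would iterate a hybrid over the $n$ positions in $S$, replacing each generation $(\pp_i,\td_i) \gets \FuncGen(1^\secp,1)$ with $(\pp_i,\td_i) \gets \FuncGen(1^\secp,2)$.  Each hop loses only $\negl(\secp)$ by mode indistinguishability, so after $n$ hops we obtain a syntactically identical game $G'$ in which every $\pp_i$ is in 2-to-1 mode; the winning condition is unaffected because the challenger still uses $\Invert(\pp_i,\td_i,y_i)$ on the $\overline{S}$ side and $\Chk$ on the $S$ side.

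For any first-phase commitment $\{(y_i,d_i)\}_{i}$ I then introduce the bookkeeping set
\[
T \seteq \{\, i \in [2n] : \GoodSet(\td_i,y_i,d_i) = \top \text{ and } \langle d_i,\, x_{i,0}\oplus x_{i,1}\rangle = 0\,\},
\]
where $(x_{i,0},x_{i,1}) = \Invert(\pp_i,\td_i,y_i)$.  Whenever $\A$ wins, the challenger's checks on $\overline{S}$ enforce $\overline{S} \subseteq T$, equivalently $T^c \subseteq S$, and in particular $|T^c| \le n$.  I split the winning event according to whether $|T^c|$ lies in the middle range $[n/4,\,3n/4]$ or in one of the two extreme ranges.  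In the middle range a direct counting argument gives $\Pr_S[\,T^c \subseteq S \mid |T^c|=k\,] = \binom{2n-k}{n-k}/\binom{2n}{n} \le 2^{-c_0 n}$ for some absolute constant $c_0>0$ and every $k \in [n/4,3n/4]$; under $n = \omega(\log \secp)$ this portion of the winning probability is immediately $\negl(\secp)$.

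For the extreme ranges I construct reductions $\cR^{\pm}$ to adaptive hardcore bit.  Each $\cR$ samples $i^* \gets [2n]$, embeds its challenge as $\pp_{i^*} \seteq \pp^*$, generates $(\pp_j,\td_j) \gets \FuncGen(1^\secp,2)$ for $j\ne i^*$, runs $\A$ on the first phase, uses $\{\td_j\}_{j\ne i^*}$ to form an estimate $\kappa$ of $|T^c|$ accurate up to $\pm 1$, and samples $S$ uniformly subject to $i^* \in S$---this last condition ensures that the missing trapdoor $\td_{i^*}$ is never included in the $\{\td_j\}_{j\in\overline{S}}$ revealed to $\A$.  $\cR^+$ outputs $(x_{i^*},y_{i^*},d_{i^*})$ only when $\kappa < n/4$, while $\cR^-$ outputs the same tuple only when $\kappa > 3n/4$.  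A direct averaging over $i^*\in S$, using $T^c \subseteq S$ whenever $\A$ wins, shows that the signed difference $\Pr[\text{output}\in H_{\pp^*,0}] - \Pr[\text{output}\in H_{\pp^*,1}]$ for $\cR^+$ equals $\tfrac{1}{2n}\,\expect[W_S\,(n-2|T^c|)]$ where $W_S$ is the win indicator; on the chosen event $|T^c|<n/4$ this has definite sign and magnitude at least $\tfrac{1}{8}\Pr[\A \text{ wins}\wedge |T^c|<n/4]$, and an analogous bound holds for $\cR^-$ on $|T^c|>3n/4$.  Adaptive hardcore bit security thus forces both extreme contributions to be $\negl(\secp)$, and combining with the middle-range bound yields $\Pr[\A \text{ wins}] \le \negl(\secp)$.

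The main obstacle is precisely the middle regime $|T^c|\approx n/2$: there the hardcore bit reduction degenerates because $n-2|T^c|\approx 0$ gives no usable distinguishing bias.  The escape is purely combinatorial---the constraint $T^c \subseteq S$ that winning forces is exponentially unlikely for a random $n$-subset once $|T^c|=\Omega(n)$---and the assumption $n=\omega(\log \secp)$ is exactly what turns the resulting $2^{-\Omega(n)}$ bound into a negligible function of $\secp$.  A secondary subtlety is that $\cR^{\pm}$ must simulate the cut-and-choose challenger without possessing $\td_{i^*}$; this is handled by conditioning $S$ to contain $i^*$, so that $\td_{i^*}$ never appears in the set of trapdoors revealed to $\A$.
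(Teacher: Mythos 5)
The paper itself does not write out a proof of this lemma: it is imported from Lemma~5.6 of \cite{AC:HMNY21} with the remark that the same proof carries over even when the trapdoors $\{\td_i\}_{i\in\overline{S}}$ are revealed. Your reconstruction matches that argument in all essential respects: the mode-switching hybrid (valid because the challenger never uses $\td_i$ for $i\in S$, so each hop is simulatable) makes $S$ independent of $\A$'s first message; the set $T$ isolates the positions that would pass the $\overline{S}$-check; the counting bound $\Pr_S[T^c\subseteq S\mid |T^c|=k]=\prod_{j=0}^{k-1}\tfrac{n-j}{2n-j}\le 2^{-k}$ disposes of large $|T^c|$ using $n=\omega(\log\secp)$; and planting the hardcore-bit challenge at a uniformly random $i^*$ conditioned to lie in $S$ (so that $\td_{i^*}$ is never revealed, yet $x_{i^*}$ is delivered by $\A$'s second message) handles small $|T^c|$.

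The one step that does not hold as stated is the $\cR^-$ branch. Your set $T^c$ conflates two different failure types: positions with $\GoodSet(\td_i,y_i,d_i)=\top$ and inner product $1$ (which land in $H_{\pp^*,1}$), and positions with $\GoodSet(\td_i,y_i,d_i)=\bot$ (which land in \emph{neither} $H_{\pp^*,0}$ nor $H_{\pp^*,1}$). Writing $T_1$ for the former subset, the signed difference is actually $\tfrac{1}{n}\expect\!\left[W_S\left(|T\cap S|-|T_1\cap S|\right)\right]$, not $\tfrac{1}{2n}\expect[W_S(n-2|T^c|)]$. For $\cR^+$ this discrepancy is harmless, since $|T_1|\le|T^c|$ only strengthens the positive lower bound when $|T^c|<n/4$. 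But for $\cR^-$ the claimed ``analogous bound'' fails: if the adversary makes $T^c$ consist mostly of $\GoodSet$-failures (e.g.\ by choosing degenerate $d_i$ or malformed $y_i$), then $|T_1|$ can be $0$ and the difference on the event $|T^c|>3n/4$ can be zero or even of the wrong sign, so no contradiction with adaptive hardcore bit security is obtained. The gap is easy to excise rather than repair: your combinatorial bound already shows $\Pr[W\wedge|T^c|\ge n/4]\le 2^{-n/4}=\negl(\secp)$ for the \emph{entire} range $|T^c|\ge n/4$, so the three-way split should be a two-way split and $\cR^-$ should be deleted.
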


% !TEX root = main.tex

\section{Watermarkable Cryptographic Functionalities}\label{sec-watermarking}
Here, we introduce and construct several watermarkable cryptographic primitives. 
Specifically, we build watermarkable public-key encryption from plain public-key encryption, 
watermarkable unpredictable functions from one-way functions, 
and watermarkable digital signatures from SIS. 
Our constructions are inspired by the ideas implicit in \cite{kitagawa2025simple}; 
for the latter two primitives, however, we add new twists to handle evaluation and signing oracles in the security proofs.
\subsection{Watermarkable Public Key Encryption}
\label{sec:WPKE}
%\takashi{The following is just taken from the note. We should make it formal.}
\paragraph{Syntax}
A watermarkable public key encryption scheme $ \WPKE=(\KG,\Mark, \Enc,\Dec)$ with the message space $\cM=\{\cM_\lambda\}_\lambda$ and the mark space $\cX=\{\cX_\lambda\}_\lambda$ consists of the following algorithms.
In the following, we omit the subscript for $\cX_\secp$ and $\cM_\secp$ to denote $\cX$ and $\cM$ when it is clear from the context.
\begin{description}
\item[$\KG(1^\secp)\ra(\ek,\msk)$:] The key generation algorithm is a PPT algorithm that takes a security parameter $1^\lambda$, and outputs a public key $\ek$ and a master secret key $\msk$. 

\item[$\Mark(\msk,x)\ra \dk(x)$:] The marking algorithm is a deterministic classical polynomial time algorithm that takes a master secret key $\msk$ and a mark $x\in \cX$, and outputs a marked secret key $\dk(x)$. 

\item[$\Enc(\ek, m )\ra\ct$:] The encryption algorithm is a PPT algorithm that takes a public key $\ek$ and a message $m \in \Ms$, and outputs a ciphertext $\ct$.

\item[$\Dec(\dk(x),\ct)\ra m$:] The decryption algorithm is a deterministic classical 
polynomial-time algorithm that takes a marked decryption key $\dk(x)$ or a master secret key $\msk$ and a ciphertext $\ct$, and outputs a value $\tilde{m}$.

\item[Correctness:]For every $m \in \Ms$ and $x\in \cX$, we have
\begin{align}
\Pr\left[
\begin{array}{ll}
\Dec(\dk(x), \ct) \allowbreak = m
\\
\qquad \quad \land 
\\
\Dec(\msk, \ct) \allowbreak = m
\end{array}
\ \middle |
\begin{array}{ll}
(\ek,\msk)\gets\KG(1^\secp)\\
\dk(x)\gets \Mark(\msk,x)\\ 
\ct\gets\Enc(\ek,m)
\end{array}
\right] 
=1.
\end{align}
\item[One-wayness:] For all QPT adversary $\qA$, we need
\begin{align}
\Pr\left[
\qA(\ek, \ct) \allowbreak = m
\ \middle |
\begin{array}{ll}
(\ek,\msk)\gets\KG(1^\secp)\\
m \lrun \cM \\ 
\ct\gets\Enc(\ek,m)
\end{array}
\right] 
=\negl(\secp).
\end{align}
\item[Parallel mark extractability:]
Here, we introduce the notion of ``parallel mark extractability". 
To do so, we first define the following parallel one-way inversion experiment involving an adversary \( \A \) and a challenger, parameterized by an arbitrary polynomial \( n = \text{poly}(\secp) \), denoted as \( \expb{\WPKE,\qA,n}{par}{ow}(1^\secp) \).
\begin{enumerate}
\item The challenger generates $(\ek_i,\msk_i)\gets \KG(1^\secp)$ for $i\in [n]$ and sends $\{\ek_i \}_{i\in[n]}$ to $\A$.
\item Given $\{\ek_i \}_{i\in[n]}$, the adversary $\qA$ chooses arbitrary $x_1,...,x_n \in \cX$ and sends them to the challenger. 
\item 
The challenger computes $\dk_i(x_i)\gets\Mark(\msk_i,x_i)$ for $i\in [n]$ and sends $\{ \dk_i(x_i) \}_{i\in[n]}$ to $\A$.
\item\label{step:par-ow-3} 
For $i\in [n]$, the challenger generates $m_i\gets \calM$ and  $\ct_i\gets \Enc(\ek_i,m_i)$ and sends $\{\ct_i\}_{i\in [n]}$ to $\A$.
\item\label{step:par-ow-4}  
Given $\{\ct_i\}_{i\in [n]}$, $\A$ outputs $\{m'_i\}_{i\in [n]}$.
\item\label{step:par-ow-5} 
The experiment outputs $1$ (indicating that $\A$ wins) if $m'_i=m_i$ for all $i\in [n]$ and $0$ otherwise. 
\end{enumerate}
We define the advantage of a QPT adversary $\cA$ to be 
\begin{align}
\advb{\WPKE,\qA,n}{par}{ow}(\secp) \seteq \Pr[\expb{\WPKE,\qA,n}{par}{ow}(1^\secp)= 1].
\end{align}
We also define parallel mark extraction experiment defined by a QPT extractor $\qExt$, an adversary \( \A \), and a challenger, parameterized by an arbitrary polynomial \( n = \text{poly}(\secp) \), denoted as $\expb{\WPKE,\qA,\qExt,n}{par}{ext}(1^\secp)$. 
The parallel extraction experiment is the same as parallel inversion experiment except for the last 3 steps. Namely, we replace Step \ref{step:par-ow-3}, \ref{step:par-ow-4}, and \ref{step:par-ow-5} with the following:
\begin{enumerate}[label=\arabic*', start=4]
\item
Take the adversary $\qA$ right before Step \ref{step:par-ow-3} of \( \expb{\WPKE,\qA,n}{par}{ow}(1^\secp) \). 
Since $\qA$ is a quantum non-uniform QPT adversary, we can 
parse $\qA$ as a pair of a unitary circuit $U_\qA$ and a quantum state $\rho_\A$, where to run $\qA$ on a set of ciphertexts $\{\ct_i\}_{i\in [n]}$, 
we apply $U_\qA$ to $\{\ct_i\}_{i\in [n]}$ and $\rho_\qA$ and then measure the corresponding registers to obtain the output $\{m'_i\}_{i\in [n]}$.
\item We run $\qExt(\{\ek_i\}_{i\in [n]}, (U_\qA, \rho_\qA) ) \to \{x'_i\}_{i\in [n]}$.  
%\takashi{A decryptor may be formalized as a quantum circuit with initial state, similarly to Definition 6.2 of \url{https://eprint.iacr.org/2021/946.pdf}.}
\item The experiment outputs $1$ (indicating that the adversary wins) if $x'_i=x_i$ for all $i\in [n]$ and $0$ otherwise. 
\end{enumerate}
We define the advantage of a QPT adversary $\cA$ with respect to $\qExt$ to be 
\begin{align}
\advb{\WPKE,\qA,\qExt,n}{par}{ext}(\secp) \seteq \Pr[\expb{\WPKE,\qA,\qExt,n}{par}{ext}(1^\secp)= 1].
\end{align}
We say that $\WPKE$ satisfies parallel mark extractability if there exists a QPT algorithm $\qExt$ such that for all $n=\poly(\secp)$, for all QPT algorithm $\cA$,
the following holds:
\[
\advb{\WPKE,\qA,\qExt,n}{par}{ext}(\secp) \geq \advb{\WPKE,\qA,n}{par}{ow}(\secp) -\negl(\secp).
\] 
\end{description}

\paragraph{Construction}
The above defined watermarkable PKE can be constructed from any IND-CPA secure PKE scheme $\PKE.(\KG,\Enc,\Dec)$ for single-bit messages.
Let $\ell=\omega(\secp)$ be the bit-length of marks. Then we construct a watermarkable PKE scheme for message length $\ell$ as follows. 
\begin{description}
\item[$\KG(1^\secp)$:] 
On input the security parameter $1^\secp$, the key generation algorithm proceeds as follows.
\begin{itemize}
\item Generate $(\ek_{j,b},\dk_{j,b})\gets \PKE.\KG(1^\secp)$ for $j\in [\ell]$ and $b\in \bit$. 
\item Output $\ek=\{ \ek_{j,b}\}_{j\in[\ell],b\in\bit}$ and $\msk=\{ \dk_{j,b}\}_{j\in[\ell],b\in\bit}$. 
\end{itemize}
\item[$\Mark(\msk,x)$:]
On input $\msk=\{ \dk_{j,b}\}_{j\in[\ell],b\in\bit}$ and $x\in \bit^\ell$, output 
$\dk(x)=(x,\{ \dk_{j,x[j]}\}_{j\in [\ell]})$ where $x[j]$ is the $j$-th bit of $x$.  

\item[$\Enc(\ek,m)$:] 
On input $\ek=\{ \ek_{j,b}\}_{j\in[\ell],b\in\bit}$ and
$m=m_1\|m_2\|\ldots\|m_\ell$, the encryption algorithm proceeds as follows.
\begin{itemize}
\item Generate $\ct_{j,b}\gets \PKE.\Enc(\ek_{j,b},m_j)$ for $j\in [\ell]$ and $b\in \bit$. 
\item Output $\ct=\{ \ct_{j,b}\}_{j\in[\ell],b\in\bit}$. 
\end{itemize}
\item[$\Dec(\dk(x),\ct)\ra\tilde{m}$:]
On input a marked key $\dk(x)=(x,\{ \dk^\prime_j\}_{j\in [\ell]})$ or a master secret key $\msk=\{ \dk_{j,b}\}_{j\in[\ell],b\in\bit}$ 
and $\ct=\{ \ct_{j,b}\}_{j\in[\ell],b\in\bit}$, the decryption algorithm proceeds as follows.
\begin{itemize}
\item If the first input is a master secret key, set $\dk'_j = \dk_{j,0}$ and $x[j]=0$ for all $j\in [\ell]$.
    \item Run $m^\prime_j\gets \PKE.\Dec(\dk^\prime_j,\ct_{j,x[j]})$ for $j\in[\ell]$. 
    \item Output $m^\prime=m^\prime_1\|m^\prime_2\|\ldots\|m^\prime_\ell$
\end{itemize}
\end{description}
Correctness and one-wayness of the above scheme immediately follows from correctness and IND-CPA security of $\PKE$, respectively. Below, we prove parallel mark extractability.

\paragraph{\bf Security proof} 
\begin{theorem}
    The above construction satisfies parallel mark extractability if $\PKE$ satisfies IND-CPA security.
\end{theorem}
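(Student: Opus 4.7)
The plan is to exhibit an explicit extractor $\qExt$ and show, via a hybrid argument based on the post-quantum IND-CPA security of the underlying $\PKE$, that its success probability matches the parallel one-way advantage up to $\negl(\secp)$. The extractor, given $\{\ek_i=\{\ek_{i,j,b}\}_{j\in[\ell],b\in\bit}\}_{i\in[n]}$ and the snapshot $(U_\qA,\rho_\qA)$, samples a uniform $m_i^{\mathrm{ext}}\lrun\bit^\ell$ for each $i$, forms \emph{XOR-masked} ciphertexts $\ct^{\mathrm{ext}}_{i,j,b}\gets \PKE.\Enc(\ek_{i,j,b},m_i^{\mathrm{ext}}[j]\oplus b)$, coherently feeds $\{\ct_i^{\mathrm{ext}}\}$ to $U_\qA$ on $\rho_\qA$, measures to obtain $\{m'_i\}$, and outputs $x'_i:=m'_i\oplus m_i^{\mathrm{ext}}$. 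The guiding observation is that if the adversary decrypts position $(i,j)$ using its one available marked key $\dk_{i,j,x_i[j]}$, then the decryption yields $m_i^{\mathrm{ext}}[j]\oplus x_i[j]$, so $x'_i[j]=x_i[j]$ pointwise recovers the mark bit.

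To turn this into a quantitative statement, I would walk through hybrids $H_0,\ldots,H_{n\ell}$ where $H_0$ is the real parallel one-way experiment and $H_k$ replaces the first $k$ of the ``non-key'' ciphertexts $\ct_{i,j,1-x_i[j]}$ (under a fixed ordering on $[n]\times[\ell]$) by encryptions of $1-m_i[j]$ instead of $m_i[j]$. Consecutive hybrids differ only in the plaintext encrypted under a single PKE instance whose decryption key is \emph{not} held by $\qA$ (since $\qA$ receives only $\dk_{i,j,x_i[j]}$), so post-quantum IND-CPA security of $\PKE$ bounds the change in the probability of any event, including $[\,m'_i=m_i~\text{for all}~i\,]$, by $\negl(\secp)$ per step. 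A direct case check shows that in $H_{n\ell}$ the ciphertexts satisfy $\ct_{i,j,b}=\PKE.\Enc(\ek_{i,j,b},\,m_i[j]\oplus x_i[j]\oplus b)$; reparameterizing by the uniformly distributed string $\tilde m_i:=m_i\oplus x_i$ (which is uniform because $m_i$ is uniform and $x_i$ is fixed in Step~2 of the one-way experiment), this is \emph{identically} the extractor's distribution with $m_i^{\mathrm{ext}}=\tilde m_i$. Under this reparameterization the one-way winning condition $m'_i=m_i$ translates to $m'_i=\tilde m_i\oplus x_i$, which is precisely $x'_i=m'_i\oplus m_i^{\mathrm{ext}}=x_i$, i.e., the extractor's winning condition. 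Summing over the $n\ell$ transitions yields
\[
\advb{\WPKE,\qA,\qExt,n}{par}{ext}(\secp)\;\ge\;\advb{\WPKE,\qA,n}{par}{ow}(\secp)\;-\;n\ell\cdot\negl(\secp),
\]
which is the required bound.

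The main obstacle I anticipate is the bookkeeping of the IND-CPA reduction in the quantum non-uniform setting. The reduction receives a challenge encryption key for one PKE instance, must simulate \emph{all} other $2n\ell-1$ instances together with the marked keys $\{\dk_{i,j,x_i[j]}\}$ internally, and must embed the challenge ciphertext at the correct slot to mediate the transition $H_{k-1}\to H_k$. This is routine once one notes that the marks $x_i$ are classical messages produced by $\qA$ in Step~2 and are therefore part of the transcript that the reduction can read off before constructing $(U_\qA,\rho_\qA)$; hence the reduction knows which index $1-x_i[j]$ to target. The quantum nature of $\qA$ only affects how indistinguishability is phrased (trace distance of output distributions), and since $\qA$ is finally measured to produce $\{m'_i\}$, the standard quantum hybrid argument applies without further complication.
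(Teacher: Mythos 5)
Your proposal is correct and follows essentially the same route as the paper: the extractor is identical up to notation (your XOR-masked ciphertexts $\PKE.\Enc(\ek_{i,j,b},m_i^{\mathrm{ext}}[j]\oplus b)$ are exactly the paper's $m_{i,j,0}\gets\bit$, $m_{i,j,1}=1-m_{i,j,0}$), the hybrid over the $n\ell$ ``non-key'' positions replacing $\ct_{i,j,1-x_i[j]}$ by an encryption of $1-m_i[j]$ is the paper's $\Hyb_0\to\Hyb_1$, and your reparameterization $\tilde m_i=m_i\oplus x_i$ is the paper's conceptual step $\Hyb_1\to\Hyb_2$. The one inaccuracy is in your reduction bookkeeping: you claim the reduction ``knows which index $1-x_i[j]$ to target'' because the marks are read off the transcript, but the IND-CPA challenge \emph{encryption key} must be embedded into $\ek_{i,j,1-x_i[j]}$ when the keys are sent to $\qA$, which is \emph{before} $\qA$ announces $x_i$. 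The standard fix, which the paper uses, is for the reduction to guess the bit $x_i[j]$ in advance and abort on a wrong guess, costing only a factor of $2$ in advantage; with that repair your argument goes through unchanged.
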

\begin{proof}
To prove the theorem, we first define the extraction algorithm $\qExt$ as follows.
\begin{description}
    \item[$\qExt(\{\ek_i\}_{i\in[n]}, (U_\qA, \rho_\qA) ) \to \{x'_i\}_{i\in [n]}$:] 
  On input $\{\ek_i\}_{i\in[n]}$  and the description of the adversary $\qA = (U_\qA, \rho_\qA)$, it proceeds as follows.  
    \begin{itemize}
     \item Parse the encryption key as
        $\ek_i=\{ \ek_{i,j,b}\}_{j\in[\ell],b\in\bit}$ for each $i\in [n]$. 
    \item Choose $m_{i,j,0}\gets \bit$ and set $m_{i,j,1}= 1-m_{i,j,0}$ for $i\in[n]$ and $j\in [\ell]$. 
\item Generate $\ct_{i,j,b}\gets \PKE.\Enc(\ek_{i,j,b},m_{i,j,b})$ for $i\in[n]$, $j\in [\ell]$, and $b\in \bit$. 
\item Set $\ct_i=\{\ct_{i,j,b}\}_{j\in[\ell],b\in\bit}$ for $i\in[n]$. 
        \item Run the adversary $\qA$ on input $\{\ct_i\}_{i\in[n]}$ (i.e., run the unitary $U_\qA$ on input $(\{\ct_i\}_{i\in[n]}, \rho_\qA)$ and measure the output register) to obtain $\{m^\prime_i\}_{i\in [n]}$
        \item 
      Parse $m^\prime_i = m^\prime_i[1] \| \cdots \|m^\prime_i[\ell]$ for $i\in[n]$. 
      \item Let $x^\prime_{i,j}$ be the unique bit such that $m^\prime_i[j]=m_{i,j,x^\prime_{i,j}}$ for $i\in[n]$ and $j\in[\ell]$.  
\item Let $x^\prime_i=x^\prime_{i,1}\|\ldots\|x^\prime_{i,\ell}$ for $i\in[n]$. 
    \item Output $ \{x'_i\}_{i\in [n]}$.
    \end{itemize}
\end{description}
    We then prove the theorem by a sequence of hybrids.
    For the sake of the contradiction, let us consider an adversary $\A$ against the parallel one-way inversion experiment \( \expb{\WPKE,\qA,n}{par}{ow}(1^\secp) \). 
    In the following, the event that $\qA$ wins in $\Hyb_{\rm xx}$ is denoted by $\event_{\rm xx}$.

\begin{description}
      \item[$\Hyb_0$:] This is the parallel predictability game between the challenger and an adversary $\A$.  
      Namely, the game proceeds as follows.
\begin{enumerate}
\item The challenger runs $\KG(1^\secp)$ to obtain $\ek_i$ for $i\in[n]$ as follows. 
\begin{itemize}
    \item Generate $(\ek_{i,j,b},\dk_{i,j,b})\gets \PKE.\KG(1^\secp)$ for $j\in [\ell]$, and $b\in \bit$. 
\item Set $\ek_i=\{ \ek_{i,j,b}\}_{j\in[\ell],b\in\bit}$.
\end{itemize}
It then sends $\{\ek_i\}_{i\in [n]}$ to $\qA$.
\item Given $\{\ek_i\}_{i\in [n]}$, the adversary $\qA$ sends arbitrarily chosen $x_1,...,x_n \in \bit^\ell$ to the challenger. 
\item 
The challenger sets $\dk_i(x_i)=(x_i,\{ \dk_{i,j,x_i[j]}\}_{j\in[\ell]} )$ for $i\in [n]$ and sends $\{ \dk_i(x_i) \}_{i\in[n]}$ to $\A$.
\item\label{step:par-ow-3-const} 
For $i\in [n]$, the challenger generates $m_i\gets \bit^\ell$ and 
$\ct_{i,j,b}\gets \PKE.\Enc(\ek_{i,j,b},m_i[j])$ for $j\in[\ell]$ and $b\in\bit$, sets
$\ct_i=\{ \ct_{i,j,b}\}_{j\in[\ell],b\in\bit}$ and sends $\{\ct_i\}_{i\in [n]}$ to $\A$.
\item
Receiving $\{\ct_i\}_{i\in [n]}$ from the challenger, $\A$ outputs $\{m'_i\}_{i\in [n]}$.
\item \label{step:par-ow-5-const}
The challenger sets the output of the game to be  $1$ (indicating that $\A$ wins) if $m'_i=m_i$ for all $i\in [n]$ and $0$ otherwise. 
\end{enumerate}

By definition, we have $\Pr[\event_0=1]=\advb{\WPKE,\qA,n}{par}{ow}(\secp)$. 
\item[$\Hyb_1$:] In this hybrid, we change the way of generating $\ct_{i,j,1-x_i[j]}$ for $i\in[n]$ and $j\in [\ell]$. 
Specifically, it is generated as 
\[
\ct_{i,j,1-x_i[j]}\gets \PKE.\Enc(\ek_{i,j,1-x_i[j]},1-m_i[j]).
\]
Note that $\ct_{i,j,x_i[j]}$ is generated as $\ct_{i,j,x_i[j]}\gets \PKE.\Enc(\ek_{i,j,x_i[j]},m_i[j])$ similarly to the previous hybrid. 

Since $\dk_{i,j,1-x_i[j]}$ is not given to the adversary for any $i\in[n]$ and $j\in[\ell]$, by the IND-CPA security of $\PKE$, we have 
$\abs{\Pr[\event_0]-\Pr[\event_1]}=\negl(\secp)$.
This can be shown using a hybrid argument over all positions of $i$ and $j$. The reduction algorithm embeds its challenge instance into $\ek_{i,j,1-x_i[j]}$ by randomly guessing $x_i[j]$, where guess is correct with probability of $1/2$.

\item[$\Hyb_2$:]
In this hybrid, we change the way of generating $\ct_{i,j,b}$ for $i\in[n]$, $j\in [\ell]$, and $b\in\bin$, and the winning condition. 
In Step \ref{step:par-ow-3-const}, the challenger chooses $m_{i,j,0}\gets \bit$ and sets $m_{i,j,1}=1-m_{i,j,0}$ for $i\in[n]$ and $j\in[\ell]$. Then the challenger generates 
\[
\ct_{i,j,b}\gets \PKE.\Enc(\ek_{i,j,b},m_{i,j,b})
\]
for $i\in[n]$, $j\in[\ell]$, and $b\in \bit$. 
In Step \ref{step:par-ow-5-const}, the challenger sets the output of the game to be $1$ if $m'_i[j]=m_{i,j,x_i[j]}$ for all $i\in[n]$ and $j\in[\ell]$ and $0$ otherwise.  

This is just a conceptual change, and thus we have $\Pr[\event_1=1]=\Pr[\event_2=1]$. 
It is also easy to see that we have 
$\Pr[\event_2=1]=\advb{\WPKE,\qA,\qExt,n}{par}{ext}(\secp)$.
        \end{description}
Combining the above, we obtain $\advb{\WPKE,\qA,\qExt,n}{par}{ext}(\secp) \geq \advb{\WPKE,\qA,n}{par}{ow}(\secp) -\negl(\secp)$ as desired.
\end{proof}

%The extractor runs a decryptor on a "special ciphertext" where $\ct_{j,0}$ and $\ct_{j,1}$ encrypt different messages. Then, by looking at the decryption result of the decryptor, one can decide each bit of $x$. This works in the quantum and parallel setting since the extractor just runs the decryptor once without any rewinding etc.

\subsection{Watermarkable Unpredictable Functions}\label{watermarkable_UPF}
%We can easily extend the above construction of PKE-SKL to any "watermarkable" functionality. Thus, wehenever we have a suitable watermarkable functionality, we can securely lease (parallel repetition of) the functionality. 
Here, we define and construct watermarkable unpredictable functions (UPF).
%
%We can focus on constructing UPF (unpredictable function) since that can be easily upgraded into PRF by quantum Goldreich-Levin. %(even in the setting of SKL with a classical lessor).
%Thus, we only have to instantiate a watermarkable UPF. 
Its definition is very similar to watermarkable PKE above; if 
an adversary is given $\key(x)$ and generates a (possibly quantum) ``predictor" that can guess the output on a uniformly random input with probability $\epsilon$, then we can extract $x$ from the predictor with probability $\epsilon-\negl(\secpar)$, and moreover the extraction should work even in the parallel setting.  
Based on the idea of \cite{kitagawa2025simple}, we can construct such a watermarkable UPF from two-key equivocal PRFs \cite{C:HJOSW16}, which in turn can be built from OWFs. 
To deal with evaluation queries in the security proof, which were not considered in \cite{kitagawa2025simple}, we employ a simple trick
using random bits. 

%\takashi{I may sketch the full description of UPF-SKL with a classical lessor, though that would be largely a copy-and-paste of the PKE case except that watermarkable PKE with replaced with watermarkable UPF.}

\paragraph{Syntax}
A watermarkable unpredictable functions $ \WPRF=(\KG,\Mark, \Eval )$ with the input space $\bin^{\ell(\secp)}$ and the mark space $\cX=\{\cX_\lambda\}_\lambda$ consists of the following algorithms.
In the following, we omit the security parameter $\secp$ for $\cX_\secp$ and $\ell(\secp)$ to denote $\cX$ and $\ell$ when it is clear from the context.
\begin{description}
\item[$\KG(1^\secp)\ra (\msk,\extk)$:] The key generation algorithm is a PPT algorithm that takes a security parameter $1^\lambda$, and outputs a master secret key $\msk$ and an extraction key $\extk$. 

\item[$\Mark(\msk,x)\ra \key(x)$:] The marking algorithm is a deterministic polynomial time algorithm that takes a master secret key $\msk$ and a mark $x\in \cX$, and outputs a marked key $\key(x)$. 

\item[$\Eval(\key(x),s)\ra t$:] The evaluation algorithm is a PPT algorithm that takes a 
master secret key $\msk$ or
marked key $\key(x)$ and an input $s\in \bin^\ell$, and outputs $t$. 

\item[Correctness:]For every $s\in \bin^\ell$, we require that
\begin{align}
\Pr\left[
\begin{array}{ll}
\Eval(\key(x), s) = \Eval(\msk, s) 
\\ \mbox{holds for all~} x\in \cX
\end{array}
\ \middle |
\begin{array}{ll}
(\msk,\extk) \gets\KG(1^\secp)\\
\key(x)= \Mark(\msk,x)~ \mbox{for all } x\in \cX
\end{array}
\right] 
=1-\negl(\secp).
\end{align}
One can consider a relaxed correctness requirement, where the equation $\Eval(\key(x), s) = \Eval(\msk, s)$ only needs to hold for a fixed $x$, as opposed to all $x$ simultaneously as above. Nonetheless, our construction fulfills the stronger requirement.

\item[Correctness against adversarially chosen mark and input:]
We also consider a strengthened version of the correctness. Here, we require the correctness to hold, even if the adversary chooses the mark and the input to the function adversarially. To define the notion, we define the correctness experiment involving an adversary $\qA$, denoted as \( \expa{\WPRF,\qA}{corre}(1^\secp) \). 
\begin{enumerate}
\item The challenger runs $\KG(1^\secp)\ra (\msk,\extk)$.
\item The adversary $\qA$ is given input $1^\secp$ and starts to run.
Throughout the game, it has an oracle access to an oracle $\Eval(\msk, \cdot) $, which takes as input $\hats\in \bin^\ell$ and returns $ \Eval(\msk, \hats)$.
\item At some point, $\qA$ chooses arbitrary $x\in \cX$. 
\item 
Then, the challenger runs $\Mark(\msk,x)\rrun \key(x)$ and returns $\key(x)$ to $\qA$. 

\item 
The adversary continues to have access to the oracle. 
Finally, it outputs an input $s$. The output of the experiment is then set to be $1$ if $\Eval(\msk,s) \neq \Eval(\key(x),s)$.
\end{enumerate}
We say that $\WPRF$ has correctness against adversarially chosen input if for all QPT adversary $\qA$, we have 
\begin{align}
\expa{\WPRF,\qA}{corre}(1^\secp) \leq \negl(\secp).
\end{align}

\item[Parallel mark extractability:]
Here, we introduce the notion of ``parallel mark extractability", which is similar to that for watermarkable PKE. 
We first define the following parallel prediction experiment involving an adversary \( \A \) and a challenger, parameterized by an arbitrary polynomial \( n = \text{poly}(\secp) \), denoted as \( \expb{\WPRF,\qA,n}{par}{pre}(1^\secp) \).
%
% We require the following "parallel extractability", which requires the existence of a PPT extractor $\Ext$ satisfying the following:
% Consider the following game where $n$ is an arbitrary polynomial:
\begin{enumerate}
\item The challenger runs $\KG(1^\secp)\ra (\msk_i,\extk_i)$ for $i\in [n]$. 
\item The adversary $\qA$ is given input $1^\secp$ and $1^n$ and starts to run.
The adversary has an oracle access to the function $\Eval(\msk_i, \cdot)$ for each $i\in [n]$, 
which on input $\hat{s} \in \bin^\ell$ returns $\Eval(\msk_i, \hat{s})$.
The adversary can make a query on any input at any timing arbitrarily many times, until $\qA$ receives the challenge at Step \ref{Step:after-the-challenge-wprf}. 
\item 
At some point, $\qA$ chooses arbitrary $x_1,...,x_n \in \cX$ and sends them to the challenger.  
\item For $i\in [n]$, the challenger runs $\key_i(x_i)\gets\Mark(\msk_i,x_i)$ and sends $\{ \key_i(x_i)\}_{i\in[n]}$ to the adversary. 
%\takashi{We may additionally give an evaluation oracle to the adversary. I believe our construction also satisfies the security in this setting. This will be inherited to the eventual construction of UPF-SKL with a classical lessor. }
\item \label{Step:WPRF-declare-ready}
At some point, $\qA$ declares that it is ready for the challenge. 
\item \label{Step:wprf-challenge-is-sent}
The challenger picks uniformly random $s_i \lrun \bin^\ell$ for $i\in [n]$ and sends $\{s_i\}_{i\in [n]}$ to $\qA$. 
\item \label{Step:after-the-challenge-wprf} 
$\qA$ receives $\{s_i\}_{i\in [n]}$ from the challenger and loses access to the evaluation oracles. 
It then outputs $\{t'_i\}_{i\in [n]}$.

\item \label{Step:WPRF-decision}
The challenger sets the output of the experiment to be $1$ (indicating that $\qA$ wins) if $t'_i=\Eval(\msk_i,s_i)$ for all $i\in [n]$ and otherwise outputs $0$. 
\end{enumerate}
\jiahui{Should we emphasize that this is a non-uniform QPT adv?}
\shota{Added a remark below.}
We define the advantage of a QPT adversary $\cA$ to be 
\begin{align}
\advb{\WPRF,\qA,n}{par}{pre}(\secp) \seteq \Pr[\expb{\WPRF,\qA,n}{par}{pre}(1^\secp)= 1].
\end{align}
We also define parallel mark extraction experiment defined by a QPT extractor $\qExt$, an adversary \( \A \), and a challenger, parameterized by an arbitrary polynomial \( n = \text{poly}(\secp) \), denoted as $\expb{\WPRF,\qA,\qExt,n}{par}{ext}(1^\secp)$. 
The parallel extraction experiment is the same as parallel prediction experiment except for the last 3 steps. Namely, we replace Step \ref{Step:wprf-challenge-is-sent}, \ref{Step:after-the-challenge-wprf}, and \ref{Step:WPRF-decision} with the following:
\begin{enumerate}
    \item[6'] Take the adversary $\qA$ right before Step \ref{Step:wprf-challenge-is-sent} of \( \expb{\WPRF,\qA,n}{par}{pre}(1^\secp) \). 
    \synote{The following emphasizes about being non-uniform QPT:}
    Since $\qA$ is a quantum non-uniform QPT adversary, we can
parse $\qA$ as a pair of a unitary circuit $U_\qA$ and a quantum state $\rho_\A$, where to run $\qA$ on the inputs $\{s_i\}_{i\in [n]}$, 
we apply $U_\qA$ to $\{s_i\}_{i\in [n]}$ and $\rho_\qA$ and then measure the corresponding registers to obtain the output $\{t'_i\}_{i\in [n]}$.
\item[7'] We run $\qExt(\{\extk_i\}_{i\in [n]}, (U_\qA, \rho_\qA) ) \to \{x'_i\}_{i\in [n]}$.  
%\takashi{A decryptor may be formalized as a quantum circuit with initial state, similarly to Definition 6.2 of \url{https://eprint.iacr.org/2021/946.pdf}.}
\item[8'] The experiment outputs $1$ (indicating that the adversary wins) if $x'_i=x_i$ for all $i\in [n]$ and $0$ otherwise. 
\end{enumerate}

We define the advantage of a QPT adversary $\cA$ with respect to $\qExt$ to be 
\begin{align}
\advb{\WPRF,\qA,\qExt,n}{par}{ext}(\secp) \seteq \Pr[\expb{\WPRF,\qA,\qExt,n}{par}{ext}(1^\secp)= 1].
\end{align}
We say that $\WPRF$ satisfies parallel mark extractability if there exists a QPT algorithm $\qExt$ such that for all $n=\poly(\secp)$, for all QPT algorithm $\cA$,
the following holds:
\[
\advb{\WPRF,\qA,\qExt,n}{par}{ext}(\secp) \geq \advb{\WPRF,\qA,n}{par}{pre}(\secp) -\negl(\secp).
\] 

% For any QPT adversary, we have
% \[
% \Pr\left[
% \forall~i\in[n],~ \Ext(\extk_i,P_i)=x_i
% \right]\ge
% ~\Pr[\text{The~adversary~wins~the~above~experiment}]-\negl(\secp)
% \]
% where, in the probability in the LHS, $\extk_i,x_i,P_i$ are generated as in the above experiment.  
% \if0
% With an overwhelming probability over the execution of the above experiment, the following holds: 
% If all the predictors are simultaneously good, i.e.,
% \[
% \Pr_{s_1,...,s_n}\left[
% \forall~i\in[n], P_i(s_i)=\Eval(\msk_i,s_i)
% \right]\ge \epsilon
% \]
% then we can extract all the marks simultaneously, i.e., 
% \[
% \Pr\left[
% \forall~i\in[n],~ \Ext(P_i)=x_i
% \right]\ge \epsilon-\negl(\secp).
% \]
% \fi
\end{description}
\begin{remark}[Public extraction vs. secret extraction]
%\takashi{The difference from watermarkable PKE is that we have an "extraction key" that should be hidden from the adversary.This is often referred to as "secret extraction" in the watermarking literatures. For the case of PKE, secret extraction would suffice for our purpose, but we formalized it with the public extraction since our construction satisfies it.  }
In our definition, unlike in watermarkable PKE in \cref{sec:WPKE}, the extraction key $\extk$ must be kept hidden from the adversary.
This is often referred to as ``secret extraction" in the watermarking literature and suffices for our purpose. 
While secret extraction would suffice also for PKE, we chose to formalize it with the public extraction since our construction satisfies it.   
\end{remark}
\begin{remark}[Security as a plain UPF]\label{rem:security-as-plain-UPF}
We note that the above definition of a watermarkable UPF does not necessarily imply the standard notion of unpredictability as a plain UPF. 
Nevertheless, unpredictability can be added by a simple conversion: run a plain UPF in parallel and append its output alongside the original one. 
The marked key of the resulting scheme consists of that of the original one along with the secret key of the plain UPF.
\end{remark}
We also introduce the notion of equivocality. This property will be used for the construction of watermarkable digital signatures using watermarkable UPF.
\begin{description}
\item[Equivocality:]
We define the equivocality experiment involving an adversary $\qA$, an equivocation simulation algorithm $\Sim$, and a challenger, denoted as \( \expa{\WPRF,\qA}{equiv}(1^\secp, \coin) \). 
\begin{enumerate}
\item The challenger runs $\KG(1^\secp)\ra (\msk,\extk)$ if $\coin=0$ and $\Sim(1^\secp) \to \key^*$ if $\coin =1$.
\item The adversary $\qA$ is given input $1^\secp$ and starts to run.
Throughout the game, it has an oracle access to an oracle $\calO(\cdot)$, which takes as input $s\in \bin^\ell$ and works as follows:
\[
\calO(s) =
\begin{cases}
  \Eval(\msk, s) & \text{if } \coin =0  \\
  \Eval(\key^*, s), & \text{if } \coin = 1 \\
\end{cases}
\]
\item At some point, $\qA$ chooses arbitrary $x\in \cX$. 
\item 
If $\coin =0$, the challenger runs $\Mark(\msk,x)\rrun \key(x)$ and returns $\key(x)$ to $\qA$. 
If $\coin =1$, the challenger returns $\key^*$ to $\qA$. 

\item 
The adversary continues to have access to the oracle. 
Finally, it outputs its guess $\coin'$ for $\coin$. The output of the experiment is then set to be $\coin'$.
\end{enumerate}
We say that $\WPRF$ has equivocality if there exists a classical PPT algorithm $\Sim$ such that for all QPT adversary $\qA$, we have 
\begin{align}
\adva{\WPRF,\qA}{equiv}(\secp) \seteq \abs{\Pr[\expa{\WPRF,\qA}{equiv} (1^\secp,0) = 1] - \Pr[\expa{\WPRF,\qA}{equiv} (1^\secp,1) = 1]  }\leq \negl(\secp).
\end{align}
\end{description}

\begin{remark}\label{rem:equivocality-implies-correctness}
    We can see that the equivocality implies the correctness against adversarially chosen mark and input, since if the adversary is able to find $x$ and $s$ such that $\Eval(\msk,s) \neq \Eval(\key(x),s)$ with non-negligible probability, these can be used for distinguishing the case of $\coin =0$ and $\coin=1$ in the equivocality experiment.
    On the other hand, we do not know whether the correctness against adversarially chosen mark and input implies the correctness,
    since the latter requires the equation to hold for all $x\in \cX$ simultaneously, which is somewhat strong condition.
\end{remark}

\paragraph{Construction}
The above defined watermarkable UPF can be constructed from any two-key equivocal PRF $\TEPRF=\TEPRF.(\KG,\Eval)$, which in turn can be constructed from any OWF. (See \Cref{def:TEPRF} for the definition of two-key equivocal PRF.)
Let $\ell$ be the bit-length of the inputs for $\TEPRF$.
We assume $\ell =\omega(\log \secp)$.
The bit-length of the marks for the construction can be set to an arbitrary polynomial $\lenx = \lenx(\secp)$. 
The bit-length of the input of the construction then becomes $\ell\lenx$.
\begin{description}
\item[$\KG(1^\secp)\ra(\msk,\extk)$:] 
On input the security parameter $1^\secp$, the key generation algorithm proceeds as follows. 
\begin{itemize}
    \item Sample $s^*_1,s^*_2,...,s^*_\lenx \lrun \bin^\ell$ 
    and $c \lrun \bin^\lenx$.
    \item Run $\TEPRF.\KG(1^\secp, s^*_j) \rrun (\TEPRF.\key_{j,0}, \TEPRF.\key_{j,1})$ for $j\in [\lenx]$.
    \item Output $\msk = (c,\{ \TEPRF.\key_{j,b}\}_{j\in[\lenx],b\in\bit})$ 
    and $\extk=(c, \{ s^*_j\}_{j\in [\lenx ]}, \{ \TEPRF.\key_{j,b}\}_{j\in[\lenx],b\in\bit} )$.
\end{itemize}
% By running the key generation algorithm of the two-key equivocal PRF $\ell$ times with uniformly random strings $s^*_1,s^*_2,...,s^*_\ell$ as input, generate $(\sk_{j,0},\sk_{j,1})$ for $j\in [\ell]$. 
% Output a master secret key $\msk=(\sk_{j,b})_{j\in[\ell],b\in\bit}$
% and an extraction key $\extk=(s^*_i)_{i\in [\ell]}$.

\item[$\Mark(\msk,x)\ra \key(x)$:] 
Upon input the master secret key $\msk = (c,\{ \TEPRF.\key_{j,b}\}_{j\in[\lenx],b\in\bit})$ and a mark $x\in \bit^\lenx$, it outputs 
$\key(x)=\{\TEPRF.\key_{j,x[j] \oplus c[j] }\}_{j\in [\lenx]}$ where $x[j]$ and $c[j]$ are the $j$-th bit of $x$ and $c$ respectively.  

\item[$\Eval(\key(x),s)\ra t$:] 
On input the marked key $\key(x)$ or the master secret key $\msk$, the evaluation algorithm proceeds as follows. 
\begin{itemize}
    \item If the first input is a master secret key, parse it as $\msk =(c,\{ \TEPRF.\key_{j,b}\}_{j\in[\lenx],b\in\bit})$. Then, set $\TEPRF.\key'_{j} = \TEPRF.\key_{j,0}$ for all $j\in [\lenx]$.
    \item If the first input is a marked key $\key(x)$, parse it as $\key(x)=\{\TEPRF.\key'_{j}\}_{j\in [w]}$.
    \item Parse the input as $s=s_1\|s_2\|\ldots\|s_\lenx$. 
    \item For $j\in [\lenx]$, compute $t_j\gets \TEPRF.\Eval(\TEPRF.\key'_j,s_j)$, where each block is of length $\ell$. Then, output $t=t_1\|t_2\|\ldots\|t_\lenx$. 
\end{itemize}
\end{description}
% The extractor runs a predictor on $s^*_j$. Then, by looking at the output of the predictor, one can decide each bit of $x$.

\paragraph{Correctness}
% We omit the proof of correctness since it is implied by the equivocality as we note in \cref{rem:equivocality-implies-correctness}. We will prove equivocality of the above construction in \cref{th:equivocality-of-WPRF}. 
For an input $s=s_1\| \cdots \|s_\lenx$, unless $s_i=s^*_i$ for some $i\in [\lenx]$, it holds that $\Eval(\key(x),s)=\Eval(\msk, s)$ for all $x$. By the union bound, the event $s_i=s^*_i$ for some $i$ occurs with probability at most $w 2^{-\ell}$, which is negligible if $\ell =\omega(\log \secp)$.

\paragraph{Security Proof}
The following theorem asserts the parallel mark extractability of the above construction.
\begin{theorem}
    The above construction satisfies parallel mark extractability if $\ell(\secp)=\omega(\log \secp)$ and $\TEPRF$ satisfies the different values on target input property and the differing point hiding property.
\end{theorem}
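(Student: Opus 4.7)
The plan is to build the extractor $\qExt$ around a \emph{structured} challenge that forces the adversary's response to reveal each mark bit directly. From each $\extk_i=(c_i,\{s^*_{i,j}\}_{j\in[\lenx]},\{\TEPRF.\key_{i,j,b}\}_{j\in[\lenx],b\in\bit})$ I set $s_i := s^*_{i,1}\|\cdots\|s^*_{i,\lenx}$, apply the adversary's unitary $U_\qA$ to $(\{s_i\}_i,\rho_\qA)$, measure to obtain $\{t'_i\}_i$, and parse $t'_i=t'_{i,1}\|\cdots\|t'_{i,\lenx}$. For each $(i,j)$ I look for the unique $b^*_{i,j}\in\bit$ with $t'_{i,j}=\TEPRF.\Eval(\TEPRF.\key_{i,j,b^*_{i,j}},s^*_{i,j})$; uniqueness follows from the different-values-on-target property. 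Finally, $\qExt$ outputs $x'_i[j]:=b^*_{i,j}\oplus c_i[j]$ (declaring failure on that coordinate if neither candidate matches).

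I would analyze the success probability through a short sequence of hybrids starting from the parallel prediction game. The first transition relaxes the winning condition from $t'_i=\Eval(\msk_i,s_i)$ to $t'_i=\Eval(\key_i(x_i),s_i)$: the two values agree unless a block of the uniformly sampled $s_i$ coincides with a hidden target $s^*_{i,j}$, so by a union bound this costs at most $n\lenx\cdot 2^{-\ell}=\negl(\secp)$ using $\ell=\omega(\log\secp)$. The second, main transition replaces the uniform $s_i$ with the structured $s^*_{i,1}\|\cdots\|s^*_{i,\lenx}$, and will be reduced to differing point hiding of $\TEPRF$. In the resulting hybrid the winning condition is exactly the equation $\qExt$ tests, so a successful adversary yields $b^*_{i,j}=x_i[j]\oplus c_i[j]$ for every $(i,j)$ and hence $x'_i=x_i$ for all $i$, giving $\advb{\WPRF,\qA,\qExt,n}{par}{ext}(\secp)\geq \advb{\WPRF,\qA,n}{par}{pre}(\secp)-\negl(\secp)$.

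The hard part will be making the DPH-based transition rigorous. I plan a per-coordinate hybrid over $(i^*,j^*)\in[n]\times[\lenx]$: at each step the reduction samples two candidate targets $s^{*(0)},s^{*(1)}\gets\bin^\ell$ itself and receives $\TEPRF.\key_{i^*,j^*,b}$ from the DPH challenger using one of them, then embeds $s_{i^*,j^*}:=s^{*(0)}$ into the challenge block. If the DPH challenger used $s^{*(0)}$, this block coincides with the true target and simulates the structured hybrid; if it used $s^{*(1)}$, then $s^{*(0)}$ is a fresh uniform string unrelated to the real target and the adversary's view is that of the uniform hybrid. Two subtleties require care: first, the evaluation oracle is meant to answer with $\TEPRF.\key_{i^*,j^*,0}$ but the reduction may hold only $\TEPRF.\key_{i^*,j^*,1}$, so answering with the available key is correct except when a query hits the hidden target, an event of negligible probability by the hard-to-find-differing-point consequence of DPH; second, matching the reduction's single key to whichever side of the mark the adversary chooses can be handled by deferring the assignment of $c_{i^*}[j^*]$ until after the adversary submits $x_{i^*}$, which is sound because $c$ affects no part of the adversary's view prior to issuance of the marked key and $c[j^*]$ enters only through the $j^*$-th component of that key. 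Summing the polynomially many negligible losses across all coordinates yields the claimed bound.
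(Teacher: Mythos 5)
Your proposal is correct and follows essentially the same route as the paper's proof: the same extractor (run the adversary on the concatenated differing points and decode each bit via the different-values-on-target property, un-masking with $c$), the same relaxation of the winning condition from $\msk$-evaluation to marked-key evaluation at cost $n\lenx\cdot 2^{-\ell}$, the same per-coordinate differing-point-hiding hybrid that swaps the uniform challenge block for the hidden target, the same use of the hard-to-find-differing-point consequence to simulate the evaluation oracle with the single available key, and the same equivocation of $c$ (the paper relabels $c^* = c\oplus x$ up front; you defer the assignment of $c[j]$ — these are the same trick). No gaps.
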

Before presenting the proof, we briefly outline the intuition.
Our construction builds on \cite{kitagawa2025simple}, with the main difference that we introduce ``equivocating bits'' $c$ to simulate evaluation oracles. 
The core idea for simulating the evaluation oracle is to evaluate the function using the marked key instead of the master secret key, where the latter is not available to the reduction. 
This approach seems to work, since the adversary cannot issue evaluation queries on points that would distinguish the simulated oracle from the real one by the differing point hiding property of the underlying $\TEPRF$. 
A difficulty arises, however, because the marks are determined by the adversary through its evaluation queries. 
If we follow the naive strategy, this makes it impossible to simulate the evaluation oracle before the marks are fixed. 
To overcome this, we employ the ``equivocating bits'' $c$. 
They allow us to use $\{ \TEPRF.\key_{i,j, c^*_i[j]} \}_{j \in [\lenx]}$ for random $c^*$ as if it were a marked key, even before the actual mark $x$ is chosen. 
Once $x$ is fixed, we can equivocate by interpreting $c^*$ as $c \oplus x$.

\begin{proof}
To prove the theorem, we first define the extraction algorithm $\qExt$ as follows.
\begin{description}
    \item[$\qExt(\{\extk_i\}_{i\in [n]}, (U_\qA, \rho_\qA) ) \to \{x'_i\}_{i\in [n]}$:] 
    Given the extraction keys $\{ \extk_i \}_{i\in [n]}$ and the description of the adversary $\qA = (U_\qA, \rho_\qA)$, it proceeds as follows.  
    \begin{itemize}
        \item Parse the extraction key as
        $\extk_i=(c_i, s^*_i= \{ s^*_{i,j}\}_{j\in [\lenx ]}, \{ \TEPRF.\key_{i,j,b}\}_{j\in[\lenx],b\in\bit} )$ for each $i\in [n]$,
        where $c_i \in \bin^{\lenx}$, $s^*_{i,j}\in \bin^\ell$.
        \item Run the adversary $\qA$ on input $(s^*_1,\ldots, s^*_n)$ (i.e., run the unitary $U_\qA$ on input $((s^*_1,\ldots, s^*_n), \rho_\qA)$ and measure the output register) to obtain $t'_1,\ldots, t'_n$. 
        \item Define $x'_i = x'_i[1] \| \cdots \|x'_i[\lenx]$ as 
    \begin{align}
    x'_i[j] = d_i[j]\oplus c_i[j],
    \quad \mbox{where} \quad 
    d_i[j] =
    \begin{cases}
      0 & \text{if } t'_{i,j}= 
      \TEPRF.\Eval(\TEPRF.\key_{i,j, 0 },s^*_{i,j}) \\
      1 & \text{otherwise }. 
    \end{cases} 
\end{align}
    \item Output $ \{x'_i\}_{i\in [n]}$.
    \end{itemize}
\end{description}
    We then prove the theorem by a sequence of hybrids.
    For the sake of the contradiction, let us consider an adversary $\A$ against the parallel prediction game \( \expb{\WPRF,\qA,n}{par}{pre}(1^\secp) \). 
    In the following, the event that $\qA$ wins in $\Hyb_{\rm xx}$ is denoted by $\event_{\rm xx}$.
\begin{description}
      \item[$\Hyb_0$:] This is the parallel prediction game between the challenger and an adversary $\A$.  
      Namely, the game proceeds as follows.
    \begin{enumerate}
        \item The challenger runs $\KG(1^\secp)$ to obtain $(\msk_i,\extk_i)$
        for $i\in [n]$ as follows. 
\begin{itemize}\label{Step1:WPRF}
    \item Sample $s^*_{i,1},s^*_{i,2},...,s^*_{i,\lenx} \lrun \bin^\ell$ 
    and $c_i \lrun \bin^\lenx$.
    \item Run $\TEPRF.\KG(1^\secp, s^*_{i,j}) \rrun (\TEPRF.\key_{i,j,0}, \TEPRF.\key_{i,j,1})$ for $j\in [\lenx]$.
    \item Set $\msk_i = (c_i,\{ \TEPRF.\key_{i,j,b}\}_{j\in[\lenx],b\in\bit})$ 
    and $\extk_i=(c_i, \{ s^*_{i,j}\}_{j\in [\lenx ]}, \allowbreak \{ \TEPRF.\key_{i,j,b}\}_{j\in[\lenx],b\in\bit} )$.
\end{itemize}
    \item The adversary $\qA$ is given input $1^\secp$ and $1^n$ and starts to run.
    Throughout the game, the adversary has an oracle access to the function $\Eval(\msk_i, \cdot)$, until it receives the challenge inputs.
    Recall that given an input $\hat{s}=\hat{s}_1\|\hat{s}_2\|\ldots\|\hat{s}_\lenx$, $\Eval(\msk_i, \hat{s})$ computes 
    \begin{align}\label{eq:PRF-eval-oracle-honest}
            \hat{t}_j = \TEPRF.\Eval(\TEPRF.\key_{i,j,0},\hat{s}_j)
    \end{align}  
    for $j\in [\lenx]$ and returns $\hat{t}=\hat{t}_1\|\hat{t}_2\|\ldots\|\hat{t}_\lenx$. 
    
    \item 
    At some point, $\qA$ chooses arbitrary $x_1,...,x_n \in \cX$ and sends them to the challenger.  
    \item \label{Step:WPRF-Step-4}
    For $i\in [n]$, the challenger sets $\key_i = \{ \TEPRF.\key_{i,j, x_i[j]\oplus c_i[j] } \}_{j\in [\lenx]}$ and sends $\{ \key_i\}_{i\in[n]}$ to the adversary. 
    %\takashi{We may additionally give an evaluation oracle to the adversary. I believe our construction also satisfies the security in this setting. This will be inherited to the eventual construction of UPF-SKL with a classical lessor. }
    \item 
    At some point, the adversary declares that it is ready for the challenge. 
    \item \label{Step:WPRF-Step-6}
    The challenger picks uniformly random $s_i=s_{i,1}\|\cdots \| s_{i,\lenx} \lrun \bin^{\ell\lenx }$ for $i\in [n]$ and sends $\{s_i\}_{i\in [n]}$ to the adversary. 
    \item 
    Receiving $\{s_i\}_{i\in [n]}$ from the challenger, $\qA$
     outputs $\{t'_i\}_{i\in [n]}$.
     Note that after receiving $\{s_i\}_{i\in [n]}$ from the challenger, $\qA$ loses access to the oracles.
    \item \label{Step:WPRF-Step-8}
    The challenger sets the output of the game to be $1$ (indicating that the adversary wins) if $t'_i= t_i  $ for all $i\in [n]$ and otherwise outputs $0$, where we define $t_i = \Eval(\msk_i,s_i) $.
    By the definition of $\Eval(\msk_i,s_i) $, 
    \[
    t_i  = t_{i,1}\|t_{i,2}\|\ldots\|t_{i,\lenx} 
    \quad \mbox{where} \quad 
    t_{i,j} = \TEPRF.\Eval(\TEPRF.\key_{i,j,0 },s_{i,j}).
    \]

\end{enumerate}

\item[$\Hyb_1$:]
In this hybrid, we change the game so that the challenger does not choose $c_i$ 
at Step \ref{Step1:WPRF} any more. 
Therefore, $\msk_i$ and $\extk_i$ are undefined.
Instead, $c^*_i \lrun \bin^\lenx$ is chosen in Step \ref{Step1:WPRF} and $\key_i$ is defined as
$\key_i = \{ \TEPRF.\key_{i,j, c^*_i[j] } \}_{j\in [\lenx]}$ in 
Step~\ref{Step:WPRF-Step-4}. 
We can see that $\hyb_1$ is equivalent to $\hyb_0$ if $c^*_i$ is replaced with $c^*_i = c_i \oplus x_i$. 
This is a conceptual change and thus we have $\Pr[\event_1]=\Pr[\event_0]$.

\item[$\Hyb_2$:]
In this hybrid, we change the winning condition of the adversary. Namely, in Step \ref{Step:WPRF-Step-8}, the challenger outputs $1$ if 
$t'_i = t_i$ for all $i\in [n]$, where we now define $t_i$ as $ t_i= \Eval(\key_i,s_i)$. 
By the definition of $\Eval(\key_i,s_i)$, we have 
\[
    t_i  = t_{i,1}\|t_{i,2}\|\ldots\|t_{i,\lenx} 
    \quad \mbox{where} \quad 
    t_{i,j} = \TEPRF.\Eval(\TEPRF.\key_{i,j, c^*_i[j] },s_{i,j}).
\]
By the equality of $\TEPRF$, the value of $t_{i,j}$ is different from the one in the previous hybrid only when $ c^*_{i}[j] =1$ and $s_{i,j}=s^*_{i,j}$. 
However, the latter occurs only with negligible probability since $s_{i,j}$ is chosen uniformly at random, independently from anything else. 
We therefore have $|\Pr[\event_{1}] - \Pr[\event_{2}]| = \negl(\secp)$.
\item[$\Hyb_3$:]
In this hybrid, we change the way the evaluation oracles answer the queries.
Namely, in this hybrid, the $i$-th evaluation oracle on input $\hat{s} = \hat{s}_{1}\|\hat{s}_{2}\|\ldots\|\hat{s}_{\lenx} $ computes $\hat{t}_j$ for $j\in [\lenx]$ as 
\[\hat{t}_j = \TEPRF.\Eval(\TEPRF.\key_{i,j, c^*_{i}[j] },\hat{s}_j)\] 
instead of as \cref{eq:PRF-eval-oracle-honest}, and returns
$\hat{t}=\hat{t}_1\|\hat{t}_2\|\ldots\|\hat{t}_\lenx$ to $\qA$.

We claim that this hybrid is indistinguishable from the previous hybrid.
To see this, let us observe that $\hyb_2$ and $\hyb_3$ differ only when $\qA$ makes an evaluation query for $\hat{s}$ with $\hat{s}_{j}=s^*_{i,j}$ to the $i$-th evaluation oracle for some $i\in [n]$ and $j\in [\lenx]$.
However, we can show that such an event occurs only with negligible probability by a straightforward reduction to the hard-to-find differing-point property of $\TEPRF$, by noting that we only need $ \TEPRF.\key_{i,j, c^*_{i}[j] }$ for each $i,j$ for simulating $\hyb_3$ (i.e., $ \TEPRF.\key_{i,j, 1-c^*_{i}[j] }$ is not needed).
%and it is not necessary to simulate the challenge input for $\qA$, which the simulator does not know, to bound the probability.  
We therefore have $|\Pr[\event_{2}] - \Pr[\event_{3}]| = \negl(\secp)$.

\item[$\Hyb_4$:]
In this hybrid, we set the challenge inputs to be $s^*_i = s^*_{i,1}\| s^*_{i,2}\|\cdots\|s^*_{i,\lenx} $ for $i\in [n]$ at Step~\ref{Step:WPRF-Step-6}.
Accordingly, to check the winning condition in Step~\ref{Step:WPRF-Step-8}, $t_i$ is set as $ t_i= \Eval(\key_i,s^*_i)$ for $i\in [n]$. 
Note that $s_{i,1},s_{i,2},...,s_{i,\lenx} $ are no longer necessary to run the game.

We claim that this hybrid is indistinguishable from the previous hybrid.
This can be shown by a reduction to the differing-point-hiding property of $\TEPRF$, where we replace the challenge input for each position of $i,j$ one by one.
The reduction starts by choosing random $(s_{i,j}^0,s^1_{i,j}, c^*_{i}[j])$ and submitting this to its challenger and is given $\TEPRF.\key_{i,j,c^*_{i}[j]}$, where the differing point is either $s_{i,j}^0$ or $s^1_{i,j}$.
The reduction uses $s_{i,j}^1$ as a challenge input.
In the case $\TEPRF.\key_{i,j,c^*_{i}[j]}$ is generated using $s_{i,j}^0$, the challenge input for the position is simulated as $\hyb_3$ (i.e., the differing point and challenge input are independent for the position), while $\hyb_4$ (i.e., the differing point and challenge input are the same) otherwise.
The rest of the reduction is straightforward, since $\Hyb_3$ and $\Hyb_4$ can be run only using $\TEPRF.\key_{i,j,c^*_{i}[j]}$ for each $i,j$. 
We therefore have $|\Pr[\event_{3}] - \Pr[\event_{4}]| = \negl(\secp)$.

\item[$\Hyb_5$:]
In this hybrid, we switch the way the evaluation oracles are answered to the original one. Namely, on input $\hat{s}$, the $i$-th evaluation oracle computes $\hat{t}_j$ as \cref{eq:PRF-eval-oracle-honest} again.

Similarly to the case of $\hyb_2$ and $\hyb_3$, we can show that this hybrid is indistinguishable from the previous hybrid by the hard-to-find differing-point property of $\TEPRF$.  
Here again, we have to bound the probability that $\qA$ makes an evaluation query for $\hat{s}$ with $\hat{s}_{j}=s^*_{i,j}$ for some $j\in [\lenx]$. One additional subtlety that arises here is that the reduction algorithm does not know $s^*_{i,j}$ and thus it cannot simulate the challenge input for the adversary.
However, this is not a problem because it is sufficient for the reduction to simulate the game right before the challenge input is given to $\qA$, since the evaluation queries are made only before that. 
We therefore have $|\Pr[\event_{4}] - \Pr[\event_{5}]| = \negl(\secp)$.

\item[$\Hyb_6$:]
In this hybrid, we undo the changes introduced in $\hyb_1$.
Namely, the challenger chooses $c_i$ for $i\in [n]$ in Step~\ref{Step1:WPRF} again. The challenger then sets $\key_i = \{ \TEPRF.\key_{i,j, x_i[j]\oplus c_i[j] } \}_{j\in [\lenx]}$ for $i\in [n]$ in Step~\ref{Step:WPRF-Step-4} and uses it for checking the winning condition in Step~\ref{Step:WPRF-Step-8}.
We can easily see that this is just a conceptual change and thus we have 
$\Pr[\event_6]=\Pr[\event_5]$.

\item[$\Hyb_7$:]
In this hybrid, we further change the winning condition in Step~\ref{Step:WPRF-Step-8} again.
Namely, from the output $\{t'_i = t'_{i,1}\| \cdots \|t'_{i,\lenx}  \}_{i\in [n]}$ of $\qA$, it computes 
 $\{ x'_i = x'_i[1] \| \cdots \|x'_i[\lenx] \}_{i\in [n]}$ as follows:
    \begin{align}
    x'_i[j] = d_i[j]\oplus c_i[j],
    \quad \mbox{where} \quad 
    d_i[j] =
    \begin{cases}
      0 & \text{if } t'_{i,j}= 
      \TEPRF.\Eval(\TEPRF.\key_{i,j, 0 },s^*_{i,j}) \\
      1 & \text{otherwise }. 
    \end{cases} 
\end{align}
Then, the challenger sets the output of the game to be $1$ if $x'_i = x_i$
for all $i\in [n]$.

We claim that this change does not reduce the winning probability of $\qA$.
To see this, we recall that when the adversary wins the game in the previous hybrid, $t'_{i,j}=  \TEPRF.\Eval(\TEPRF.\key_{i,j, x_i[j]\oplus c_i[j] },s^*_{i,j})$ holds for all $i$ and $j$. 
In such a case, we have $x'_i=x_i$ for all $i$, since we have $d_{i}[j] = x_i[j]\oplus c_i[j]$ by the different values on target property of $\TEPRF$. 
We therefore have $\Pr[\event_7] \geq \Pr[\event_6]$. 
\end{description}
From the above discussion, we can see that $\Pr[\event_7] \geq \Pr[\event_0] -\negl(\secp)$ holds. 
Furthermore, one can see by inspection that $\hyb_7$ corresponds to the parallel extraction experiment $\advb{\WPRF,\qA,\qExt,n}{par}{ext}(\secp)$ for $\qA$ with respect to $\qExt$ we defined.
The theorem therefore follows.
\end{proof}
We also prove the equivocality of the construction. 
\begin{theorem}\label{th:equivocality-of-WPRF}
    The above construction satisfies equivocality if $\TEPRF$ satisfies hard-to-find differing-point property.
\end{theorem}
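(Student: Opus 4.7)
The plan is to define the equivocation simulator $\Sim$ as follows: for each $j\in [\lenx]$, sample $s^*_j \lrun \bin^\ell$ and $(\TEPRF.\key_{j,0},\TEPRF.\key_{j,1}) \gets \TEPRF.\KG(1^\secp, s^*_j)$, and output $\key^* = \{\TEPRF.\key_{j, 0}\}_{j\in [\lenx]}$, formatted as a marked key so that $\Eval(\key^*,\cdot)$ is well-defined. In the real experiment ($\coin=0$), the oracle $\Eval(\msk,\cdot)$ already evaluates each block with $\TEPRF.\key_{j,0}$ by the specification of $\Eval$, and the marked key handed to the adversary after it commits to $x$ is $\{\TEPRF.\key_{j,\,x[j]\oplus c[j]}\}_j$. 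Because $c$ is a fresh uniform string independent of $x$ and of everything else in the adversary's view prior to the mark step, the marked key is identically distributed to $\{\TEPRF.\key_{j, r_j}\}_j$ for independent uniform bits $r_j$. In the simulated experiment ($\coin=1$), both the oracle and the returned key use $\{\TEPRF.\key_{j,0}\}_j$. Hence the only gap between $\coin=0$ and $\coin=1$ is whether the $j$-th block of the returned key is $\TEPRF.\key_{j,0}$ or $\TEPRF.\key_{j, r_j}$.

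I would close the gap via a loop of $\lenx$ hybrids, one per block. In the step that switches block $j$ of the returned key from $\TEPRF.\key_{j, r_j}$ to $\TEPRF.\key_{j, 0}$, I build a reduction $\qB$ against the hard-to-find-differing-point property of $\TEPRF$ at position $j$: on input $\TEPRF.\key_b$ (with hidden differing point $s^*$), $\qB$ samples all other TEPRF keys honestly, uses $\TEPRF.\key_b$ for block $j$ of \emph{both} the oracle and the returned key, and handles the randomized $r_j$ case by flipping an internal side label so that the two end-hybrids are realized by $b=0$ and $b=1$ respectively. By the equality property of $\TEPRF$, every evaluation that does \emph{not} touch input $s^*$ in block $j$ yields the same answer regardless of $b$, so $\qB$'s simulation is perfect for all such queries. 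Consequently, any adversary that distinguishes the two hybrids with non-negligible advantage must induce an evaluation at $s^*$ in block $j$ with non-negligible weight, from which $\qB$ can extract $s^*$ (via a standard guess-the-query or one-way-to-hiding argument in the quantum setting) and break the hard-to-find-differing-point property.

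The main obstacle is the second-phase oracle access: once the adversary holds the returned key, it may run arbitrary (possibly superposed) evaluations and can in principle correlate oracle queries with evaluations of the returned key. My reduction sidesteps this by using the \emph{same} challenge key for both simulations, so the oracle and the returned key agree on every input except $s^*_j$, and the post-commit oracle phase contributes no new distinguishing power beyond what the adversary could already compute locally from the returned key. Summing over the $\lenx$ hybrid steps and the factor-$2$ loss from the $r_j$ randomization yields
\begin{align}
\adva{\WPRF,\qA}{equiv}(\secp) \le 2\lenx \cdot \negl(\secp) = \negl(\secp),
\end{align}
as required.
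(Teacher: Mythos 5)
There is a genuine gap, and it lies in your choice of simulator. You define $\Sim$ to output $\key^* = \{\TEPRF.\key_{j,0}\}_{j\in[\lenx]}$, so in the simulated world the adversary receives the all-zero-branch key, whereas in the real world it receives $\{\TEPRF.\key_{j,\,x[j]\oplus c[j]}\}_j$, i.e.\ $\{\TEPRF.\key_{j,r_j}\}_j$ for uniform $r_j$. To bridge this you must argue that handing the adversary $\TEPRF.\key_{j,1}$ is indistinguishable from handing it $\TEPRF.\key_{j,0}$. No property of $\TEPRF$ gives you this: the equality property constrains only the \emph{input--output behavior} of the two keys (they agree off $s^*_j$), not the key strings themselves, and the adversary holds the key in the clear, so "every evaluation that does not touch $s^*_j$ yields the same answer" does not imply the two hybrids are indistinguishable --- the keys could be trivially distinguishable as bit strings. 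Moreover, the hard-to-find-differing-point property is a \emph{search} guarantee for each \emph{fixed} branch $b$ ("given $\key_b$, you cannot output $s^*$"); it is not a distinguishing game between $b=0$ and $b=1$, so your reduction step in which "the two end-hybrids are realized by $b=0$ and $b=1$ respectively" has no assumption to reduce to. (The differing-point-hiding property hides which $s^*$ was used for a branch the adversary itself chose; it also says nothing about $\key_0$ versus $\key_1$ for the same $s^*$.)

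The repair is to change the simulator, which is exactly what the paper's proof does: $\Sim$ samples a fresh uniform $c^*$ and outputs $\key^*=\{\TEPRF.\key_{j,c^*[j]}\}_j$. Then the returned key is \emph{identically} distributed in the real and simulated worlds (in the real world $c$ acts as a one-time pad on $x$, so $x[j]\oplus c[j]$ is uniform), and the only residual difference is functional: the real oracle evaluates with $\TEPRF.\key_{j,0}$ while the simulated oracle evaluates with $\TEPRF.\key_{j,c^*[j]}$. By the equality property these differ only on queries hitting $s^*_j$ in block $j$, and that event is ruled out by the hard-to-find-differing-point property (the reduction only needs $\TEPRF.\key_{j,c^*[j]}$ to simulate, so it can extract $s^*_j$ from such a query). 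Your per-block hybrid structure and your use of equality plus hard-to-find-differing-point for the \emph{oracle} are the right ingredients; they just have to be applied to the oracle discrepancy rather than to the distributed key itself.
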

\begin{proof}
To prove the theorem, we first define the equivocation simulator $\Sim$ as follows.
\begin{description}
    \item[$\Sim(1^\secp) \to \key^*$:] 
    Given the security parameter $1^\secp$, it proceeds as follows.  
    \begin{itemize} 
        \item Sample $s^*_1,s^*_2,\ldots,s^*_{\lenx} \lrun \bin^\ell$ and $c^* \lrun \bin^\lenx$. 
        \item Run $\TEPRF.\KG(1^\secp, s^*_j) \rrun (\TEPRF.\key_{j,0}, \TEPRF.\key_{j,1})$ for $j\in [\lenx]$. 
        \item Set $\key^* = \{ \TEPRF.\key_{j,c^*[j]}\}_{j\in[\lenx]}$. 
    \end{itemize}
\end{description}

    We then prove the theorem by a sequence of hybrids.
    For the sake of the contradiction, let us consider an adversary $\A$ against the equivocality game \( \expa{\WPRF,\qA}{equiv}(1^\secp) \). 
    In the following, the event that $\qA$ outputs $1$ in $\Hyb_{\rm xx}$ is denoted by $\event_{\rm xx}$.
\begin{description}
    \item[$\Hyb_0$:] This is the equivocality game between the challenger and an adversary $\A$ with $\coin =0$.  
      Namely, the game proceeds as follows.
    \begin{enumerate}
        \item \label{Step:WPRF-equiv-step1}
        The challenger runs $\KG(1^\secp)$ to obtain $(\msk,\extk)$ as follows. \begin{itemize} 
        \item Sample $s^*_1,s^*_2,\ldots,s^*_{\lenx} \lrun \bin^\ell$ and $c \lrun \bin^\lenx$. 
        \item Run $\TEPRF.\KG(1^\secp, s^*_j) \rrun (\TEPRF.\key_{j,0}, \TEPRF.\key_{j,1})$ for $j\in [\lenx]$. 
        \item Set $\msk = (c,\{ \TEPRF.\key_{j,b}\}_{j\in[\lenx],b\in\bit})$ and $\extk=(c, \{ s^*_j\}_{j\in [\lenx ]} )$. 
        \end{itemize}

    \item The adversary $\qA$ is given input $1^\secp$ and starts to run.
    Throughout the game, the adversary has an oracle access to the function $\Eval(\msk, \cdot)$.
    Recall that given an input $\hat{s}=\hat{s}_1\|\hat{s}_2\|\ldots\|\hat{s}_\lenx$, $\Eval(\msk, \hat{s})$ computes 
    \begin{align} %\label{eq:PRF-eval-oracle-honest}
            \hat{t}_j = \TEPRF.\Eval(\TEPRF.\key_{j,0},\hat{s}_j)
    \end{align}  
    for $j\in [\lenx]$ and returns $\hat{t}=\hat{t}_1\|\hat{t}_2\|\ldots\|\hat{t}_\lenx$. 
    
    \item 
    At some point, $\qA$ chooses arbitrary $x \in \cX$ and sends it to the challenger.  
    \item \label{Step:WPRF-equiv-Step-4}
    The challenger sets $\key = \{ \TEPRF.\key_{j, x[j]\oplus c[j] } \}_{j\in [\lenx]}$ and sends $ \key$ to the adversary. 
    \item 
    The adversary continues to have access to the oracle. 
Finally, it outputs its guess $\coin'$ for $\coin$. The output of the experiment is then set to be $\coin'$.
\end{enumerate}
\item[$\Hyb_1$:] 
In this hybrid, we choose $c^*\lrun \bin^{\lenx}$ at Step~\ref{Step:WPRF-equiv-step1} of the game. Furthermore, at Step~\ref{Step:WPRF-equiv-Step-4} of the game, the challenger returns $\key = \{ \TEPRF.\key_{j, c^*[j] } \}_{j\in [\lenx]}$ to the adversary. 
We can see that this change is conceptual, since no information of $c$ is revealed except for Step~\ref{Step:WPRF-equiv-Step-4} of the game in the previous hybrid and thus $c$ effectively works as a one-time pad. 
We therefore have $\Pr[\event_1] = \Pr[\event_0]$.
\item[$\Hyb_2$:]
In this hybrid, we change the evaluation oracle so that $\hatt_j$ for $j\in [\lenx]$ is defined as 
\[
\hatt_j = \TEPRF.\Eval(\TEPRF.\key_{j,c^*[j]},\hat{s}_j).
\]
We claim that this change is unnoticed by $\qA$ except for a negligible probability. To see this, we first observe that by the equality of $\TEPRF$, the response from the oracle in this hybrid differs from that of the previous game only when  $\qA$ queries on input $\hat{s}$ such that $\hat{s}_j = s^*_j$ for some $j\in [\lenx]$.
However, we can show that such an event occurs only with negligible probability by a straightforward reduction to the hard-to-find differing-point property of $\TEPRF$, by noting that we only need $ \TEPRF.\key_{j, c^*[j] }$ for each $j$ for simulating $\hyb_2$ (i.e., $ \TEPRF.\key_{j, 1-c^*[j] }$ is not needed).
We therefore have $|\Pr[\event_{1}] - \Pr[\event_{2}]| = \negl(\secp)$.
\end{description}
We can observe that $\hyb_2$ is identical to the equivocality game between the challenger and an adversary $\A$ with $\coin =1$.
Furthermore, from the above discussion, we know $|\Pr[\event_{0}] - \Pr[\event_{2}]| = \negl(\secp)$. Therefore, the theorem follows.
\end{proof}

\subsection{Watermarkable Digital Signatures}\label{watermarkable_DS}
The definition of watermarkable digital signatures (DS) is very similar to that of watermarkable PKE in \Cref{sec:WPKE}; if 
an adversary is given a marked signing key $\sigk(x)$ and generates a (possibly quantum) "signer" that can sign a  uniformly random message with probability $\epsilon$, then we can extract $x$ from the signer with probability $\epsilon-\negl(\secpar)$, and moreover the extraction should work even in the parallel setting.  
We generalize the idea of \cite{kitagawa2025simple}
and show a generic construction of watermarkable DS from coherently signable constrained signatures and the watermarkable UPF, which in turn can be based on the SIS assumption. 
Similarly to the case of watermarkable UPF, we must also account for signing queries in the security proof, a scenario not addressed in \cite{kitagawa2025simple}. 
This introduces some technicalities in the proof, which can be resolved by exploiting the equivocality property of watermarkable UPF.

\paragraph{Syntax}
A watermarkable DS scheme $ \WDS=(\KG,\Mark, \Sign, \Vrfy )$ with the message space $\cM=\{\cM_\lambda\}_\lambda$ and the mark space $\cX=\{\cX_\lambda\}_\lambda$ consists of the following algorithms.
In the following, we omit the security parameter $\secp$ for $\cX_\secp$ and $\ell(\secp)$ to denote $\cX$ and $\ell$ when it is clear from the context.
%The syntax of watermarkable digital signatures is as follows:
\begin{description}
\item[$\KG(1^\secp)\ra (\vk,\msk,\extk)$:] The key generation algorithm is a PPT algorithm that takes a security parameter $1^\lambda$, and outputs a verification key $\vk$, master secret key $\msk$, and an extraction key $\extk$. 
%\takashi{Similarly to the case of UPF, it only supports secret extraction.

\item[$\Mark(\msk,x)\ra \sigk(x)$:] The marking algorithm is a PPT algorithm that takes a master secret key $\msk$ and a mark $x\in \cX$, and outputs a marked signing key $\sigk(x)$. 

\item[$\Sign(\sigk(x),m)\ra \sigma$:] The signing algorithm is a PPT algorithm that takes a master secret key $\msk$ or marked signing key $\sigk(x)$ and a message $m\in \cM$, and outputs a signature $\sigma$. 

\item[$\Vrfy(\vk,m,\sigma)\ra \top/\bot$:] The verification algorithm is a PPT algorithm that takes a verification key $\vk$, a message $m\in \cM$, and a signature $\sigma$, and outputs $\top$ or $\bot$.   

\item[Correctness:]
For every $x\in \cX$ and $m\in\cM$, we have
\begin{align}
\Pr\left[
\begin{array}{ll}
\Vrfy(\vk,m,\sigma) = \top 
\\
\qquad \quad \wedge
\\
\Vrfy(\vk,m,\sigma') = \top 
\end{array}
\ \middle |
\begin{array}{ll}
(\vk, \msk, \extk) \gets\KG(1^\secp)\\
\sigk(x)\gets \Mark(\msk,x)\\ 
\sigma \gets \Sign(\sigk(x), m) \\
\sigma' \gets \Sign(\msk, m)
\end{array}
\right] 
=1-\negl(\secp).
\end{align}

\item[Coherent signability]
We say that a watermarkable DS scheme is coherently-signable if for any polynomial $L=L(\secp)$, there is a QPT algorithm $\QSign$ that takes a quantum state $\ket{\psi}$ and a classical message $m\in \cM$ and outputs a quantum state $\ket{\psi'}$ and a signature $\sigma$, satisfying the following: 
\begin{enumerate}
\item For any family $\{x_z\in \cX \}_{z\in \bin^L}$, for any 
$z\in \bin^L$, 
$m\in \cM$, $(\vk,\msk, \extk) \in \Supp(\KG(1^\secp))$, and $ \sigk(x_z)\in \Supp(\Mark(\msk,x_z))$, the output distribution of $\QSign(\ket{z}\ket{\sigk(x_z)},m)$ is identical to that of $\Sign(\sigk(x_z),m)$.

\item For any families $\{ \alpha_z \in \CC \}_{z\in \bin^L }$ such that $\sum_{z\in \bin^L }|\alpha_z|^2 = 1$ and $\{x_z \in\cX \}_{z\in \bin^L }$ and for any $m\in \cM$, the following holds:
For overwhelming probability over the choice of 
$(\vk, \msk, \xk)\gets \KG(1^\secp)$ and for all $\sigk(x_z) \in \Supp(\Mark(\msk, x_z))$, we have
\begin{align}
\|\ket{\psi}\bra{\psi}-\ket{\psi'}\bra{\psi'}\|_{tr}=\negl(\secp),
\quad 
\mbox{where}
\quad 
\ket{\psi}=\sum_{z\in \bin^L }\alpha_z\ket{z}\ket{\sigk(x_z)},
\quad
\ifnum\llncs=1
\\
\fi
(\ket{\psi'},\sigma)\gets\QSign(\ket{\psi},m).
\end{align}
\end{enumerate}
\end{description}
\begin{remark}
    The coherent-signability is required so that the DS-SKL scheme we construct from the watermarkable DS has reusability of quantum signing key, which states that the quantum signing key can be repeatedly used to sign on multiple messages. 
\end{remark}
\begin{description}
\item[Parallel mark extractability:]
Here, we introduce the notion of ``parallel mark extractability", which is similar to that for watermarkable PKE. 
We first define the following parallel forgery experiment involving an adversary \( \A \) and a challenger, parameterized by an arbitrary polynomial \( n = \text{poly}(\secp) \), denoted as \( \expb{\WDS,\qA, n}{par}{for}(1^\secp) \).

% We require the following "parallel extractability", which requires the existence of a PPT extractor $\Ext$ satisfying the following:
% Consider the following game where $n$ is an arbitrary polynomial:
\begin{enumerate}
\item The challenger runs $\KG(1^\secp)\ra (\vk_i,\msk_i,\extk_i)$ for $i\in [n]$ and
sends $\{ \vk_i \}_{i\in [n]}$ to $\qA$.
\item 
Throughout the game, $\qA$ has an oracle access to the signing oracle $\Sign(\msk_i, ) $ for each $i\in [n]$, until it receives the challenge messages in Step~\ref{Step:WDS-exp-step-7}.
    When queried on input $\hatm\in \cM$, the oracles $\Sign(\msk_i, ) $ runs $\Sign(\msk_i,\hatm) \to \hatsigma$ and returns $\hatsigma$ to $\qA$. 
\item At some point, the adversary sends $x_1,...,x_n\in \cX$ to the challenger. 
\item The challenger computes the marked signing key $\sigk_i(x_i)\gets\Mark(\msk_i,x_i)$ for $i\in [n]$ and sends $\{ \sigk_i(x_i)\}_{i\in[n]}$ to $\qA$.
% \item For $i\in [n]$, the challenger generates $\msk_i\gets \KG(1^\secp)$ and $\sk_i(x_i)\gets\Mark(\msk_i,x_i)$ and sends $(\sk_i(x_i))_{i\in[n]}$ to the adversary. \takashi{We may additionally give an evaluation oracle to the adversary. I believe our construction also satisfies the security in this setting. This will be inherited to the eventual construction of UPF-SKL with a classical lessor. }
% \item The adversary outputs (potentially entangled) "quantum signer" $(S_i)_{i\in [n]}$. 
\item 
    After receiving the marked signing keys, $\qA$ continues to make queries to the signing oracles.
    At some point, it declares that it is ready for the challenge. 
\item  \label{Step:WDS-exp-step-6}
    The challenger picks uniformly random $m_i\lrun \cM$ for $i\in [n]$ and sends $\{m_i\}_{i\in [n]}$ to $\qA$. 
\item \label{Step:WDS-exp-step-7}
    Receiving $\{m_i\}_{i\in [n]}$ from the challenger, $\qA$
     outputs $\{\sigma_i\}_{i\in [n]}$.
     Note that after receiving $\{m_i\}_{i\in [n]}$ from the challenger, $\qA$ loses access to the signing oracles.     
% \item For $i\in [n]$, the challenger runs $\sigma'_i \gets S_i(\msg_i)$.
\item  \label{Step:WDS-exp-step-8}
The experiment outputs $1$ (indicating that $\qA$ wins) if $\Vrfy(\vk_i,m_i,\sigma_i)=\top$ for all $i\in [n]$ and otherwise outputs $0$. 

\end{enumerate}
We define the advantage of a QPT adversary $\qA$ to be 
\begin{align}
\advb{\WDS,\qA,n}{par}{for}(\secp) \seteq \Pr[\expb{\WDS,\qA,n}{par}{for}(1^\secp)= 1].
\end{align}
We also define parallel mark extraction experiment defined by a QPT extractor $\qExt$, an adversary \( \A \), and a challenger, parameterized by an arbitrary polynomial \( n = \text{poly}(\secp) \), denoted as $\expb{\WDS,\qA,\qExt,n}{par}{ext}(1^\secp)$. 
The parallel extraction experiment is the same as parallel forgery experiment except for the last 3 steps. Namely, we replace Step \ref{Step:WDS-exp-step-6}, \ref{Step:WDS-exp-step-7}, and \ref{Step:WDS-exp-step-8} with the following:
\begin{enumerate}[label=\arabic*', start=6]
    \item Take the adversary $\qA$ right before Step \ref{Step:WDS-exp-step-6} of \( \expb{\WDS,\qA,n}{par}{for}(1^\secp) \). 
    Since $\qA$ is a quantum non-uniform QPT adversary,  we can
parse $\qA$ as a pair of a unitary circuit $U_\qA$ and a quantum state $\rho_\A$, where to run $\qA$ on the inputs $\{m_i\}_{i\in [n]}$, 
we apply $U_\qA$ to $\{m_i\}_{i\in [n]}$ and $\rho_\qA$ and then measure the corresponding registers to obtain the output $\{\sigma_i\}_{i\in [n]}$.
\item We run $\qExt(\{\extk_i\}_{i\in [n]}, (U_\qA, \rho_\qA) ) \to \{x'_i\}_{i\in [n]}$.  
\item The experiment outputs $1$ (indicating that the adversary wins) if $x'_i=x_i$ for all $i\in [n]$ and $0$ otherwise. 
\end{enumerate}
We define the advantage of a QPT adversary $\cA$ with respect to $\qExt$ to be 
\begin{align}
\advb{\WDS,\qA,\qExt,n}{par}{ext}(\secp) \seteq \Pr[\expb{\WDS,\qA,\qExt,n}{par}{ext}(1^\secp)= 1].
\end{align}
We say that $\WDS$ satisfies parallel mark extractability if there exists a QPT algorithm $\qExt$ such that for all $n=\poly(\secp)$, for all QPT algorithm $\cA$,
the following holds:
\[
\advb{\WDS,\qA,\qExt,n}{par}{ext}(\secp) \geq \advb{\WDS,\qA,n}{par}{for}(\secp) -\negl(\secp).
\] 
% For any QPT adversary, we have
% \[
% \Pr\left[
% \forall~i\in[n],~ \Ext(\extk_i,S_i)=x_i
% \right]\ge
% ~\Pr[\text{The~adversary~wins~the~above~experiment}]-\negl(\secp)
% \]
% where, in the probability in the LHS, $\extk_i,x_i,S_i$ are generated as in the above experiment.  
% \if0
% With an overwhelming probability over the execution of the above experiment, the following holds: 
% If all the predictors are simultaneously good, i.e.,
% \[
% \Pr_{\msg_1,...,\msg_n}\left[
% \forall~i\in[n], \Vrfy(\vk,\msg_i,S_i(\msg_i))=\top
% \right]\ge \epsilon
% \]
% then we can extract all the marks simultaneously, i.e., 
% \[
% \Pr\left[
% \forall~i\in[n],~ \Ext(S_i)=x_i
% \right]\ge \epsilon-\negl(\secp).
% \]
% \fi
\end{description}
\begin{remark}
    Similarly to the case of UPF, the above definition only supports secret extraction. However, it is sufficient for our purpose.
\end{remark}
\begin{remark}[Security as a plain DS]\label{rem:WDS-security-as-plain-DS}
We note that the above definition of a watermarkable DS does not necessarily imply the standard EUF-CMA security as a plain DS. 
Nevertheless, the EUF-CMA security can be added by a simple conversion: run a plain DS in parallel and append its signature alongside the original one. 
The new verification algorithm verifies both the signatures and accepts only when both are valid. 
The marked key of the resulting scheme consists of that of the original one along with the signing key of the plain DS.
\end{remark}

\paragraph{Construction}
%The above defined watermarkable digital signatures can be constructed from any two-key equivocal PRF and coherently-signable constrained signatures, both which in turn can be constructed from LWE. (See \Cref{sec:coherently_signable_DS} for the definition of coherently-signable constrained signatures.)

The above defined watermarkable digital signature scheme can be constructed from any watermarkable UPF and coherently-signable constrained signatures. 
Our construction abstracts the core idea of the watermarkable DS scheme implicit in \cite{kitagawa2025simple}. 
Observe first that UPF can already be viewed as a MAC; what remains is simply to augment it with a public verification algorithm. 
To achieve this, we equip the signer with a constrained signing key that allows signatures to be generated only on valid input--output pairs of the UPF. 
Consequently, in order to produce a valid signature on a message (treated as an input to the UPF), the signer must also provide a corresponding valid output of the UPF as part of the signature. 
This structure enables us to reuse the same extractor as in the UPF setting.

% Our idea for the construction abstracts out the idea behind the DS-SKL construction in \cite{kitagawa2025simple}. 
% First, note that UPF can already be regarded as a MAC, so we only need to augment it by providing a public verification algorithm. 
% To do so, we provide a signer with a constrained signing key that enables one to generate a signature only on valid input-output pairs of the two-key equivocal PRF. Then, for the signer to generate a valid signature on some message (which is regarded as an input of the two-key equivocal PRF), it must provide a valid output of the  two-key equivocal PRF as part of a signature. Thus, we can use the same extractor as that for the case of UPF. %\takashi{We might be able to formalize this as a generic construction from watermarkable UPF and coherently-signable constrained signatures.}

% The coherent-signability is needed to make sure that the quantum signing key can be repeatedly used to sign on multiple messages. Note that such a property wasn't explicitly required for PKE and UPF since they are deterministic and thus coherent execution is always possible by the gentle measurement lemma. 
%\synote{The above explanation will be removed or edited.}

Formally, our construction of watermarkable digital signature scheme $\WDS=(\KG, \Mark, \Sign, \Vrfy)$ uses watermarkable UPF $\WPRF=\WPRF.(\KG,\Mark,\Eval)$ and coherently-signable constrained signatures $\CS=\CS.(\Setup,\Constrain,\Sign,\Vrfy)$ (See \Cref{def:cs} for the definition of coherently-signable constrained signatures.) as building blocks, both of which in turn can be constructed from SIS. 
As additional requirements, we need $\WPRF$ to satisfy equivocality and $\CS$ to satisfy function privacy. 
Our construction $\WDS$ supports mark space $\bin^\lenx$ and the message space $\bin^\ell$.
For the construction, we need the underlying $\WPRF$ to support the mark space $\bin^\lenx$ and the input space $\bin^\ell$. 
We denote the bit-length of the output of the $\WPRF$ function by $v$.
We then need $\CS$ to support the $\WPRF$ evaluation circuit defined below,
whose input length is $\ell +v$ and the depth is bounded by a fixed polynomial.

\begin{description}
\item[$\KG(1^\secp)\ra(\vk, \msk,\extk)$:] 
On input the security parameter $1^\secp$, the key generation algorithm proceeds as follows. 
\begin{itemize}
    \item Run $\WPRF.\KG(1^\secp) \to (\WPRF.\msk, \WPRF.\extk) $.
    \item Run $\CS.\Setup(1^\secp) \rrun (\CS.\vk, \CS.\msk)$.
    \item Output $\vk = \CS.\vk$, 
    $\msk = (\WPRF.\msk, \CS.\msk)$,  
    and $\extk=  \WPRF.\extk$.
\end{itemize}
% By running the key generation algorithm of the two-key equivocal PRF $\ell$ times with uniformly random strings $s^*_1,s^*_2,...,s^*_\ell$ as input, generate $(\sk_{j,0},\sk_{j,1})$ for $j\in [\ell]$. 
% Output a master secret key $\msk=(\sk_{j,b})_{j\in[\ell],b\in\bit}$
% and an extraction key $\extk=(s^*_i)_{i\in [\ell]}$.

\item[$\Mark(\msk,x)\ra \sigk(x)$:] 
Upon input the master secret key $\msk = (\WPRF.\msk, \CS.\msk)$ and a mark $x\in \bit^\lenx$, it proceeds as follows.
\begin{itemize}
    \item Run $\WPRF.\Mark(\WPRF.\msk,x) = \WPRF.\key(x)$.
    \item Construct a circuit $f[\WPRF.\key(x)]$ that takes as input a pair $(m,t)\in \bin^\ell \times \bin^v$
    and outputs $1$ if $\WPRF.\Eval(\WPRF.\key(x), m) = t$ and $0$ otherwise.
    \item Run $\CS.\Constrain(\CS.\msk, f[\WPRF.\key(x)]) \to \CS.\sigk(x)$.
    \item Output $\sigk(x)= (\WPRF.\key(x), \CS.\sigk(x))$. 
\end{itemize}

\item[$\Sign(\sigk(x),m)\ra \sigma$:] 
On input the marked key $\sigk(x)$ or the master secret key $\msk$ and the message $m\in \bin^\ell$, the signing algorithm proceeds as follows. 
\begin{itemize}
    \item If the first input is a master secret key, parse it as $\msk = (\WPRF.\msk, \CS.\msk)$. Then, set $\WPRF.\key = \WPRF.\msk$ and compute
    $\CS.\Constrain(\CS.\msk, f_{=1}) \to \CS.\sigk$, where $f_{=1}$ is a constant function that always outputs $1$.
    \item If the first input is a marked key $\sigk(x)$, parse it as 
    $\sigk(x)= (\WPRF.\key, \CS.\sigk)$.
    \item Compute $t= \WPRF.\Eval(\WPRF.\key, m)$ and $\CS.\Sign(\CS.\sigk, (m,t)) \to \CS.\sigma$.
    \item Output $\sigma=(t,\CS.\sigma)$.
\end{itemize}
\item[$\Vrfy(\vk,m,\sigma)$:] 
On input the message $m$ and the signature $\sigma=(t,\CS.\sigma)$, it runs $\CS.\Vrfy(\CS.\vk, (m,t), \CS.\sigma)$ and outputs whatever it outputs.
\end{description}

\paragraph{Correctness}
%\shota{To be added}
We first observe that the signature $\sigma=(t,\CS.\sigma)$ generated by the master secret key passes the verification with overwhelming probability by the correctness of $\CS$, since $f_{=1}(m,t)=1$ for any $m$ and $t$.
We then observe that the signature $\sigma=(t,\CS.\sigma)$ generated by the signing key $\sigk(x)$ also passes the verification 
with overwhelming probability by the correctness of $\CS$, since 
$f[\WPRF.\key(x)](m,t)= 1$ holds when $t= \WPRF.\Eval(\WPRF.\key(x), m)$. 

\paragraph{Coherent Signability}
To show the coherent signability, we first define the quantum signing algorithm $\QSign$.
It takes as input a quantum state $\ket{\psi}$ and a message $m$. 
We can assume that $\ket{\psi}$ is of the form $\sum_{z\in \bin^L }\alpha_z\ket{z}\ket{\sigk(x_z)} = \sum_{z\in \bin^L}\alpha_z\ket{z}\ket{\WPRF.\key(x_z)}\ket{\CS.\sigk(x_z)}$.
\begin{enumerate}
    \item Apply the following isometry to $\ket{\psi}$: 
    \begin{align}
    \ket{z}\ket{\WPRF.\key(x_z)}\ket{\CS.\sigk(x_z)} 
    \mapsto
    \ket{z}\ket{\WPRF.\key(x_z)}\ket{\CS.\sigk(x_z)}\ket{\WPRF.\Eval(\WPRF.\key(x_z), m) }.
    \end{align}
    \item Measure the last register and discard it. Let the result of the measurement be $t$ and the resulting state after discarding the register be $\ket{\phi}$.
    \item Run $\CS.\QSign(\ket{\phi}, (m,t)) \to (\ket{\psi'}, \CS.\sigma)$ and output $\ket{\psi'}$ and $\sigma=(t,\CS.\sigma)$. 
\end{enumerate}

We first show that the above defined algorithm satisfies the first property of coherent signability, which states that applying $\QSign$ to a classical message results in a signature distributed identically to when $\Sign$ is applied.
This straightforwardly follows from the fact that $\CS.\QSign$ has the same output distribution as  $\CS.\Sign$ when the input is classical.

We then show the second property, which states that the application of $\QSign$ on a superposition of a marked signing key almost does not change the state.
To see this, we first observe that after applying the first step, the state becomes
\begin{align}
&
\sum_{z\in \bin^L}\alpha_z\ket{z}\ket{\WPRF.\key(x_z)}\ket{\CS.\sigk(x_z)}
\ket{\WPRF.\Eval(\WPRF.\key(x_z), m) }
\\
&=
\sum_{z\in \bin^L}\alpha_z\ket{z}\ket{\WPRF.\key(x_z)}\ket{\CS.\sigk(x_z)}
\ket{\WPRF.\Eval(\WPRF.\msk, m) },
\end{align}
where the equation holds with overwhelming probability over the randomness of $\WPRF.\KG(1^\secp)$ from the correctness of $\WPRF$.\footnote{Recall that we require somewhat strong evaluation correctness condition for $\WPRF$, which requires that with overwhelming probability over the choice of $\WPRF.\msk$ and $\WPRF.\xk$, the evaluation result by the master secret key equals that of the marked key with mark $x$ \emph{for all} $x$.}
Therefore, we have $\ket{\phi}=\ket{\psi}$ with overwhelming probability. 
We next see that the application of $\CS.\QSign$ almost does not change the state $\ket{\phi}$. 
To see this, we consider a function family $\{g_{z'}\}_{z'}$ where for $z'= (z, \WPRF.\key(x_z))$, $g_{z'}$ is defined as $g_{z'} = f[\WPRF.\key(x_z)]$ and apply the coherent signability of $\CS$.
We therefore have that $\ket{\psi'}$ is negligibly close to $\ket{\psi}$ in trace distance as desired.

\paragraph{Security Proof}
The following theorem asserts the parallel mark extractability of the above construction.
\begin{theorem}
Assume that $\WPRF$ satisfies the parallel mark extractability as watermarkable unpredictable function and equivocality and $\CS$ satisfies selective single-key security and function privacy.  
Then, the above construction satisfies parallel mark extractability as watermarkable digital signatures. 
\end{theorem}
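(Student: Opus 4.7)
The plan is to reduce the parallel mark extractability of $\WDS$ to that of the underlying $\WPRF$ by wrapping any WDS forger $\qA$ inside a WPRF predictor. Concretely, I would define the extractor $\qExt^{\WDS}$ to take $(\{\WPRF.\extk_i\}_{i\in[n]}, (U_\qA, \rho_\qA))$ and build a WPRF adversary state $(U_{\qA'}, \rho_{\qA'})$, where $\rho_{\qA'}$ augments $\rho_\qA$ with freshly sampled classical $\{\CS.\msk_i\}_{i\in[n]}$, and $U_{\qA'}$ runs $U_\qA$ on the incoming challenge messages and discards the CS component of each WDS signature $(t_i, \CS.\sigma_i)$ to output $\{t_i\}_{i\in[n]}$. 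The extractor then simply invokes $\qExt^{\WPRF}$ on $(\{\WPRF.\extk_i\}, (U_{\qA'}, \rho_{\qA'}))$ and returns its output.

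To justify this, I will exhibit the natural WPRF adversary $\qA'$ whose pre-challenge state realizes $(U_{\qA'}, \rho_{\qA'})$: $\qA'$ locally runs $\CS.\KG$, hands $\{\CS.\vk_i\}$ to $\qA$ as $\{\vk_i\}$, answers every WDS signing query $(i,\hatm)$ by querying the WPRF evaluation oracle to obtain $\hatt$ and then producing a CS-signature on $(\hatm,\hatt)$ under $\CS.\Constrain(\CS.\msk_i, f_{=1})$, forwards the marks $x_i$ sent by $\qA$ to the WPRF challenger, and pairs the returned $\WPRF.\key_i(x_i)$ with $\CS.\Constrain(\CS.\msk_i, f[\WPRF.\key_i(x_i)])$ to deliver the WDS marked keys to $\qA$. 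By construction, this simulation is perfectly faithful to the real WDS forgery game. It therefore suffices to show that, whenever $\qA$ produces signatures $(t_i, \CS.\sigma_i)$ that all pass $\WDS.\Vrfy$, we have $t_i = \WPRF.\Eval(\WPRF.\msk_i, m_i)$ for every $i$ with overwhelming probability.

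I would carry out this step via a short hybrid sequence used purely in the analysis. First, use WPRF equivocality to replace each $\WPRF.\msk_i$ and the corresponding marked key by a simulator-output $\WPRF.\key_i^* \gets \WPRF.\Sim(1^\secp)$, so that the same key answers the evaluation oracle and is handed over after $x_i$ is sent; an $n$-fold hybrid over $i$ yields indistinguishability. Next, switch the signing oracle's underlying constrained key from $\CS.\Constrain(\CS.\msk_i, f_{=1})$ to $\CS.\Constrain(\CS.\msk_i, f[\WPRF.\key_i^*])$: both constraints accept each queried $(\hatm,\hatt)$ by the way $\hatt$ is set, so function privacy of $\CS$ yields statistical closeness. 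In this final hybrid, every piece of CS material visible to $\qA$ arises from the single constrained key for $f[\WPRF.\key_i^*]$, which is fixed before $\CS.\KG$; hence selective single-key security of $\CS$ (applied per $i$ and combined by a union bound) implies that any valid $\CS.\sigma_i$ on $(m_i, t_i)$ forces $f[\WPRF.\key_i^*](m_i, t_i) = 1$, i.e., $t_i = \WPRF.\Eval(\WPRF.\key_i^*, m_i)$. Tracking both the events ``all WDS signatures verify'' and ``all $t_i$ match the WPRF evaluation on the relevant key'' across the hybrids, and observing that equivocality preserves the latter event since it is computable from the oracle-access view alone, I will conclude that $\Pr[\qA \text{ wins WDS}] \le \Pr[\qA' \text{ wins WPRF}] + \negl(\secp)$. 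Combined with parallel mark extractability of $\WPRF$, this yields the theorem.

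The main obstacle will be orchestrating the hybrid so that the signing oracle can be simulated both before and after the marks are declared. The natural replacement key $\CS.\Constrain(\CS.\msk_i, f[\WPRF.\key_i(x_i)])$ depends on $x_i$, which need not exist when the first signing queries arrive; routing through the equivocated $\WPRF.\key_i^*$ (available from setup) resolves this tension, but forces the proof to carefully verify that WPRF equivocality also preserves the syntactic event ``$t_i$ equals $\WPRF.\Eval$ on the challenge input''. A secondary subtlety is that, to invoke function privacy uniformly over polynomially many signing queries, I rely on the strong ``all $x$ simultaneously'' evaluation correctness of $\WPRF$ so that the simulated $\hatt$ always satisfies $f[\WPRF.\key_i^*](\hatm,\hatt) = 1$; this is precisely where \cref{rem:equivocality-implies-correctness} and the strengthened correctness notion of our $\WPRF$ construction are needed.
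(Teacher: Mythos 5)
Your proposal is correct and follows essentially the same route as the paper's proof: the extractor strips the $\WPRF$ value $t_i$ out of each signature and delegates to $\WPRF.\qExt$, the signing oracle is simulated from the $\WPRF$ evaluation oracle together with locally generated $\CS$ keys, and the hybrid chain (equivocality to decouple the constrained function from the adaptively chosen mark, function privacy to swap the signing oracle's constrained key, selective single-key security to force $t_i = \WPRF.\Eval(\cdot, m_i)$, then undoing the changes via the correctness implied by equivocality) matches the paper's $\Hyb_0$--$\Hyb_8$ argument. The subtleties you flag — the mark-dependence of the constrained function and the need for the ``all $x$ simultaneously'' correctness — are exactly the ones the paper's proof addresses.
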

Before presenting the proof, we briefly outline the intuition. 
At a high level, one can extract the mark from a signer, 
since a signature $\sigma = (t, \CS.\sigma)$ contains the $\WPRF$ value $t$ and the extractor for $\WPRF$ can be applied to perform the extraction. 
The only way for the adversary to evade the extraction is to produce a signature 
$(t, \CS.\sigma)$ such that $\CS.\sigma$ is a valid signature on $(m,t)$,  
but $t \neq \WPRF.\Eval(\WPRF.\key(x), m)$. 
However, generating such a signature constitutes a forgery against $\CS$ 
and is therefore computationally infeasible, 
since the adversary only obtains a constrained key that permits signing 
on pairs $(m,t)$ satisfying $t = \WPRF.\Eval(\WPRF.\key(x), m)$.

Turning this intuition into a formal proof is nontrivial, 
because the mark is chosen adaptively by the adversary 
(whereas in \cite{kitagawa2025simple} it is sampled uniformly at random). 
In particular, when reducing to selectively single-key forgery, 
the reduction must commit to a function $f$ at the beginning of the game. 
In our setting, however, this function necessarily depends on the mark $x$, 
since it hardwires the $\WPRF$ key for that mark, which is chosen adaptively by the adversary. 
As a result, the reduction cannot commit to the function at the outset. 
To overcome this difficulty, we rely on the equivocality of $\WPRF$, 
which provides a simulated key that is indistinguishable from a marked key 
even before $x$ is known. 
This simulated key can be chosen independently of $x$ 
and then embedded into the function.

An additional subtlety arises from the presence of a signing oracle, 
which is not introduced in \cite{kitagawa2025simple}. 
For simulating the signing oracle, we use the constrained key of $\CS$, similarly to the case of $\WPRF$. 
Here, however, the function hardwired into the constrained key 
differs from the one in the real execution, 
and thus could be detected by the adversary. 
The function privacy of $\CS$ guarantees this is not the case and such a difference is hidden from the adversary.

\begin{proof}
To prove the theorem, we first define the extraction algorithm $\qExt$ as follows.
\begin{description}
    \item[$\qExt(\{\extk_i\}_{i\in [n]}, (U_\qA, \rho_\qA) ) \to \{x'_i\}_{i\in [n]}$:] 
    Given the extraction keys $\{ \extk_i \}_{i\in [n]}$ and the description of the adversary $\qA = (U_\qA, \rho_\qA)$, it proceeds as follows.  
    \begin{itemize}
        \item Parse the extraction key as
        $\extk_i= \WPRF.\extk_i $ for each $i\in [n]$.
        \item From the adversary $\qA$, construct a predictor $\qP=(U_\qP, \rho_\qP)$ for $\WPRF$ as follows.
        \begin{description}
            \item[$\qP(m_1,\ldots, m_n):$]
            Given $m_1,\ldots, m_n$, it first runs $\qA$ on input $m_1,\ldots, m_n$ to obtain $\{ \sigma_i \}_{i\in [n]}$.
            It then parses $ \sigma_i = (t_i, \CS.\sigma_i)$ for each $i\in [n]$.
            It finally outputs $(t_1,\ldots, t_n)$.
        \end{description}
        \item Run $\WPRF.\qExt(\{\WPRF.\xk_i\}_{i\in [n]}, (U_\qP, \rho_\qP) ) \rrun \{x'_i\}_{i\in [n]}$. 
    \item Output $ \{x'_i\}_{i\in [n]}$.
    \end{itemize}
\end{description}
    We then prove the theorem by a sequence of hybrids.
    Let us consider an adversary $\A$ against the parallel forgery game \( \expb{\WDS,\qA,n}{par}{for}(1^\secp) \). 
    We gradually change the game into the parallel extraction game \( \expb{\WDS,\qA,\qExt,n}{par}{ext}(1^\secp) \) without changing the winning probability of $\qA$ more than negligibly.
    In the following, the event that $\qA$ wins in $\Hyb_{\rm xx}$ is denoted by $\event_{\rm xx}$.
\begin{description}
      \item[$\Hyb_0$:] This is the parallel forgery game between the challenger and an adversary $\A$.  
      Namely, the game proceeds as follows.
    \begin{enumerate}
    \item \label{Step:WDS-1}
    The challenger runs $\KG(1^\secp)$ to obtain $\vk_i = \CS.\vk_i$, 
    $\msk_i = (\WPRF.\msk_i, \CS.\msk_i)$,  
    and $\extk_i=  \WPRF.\extk_i$ for $i\in [n]$. $\qA$ is given $\{ \vk_i\}_{i\in [n]}$.
    \item \label{Step:WDS-2}
    Throughout the game, the adversary has an oracle access to the signing oracle $\Sign(\msk_i, ) $ for each $i\in [n]$, until it receives the challenge messages in Step~\ref{Step:WDS-step-7}.
    When queried on input $\hatm$, the oracles $\Sign(\msk_i, ) $ proceeds as follows:
    \begin{itemize}
        \item Compute $\CS.\Constrain(\CS.\msk_i, f_{=1}) \to \CS.\sigk_i$, where $f_{=1}$ is a constant function that always outputs $1$.
        \item Compute $\hatt = \WPRF.\Eval(\WPRF.\msk_i, \hatm )$. 
        \item Run $\CS.\Sign(\CS.\sigk_i, (\hatm, \hatt) ) \to \CS.\hatsigma$.
    \item Return $\hatsigma = (\hatt, \CS.\hatsigma)$. 
    \end{itemize}
    
    \item 
    At some point, $\qA$ chooses arbitrary $x_1,...,x_n \in \cX$ and sends them to the challenger.  
    \item \label{Step:WDS-Step-4}
    For $i\in [n]$, the challenger sets the marked signing keys $\sigk_i \seteq ( \WPRF.\key_{i}(x_i), \CS.\sigk_i(x_i) )$ and sends $\{ \sigk_i \}_{i\in[n]}$ to the adversary, where
    \begin{align}
    \WPRF.\key_i(x_i) = \WPRF.\Mark(\WPRF.\msk_i,x_i) 
    \ifnum\llncs=1
    \\
    \fi
    \text{ and } \CS.\sigk_i(x_i) \lrun \CS.\Constrain(\CS.\msk_i, f[\WPRF.\key_i(x_i)]). 
    \end{align}
    
    \item 
    After receiving the marked signing keys, $\qA$ continues to make queries to the signing oracles.
    At some point, it declares that it is ready for the challenge. 
    \item \label{Step:WDS-Step-6}
    The challenger picks uniformly random $m_i\lrun \bin^{\ell}$ for $i\in [n]$ and sends $\{m_i\}_{i\in [n]}$ to the adversary. 
    \item \label{Step:WDS-step-7}
    Receiving $\{m_i\}_{i\in [n]}$ from the challenger, $\qA$
     outputs $\{\sigma_i\}_{i\in [n]}$.
     Note that after receiving $\{m_i\}_{i\in [n]}$ from the challenger, $\qA$ loses access to the signing oracles.
    \item \label{Step:WDS-Step-8}
    Given $\{\sigma_i\}_{i\in [n]}$, the challenger parses it as $\sigma_i=(t_i,\CS.\sigma_i)$ for $i\in [n]$ and proceeds as follows.
    \begin{itemize}
        \item It checks whether $\CS.\Vrfy(\CS.\vk_i, (m_i,t_i), \CS.\sigma_i)=\top$ holds for all $i\in [n]$.
        If it holds, then it sets the output of the game to be $1$ (indicating that the adversary wins). Otherwise, it outputs $0$.
    \end{itemize}
\end{enumerate}

\item[$\Hyb_1$:]
In this hybrid, we use the simulated key of $\WPRF$ to answer signing queries and to simulate the marked signing keys. 
Concretely, we replace Step~\ref{Step:WDS-1}, Step~\ref{Step:WDS-2}, and Step \ref{Step:WDS-Step-4} with the following: 
\begin{enumerate}[label=\arabic*', start=1]
    \item \label{Step:WDS-1'}
    The challenger runs $\CS.\Setup(1^\secp) \to (\CS.\vk_i,\CS.\msk_i)$ 
    and $\WPRF.\Sim(1^\secp) \to \WPRF.\key^*_i$ for $i\in [n]$.
    Then, $\qA$ is given $\{ \vk_i = \CS.\vk_i\}_{i\in [n]}$.
    \item \label{Step:WDS-2'}
    Throughout the game, the adversary has an oracle access to the signing oracles, until it receives the challenge messages in Step~\ref{Step:WDS-step-7}.
    When queried on input $\hatm$, the $i$-th oracle proceeds as follows:
    \begin{itemize}
        \item Compute $\CS.\Constrain(\CS.\msk_i, f_{=1}) \to \CS.\sigk_i$, where $f_{=1}$ is a constant function that always outputs $1$.
        \item Compute $\hatt = \WPRF.\Eval(\WPRF.\key^*_i, \hatm )$. 
        \item Run $\CS.\Sign(\CS.\sigk_i, (\hatm, \hatt) ) \to \CS.\hatsigma$.
    \item Return $\hatsigma = (\hatt, \CS.\hatsigma)$. 
    \end{itemize}
\setcounter{enumi}{3}
    \item \label{Step:WDS-4'}
    For $i\in [n]$, the challenger sets the marked signing keys $\sigk_i \seteq ( \WPRF.\key^*_{i}, \CS.\sigk^*_i )$ and sends $\{ \sigk_i \}_{i\in[n]}$ to the adversary, where
\begin{align}\label{eq:CS-sigk-with-sim-WPRFkey}
    \CS.\sigk^*_i &\lrun \CS.\Constrain(\CS.\msk_i, f[\WPRF.\key_i^*]).
\end{align}

\end{enumerate}
By a straightforward reduction to the equivocality of $\WPRF$, we have $|\Pr[\event_0]-\Pr[\event_1]|=\negl(\secp)$.

\item[$\Hyb_2$:]
In this hybrid, we change how we handle signing queries. 
We move the sampling of $\CS.\sigk^*_i$ as per \cref{eq:CS-sigk-with-sim-WPRFkey} to Step \ref{Step:WDS-1}, which is possible since it is not dependent on $x_i$ any more, and then use it for answering the signing queries.
Namely, we replace Step~\ref{Step:WDS-1'}, Step~\ref{Step:WDS-2'}, and Step \ref{Step:WDS-4'} with the following: 
\begin{enumerate}[label=\arabic*'', start=1]
    \item \label{Step:WDS-1''}
    The challenger runs $\CS.\Setup(1^\secp) \to (\CS.\vk_i,\CS.\msk_i)$, 
    $\WPRF.\Sim(1^\secp) \to \WPRF.\key^*_i$, and 
    $\CS.\sigk^*_i \lrun \CS.\Constrain(\CS.\msk_i, f[\WPRF.\key_i^*])$ for $i\in [n]$.
    Then, $\qA$ is given $\{ \vk_i = \CS.\vk_i\}_{i\in [n]}$.
    \item \label{Step:WDS-2''}
    Throughout the game, the adversary has an oracle access to the signing oracles, until it receives the challenge messages in Step~\ref{Step:WDS-step-7}.
    When queried on input $\hatm$, the $i$-th oracle proceeds as follows:
    \begin{itemize}
        \item Compute $\hatt = \WPRF.\Eval(\WPRF.\key^*_i, \hatm )$. 
        \item Run $\CS.\Sign(\CS.\sigk^*_i, (\hatm, \hatt) ) \to \CS.\hatsigma$.
    \item Return $\hatsigma = (\hatt, \CS.\hatsigma)$. 
    \end{itemize}
\setcounter{enumi}{3}
    \item \label{Step:WDS-4''}
    For $i\in [n]$, the challenger sets the marked signing keys $\sigk_i \seteq ( \WPRF.\key^*_{i}, \CS.\sigk^*_i )$ and sends $\{ \sigk_i \}_{i\in[n]}$ to the adversary.
\end{enumerate}

We claim that this hybrid is statistically indistinguishable from the previous one, due to the function privacy of $\CS$.
We recall that $\CS.\sigk^*_i$ in this hybrid encodes a function $f[\WPRF.\key^*_i]$ and thus we have $f[\WPRF.\key^*_i](\hatm, \hatt)=1 = f_{=1}(\hatm, \hatt)$. 
Therefore, the distribution of the returned $\CS.\hatsigma$ is statistically close to that of the previous hybrid. 
We therefore have $|\Pr[\event_{1}] - \Pr[\event_{2}]| = \negl(\secp)$.

\item[$\Hyb_3$:]
In this hybrid, we change the winning condition of $\qA$.
Namely, we replace Step \ref{Step:WDS-Step-8} with the following, where additional check regarding $t_i$ is inserted.
\begin{enumerate}[label=\arabic*', start=8]
    \item \label{Step:WDS-8'}
        Given $\{\sigma_i\}_{i\in [n]}$, the challenger parses it as $\sigma_i=(t_i,\CS.\sigma_i)$ for $i\in [n]$ and proceeds as follows.
    \begin{itemize}
        \item It checks whether $t_{i} = \WPRF.\Eval(\WPRF.\key^*_{i}, m_{i})$ holds for all $i\in [n]$. If not, it outputs $0$.
        \item It then checks whether $\CS.\Vrfy(\CS.\vk_i, (m_i,t_i), \CS.\sigma_i)=\top$ holds for all $i\in [n]$.
        If it holds, then it sets the output of the game to be $1$ (indicating that the adversary wins). Otherwise, it outputs $0$.
    \end{itemize}
\end{enumerate}

We claim that this change does not reduce the winning probability of $\qA$ more than negligibly due to the selective single key unforgeability of $\CS$.
To see this, we observe that the only case where $\qA$ wins the game in the previous hybrid, but does not in the current one occurs when $\CS.\Vrfy(\CS.\vk_i, (m_i,t_i), \CS.\sigma_i)=\top$ holds for all $i\in [n]$,
but there exists $i^*\in [n]$ such that 
$t_{i^*} \neq  \WPRF.\Eval(\WPRF.\key^*_{i^*}, m_{i^*})$. 
However, the probability that such an event occurs is negligible due to the selective single-key security of $\CS$.
To see this, we consider a reduction that guesses $i^*$ and embeds the parameter of the $\CS$ into this instance, while the other instances as well as the parameters related to $\WPRF$ are generated by itself. 
We observe that the reduction algorithm can choose its target function $f[\WPRF.\key^*_{i^*}]$ at the beginning of the game, since $\WPRF.\key^*_{i^*}$ can be chosen by $\WPRF.\Sim(1^\secp)$, without depending on any parameter other than the security parameter. 
Furthermore, by the changes we introduced so far, all the signing queries made by $\qA$ can be handled by the single key $\CS.\sigk^*_{i^*}$. 
Finally, since $t_{i^*} \neq  \WPRF.\Eval(\WPRF.\key^*_{i^*}, m_{i^*})$ implies $f[\WPRF.\key^*_{i^*}] (m_{i^*},t_{i^*})=0 $, 
the reduction breaks the selective single-key security if such an event occurs. 
We therefore have 
$\Pr[\event_{3}] \geq  \Pr[\event_{2}] - \negl(\secp)$.

\item[$\Hyb_4$:]
In this hybrid, we switch Step~\ref{Step:WDS-1''}, Step~\ref{Step:WDS-2''}, and Step \ref{Step:WDS-4''} back to those in $\hyb_1$, i.e., Step~\ref{Step:WDS-1'}, Step~\ref{Step:WDS-2'}, and Step \ref{Step:WDS-4'}. 
In particular, we use $\CS.\sigk_i$ derived as 
$\CS.\Constrain(\CS.\msk_i, f_{=1}) \to \CS.\sigk_i$ to sign on $(\hatm,\hatt)$.
Similarly to the change from $\hyb_1$ to $\hyb_2$, we have $|\Pr[\event_{3}] - \Pr[\event_{4}]| = \negl(\secp)$ by the function privacy of $\CS$.

\item[$\Hyb_5$:]
In this hybrid, we switch Step~\ref{Step:WDS-1'}, Step~\ref{Step:WDS-2'}, and Step \ref{Step:WDS-4'} back to those in $\hyb_0$, i.e., Step~\ref{Step:WDS-1}, Step~\ref{Step:WDS-2}, and Step \ref{Step:WDS-Step-4}. 
Namely, the simulated key $\WPRF.\key_i^*$ is not generated any more. 
Accordingly, we also change Step~\ref{Step:WDS-8'} so that it uses $\WPRF.\key_i(x_i)$ instead of $\WPRF.\key_i^*$.
Concretely, we replace Step~\ref{Step:WDS-8'} with the following.
\begin{enumerate}[label=\arabic*'', start=8]
    \item \label{Step:WDS-8''}
        Given $\{\sigma_i\}_{i\in [n]}$, the challenger parses it as $\sigma_i=(t_i,\CS.\sigma_i)$ for $i\in [n]$ and proceeds as follows.
    \begin{itemize}
        \item It checks whether $t_{i} = \WPRF.\Eval(\WPRF.\key_i(x_i), m_{i})$ holds for all $i\in [n]$. If not, it outputs $0$.
        \item It then checks whether $\CS.\Vrfy(\CS.\vk_i, (m_i,t_i), \CS.\sigma_i)=\top$ holds for all $i\in [n]$.
        If it holds, then it sets the output of the game to be $1$ (indicating that the adversary wins). Otherwise, it outputs $0$.
    \end{itemize}
\end{enumerate}
Similarly to the change from $\hyb_0$ to $\hyb_1$, we have 
$|\Pr[\event_{4}] - \Pr[\event_{5}]| = \negl(\secp)$ by the equivocality of $\WPRF$.

\item[$\Hyb_6$:]
In this hybrid, we further change the winning condition so that it does not check the validity of $\CS$ signatures. 
Namely, we replace Step~\ref{Step:WDS-8''} with the following: 
\begin{enumerate}[label=\arabic*''', start=8]
    \item \label{Step:WDS-8'''}
        Given $\{\sigma_i\}_{i\in [n]}$, the challenger parses it as $\sigma_i=(t_i,\CS.\sigma_i)$ for $i\in [n]$ and proceeds as follows.
    \begin{itemize}
        \item It checks whether $t_{i} = \WPRF.\Eval(\WPRF.\key_i(x_i), m_{i})$ holds for all $i\in [n]$. If not, it outputs $0$.
    \end{itemize}
\end{enumerate}
This change relaxes the winning condition for $\qA$ and only increases the chance of it winning. We therefore have $\Pr[\event_{6}] \geq \Pr[\event_{5}]$.

\item[$\Hyb_7$:]
In this hybrid, we further change the winning condition so that the check of the $\WPRF$ values are done using $\WPRF.\msk$.
Namely, we replace Step~\ref{Step:WDS-8'''} with the following: 
\begin{enumerate}[label=\arabic*${}^*$, start=8]
    \item \label{Step:WDS-8''''}
        Given $\{\sigma_i\}_{i\in [n]}$, the challenger parses it as $\sigma_i=(t_i,\CS.\sigma_i)$ for $i\in [n]$ and proceeds as follows.
    \begin{itemize}
        \item It checks whether $t_{i} = \WPRF.\Eval(\WPRF.\msk_i, m_{i})$ holds for all $i\in [n]$. If not, it outputs $0$.
    \end{itemize}
\end{enumerate}
By the correctness against adversarially chosen mark and input of $\WPRF$, which is implied by the equivocality, we have $\WPRF.\Eval(\WPRF.\msk_i, m_{i}) = \WPRF.\Eval(\WPRF.\key_i(x_i), m_{i})$ with overwhelming probability for each $i\in [n]$, even if $m_i$ was adversarially chosen by $\qA$.  
We therefore have $\abs{\Pr[\event_{7}] - \Pr[\event_{6}]} \leq \negl(\secp)$.

\item[$\Hyb_8$:]
In this hybrid, we further change the winning condition. 
Here, we use $\qA$ as a predictor for $\WPRF$ and then extract the marks using the corresponding extractor of $\WPRF$. 
Concretely, we replace Step~\ref{Step:WDS-Step-6}, Step~\ref{Step:WDS-step-7}, and Step~\ref{Step:WDS-8''''} with the following:
\begin{enumerate}[label=\arabic*${}^{**}$, start=6]
    \item The challenger parses the adversary $\qA$ right before Step~\ref{Step:WDS-step-7} of the game as a unitary circuit $U_\qA$ and a quantum state $\rho_\qA$.
      It then constructs a quantum predictor $\qP$ for $\WPRF$ that takes $\{m_i\}_{i\in [n]}$ as input and outputs $\{ t_i \}_{i\in [n]}$ 
      from $\A$ as follows.
     \begin{description}
         \item[$\qP( \{m_i\}_{i\in [n]} )$:]
         It runs $\A$ on input $ \{ m_i \}_{i\in [n]} $ to obtain $\{ \sigma_i \}_{i\in [n]} $.
         It then parses 
         $\sigma_i=(t_i,\CS.\sigma_i)$ for $i\in [n]$ 
         and outputs $\{ t_i\}_{i\in [n]}$.  
    \end{description}
     \item 
     The challenger parses
     $\qP$ as a pair of unitary $U_\qP$ and a quantum state $\rho_\qP$.
     Then, it runs the extraction algorithm as 
         $\WPRF.\qExt( \{\WPRF.\extk_i\}_{i\in [n]} , (U_\qP, \rho_\qP ) ) \to \{x'_i\}_{i\in [n]}$.
    \item \label{Step:PRF-step-9prime}
    If $x_i = x'_i$ holds for all $i\in [n]$, 
    the challenger sets the output of the game to be $1$. 
    Otherwise, it sets it to be $0$.
\end{enumerate}

Let us construct an adversary $\qB$ against the parallel prediction experiment of $\WPRF$
by running $\qA$ internally.
$\qB$ samples the parameters related to $\CS$ by itself and simulates the computation of $\WPRF.\Eval(\WPRF.\msk_i,\cdot)$ necessary for answering the signing queries from $\qA$ by accessing its evaluation oracles. 
$\qB$ then predicts the PRF values by constructing $\qP$ from $\qA$ as above and running it on the challenge input. 
It is straightforward to see that the advantage of $\qB$ against parallel predictability game of $\WPRF$ is the same as $\Pr[\event_7]$.
It is also easy to see that the advantage of $\qB$ against the parallel mark extraction experiment equals $\Pr[\event_8]$.
Therefore, by the parallel mark extractability of $\WPRF$, we have $\Pr[\event_{8}] \geq \Pr[\event_{7}] - \negl(\secp)$.
\end{description}
From the above discussion, we can see that $\Pr[\event_8] \geq \Pr[\event_0] -\negl(\secp)$ holds. 
Furthermore, one can see by inspection that $\hyb_8$ corresponds to the parallel extraction experiment $\advb{\WDS,\qA,\qExt,n}{par}{ext}(\secp)$ for $\qA$ with respect to $\qExt$ we defined.
The theorem therefore follows.
\end{proof}

% !TEX root = main.tex

\section{Special Dual-Mode Secure Function Evaluation}\label{sec:dual_mode_SFE}

In this section, we introduce special dual-mode secure function evaluation (SFE), which is used as a building block of our secure key leasing schemes with classical lessors. We show that special dual-mode SFE exists assuming the LWE assumption. 

\if0
\subsection{Note}
\takashi{The following rough sketch is taken from the note.}
Roughly, SFE is a two-round protocol  between a sender and receiver that works as follows:
\begin{itemize}
\item The receiver takes a message $m$ as input and sends a ciphertext $\ct$ to the sender while keeping a state information $\st$. \takashi{I quit to explicitly mention the randomness $r$ and instead use a state $\st$ (which I believe is a more common formalization). This is because the randomness in the LWE-based construction isn't actually unique if we go back to the randomness of the Gaussian noise sampler.}
\shota{By rounding instead of adding Gaussian noise, we may achieve a unique randomness property. However, this modifies the original PVW construction and necessitates proving its security by ourselves.}
\item The sender takes a circuit $C$ as input, receives $\ct$, computes $\ct'$ and sends $\ct'$ to the receiver.
\item The receiver recovers $C(m)$ from $\st$ and $\ct'$. 
\end{itemize}
In particular, we consider SFE in the CRS model. 
The security roughly requires that the sender learns nothing about $m$, and the receiver learns nothing about $C$ beyond $C(m)$.

In particular, we require the following \emph{dual-mode} property:\footnote{State recoverability in the hiding mode is not a standard requirement for dual-mode SFE, but the LWE-based construction satisfies it.}%\footnote{There is actually an additional correctness requirement about the possibility of running the receiver in superposition without causing entanglement with the randomness. The LWE-based instantiation satisfies it (essentially because we can generate Gaussian superposition without having any entanglement with "randomness" of the sampler). See also \Cref{footnote:superposition_execution_SFE}.}
\takashi{I made it clear the dual-mode requirement for SFE.}
\begin{itemize} 
\item {\bf Mode indistinguishability.}
There are two modes for generation of CRS, 
\emph{hiding mode} and
\emph{extraction mode}. 
In both modes, an associated trapdoor is generated along with CRS. 
CRSs generated in different modes are computationally indistinguishable. 
\item {\bf Statistical security against malicious senders in the hiding mode}:  
In the hiding mode, $\ct$ statistically hides $m$.
\item {\bf State recoverability in the hiding mode:}
In the hiding mode, 
$\ct$ and $m$ uniquely determine the corresponding $\st$. Moreover, the trapdoor enables us to efficiently recover $\st$ from $\ct$ and $m$. 
\item 
{\bf Extractability in the extractable mode}: 
In the extractable mode, we can extract $m$ from (possibly maliciously generated) $\ct$ using the trapdoor, so that $\ct'$ can be statistically simulated only using $C(m)$ (and not using $C$). 
Note that this automatically implies statistical security against malicious receivers in the extractable mode. 
\end{itemize}
In addition to the above, we need two additional properties: one is the "efficient superposition generation property" and the other is the "decomposability of the state". 
We explain those additional properties when we need them.\footnote{The former appears in the description of the sub-protocol below; the latter appears in the idea for the security proof.}

Such a dual-mode SFE can be constructed by a standard combination of a dual-mode OT and garbled circuits. Dual-mode OT exists based on LWE~\cite{C:PeiVaiWat08} (see \Cref{sec:dual-mode_OT_LWE}). 
\fi

%\subsection{Definition}\label{sec:def-dual-mode-SFE}
%In this subsection, we define dual-mode secure function evaluation (SFE).  \takashi{It may be better to call it "special dual-mode SFE" since we assume more than dual-mode?}

Below, we define special dual-mode secure SFE. 
%\takashi{I added "special" since we assume more than dual-mode.} 
\begin{definition}[Special Dual-mode SFE]\label{def:dual-mode-SFE}
A special dual-mode SFE scheme $\SFE$ is a tuple of four algorithms $\SFE=\SFE.(\CRSGen, \Receive{1},\allowbreak \Send, \Receive{2})$. 
Below, let $\bin^\lenx$  be the message space of $\SFE$. 
\begin{description}
%\item[$\Setup(1^\secp,1^{\numkey})\ra\msk$:] The setup algorithm takes a security parameter $1^\lambda$ and a collusion bound $1^{\numkey}$, and outputs a master secret key $\msk$.
\item[$\CRSGen(1^\secp,\mode )\ra (\crs,\td)$:] The CRS generation algorithm is a PPT algorithm that takes a security parameter $1^\lambda$ and $\mode\in \{1,2\}$, and outputs a CRS $\crs$ and a trapdoor $\td$. %\takashi{I belive it should also output $\td$, so I added it.}
$\mode=1$ indicates that the system is in the ``extractable mode", while $\mode=2$ indicates ``hiding mode".

\item[$\Receive{1}(\crs,x)\ra(\msg^{(1)},\st)$:] This is a classical PPT algorithm supposed to be run by a receiver to generate the first message of the protocol. 
It takes as input the CRS $\crs$ and an input $x\in \bin^{\lenx}$ and outputs the first message $\msg^{(1)}$ and a secret state $\st$. 
\item[$\Send(\crs,\msg^{(1)}, C )\ra\msg^{(2)}$:] This is a classical PPT algorithm supposed to be run by a sender to generate the second message of the protocol. 
It takes as input the CRS $\crs$, a message $\msg^{(1)}$ from the receiver, 
and a circuit $C$ with input length $\lenx$ and outputs the second message $\msg^{(2)}$.
\item[$\Receive{2}(\crs,x, \st, \msg^{(2)})\ra y$:] This is a classical PPT algorithm supposed to be run by a receiver to derive the  output.  
It takes as input the CRS $\crs$, an input $x\in \bin^{\lenx}$, 
a state $\st$, and a message $\msg^{(2)}$ from the sender, 
and outputs a string $y$. 
\end{description}
We require special dual-mode SFE to satisfy the following properties.
\begin{description}
\item[Evaluation correctness:]For all $x \in \bin^{\lenx}$, $\mode \in \{1,2\}$, and a circuit $C$ with input length $\lenx$, we have
\begin{align}
\Pr\left[
\Receive{2}(\crs,x, \st, \msg^{(2)}) = C(x)
\ \middle |
\begin{array}{ll}
(\crs,\td) \la \CRSGen(1^\secp,\mode ) \\
(\msg^{(1)},\st) \la \Receive{1}(\crs,x)  \\ 
\msg^{(2)} \la \Send(\crs,\msg^{(1)}, C )
\end{array}
\right] 
=1.
\end{align}

\item[Unique state:]
For all 
$\mode\in \{1,2\}$, 
$(\crs,\td) \in \Supp(\CRSGen(1^\secp,\mode))$, $x\in \bin^\lenx$,  and $(\msg^{(1)},\st) \in \Supp(\Receive{1}(\crs,x))$, $\st$ is the unique state that corresponds to 
$\crs$, $x$, and $\msg^{(1)}$,  that is, there is no $\st^\prime\ne \st$ such that $(\msg^{(1)},\st^\prime) \in \Supp(\Receive{1}(\crs,x))$. 
We denote the unique state $\st$ such that  $(\msg^{(1)},\st) \in \Supp(\Receive{1}(\crs,x))$ by $\st_{x\to \msg^{(1)}}$.\footnote{The state also depends on $\crs$, but we omit it from the notation.}
\takashi{I added this; this follows from state recoverability in the hiding mode, but it was unclear in the extraction mode.}

\item[Mode indistinguishability:]
For any QPT algorithm $\cA$, we have 
    \begin{align}
    \abs{
    \Pr[\cA(\crs)=1: (\crs,\td) \la \CRSGen(1^\secp,1) ]-
    \Pr[\cA(\crs)=1: (\crs,\td) \la \CRSGen(1^\secp,2) ]
    }\leq  \negl(\secp).
    \end{align}
\item[Statistical security against malicious senders in the hiding mode:]
For all $\crs$ output by $\CRSGen(1^\secp, 2)$ and for all $x_0, x_1 \in \bin^\lenx$, we require the following statistical indistinguishability:
\[
\msg^{(1)}_0 \approx_s 
\msg^{(1)}_1
\quad
\mbox{where}
\quad 
(\msg^{(1)}_b,\st)\lrun \Receive{1}(\crs,x_b) \quad \mbox{for} \quad b\in \bin.
\]

\item[State recoverability in the hiding mode:]
%We require that the randomness used by $\Receive{1}$ is divided into two parts $\st\in \bin^{\lenst \lenx}$ for some polynomial $u=\poly(\secp)$ and $\ernd$. While the former is output by $\Receive{1}$ as its secret state and reused by $\Receive{2}$, the latter is ephemeral and is discarded.  
%Namely, we have $\Receive{1}(\crs, x; (\st, \ernd)) = (\msg^{(1)},\st)$. \takashi{Is this assumption needed?}
%Furthermore, w  
%\takashi{I omitted the rquirements about the form of randomness.} \shota{ok}
We require that there exists a classical PPT algorithm $\StaRcv$ that takes as input the trapdoor $\SFE.\td$ and the first message of the receiver $\msg^{(1)}$ and outputs a tuple of strings $\{ \alpha_{j}^{b} \}_{j\in [\lenx], b\in \bin }$ or $\bot$, where the latter indicates that the state recovery failed. 
We require that for all $x=x[1]\| \cdots \| x[\lenx ] \in \bin^\lenx$, the following holds: 
\begin{align}
\Pr\left[
\st = \alpha_{1}^{x[1]} \| \alpha_{1}^{x[2]} \| \cdots \| \alpha_{\lenx}^{x[\lenx]}
\ \middle |
\begin{array}{ll}
(\crs,\td) \la \CRSGen(1^\secp,2 ) \\
(\msg^{(1)},\st) \la \Receive{1}(\crs,x)  \\ 
\{ \alpha_{j}^{b} \}_{j\in [\lenx], b\in \bin } \lrun \StaRcv(\td, \msg^{(1)})
\end{array}
\right] 
=1.
\end{align}
That is, we can recover the corresponding $\st$ from $\msg^{(1)}$ and $x$ using $\td$, and moreover, the state is in a specific form, where each block only depends on the bit of $x$ at the corresponding position.  %\takashi{Slightly changed the explanation.}\synote{ok}
\item[Extractability in the extraction mode:]
For classical algorithms $\Extract$ and $\Sim$, let us consider the following experiment formalized by the experiment $\expb{\SFE,\qA, \Extract, \Sim}{ext}{sim}(1^\secp,\coin)$ between an adversary $\qA$ and the challenger:
\begin{enumerate}
            \item  The challenger runs $(\crs, \td)\sample \CRSGen(1^\secp, 1)$ and sends $(\crs, \td)$ to $\qA$. 
            \item $\qA$ sends $\msg^{(1)}$ and a circuit $C$ with input length $\lenx$ to the challenger. 
            \item The challenger runs $\Extract(\td, \msg^{(1)}) \to x$ and 
            it computes 
            \[
            \msg^{(2)} \lrun 
            \begin{cases}
              \Send(\crs, \msg^{(1)}, C), & \text{if } \coin = 0 \\ 
              \Sim(\crs,\msg^{(1)}, 1^{|C|}, C(x) ) & \text{if } \coin = 1, 
            \end{cases}
            \]
            where $|C|$ denotes the size of $C$. 
            
            Then, it gives $\msg^{(2)}$ to $\qA$.
            \takashi{I added $x$ to the input of $\Sim$.}\takashi{I removed $x$ from the input of $\Sim$ and added $\msg^{(1)}$.}
            \item $\qA$ outputs a guess $\coin^\prime$ for $\coin$. The challenger outputs $\coin'$ as the final output of the experiment.
        \end{enumerate}
        We require that there exist PPT algorithms $\Extract$ and $\Sim$ such that 
        for any QPT $\qA$, it holds that
\begin{align}
\advb{\SFE,\qA,\Extract, \Sim}{ext}{sim}(\secp) \seteq \abs{\Pr[\expb{\SFE,\qA,\Extract, \Sim}{ext}{sim} (1^\secp,0) = 1] - \Pr[\expb{\SFE,\qA,\Extract, \Sim}{ext}{sim} (1^\secp,1) = 1] }\leq \negl(\secp).
\end{align} 
%\shota{For $NC1$, the above security can be achieved against unbounded adversary. To go beyond NC1, we should weaken the security notion to be a computational one, because we do not have information theoretically secure GC for P/poly.}

% For all but negligible probability over the choice of $(\crs, \td)\sample \CRSGen(1^\secp, 1)$, for all (possibly maliciously generated) $\SFE.\msg^{(1)}$ and all $C$ with input length $w$ and depth $O(\log \secp)$, 
% the following distributions are statistically close: 
% \[
% \Send(\crs, \msg^{(1)}, C) \approx_s 
% \Sim(\crs, C(x^*)) 
% \]
%
\item[Efficient state superposition:] 
We require that there exists a QPT ``coherent version" $\qReceive{1}$ of $\Receive{1}$ that is given a quantum state $\ket{\psi}=\sum_{x\in \bin^{\lenx}}\alpha_x\ket{x}$ %that is represented by a qubits of length $\lenx$ 
and $\crs$ and works as follows:

It first generates a state that is within negligible trace distance of the following state:

\[
\sum_{x\in \bit^\lenx}\alpha_x\sum_{\msg^{(1)} \in \Supp( \Receive{1}(\crs, x) )} 
\sqrt{ p_{x \to \msg^{(1)}} }
\ket{x} 
\ket{\st_{x\to \msg^{(1)}} }\ket{\msg^{(1)}} 
\]
where $p_{x\to \msg^{(1)}}$ is the probability that $\Receive{1}(\crs, x)$ outputs $\msg^{(1)}$. 
\if0 
It first implements the following efficient isometry up to negligible error: \takashi{I added a negligible error since we cannot perfectly generate the Gaussian superposition.} 
\[
\ket{x} \mapsto 
\sum_{\msg^{(1)} \in \Supp( \Receive{1}(\crs, x) )} 
\sqrt{ p_{x \to \msg^{(1)}} }
\ket{x} 
\ket{\st_{x\to \msg^{(1)}} }\ket{\msg^{(1)}} 
~
\mbox{for all $x\in \bin^\lenx$},
\]
where $p_{x\to \msg^{(1)}}$ is the probability that $\Receive{1}(\crs, x)$ outputs $\msg^{(1)}$.  %and $\st_{x\to \msg^{(1)}}$ is the unique state such that there exists $\ernd$satisfying $(\msg^{(1)},\st) \la \Receive{1}(\crs,x; (\st, \ernd) ) $.
\fi
Then, it measures the last register to obtain $\msg^{(1)}$ and outputs the residual quantum state $\ket{\psi'}$.

We remark that implementing the above by simply running the classical algorithm in the superposition does not work, since it may leave the entanglement with the randomness for $\Receive{1}$.  
\end{description}
\end{definition} 

We observe that a combination of the dual-mode oblivious transfer of \cite{C:PeiVaiWat08} and garbled circuits gives special dual-mode SFE satisfying the above definition. 
\begin{theorem}\label{thm:dual-mode-SFE-from-LWE}
If the LWE assumption holds, then there exists a special dual-mode SFE scheme.
\end{theorem}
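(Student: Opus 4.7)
The plan is to build the special dual-mode SFE by composing a special dual-mode oblivious transfer (OT) with Yao's garbled circuits in the standard way. Concretely, given a circuit $C$ with input length $\lenx$, the sender garbles $C$ to obtain a garbled circuit $\widetilde{C}$ together with labels $\{\lab_{j,b}\}_{j\in[\lenx], b\in\bit}$; the parties run $\lenx$ parallel instances of the dual-mode OT where the receiver's $j$-th choice bit is $x[j]$ and the sender's messages are $(\lab_{j,0},\lab_{j,1})$; the sender's second SFE message consists of $\widetilde{C}$ together with the $\lenx$ OT second messages; the receiver recovers the labels $\{\lab_{j,x[j]}\}_{j\in[\lenx]}$ and evaluates the garbled circuit. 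All the SFE-level properties will be inherited property-by-property from the corresponding OT-level properties, using the simulation-security of garbled circuits (available from OWFs, hence from LWE) to handle the extractability in the extraction mode: the SFE extractor runs the OT extractors on each coordinate to recover $x$, and the SFE simulator invokes the garbled-circuit simulator on $C(x)$ and then packages it with simulated OT second messages.

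The main task therefore reduces to constructing a ``special dual-mode OT'' with the analogous list of properties (unique state, mode indistinguishability, statistical sender security and decomposable state-recoverability in the hiding mode, extractability in the extraction mode, and efficient state superposition). For this I would start from the PVW LWE-based dual-mode OT, where the CRS consists of two LWE-style matrices whose relative structure determines the mode: in the extraction mode the trapdoor enables inverting the first message to recover the choice bit, while in the hiding mode the first message is statistically close to uniform regardless of the choice bit. Mode indistinguishability follows from the LWE assumption applied to the CRS, statistical sender security in the hiding mode is the ``messy'' branch of PVW, and extractability in the extraction mode follows from the trapdoor inversion. The unique-state property is not immediate for the original PVW scheme because it samples Gaussian noise; I would therefore modify the receiver to use a deterministic rounding-based encoding (or, equivalently, work with an OT variant whose first message deterministically determines the secret state given the CRS trapdoor). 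The hiding-mode state-recoverability algorithm $\StaRcv$ then just applies the hiding-mode trapdoor coordinate-by-coordinate to read off, for each position $j$ and each potential bit value $b$, the block $\alpha_j^b$ of the state that would have been produced had the receiver chosen $b$ at position $j$; the decomposability (each block depends only on the corresponding bit of $x$) is automatic from the coordinate-wise parallel structure of the SFE-from-OT composition.

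The step I expect to be the main obstacle is the efficient state superposition property for $\qReceive{1}$: given $\sum_x \alpha_x \ket{x}$, we must coherently produce a state negligibly close to $\sum_x \alpha_x \sum_{\msg^{(1)}} \sqrt{p_{x\to\msg^{(1)}}}\, \ket{x}\ket{\st_{x\to\msg^{(1)}}}\ket{\msg^{(1)}}$, \emph{without} leaving any residual entanglement with internal randomness. Naively running the classical $\Receive{1}$ in superposition would leave the Gaussian sampler's randomness tape entangled with $(x,\st,\msg^{(1)})$. My plan to handle this is twofold: first, use the unique-state modification above so that, conditioned on $(\crs,x,\msg^{(1)})$, the state $\st$ is deterministic; second, replace the internal Gaussian sampler by an efficient quantum discrete-Gaussian preparation procedure (e.g.\ the one of Grover--Rudolph / Brakerski--Langlois-Kirshanova type constructions already used in LWE-based quantum protocols), which prepares a coherent Gaussian superposition up to negligible trace-error and then uncomputes all auxiliary registers. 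With this sampler in hand, the coherent receiver computes $\msg^{(1)}$ and $\st$ as deterministic functions of $(x,r)$ where $r$ is the quantum Gaussian register, then uncomputes $r$ by reversing the Gaussian preparation (possible because $r$ is a deterministic function of $\msg^{(1)}$ together with the CRS trapdoor used only in the analysis), and finally measures $\msg^{(1)}$. The correctness analysis then amounts to showing that the error introduced by the approximate Gaussian preparation and by the near-uncomputation is negligible in trace distance, which follows from standard smoothing bounds on the discrete Gaussian. Combining this with the garbled-circuits wrapper yields all the required properties, completing the construction under LWE.
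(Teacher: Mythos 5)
Your proposal follows essentially the same route as the paper: a ``special dual-mode OT'' built on the PVW LWE-based scheme, composed with Yao's garbled circuits, with every SFE-level property inherited coordinate-wise from the OT and the garbled-circuit simulator handling extraction-mode simulation. Two points in your OT construction deserve correction, though neither derails the overall plan.

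First, the rounding modification is an unnecessary detour. The paper keeps the Gaussian noise $\ve'$ in the receiver's message $\vu_b^\trans=\vr^\trans\mA+{\ve'}^\trans$ and simply defines the secret state to be $\st=\vr$ alone (not the full randomness): since the noise is bounded, the LWE solution $\vr$ is already uniquely determined by $\msg^{(1)}$, the CRS, and $b$, so the unique-state property holds for unmodified PVW. Replacing the noise by deterministic rounding would force you to re-prove the PVW security properties for a new scheme; your parenthetical alternative (``an OT variant whose first message deterministically determines the secret state'') is the right instinct and is exactly what defining $\st=\vr$ achieves.

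Second, your justification of the uncomputation step in the efficient state superposition property does not work as stated. After coherently computing $\msg^{(1)}$, the Gaussian register $\ket{\ve'}$ is entangled with the output, so one cannot erase it by ``reversing the Gaussian preparation''; and recovering $\ve'$ from $\msg^{(1)}$ via the CRS trapdoor is unavailable to the honest receiver, who does not hold the trapdoor. The correct mechanism needs no trapdoor at all: the receiver still holds $\ket{b}\ket{\vr}\ket{\ve'}\ket{\vu_0}$, and ${\ve'}^\trans = \vu_0^\trans + b\,\vv^\trans - \vr^\trans\mA$ is a deterministic function of the \emph{other} registers and the public CRS, so the noise register can be XORed to zero in place. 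With that fix, your argument matches the paper's, including the use of a Grover--Rudolph-style coherent Gaussian preparation and the coordinate-wise decomposability of $\StaRcv$.
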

See \Cref{sec:dual-mode-SFE-from-LWE} for the proof of \Cref{thm:dual-mode-SFE-from-LWE}.

% !TEX root = main.tex

\section{Definitions of Secure Key Leasing with Classical Lessor}\label{sec-def-SKL}
Here, we define secure key-leasing cryptographic primitives with a classical lessor. 
Specifically, we consider the cases of PKE, PRF, and digital signatures. 
Our definitions closely follow those of \cite{kitagawa2025simple}, but with two key differences. 
First, in our setting, quantum keys are generated through classical communication between the lessor and the lessee, rather than being generated directly by a quantum lessor. 
Second, we strengthen the security notions: in our definitions, the adversary is additionally granted access to an evaluation oracle (for PRFs) or a signing oracle (for signatures), in contrast to \cite{kitagawa2025simple}.

%\fuyuki{The references should be \cite{kitagawa2025simple}?}
%\synote{Thanks, yes. Fixed.}

%\takashi{Below, I put the definitions of SKL with a quantum lessor, taken from \cite{EC:KitMorYam25}. They should be adapted to the classical lessor setting.}

\subsection{Public Key Encryption with Secure Key Leasing} \label{sec:pke-skl-defs}

In this subsection, we define public key encryption with secure key leasing (PKE-SKL), where the lessor is entirely classical and thus all the communication is classical as well. The lessee operates quantum computations only internally.

\begin{definition}[PKE-SKL with classical lessor]\label{def:syntax-pke-skl}
A PKE-SKL scheme $\PKESKL$ with classical lessor is a tuple of seven algorithms $(\Setup,\qIntKeyGen, \Enc, \Dec, \qDec,\qDel,\DelVrfy)$. 
Below, let $\cM$  be the message space of $\PKESKL$. 
\begin{description}
%\item[$\Setup(1^\secp,1^{\numkey})\ra\msk$:] The setup algorithm takes a security parameter $1^\lambda$ and a collusion bound $1^{\numkey}$, and outputs a master secret key $\msk$.
\item[$\Setup(1^\secp) \ra (\ek,\msk,\dvk)$:]
The setup algorithm is a classical PPT algorithm that takes as input the security parameter $1^\secp$ and outputs an encryption key $\ek$, a master secret key $\msk$, and a deletion verification key $\dvk$. 

\item[$\qIntKeyGen(\ek, \msk)\ra (\tau,\qdk) \mbox{ or } (\tau,\bot)$:] The interactive key generation algorithm is an interactive protocol with classical communication run between a classical PPT lessor on input $\msk$ and QPT lessee on input $\ek$. 
This either outputs a decryption key $\qdk$ or $\bot$, which indicates that the protocol failed because the lessee cheated. 
After the protocol, the lessee obtains $\qdk$. We write $\tau$ to denote the transcript of the execution of $\qIntKeyGen$.

\item[$\Enc(\ek,m)\ra\ct$:] The encryption algorithm is a PPT algorithm that takes an encryption key $\ek$ and a message $m \in \cM$, and outputs a ciphertext $\ct$.

\item[$\Dec(\msk,\ct)\ra\tilde{m}$:] 
 The lessor's decryption algorithm is a PPT algorithm that takes a master secret key $\msk$ and a ciphertext $\ct$, and outputs a value $\tilde{m}$.
 
\item[$\qDec(\qdk,\ct)\ra\tilde{m}$:] The lessee's decryption algorithm is a QPT algorithm that takes a decryption key $\qdk$ and a ciphertext $\ct$, and outputs a value $\tilde{m}$.

\item[$\qDel(\qdk)\ra\cert$:] The deletion algorithm is a QPT algorithm that takes a decryption key $\qdk$, and outputs a classical string $\cert$.

\item[$\DelVrfy(\tau,\dvk,\cert)\ra\top/\bot$:] The deletion verification algorithm is a deterministic classical polynomial-time algorithm that takes a transcript $\tau$ for the interactive key generation, a deletion verification key $\dvk$, and a deletion certificate $\cert$, and outputs $\top$ or $\bot$.

\item[Decryption correctness:]For every $m \in \cM$, we have
\begin{align}
\Pr\left[
\begin{array}{ll}
\qDec(\qdk, \ct) \allowbreak = m 
\\
\qquad \quad \wedge 
\\
\Dec(\msk, \ct) \allowbreak = m 
\end{array}
\ \middle |
\begin{array}{ll}
 (\ek,\msk,\dvk) \la \Setup(1^\secp)\\ 
(\tau,\qdk)\gets\qIntKeyGen(\ek, \msk )\\
\ct\gets\Enc(\ek,m)
\end{array}
\right] 
=1-\negl(\secp).
\end{align}

\item[Deletion verification correctness:] We have 
\begin{align}
\Pr\left[
\DelVrfy(\tau,\dvk,\cert)=\top
\ \middle |
\begin{array}{ll}
 (\ek,\msk,\dvk) \la \Setup(1^\secp)\\ 
(\tau,\qdk)\gets\qIntKeyGen(\ek, \msk )\\
\cert\gets\qDel(\qdk)
\end{array}
\right] 
=1-\negl(\secp).
\end{align}
\end{description}
\end{definition}
\begin{remark}[Syntax]\label{rem:pke-skl-syntax}
    One could consider the syntax where $\Setup$ and $\qIntKeyGen$ are merged.
    In such a setting, $\Setup$ is run by the lessor and $\ek$ is then sent to the lessee as the first message. 
    We chose the above syntax as the main one because it allows the encryption key to be generated even without the interaction with the lessee, which seems to be more aligned with the concept of the secure key leasing. 
\end{remark}

\begin{remark}[On the number of rounds in key generation]\label{rem:number-of-rounds-PKE-keygen}
For a PKE-SKL as per the above syntax, at least two rounds of communication are necessary for $\qIntKeyGen$. Otherwise, it becomes possible for an adversary to copy the quantum decryption key $\qdk$ and compromise the security (as defined below). Our construction consists of two rounds and is thus round-optimal in this sense.
\shota{Please check the discussion here.}
\end{remark}

\begin{remark}
We can assume without loss of generality that a decryption key of a PKE-SKL scheme is reusable. This means that it can be reused to decrypt arbitrarily many polynomial number of ciphertexts. 
This is because the decryption result of $\ct$ using $\qdk$ is almost deterministic by the decryption correctness and thus measuring the decryption result does not disturb the input state by the gentle measurement lemma~\cite{Winter99}.
%
% In particular, we can assume that 
% for honestly generated $\ct$ and $\qdk$, if we decrypt $\ct$ by using $\qdk$, the state of the decryption key after the decryption is negligibly close to that before the decryption in terms of trace distance. 
% This is because the output of the decryption is almost deterministic by decryption correctness, and thus such an operation can be done without almost disturbing the input state by the gentle measurement lemma~\cite{Winter99}.  
\end{remark}

We next introduce the security notions for PKE-SKL with classical lessor.
Similarly to \cite{kitagawa2025simple}, we allow the adversary to obtain the verification key after submitting a valid certificate that passes the verification, which is stronger than many existing security definitions of PKE-SKL~\cite{EC:AKNYY23} (or key-revocable PKE~\cite{TCC:AnaPorVai23}). 
\synote{Please check the statement.}

\begin{definition}[IND-VRA security]\label{def:IND-VRA_PKESKL} 
We say that a PKE-SKL scheme $\PKESKL$  with classical lessor for the message space $\cM$ is IND-VRA secure,\footnote{"VRA" stands for "\textbf{V}erification key \textbf{R}evealing \textbf{A}ttack"} if it satisfies the following requirement, formalized by the experiment $\expb{\PKESKL,\qA}{ind}{vra}(1^\secp,\coin)$ between an adversary $\qA$ and the challenger:
        \begin{enumerate}
        \item The challenger runs $\Setup(1^\secp) \ra (\ek,\msk,\dvk)$ and sends $\ek$ to $\qA$.
            \item  Then, the challenger 
            plays the role of the lessor on input $\msk$ and 
            runs $\qIntKeyGen$ with the adversary $\qA$. 
            If the protocol outputs $\bot$, the challenger outputs $0$ as the output of the game. Let $\tau$ be the transcript of the execution of $\qIntKeyGen$.
            \item $\qA$ sends $\cert$ and $(m_0^*,m_1^*)\in \cM^2$ to the challenger. If $\DelVrfy(\tau,\dvk,\cert)=\bot$, the challenger outputs $0$ as the final output of this experiment. Otherwise, the challenger generates $\ct^*\la\Enc(\ek,m_\coin^*)$, and sends $\dvk$ and $\ct^*$ to $\qA$.
            \item $\qA$ outputs a guess $\coin^\prime$ for $\coin$. The challenger outputs $\coin'$ as the final output of the experiment.
        \end{enumerate}
        For any QPT $\qA$, it holds that
\begin{align}
\advb{\PKESKL,\qA}{ind}{vra}(\secp) \seteq \abs{\Pr[\expb{\PKESKL,\qA}{ind}{vra} (1^\secp,0) = 1] - \Pr[\expb{\PKESKL,\qA}{ind}{vra} (1^\secp,1) = 1] }\leq \negl(\secp).
\end{align} 
\end{definition}

We also define the one-way variant of the above security. 
\begin{definition}[OW-VRA security]\label{def:OW-VRA_PKESKL}
We say that a PKE-SKL scheme  $\PKESKL$ with classical lessor for the message space $\cM$ is OW-VRA secure, if it satisfies the following requirement, formalized by the experiment $\expb{\PKESKL,\qA}{ow}{vra}(1^\secp)$ between an adversary $\qA$ and the challenger:
        \begin{enumerate}
        \item The challenger runs $\Setup(1^\secp) \ra (\ek,\msk,\dvk)$ and sends $\ek$ to $\qA$.
            \item  Then, the challenger 
            plays the role of the lessor on input $\msk$ and 
            runs $\qIntKeyGen$ with the adversary $\qA$. 
            If the protocol outputs $\bot$, the challenger outputs $0$ as the output of the game.  Let $\tau$ be the transcript of the execution of $\qIntKeyGen$.           
            \item $\qA$ sends $\cert$ to the challenger. If $\DelVrfy(\tau,\dvk,\cert)=\bot$, the challenger outputs $0$ as the final output of this experiment.
            Otherwise, the challenger chooses $m^*\gets \cM$, generates $\ct^*\la\Enc(\ek,m^*)$, and sends $\dvk$ and $\ct^*$ to $\qA$.
            \item $\qA$ outputs $m^\prime$. The challenger outputs $1$ if $m^\prime=m^*$ and otherwise outputs $0$ as the final output of the experiment.
        \end{enumerate}
        For any QPT $\qA$, it holds that
\begin{align}
\advb{\PKESKL,\qA}{ow}{vra}(\secp) \seteq \Pr[\expb{\PKESKL,\qA}{ow}{vra} (1^\secp) = 1]\leq \negl(\secp).
\end{align}
\end{definition}
\begin{remark}[Security as plain PKE]
It is straightforward to see that IND-VRA (resp. OW-VRA) security implies IND-CPA (resp. OW-CPA) security as a plain PKE scheme with quantum decryption keys.
\end{remark}

By the quantum Goldreich-Levin lemma~\cite{AC02,C:CLLZ21} we have the following lemma. The proof is almost identical to that of \cite[Lemma 3.12]{EC:AKNYY23}, and thus we omit it.
\begin{lemma}\label{lem:ow-ind}
If there exists a OW-VRA secure PKE-SKL scheme with classical lessor, then there exists an IND-VRA secure PKE-SKL scheme with classical lessor.
\end{lemma}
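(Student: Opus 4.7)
The plan is to apply the standard Goldreich–Levin-type transformation, lifted to the quantum setting via the quantum Goldreich–Levin lemma~\cite{AC02,C:CLLZ21}. Let $\Pi=(\Setup,\qIntKeyGen,\Enc,\Dec,\qDec,\qDel,\DelVrfy)$ be a OW-VRA secure PKE-SKL scheme with classical lessor and message space $\bit^n$ for some $n=\poly(\secp)$. I construct an IND-VRA secure PKE-SKL scheme $\Pi'$ with the same $\Setup$, $\qIntKeyGen$, $\qDel$, and $\DelVrfy$ algorithms. The message space of $\Pi'$ is $\bit$. To encrypt $b\in\bit$, sample $m\lrun \bit^n$ and $r\lrun \bit^n$, compute $\ct\lrun\Enc(\ek,m)$, and output $\ct'\seteq(\ct,r,b\oplus\langle m,r\rangle)$. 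Decryption recovers $m$ using $\qDec$ (or $\Dec$) on $\ct$ and then outputs $\langle m,r\rangle\oplus (b\oplus\langle m,r\rangle)$. Correctness of $\Pi'$ follows immediately from that of $\Pi$; for arbitrary-length messages, one uses $\Pi'$ bit-by-bit with independent $r$'s (sharing the same $\ct$ is also fine, since $r$ is fresh).

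Next, I would prove that IND-VRA security of $\Pi'$ reduces to OW-VRA security of $\Pi$. Suppose $\qA$ is a QPT adversary with advantage $\varepsilon=1/\poly(\secp)$ in $\expb{\Pi',\qA}{ind}{vra}$. Build an inverter $\qB$ against $\Pi$ as follows: $\qB$ receives $\ek$ from its OW-VRA challenger, forwards it to $\qA$, and relays the $\qIntKeyGen$ messages between its challenger and $\qA$. When $\qA$ submits a certificate $\cert$, $\qB$ forwards it. If the challenger declares acceptance and sends $\dvk$ together with the OW-VRA challenge ciphertext $\ct^*\lrun\Enc(\ek,m^*)$ for $m^*\lrun\bit^n$, then $\qB$ must extract $m^*$. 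To do this, $\qB$ interprets the residual state of $\qA$ (together with $\dvk$ and $\ct^*$) as a quantum predictor that, given a uniformly random $r$, outputs a guess for $\langle m^*,r\rangle$ with advantage $\varepsilon$ over $1/2$—this is exactly the IND-VRA experiment on $\Pi'$ with a uniformly random underlying message $m^*$.

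Applying the quantum Goldreich–Levin lemma to this predictor, $\qB$ extracts $m^*$ with probability $\poly(\varepsilon)/\poly(\secp)$, winning the OW-VRA game of $\Pi$. More precisely, one runs $\qA$ coherently on $(\ct,r,c)$-inputs of the form $(\ct^*,r,c)$ with $c\in\bit$, uses the standard trick of preparing $r$ in uniform superposition, and invokes the extractor of~\cite{AC02,C:CLLZ21} on the resulting state to produce a candidate $m'$; $\qB$ outputs $m'$. Since $\qA$'s advantage is non-negligible and $\qB$'s advantage is polynomially related to $\qA$'s, this contradicts OW-VRA security of $\Pi$. The hybrid where the challenge bit $b$ is replaced by $b\oplus\langle m^*,r\rangle$ for uniform $r$ (and $m^*$ uniform) is the link that makes the hardcore-bit step valid.

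The one step that requires care is ensuring that the Goldreich–Levin reduction is compatible with the adaptive nature of the VRA security game: $\qA$'s state at the moment it receives the challenge ciphertext is already entangled with its view up to the certificate-submission step, and the challenger gives it $\dvk$ only after verifying $\cert$. This is not an actual obstacle because $\qB$ can simply simulate everything up to that point honestly (just relaying), and only then invoke the quantum Goldreich–Levin procedure on the residual predictor; the predictor's ``auxiliary input'' is exactly $(\dvk,\ct^*)$ and its state, which $\qB$ holds. This mirrors the argument of \cite[Lemma 3.12]{EC:AKNYY23} essentially verbatim, the only nominal difference being that the $\qIntKeyGen$ here is interactive rather than non-interactive, which $\qB$ handles by transparent relaying. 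Hence the lemma follows.
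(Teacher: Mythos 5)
Your proposal is correct and takes exactly the route the paper intends: the paper omits this proof, citing the quantum Goldreich--Levin lemma and noting it is almost identical to Lemma~3.12 of \cite{EC:AKNYY23}, and your sketch is precisely that standard hardcore-bit construction and reduction, with the only new ingredient (transparent relaying of the interactive $\qIntKeyGen$ and of $\cert$/$\dvk$) handled correctly.
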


As we discuss in \cref{rem:pke-skl-syntax}, we can consider an alternative syntax where $\Setup$ is merged into $\qIntKeyGen$ and $\ek$, $\msk$, $\dvk$, and $\qdk$ are jointly generated through the interaction between the lessee and lessor.
If we consider such an alternative syntax, our main construction in \cref{sec:PKE-SKL-with-Interaction} requires $3$ rounds of interaction, which is not optimal.
However, as we show in \cref{sec:non-interactive-pke-skl}, we can achieve optimal $2$ rounds of interaction even in this setting as well by adding a small modification to our main construction.  
To capture the scheme, we define the modified version of the syntax for PKE-SKL with classical lessor and \emph{non-interactive key generation}.
\begin{definition}[PKE-SKL with classical lessor and non-interactive key generation]\label{def-pke-skl-NI}
We define PKE-SKL with classical lessor with non-interactive key generation by replacing $\Setup$ and $\qIntKeyGen$ in PKE-SKL as per \cref{def:syntax-pke-skl} with the following 2 algorithms:
\begin{description}
    \item[$\Setup_{\mathsf{NI}}(1^\secp)\to (\pp, \msk, \dvk)$:] The lessor runs the setup algorithm on input $1^\secp$ and outputs the public parameter $\pp$, master secret key $\msk$, and the deletion verification key $\dvk$. 
    \item[$\qKG_{\mathsf{NI}}(\pp) \to (\ek, \qdk)$:] 
    The key generation algorithm is a QPT algorithm that takes as input a public parameter $\pp$ and outputs an encryption key $\ek$ and a decryption key $\qdk$. 
    We assume that $\pp$ is included in $\ek$.
\end{description}
In addition, $\DelVrfy$ takes $\ek$ as input instead of the transcript $\tau$ of the interactive key generation.

One can define decryption correctness and deletion verification correctness similarly to \cref{def:syntax-pke-skl} by replacing $\Setup$ and $\qIntKeyGen$ with the above two algorithms. 
The IND-VRA and OW-VRA security are defined similarly to \cref{def:IND-VRA_PKESKL} and \ref{def:OW-VRA_PKESKL}, where the adversary receives $\pp$ from the challenger at the outset of the game and then sends $\ek$ to the challenger.
The rest of the games are the same. 
\end{definition}
In the above modified syntax, one can consider $\Setup_{\mathsf{NI}}$ combined with $\qKG_{\mathsf{NI}}(\pp)$ as an interactive key generation protocol with 2 rounds of communication.
We say the key generation of the scheme is non-interactive, since the lessor does not have to do anything after running $\Setup_{\mathsf{NI}}(1^\secp)$ and publicizing $\pp$.
The drawback of PKE-SKL as per \cref{def-pke-skl-NI} is that it requires interaction between the lessor and lessee to generate the encryption key $\ek$, unlike \cref{def:syntax-pke-skl}.   

An analogue of \cref{lem:ow-ind} in the non-interactive key generation setting holds. We therefore can focus on the construction of OW-VRA secure scheme. 
We omit the proof for the same reason as \cref{lem:ow-ind}.
\begin{lemma}\label{lem:ow-ind-non-interactive}
If there exists a OW-VRA secure PKE-SKL scheme with classical lessor and non-interactive key generation, then there exists an IND-VRA secure PKE-SKL scheme with classical lessor and non-interactive key generation.
\end{lemma}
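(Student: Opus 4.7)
The plan is to adapt the standard quantum Goldreich--Levin argument underlying \cref{lem:ow-ind} to the setting of non-interactive key generation; since the non-interactive setting is structurally simpler than the interactive one, the adaptation is essentially syntactic. Given an OW-VRA secure scheme $\PKESKL' = (\Setup_{\mathsf{NI}}', \qKG_{\mathsf{NI}}', \Enc', \Dec', \qDec', \qDel', \DelVrfy')$ with message space $\{0,1\}^\ell$ for some polynomial $\ell$, I would construct an IND-VRA secure scheme $\PKESKL$ for single-bit messages. All algorithms of $\PKESKL$ except encryption and decryption will be inherited verbatim from $\PKESKL'$. To encrypt a bit $b$ under $\ek$, the encryption algorithm will sample $m^* \gets \{0,1\}^\ell$ and $r \gets \{0,1\}^\ell$, compute $\ct' \gets \Enc'(\ek, m^*)$, and output $\ct = (\ct', r, \langle r, m^* \rangle \oplus b)$; decryption first recovers $m^*$ using $\qdk$ (resp.\ $\msk$) via $\qDec'$ (resp.\ $\Dec'$) and then XORs the third component of $\ct$ with $\langle r, m^* \rangle$. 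Decryption correctness and deletion verification correctness follow immediately from those of $\PKESKL'$.

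For IND-VRA security, I would show that any QPT adversary $\qA$ with non-negligible advantage $\epsilon$ against $\PKESKL$ yields a QPT reduction $\qB$ contradicting OW-VRA security of $\PKESKL'$. The reduction simulates the IND-VRA experiment to $\qA$ while playing the OW-VRA experiment externally: upon receiving $\pp$ from its own challenger, $\qB$ forwards it to $\qA$; upon receiving $\ek$ from $\qA$, it forwards $\ek$ on; it relays the deletion certificate appropriately, and upon the certificate being accepted receives $(\dvk, \ct')$ from its challenger, where $\ct'$ is an encryption of a uniformly random $m^*$. Then $\qB$ samples $r \gets \{0,1\}^\ell$ and a uniform bit $c$, hands $(\dvk, (\ct', r, c))$ to $\qA$ as the IND-VRA challenge ciphertext, and interprets $\qA$'s output as a predictor for a Goldreich--Levin hardcore bit of $m^*$. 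By an averaging argument, $\qA$'s $\epsilon$-distinguishing advantage implies a noticeable bias in predicting $\langle r, m^* \rangle$ from $(\ct', r)$ together with $\qA$'s auxiliary quantum state.

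Applying the quantum Goldreich--Levin extractor of~\cite{AC02,C:CLLZ21} to the residual quantum state of $\qA$ at this point would let $\qB$ recover $m^*$ with probability polynomially related to $\epsilon$, breaking OW-VRA of $\PKESKL'$. The main point to verify is that the extractor correctly operates on the quantum state $\qA$ holds after the key-generation and deletion phases; but since the extractor treats this state as opaque quantum auxiliary input and only invokes $\qA$ as a black box on challenges of the form $(\ct', r)$, non-interactivity of key generation is not an obstacle---if anything, it simplifies $\qB$'s simulation, which now involves only forwarding $\pp$ once and receiving $\ek$ once before the deletion phase. I therefore do not anticipate any new conceptual obstacle beyond those already resolved in~\cite[Lemma~3.12]{EC:AKNYY23}, and accordingly the proof will be omitted in the same spirit as for \cref{lem:ow-ind}.
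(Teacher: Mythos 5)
Your proposal is correct and matches the approach the paper intends: the paper omits this proof, pointing (via \cref{lem:ow-ind}) to the quantum Goldreich--Levin lemma of~\cite{AC02,C:CLLZ21} and the argument of \cite[Lemma~3.12]{EC:AKNYY23}, which is exactly the hardcore-bit masking construction and extractor-based reduction you describe. Your observation that non-interactivity only simplifies the simulation (the reduction merely forwards $\pp$ and $\ek$) is also the reason the paper treats the two lemmata identically.
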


\subsection{Pseudorandom and Unpredictable Functions with Secure Key Leasing}\label{def:PRF-SKL}
In this subsection, we define pseudorandom functions with secure key leasing (PRF-SKL) with classical lessor. 
Similarly to the case of PKE-SKL, everything is classical except that the internal computation by the lessee is quantum.

\begin{definition}[PRF-SKL with classical lessor]
A PRF-SKL scheme  $\PRFSKL$ with classical lessor for the domain $\Domprf$ and the range $\Ranprf$ is a tuple of six algorithms $(\Setup, \qIntKeyGen, \Eval, \qLEval,\qDel,\DelVrfy)$. 
\begin{description}
\item[$\Setup(1^\secp)\ra(\pp,\msk, \dvk)$:] The setup algorithm takes a security parameter $1^\lambda$ and outputs a public parameter $\pp$, master secret key $\msk$, and a deletion verification key $\dvk$.

\item[$\qIntKeyGen(  \pp, \msk )\ra (\tau,\qsk) \mbox{ or } (\tau,\bot)$:] The interactive key generation algorithm is an interactive protocol with classical communication run between a classical PPT lessor who takes as input $\msk$ and QPT lessee on input $\pp$. 
This either outputs a secret key $\qsk$ or $\bot$, which indicates that the protocol failed because the lessee cheats. 
After the protocol, the lessee obtains $\qsk$. 
We write $\tau$ to denote the transcript of the execution of $\qIntKeyGen$.

\item[$\Eval(\msk,\prfinp)\ra \prfout$:] The evaluation algorithm is a deterministic classical polynomial-time algorithm that takes a master secret key $\msk$ and an input $\prfinp \in \Domprf$, and outputs a value $\prfout$.

\item[$\qLEval(\qsk,\prfinp)\ra \prfout$:] The leased evaluation algorithm is a QPT algorithm that takes a secret key $\qsk$ and an input $\prfinp \in \Domprf$, and outputs a value $\prfout$.

\item[$\qDel(\qsk)\ra\cert$:] The deletion algorithm is a QPT algorithm that takes a secret key $\qsk$, and outputs a classical string $\cert$.

\item[$\DelVrfy(\tau,\dvk,\cert)\ra\top/\bot$:] The deletion verification algorithm is a deterministic classical polynomial-time algorithm that takes a transcript $\tau$ of the interactive key generation, a deletion verification key $\dvk$, and a deletion certificate $\cert$, and outputs $\top$ or $\bot$.

\item[Evaluation correctness:]For every $\prfinp \in \Domprf$, we have
\begin{align}
\Pr\left[
\qLEval(\qsk,\prfinp) \allowbreak = \Eval(\msk,\prfinp)
\ \middle |
\begin{array}{ll}
(\pp,\msk, \dvk) \lrun \Setup(1^\secp) \\
(\tau,\qsk)\gets\qIntKeyGen(\pp,\msk)
\end{array}
\right] 
=1-\negl(\secp).
\end{align}

\item[Deletion verification correctness:] We have 
\begin{align}
\Pr\left[
\DelVrfy(\tau,\dvk,\cert)=\top
\ \middle |
\begin{array}{ll}
(\pp,\msk,\dvk) \la \Setup(1^\secp) \\
(\tau,\qsk)\gets\qIntKeyGen(\pp,\msk)\\
\cert\gets\qDel(\qsk)
\end{array}
\right] 
=1-\negl(\secp).
\end{align}
\end{description}
\end{definition}
\begin{remark}[Syntax]
    Similarly to the case of PKE-SKL, one could consider the syntax where $\Setup$ and $\qIntKeyGen$ are merged.
    However, we chose the above syntax because it allows the PRF key to be generated without the interaction with the lessee. 
\end{remark}

\begin{remark}[On the number of rounds in key generation]
Similarly to the case of PKE-SKL, at least two rounds of communication are necessary for $\qIntKeyGen$. Otherwise, it cannot be secure as an adversary can generate multiple copies of the quantum secret key.
Our construction only requires 2 rounds and thus is round-optimal in this sense.
\shota{Please check the discussion here.}
\end{remark}

\begin{remark}[Reusability]\label{rem:reusability＿PRF}
We can assume without loss of generality that a key of a PRF-SKL scheme is reusable, i.e., it can be reused to evaluate on  (polynomially) many inputs. This is because the output of the leased evaluation algorithm is almost unique by evaluation correctness, and thus such an operation can be done without almost disturbing the input state by the gentle measurement lemma~\cite{Winter99}.    
\end{remark}

%Our security definition is similar to that in \cite{TCC:AnaPorVai23} except that we allow the adversary to receive the verification key after submitting a valid certificate.  
\begin{definition}[PR-VRA security]\label{def:PR-VRA}
We say that a PRF-SKL scheme  $\PRFSKL$ with classical lessor for the domain $\Domprf$ and the range $\Ranprf$ is PR-VRA secure, if it satisfies the following requirement, formalized by the experiment $\expb{\PRFSKL,\qA}{pr}{vra}(1^\secp,\coin)$ between an adversary $\qA$ and the challenger:
        \begin{enumerate}
        \item 
        The challenger runs $\Setup(1^\secp)\ra(\pp,\msk, \dvk)$ and sends $\pp$ to $\qA$.
        \item 
        Given $\pp$ from the challenger, $\qA$ starts to run.
        Throughout the game, $\qA$ has an oracle access to the evaluation oracle $\Eval(\msk, \cdot)$, which takes as input $s\in \Domprf$ and returns $\Eval(\msk, s)$, until it receives the challenge input at Step~\ref{Step:receiving-challenge-PRF-VRA} below.
        $\qA$ can access the oracle arbitrarily many times at any time.
        
            \item  The challenger plays the role of the lessor and 
            runs $\qIntKeyGen(\pp, \msk)$ with $\qA$. If the output is $\bot$,
            the challenger outputs $0$ as the output of the game. Let $\tau$ be the transcript of the execution of $\qIntKeyGen$.

            \item 
            $\qA$ continues to have an oracle access to the evaluation oracle.
            At some point, 
            $\qA$ sends $\cert$ to the challenger. 
            
            \item \label{Step:receiving-challenge-PRF-VRA}
            If $\DelVrfy(\tau,\dvk,\cert)=\bot$, the challenger outputs $0$ as the final output of this experiment. Otherwise, the challenger generates $\prfinp^*\la\Domprf$, $\prfout^*_0\la\Eval(\msk,\prfinp^*)$, and $\prfout^*_1\la\Ranprf$, and sends $(\dvk,\prfinp^*,\prfout^*_\coin)$ to $\qA$.
            After receiving the challenge input $\prfinp^*$, $\qA$ loses its access to the evaluation oracle.
            \item $\qA$ outputs a guess $\coin^\prime$ for $\coin$. The challenger outputs $\coin'$ as the final output of the experiment.
        \end{enumerate}
        For any QPT $\qA$, it holds that
\begin{align}
\advb{\PRFSKL,\qA}{pr}{vra}(\secp) \seteq \abs{\Pr[\expb{\PRFSKL,\qA}{pr}{vra} (1^\secp,0) = 1] - \Pr[\expb{\PRFSKL,\qA}{pr}{vra} (1^\secp,1) = 1] }\leq \negl(\secp).
\end{align} 
\end{definition}

\begin{remark}
While PR-VRA security defined as above \emph{does} imply security as a weak PRF, it does not imply the security as a standard PRF.
    However, as observed in \cite{kitagawa2025simple}, we can add standard PRF security by a very simple conversion: run a standard PRF scheme in parallel and take the XOR of the outputs. 
    Thus, we focus on PR-VRA secure construction. 
\end{remark}

As an intermediate goal towards constructing PRF-SKL, we introduce a primitive which we call unpredictable functions with secure key leasing (UPF-SKL).\footnote{Though UPF-SKL and PRF-SKL are syntactically identical, we treat them as different primitives for clarity.} 
%\synote{The notion regarding $\UPFSKL$ and $\PRFSKL$ should be consistent .}

\begin{definition}[UPF-SKL]\label{def:UPF-SKL}
A UPF-SKL scheme $\mathsf{UPFSKL}$ with classical lessor has the same syntax as PRF-SKL with classical lessor and satisfies the following security, which we call UP-VRA security, 
formalized by the experiment $\expb{\UPFSKL,\qA}{up}{vra}(1^\secp)$ between an adversary $\qA$ and the challenger:
        \begin{enumerate}
        \item 
        The challenger runs $\Setup(1^\secp)\ra(\pp,\msk, \dvk)$ and sends $\pp$ to $\qA$.
        \item 
        Given $\pp$ from the challenger, $\qA$ starts to run.
        Throughout the game, $\qA$ has an oracle access to the evaluation oracle $\Eval(\msk, \cdot)$, which takes as input $s\in \Domprf$ and returns $\Eval(\msk, s)$, until it receives the challenge input at Step~\ref{Step:receiving-challenge-upf-VRA} below.
        $\qA$ can access the oracle arbitrarily many times at any time.
        
        \item  
            The challenger plays the role of the lessor and 
            runs $\qIntKeyGen(\pp,\msk)$ with $\qA$. If the output is $\bot$,
            the challenger outputs $0$ as the output of the game. Let $\tau$ be the transcript of the execution of $\qIntKeyGen$.

            \item 
            $\qA$ continues to have an oracle access to the evaluation oracle.
            At some point, 
            $\qA$ sends $\cert$ to the challenger. 
            
            \item \label{Step:receiving-challenge-upf-VRA}
            If $\DelVrfy(\tau,\dvk,\cert)=\bot$, the challenger outputs $0$ as the final output of this experiment. Otherwise, the challenger generates $\prfinp^*\la\Domprf$, and sends $(\dvk,\prfinp^*)$ to $\qA$.
            After receiving the challenge input $\prfinp^*$, $\qA$ loses its access to the evaluation oracle.

            \item $\qA$ outputs a guess $t^\prime$ for the output on $s^*$. The challenger outputs $1$ if $t^\prime=\Eval(\msk,s^*)$ and otherwise outputs $0$. 
        \end{enumerate}
        For any QPT $\qA$, it holds that
\begin{align}
\advb{\UPFSKL,\qA}{up}{vra}(\secp) \seteq \Pr[\expb{\UPFSKL,\qA}{up}{vra} (1^\secp) = 1] \leq \negl(\secp).
\end{align} 
\end{definition}

By the quantum Goldreich-Levin lemma~\cite{AC02,C:CLLZ21} we have the following theorem. 
The proof is almost identical to that of \cite[Lemma 4.14]{kitagawa2025simple}, and thus we omit it.
\begin{lemma}\label{lem:upf-prf}
If there exists a UP-VRA secure UPF-SKL scheme with classical lessor, then there exists a PR-VRA secure PRF-SKL scheme with classical lessor.
\end{lemma}

\subsection{Digital Signatures with Secure Key Leasing}
In this subsection, we define digital signatures with secure key leasing (DS-SKL) with classical lessors. 

\begin{definition}[DS-SKL with classical lessors]
A DS-SKL scheme $\DSSKL$  with classical lessors  
is a tuple of seven algorithms $(\Setup, \qIntKeyGen,\Sign, \qSign, \SigVrfy,\qDel,\DelVrfy)$. 
Below, let $\cM$  be the message space of $\DSSKL$. 
\begin{description}
%\item[$\Setup(1^\secp,1^{\numkey})\ra\msk$:] The setup algorithm takes a security parameter $1^\lambda$ and a collusion bound $1^{\numkey}$, and outputs a master secret key $\msk$.
\item[$\Setup(1^\secp)\ra(\svk,\msk, \dvk)$:] The setup algorithm takes a security parameter $1^\lambda$ and outputs a signature verification key $\svk$, master secret key $\msk$, and a deletion verification key $\dvk$.

\item[$\qIntKeyGen(  \svk, \msk )\ra (\tau,\qsigk) \mbox{ or } (\tau,\bot)$:] The interactive key generation algorithm is an interactive protocol with classical communication run between a classical PPT lessor who takes as input $\msk$ and a QPT lessee on input $\svk$. 
This either outputs a secret key $\qsigk$ or $\bot$, which indicates that the protocol failed because the lessee cheats. 
After the protocol, the lessee obtains $\qsigk$.
We write $\tau$ to denote the transcript of the execution of $\qIntKeyGen$.

\item[$\Sign(\msk, m) \to \sigma$:]
The (lessor's) signing algorithm is a PPT algorithm that takes a master secret key $\msk$ and a message $m \in \cM$, and outputs a signature $\sigma$. 

\item[$\qSign(\qsigk,m)\ra(\qsigk',\sigma)$:] The (lessee's) signing algorithm is a QPT algorithm that takes a signing key $\qsigk$ and a message $m \in \cM$, and outputs a subsequent signing key $\qsigk'$ and a signature $\sigma$.  %\takashi{I modified the syntax to output $\qsigk'$.}

\item[$\SigVrfy(\sigvk,m,\sigma)\ra \top/\bot$:] The signature verification algorithm is a deterministic classical polynomial-time algorithm that takes a signature verification key $\sigvk$, a message $m \in \cM$, and a signature $\sigma$, and outputs $\top$ or $\bot$. 

\item[$\qDel(\qsigk)\ra\cert$:] The deletion algorithm is a QPT algorithm that takes a signing key $\qsigk$, and outputs a deletion certificate $\cert$.

\item[$\DelVrfy(\tau,\dvk,\cert)\ra\top/\bot$:] The deletion verification algorithm is a deterministic classical polynomial-time algorithm that takes 
a transcript $\tau$ of the interactive key generation, 
a deletion  verification key $\dvk$, and a deletion certificate $\cert$, and outputs $\top$ or $\bot$.

\item[Signature verification correctness:]For every $m \in \cM$, we have
\begin{align}
\Pr\left[
\begin{array}{ll}
\SigVrfy(\sigvk, m,\sigma) \allowbreak = \top \\
\qquad \qquad \wedge \\
\SigVrfy(\sigvk, m,\sigma') \allowbreak = \top
\end{array}
\ \middle |
\begin{array}{ll}
(\svk,\msk, \dvk)\la \Setup(1^\secp)\\
(\tau,\qsigk) \la \qIntKeyGen(  \svk, \msk ) \\
(\qsigk',\sigma)\gets\qSign(\qsigk,m) \\
\sigma' \gets\Sign(\msk,m)
\end{array}
\right] 
=1-\negl(\secp).
\end{align}

\item[Deletion verification correctness:] We have 
\begin{align}
\Pr\left[
\DelVrfy(\tau,\dvk,\cert)=\top
\ \middle |
\begin{array}{ll}
(\svk,\msk, \dvk)\la \Setup(1^\secp)\\
(\tau,\qsigk) \la \qIntKeyGen(  \svk, \msk ) \\
\cert\gets\qDel(\qsigk)
\end{array}
\right] 
=1-\negl(\secp).
\end{align}
\item[Reusability with static signing keys:]
Let $\qsigk$ be an honestly generated signing key and $m\in \mathcal{M}$ be any message. 
Suppose that we run $(\qsigk',\sigma)\gets\qSign(\qsigk,m)$. 
Then we have 
\begin{align}
    \|\qsigk-\qsigk'\|_{tr}=\negl(\secp).
\end{align}
\end{description}
\end{definition}

\begin{remark}[Syntax]
Similarly to the case of PRF-SKL, one could consider the syntax where $\Setup$ and $\qIntKeyGen$ are merged.
    However, we chose the above syntax because it allows the signature to be generated without the interaction with the lessee. 
\end{remark}

\begin{remark}[On the number of rounds in key generation]
Similarly to the case of PRF-SKL, at least two rounds of communication are necessary for $\qIntKeyGen$. Otherwise, it cannot be secure as an adversary can generate multiple copies of the quantum signing key.
Our construction only requires 2 rounds and thus is round-optimal in this sense.
\shota{Please check the discussion here.}
\end{remark}

\begin{remark}[Reusability]\label{rem:reusability_DS}
Following \cite{kitagawa2025simple}, our syntax makes the quantum signing algorithm output $\qsigk'$ and requires that it should be close to $\qsigk$.
We call this property reusability with static signing keys. 
This is stronger than the notion of reusability  defined in \cite{TQC:MorPorYam24}, where $\qsigk'$ is not required to be close to $\qsigk$ as long as it can be still used to generate signatures on further messages.
Note that in contrast to PKE-SKL and PRF-SKL scenarios, we cannot take for granted the (even weaker form of) reusability without a loss of generality because the signing algorithm may not produce a unique signature. 
%
% The previous work \cite{TQC:MorPorYam24} considered a weaker definition of reusability, where $\sigk'$ is not required to be close to $\sigk$ as long as it can be still used to generate signatures on further messages.
% %signing keys to be updated whenever signing is done and requires that the updated signing key can be still used to generate signatures on any message. 
% To emphasize the difference from their definition, we call the above property reusability with static signing keys. 
% Unlike the cases of PKE-SKL and PRF-SKL, we cannot assume (even the weaker version of) reusability without loss of generality since signatures generated by the signing algorithm may not be unique. Thus, we explicitly state it as a requirement.
\end{remark}

We next introduce the security notions for DS-SKL with classical lessors.

\if0
\begin{definition}[EUF-CMA Security]\label{def:EUF-CMA}
We say that a DS-SKL scheme with classical lessor $\DSSKL$  with the message space $\cM$ is EUF-CMA secure, if it satisfies the following requirement, formalized from the experiment $\expb{\DSSKL,\qA}{euf}{cma}(1^\secp)$ between an adversary $\qA$ and the challenger:
        \begin{enumerate}
            \item  The challenger runs $(\qsigk,\sigvk,\dvk)\gets\qKG(1^\secp)$ and sends $\sigvk$ to $\qA$. \takashi{Can we also give $\dvk$ to $\qA$?}  
            \item $\qA$ makes arbitrarily many \takashi{classical?} queries to the signing oracle    $\qSign(\qsigk,\cdot)$. \takashi{We may need to remark that $\qsigk$ may change after each querybut only negligibly.}
            \item $\qA$ outputs a message $\msg^*$ that is never queried to the signing oracle and a 
            signature $\sigma^*$. The challenger outputs $1$ if $\SigVrfy(\sigvk,\msg^*,\sigma^*)=\top$ and otherwise outputs $0$.  
        \end{enumerate}
        For any QPT $\qA$, it holds that
\begin{align}
\advb{\DSSKL,\qA}{euf}{cma}(\secp) \seteq \Pr[\expb{\DSSKL,\qA}{euf}{cma} (1^\secp) = 1] \leq \negl(\secp).
\end{align} 
\end{definition}
\fi

\begin{definition}[RUF-VRA security]\label{def:RUF-VRA} 
We say that a DS-SKL scheme  $\DSSKL$ with classical lessor  for the message space $\cM$ is RUF-VRA secure,\footnote{"RUF" stands for "\textbf{R}andom message \textbf{U}n\textbf{F}orgeability".} if it satisfies the following requirement, formalized by the experiment $\expb{\DSSKL,\qA}{ruf}{vra}(1^\secp)$  between an adversary $\qA$ and the challenger:
        \begin{enumerate}
        \item 
        The challenger runs $\Setup(1^\secp)\ra(\svk,\msk, \dvk)$ and sends $\svk$ to $\qA$.
        \item 
        Given $\svk$ from the challenger, $\qA$ starts to run.
        Throughout the game, $\qA$ has an oracle access to the signing oracle $\Sign(\msk, \cdot)$, which takes as input $m \in \cM$ and returns $\sigma \lrun \Sign(\msk, m)$, until it receives the challenge message $m^*$ at Step~\ref{Step:receiving-challenge-sig-VRA} below.
        $\qA$ can access the oracle arbitrary many times at any time.
        \item  
            The challenger plays the role of the lessor and 
            runs $\qIntKeyGen(\svk,\msk)$ with $\qA$. If the output is $\bot$,
            the challenger outputs $0$ as the output of the game. Let $\tau$ be the transcript of the execution of $\qIntKeyGen$.

            \item 
            $\qA$ continues to have an oracle access to the signing oracle.
            At some point, 
            $\qA$ sends $\cert$ to the challenger.

            \item \label{Step:receiving-challenge-sig-VRA}
            If $\DelVrfy(\tau,\dvk,\cert)=\bot$, the challenger outputs $0$ as the final output of this experiment. Otherwise, the challenger chooses $m^*\la\mathcal{M}$, and sends $\dvk$ and $m^*$ to $\qA$.
            After receiving the challenge input $m^*$, $\qA$ loses its access to the signing oracle.

            \item $\qA$ outputs a signature $\sigma^\prime$. The challenger outputs $1$ if $\SigVrfy(\sigvk,m^*,\sigma^\prime)=\top$ and otherwise outputs $0$.  
        \end{enumerate}
        For any QPT $\qA$, it holds that
\begin{align}
\advb{\DSSKL,\qA}{ruf}{vra}(\secp) \seteq \Pr[\expb{\DSSKL,\qA}{ruf}{vra} (1^\secp) = 1] \leq \negl(\secp).
\end{align} 
\end{definition}
\begin{remark}
    We note that RUF-VRA security defined as above does not imply the standard EUF-CMA security as a plain digital signature.
    However, as observed in \cite{kitagawa2025simple}, we can add EUF-CMA security by a very simple conversion: run the EUF-CMA secure signature scheme in parallel, signing the message with both signature schemes, and only accept verification when both signatures are valid. Thus, we focus on constructing RUF-VRA. 
\end{remark}
% !TEX root = main.tex

\section{PKE-SKL with Classical Lessor}
\ifnum\llncs=0
The construction of PKE-SKL with a classical lessor is presented in \cref{sec:PKE-SKL-with-Interaction}, following the syntax in \cref{def:syntax-pke-skl}. 
We further show how this scheme can be modified to support non-interactive key generation, as defined in \cref{def-pke-skl-NI}. 
Both constructions are proven secure under the LWE assumption.
\else
The construction of PKE-SKL with a classical lessor is presented 
\ifnum\cameraready=1
in \cref{sec:PKE-SKL-with-Interaction}. 
\else
in \cref{sec:PKE-SKL-with-Interaction}, following the syntax in \cref{def:syntax-pke-skl}.  \fi
We further show how this scheme can be modified to support 
\ifnum\cameraready=1
non-interactive key generation in the full version. 
\else
non-interactive key generation, as defined in \cref{def-pke-skl-NI}, in \cref{sec:non-interactive-pke-skl}.\fi 
Both constructions are proven secure under the LWE assumption.

\fi

\subsection{Construction with Interactive Key Generation}\label{sec:PKE-SKL-with-Interaction} 
We construct a PKE-SKL scheme with a classical lessor that satisfies OW-VRA security. 
Roughly, it requires that if an adversary submits a valid deletion certificate, then it can no longer break the one-wayness of the scheme even given the deletion verification key.  
As shown by \cite{kitagawa2025simple}, 
OW-VRA security can be easily upgraded into IND-VRA security, its indistinguishability-based counterpart, by the quantum 
\ifnum\cameraready=1
Goldreich-Levin lemma. 
\else
Goldreich-Levin lemma (\cref{lem:ow-ind}). \fi
For the construction, we use the following ingredients:
\begin{itemize}
 \item NTCF generators 
 \ifnum\cameraready=1 
 $\NTCF=\NTCF.(\FuncGen, \StateGen, \Chk, \Invert)$ as defined in the full version. 
 \else
 $\NTCF=\NTCF.(\FuncGen, \StateGen, \Chk, \Invert)$  as defined in \cref{sec:NTCF}. \fi
 It is associated with a deterministic algorithm $\NTCF.\GoodSet$ 
 \ifnum\cameraready=1 (See the full version).
 \else(See \cref{def:NTCF-generator}). \fi 
 We assume that $\NTCF.\Chk$ is a deterministic algorithm. 
 We denote the input space by $\cX$ and assume $\cX \subseteq \bin^\lenx$ for some polynomial $\lenx = \lenx(\secp)$.
 It is known that NTCF generators can be constructed from LWE.
% \takashi{This can be constructed from LWE.} 
 \item 
 Special Dual-mode SFE $\SFE=\SFE.(\CRSGen,\Receive{1}, \Send, \Receive{2})$ that supports circuits with input space $\bin^\lenx$ as defined in 
 \ifnum\cameraready=1 the full version. 
 \else \cref{sec:dual_mode_SFE}. \fi 
 It is associated with an extraction algorithm $\SFE.\Extract$, a simulation algorithm $\SFE.\Sim$, a state recovery algorithm $\SFE.\StaRcv$,
 and a ''coherent version" $\SFE.\qReceive{1}$ of $\SFE.\Receive{1}$, which has the efficient state superposition property. 
 We require that $\SFE.\Receive{2}$ and $\SFE.\Extract$ are deterministic.
% We assume that the input space of $\Receive{1}(\SFE.\crs, \cdot)$ is $\bin^\lenx$. 
 We denote the length of each block $\alpha_{i,j}^b$ output by $\SFE.\StaRcv$ by $\lenst$. 
 As we show in \ifnum\cameraready=1 the full version, \else\cref{sec:dual-mode-SFE-from-LWE},  \fi this can be constructed from LWE.
% \takashi{This can be constructed from  LWE.} 
 \item Watermarkable PKE (with parallel extraction): 
 $\WPKE=\WPKE.(\KG, \Mark, \Enc, \Dec)$ as defined in \ifnum\cameraready=1 the full version.\else \cref{sec:WPKE}. \fi
 It is associated with a parallel mark extraction algorithm $\WPKE.\qExt$. 
 We require that $\WPKE.\Mark$ and $\WPKE.\Dec$ are deterministic algorithms.
 We need the mark space to be $\bin^\lenx$ and denote the message space by $\bin^\lenm$. 
 The length of the decryption key is denoted by $\lenmsg$. 
 As we show in \ifnum\cameraready=1 the full version, \else\cref{sec:WPKE}, \fi this can be constructed from any PKE.

% \takashi{ While this is not a standard security notion for watermarkable PKE, it can be constructed from any PKE by using the technique of \cite{C:GKMWW19}. See \Cref{watermarkable_PKE}. }
\end{itemize}

Let $n=n(\secp)$ be a positive integer such that $n=\omega(\log \secp)$.

Below is the description of a OW-VRA secure scheme. 
The message space of the scheme is $\bin^{2n\lenm}$.
\begin{description}
\item[$\Setup(1^\secp) $:]
On input the security parameter, it proceeds as follows. 
\begin{itemize}
\item 
For $i\in [2n]$, run $\WPKE.\KG(1^\lambda) \to (\WPKE.\ek_i, \WPKE.\msk_i)$.
\item Choose a uniform subset $S\subseteq [2n]$ such that $|S|=n$. 
In the following, we denote $[2n]\backslash S$ by $\overline{S}$.
\item 
For $i\in [2n]$, run $(\NTCF.\pp_i,\NTCF.\td_i) \lrun \NTCF.\FuncGen(1^\lambda, \mode)$ and $(\SFE.\crs_i,\SFE.\td_i) \lrun \SFE.\CRSGen(1^\secp, \mode)$, where 
$
\mode = 
\begin{cases}
  1 & \text{if $i\in S$}\\
  2 & \text{if $i\in \overline{S}$}.
\end{cases}
$

\item Output $\ek = \{\NTCF.\pp_i,\SFE.\crs_i, \WPKE.\ek_i\}_{i\in [2n]}$,  
$\dvk=(S,\{\NTCF.\td_i,\SFE.\td_i\}_{i\in \overline{S} } )$,
and $\msk = \{ \WPKE.\msk_i \}_{i\in [2n]}$. 

\end{itemize}
\item[$\qIntKeyGen(\ek, \msk)$:]
The lessor and lessee run the following interactive protocol to generate a decryption key $\qdk$. 

{\bf Lessee's first operation:}
On input $\ek = \{\NTCF.\pp_i,\SFE.\crs_i, \WPKE.\ek_i\}_{i\in [2n]}$, the lessee runs the following algorithms.
\begin{itemize}
\item For each $i\in [2n]$, run $\NTCF.\StateGen(\NTCF.\pp_i) \rrun (y_i, \ket{\psi_i})$.
\item For each $i\in [2n]$, run $\SFE.\qReceive{1}(\SFE.\crs_i,\ket{\psi_i}) \to (\ket{\psi'_i},\SFE.\msg_i^{(1)})$.

% \item 
% For each $i\in [2n]$, 
% prepare a uniform superposition over randomness for SFE and 
% run the first-message-generation algorithm of the SFE under CRS $\SFE.\crs_i$ in superposition, where the second register of $\psi_i$ is used as a message, and measure the ciphertext $\SFE.\ct_i$.  
% This results in the following state:
%     \begin{itemize}
%     \item If $i\in S$, $\ket{\psi''_i}=\ket{b_i}\ket{x_i}\ket{\st_i}$, where $\st_i$ is the stae corresponding $(x_i,\SFE.\ct_i)$. 
%     \item If $i\notin S$, $\ket{\psi''_i}=\ket{0}\ket{x_{i,0}}\ket{\st_{i,0}}+\ket{1}\ket{x_{i,1}}\ket{\st_{i,1}}$, where $\st_{i,b}$ is the state corresponding to $(x_{i,b},\SFE.\ct_{i})$. 
%     \end{itemize}

\item Send $\{y_i,\SFE.\msg^{(1)}_i\}_{i\in [2n]}$ to the lessor. 
\end{itemize}

{\bf Lessor's operation:}
Given the message $\{y_i,\SFE.\msg^{(1)}_i\}_{i\in [2n]}$ from the lessee, the lessor proceeds as follows.
\begin{itemize}

% \item For $i\in S$, run $\NTCF.\Invert(\NTCF.\td_i, y_i) \to x_i$ and $\SFE.\Extract(\SFE.\td_i, \SFE.\ct_i) \to (x'_i,\st_i)$.
%  Set $\valid =1$ if $x_i = x'_i \neq \bot$ holds for all $i\in S$. Otherwise, set $\valid =0$.\footnote{Even if the lessor finds that the lessee is cheating at this point (i.e., $\valid=0$), it continues with the protocol and defers catching the lessee until the deletion verification algorithm is executed. This point is important when we convert the protocol into a non-interactive one. See \cref{rem:non-interactive-version} for further discussion.
%  } 
\item 
For $i\in [2n]$, define $C[\WPKE.\msk_i,y_i]$ to be a circuit that takes $x$ as input and computes the following:
\begin{align}
C[\WPKE.\msk_i,y_i](x)=
\begin{cases} 
\WPKE.\Mark(\WPKE.\msk_i, x) & \text{if $\NTCF.\Chk(\NTCF.\pp_i, x, y_i) = \top$} \\ 
\bot & \text{if $\NTCF.\Chk(\NTCF.\pp_i, x, y_i) = \bot$}. \\ 
\end{cases}
\end{align}
Without loss of generality, we assume that the size of $C[\WPKE.\msk_i,y_i]$ does not depend on $\WPKE.\msk_i$ and $y_i$, and denote it by $|C|$.

%\footnote{Recall that we require $\WPKE.\Mark$ to be deterministic.} 
\item 
For $i\in [2n]$, run $\SFE.\Send( \SFE.\crs_i, \SFE.\msg^{(1)}_i, C[\WPKE.\msk_i,y_i] ) \rrun \SFE.\msg^{(2)}_i$.
\item Send $\{\SFE.\msg^{(2)}_{i} \}_{i\in [2n]}$ to the lessee.
\end{itemize}

{\bf Lessee's second operation:}
Given the message $\{\SFE.\msg^{(2)}_{i} \}_{i\in [2n]}$, the lessee proceeds as follows.
It uses the secret state $\ket{\psi'_i}$ in the following.
\begin{itemize}
\item Construct the unitary $U[\SFE.\crs_i, \SFE.\msg_i^{(2)}]$ defined by the following map:
\[
\ket{x}\ket{\SFE.\st} \ket{0^{\lenmsg}} \mapsto 
\ket{x}\ket{\SFE.\st} \ket{ \SFE.\Receive{2}(\SFE.\crs_i, x, \SFE.\st, \SFE.\msg_i^{(2)}) }.
\]
%where $\lenmsg$ is the length of the decryption key of $\WPKE$.\footnote{Recall that we assumed that $\SFE.\Receive{2}$ is a deterministic algorithm.}

\item 
For $i\in [2n]$, apply the unitary to compute 
\[
\ket{\phi_i}\defeq U[\SFE.\crs_i, \SFE.\msg_i^{(2)}]\left( \ket{\psi'_i}\ket{0^{\lenmsg}} \right).  
\]

% \item 
% For $i\in [2n]$, run the output derivation algorithm of SFE in superposition to obtain
% $\ket{\psi'_i}$, where:
%     \begin{itemize}
%     \item If $i\in S$, $\ket{\psi'_i}=\ket{b_i}\ket{x_i}\ket{\st_i}\ket{\sk_i(x_i)}$, 
%     \item If $i\notin S$, $\ket{\psi'_i}=\ket{0}\ket{x_{i,0}}\ket{\st_{i,0}}\ket{\sk_i(x_{i,0})}+\ket{1}\ket{x_{i,1}}\ket{\st_{i,1}}\ket{\sk_i(x_{i,1})}$.
%     \end{itemize}
 \end{itemize}

{\bf Final output:}
At the end of the protocol, 
the lessee privately obtains a decryption key 
\[
\qdk=\bigotimes_{i\in[2n]}\ket{\phi_i}\ket{\SFE.\crs_i}
\ket{\SFE.\msg^{(2)}_{i}}.\footnote{$\SFE.\crs_i$ and $\SFE.\msg^{(2)}_{i}$ are not needed for decryption. They are used only for deletion.}\] 

We write $\tau$ to denote the transcript of the execution of $\qIntKeyGen$.

\item[$\Enc(\ek,m):$]
Upon receiving $\ek=\{\WPKE.\ek_i\}_{i\in [2n]}$ and a message $m=m_1\|m_2\|\ldots \|m_{2n} \in \bin^{2n\lenm}$, 
generate $\WPKE.\ct_i \la \WPKE.\Enc(\WPKE.\ek_i,m_i)$ for $i\in[2n]$ and output $\ct=\{\WPKE.\ct_i\}_{i\in [2n]}$.
\item[$\Dec(\msk,\ct):$]
Given $\msk = \{ \WPKE.\msk_i \}_{i\in [2n]}$ and a ciphertext $\ct=\{\WPKE.\ct_i\}_{i\in [2n]}$, the decryption algorithm runs 
$\WPKE.\Dec(\WPKE.\msk_i, \WPKE.\ct_i) =m_i$ for $i\in [2n]$ and outputs $m=m_1\|m_2\|\ldots \|m_{2n}$. 
\item[$\qDec(\qdk,\ct)$:]
Given $\qdk = \bigotimes_{i\in[2n]}\ket{\phi_i}\ket{\SFE.\crs_i}
\ket{\SFE.\msg^{(2)}_{i}}$ and a ciphertext $\ct=\{\WPKE.\ct_i\}_{i\in [2n]}$, the decryption algorithm performs the following for each $i\in [2n]$.
\begin{itemize}
    \item Construct the unitary $V[\WPKE.\ct_i]$ defined by the following map:
    \[
    \ket{x}\ket{\SFE.\st}\ket{\WPKE.\dk}\ket{0^{\lenm}}
    \mapsto 
    \ket{x}\ket{\SFE.\st}\ket{\WPKE.\dk}\ket{ \WPKE.\Dec(\WPKE.\dk,\WPKE.\ct_i) }.
    \]
%    where $\lenm$ is the length of the message space of $\WPKE$.
    \item Compute $V[\WPKE.\ct_i] \left( \ket{\phi_i} \ket{0^{\lenm}} \right)$
     and measure the last $\lenm$ qubits to obtain $m_i$.
%     Let the residual state be $\ket{\phi'_i} $. 
\end{itemize}
It finally outputs $m=m_1\|m_2\|\ldots \|m_{2n}$.
%and $\qdk' = \bigotimes_{i\in[2n]} \ket{\phi'_i} \ket{\SFE.\crs_i}  \ket{\SFE.\msg^{(2)}_{i}}$. 

\item[$\Del(\qdk)$:]
Given $\qdk=\bigotimes_{i\in[2n]}\ket{\phi_i}\ket{\SFE.\crs_i}
\ket{\SFE.\msg^{(2)}_{i}}$, the deletion algorithm performs the following for each $i\in [2n]$.
\begin{itemize}
    \item Measure the corresponding registers to recover the classical strings $\SFE.\crs_i$ and $\SFE.\msg^{(2)}_{i}$.
    \item Compute $ U[\SFE.\crs_i,\SFE.\msg_i^{(2)}]^\dagger \ket{\phi_i} $
    and remove the last $\lenmsg$ qubits to obtain $\ket{\psi''_i}$.
    \item Measure $\ket{\psi''_i}$ in Hadamard basis to obtain $(d_i,c_i)$, where each $d_i\in \bin^{\lenx} $ is obtained by measuring the register corresponding to an element in $\cX$
and each $c_i \in \bin^{\lenst \lenx }$ is obtained by measuring the register corresponding to the SFE state.
\end{itemize}
The final output of the algorithm is given by $\cert = \{d_i, c_i\}_{i \in [2n]}$,

\item[$\DelVrfy(\tau,\dvk,\cert)$:]
Given the 
transcript $\tau$ of $\qIntKeyGen$, 
the deletion verification key $\dvk=\left( S, \{ \NTCF.\td_i ,  \SFE.\td_i\}_{i\notin S} \right)$,  and the deletion certificate $\cert = \{d_i, c_i\}_{i \in [2n]}$, 
the deletion verification algorithm works as follows.
\begin{itemize}
    \item Retrieve $\{y_i,\SFE.\msg^{(1)}_i\}_{i\in \overline{S}}$ from $\tau$. 
    \item For $i\in \overline{S}$, run $\NTCF.\Invert(\NTCF.\td_i, y_i) \to (x_{i,0},x_{i,1})$. 
    If the inversion fails for any of $i$, output $\bot$.
    \item For $i\in \overline{S}$, run $\SFE.\StaRcv(\SFE.\td_i, \SFE.\msg^{(1)}_i) \to \{ \alpha_{i,j}^{b} \}_{j\in [\lenx], b\in \bin } $, where $\alpha_{i,j}^{b} \in \bin^{\lenst }$. 
    \item For $i\in \overline{S}$, compute the vector $d^*_i = (d^*_i[1],\ldots, d^*_i[\lenx]) \in \bin^\lenx$ as 
    \[
    d^*_i[j] \seteq 
    \langle c_{i,j} , (\alpha^0_{i,j} \oplus \alpha^1_{i,j})\rangle ,
    \]
    where $c_i$ is parsed as $c_i = c_{i,1}\| \cdots \| c_{i,\lenx} $, where each block is in $\bin^\lenst$.
    \item For $i\in \overline{S}$, check whether 
    \[
    \langle d_i \oplus d^*_i , (x_{i,0}\oplus x_{i,1}) \rangle
    =0
    ~
    \land
    ~
    \NTCF.\GoodSet(\NTCF.\td_i, y_i, d_i \oplus d^*_i  )=\top
    \]
    holds. 
    If it holds for all $i\in \overline{S}$, it outputs $\top$.
    Otherwise, it outputs $\bot$.
\end{itemize}

% For $i\in S$, recover the corresponding message to $\SFE.\ct_i$ using the trapdoor $\SFE.\td_i$,\footnote{This is possible since the CRSs of SFE on $i\in S$ is generated in the extraction mode.} and checks that it is equal to $x_i$.
% %
% Given $\vk=(x_{i,0},x_{i,1})_{i\notin S}$ and $\cert=(e_i,d_i,c_i)_{i\in [2n]}$,
% for each $i\notin S$, 
% accepts iff
% $$e_i= d_i\cdot(x_{i,0}\oplus x_{i,1})\oplus c_i\cdot(\st_{i,0}\oplus \st_{i,1})$$ and $d_i\ne d^*_i$ for all $i\notin S$, where $d^*_i$ is defined later. 
% %
\end{description}
The following theorem asserts the decryption correctness and the deletion verification correctness of the above scheme.
\begin{theorem}\label{th:pke-correctness}
The above construction satisfies the decryption correctness if 
$\NTCF$ satisfies the state generation correctness, $\SFE$ satisfies evaluation correctness, and $\WPKE$ satisfies the decryption correctness.
Furthermore, the above construction satisfies the deletion verification correctness if $\NTCF$ satisfies the state generation correctness, $\SFE$ satisfies the statistical security against malicious senders,
and $\SFE$ supports state recovery in the hiding mode.
\end{theorem}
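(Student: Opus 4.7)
The plan is to handle the two correctness properties separately, since decryption correctness essentially reduces to the correctness of the underlying building blocks applied in sequence, whereas deletion verification correctness requires a more careful analysis of the Hadamard-basis measurement on the compound state.

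For decryption correctness, I would first dispense with $\Dec(\msk,\ct)=m$: since $\Dec$ simply invokes $\WPKE.\Dec(\WPKE.\msk_i, \WPKE.\ct_i)$ on each block, this follows directly from the correctness of $\WPKE$. For $\qDec(\qdk,\ct)=m$, I would trace the state produced by $\qIntKeyGen$. By NTCF state-generation correctness, the state $\ket{\psi_i}$ paired with $y_i$ is, up to negligible trace distance, supported only on preimages $x$ with $\NTCF.\Chk(\NTCF.\pp_i, x, y_i)=\top$. Applying $\SFE.\qReceive{1}$ and its efficient state superposition property then yields a state that is a superposition over $\ket{x}\ket{\SFE.\st_{x\to\SFE.\msg^{(1)}_i}}$. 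On this support, the circuit $C[\WPKE.\msk_i, y_i]$ evaluates to the watermarked decryption key $\WPKE.\Mark(\WPKE.\msk_i, x)$, and by SFE evaluation correctness the unitary $U[\SFE.\crs_i, \SFE.\msg_i^{(2)}]$ coherently writes this value on the last register. Applying $V[\WPKE.\ct_i]$ and measuring then yields $\WPKE.\Dec(\WPKE.\Mark(\WPKE.\msk_i,x),\WPKE.\ct_i)=m_i$ by the correctness of $\WPKE$. Since this measurement outcome is almost deterministic, the gentle measurement lemma ensures we can repeat it across all $i\in[2n]$ with only negligible accumulated disturbance.

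For deletion verification correctness, the key step is to analyze, for each $i\in\overline{S}$, the state $\ket{\psi''_i}$ obtained by uncomputing $U[\SFE.\crs_i,\SFE.\msg_i^{(2)}]$. By NTCF state-generation correctness in the two-to-one mode and SFE unique-state together with state-recoverability in the hiding mode, this state is negligibly close to
\[
\tfrac{1}{\sqrt{2}}\bigl(\ket{x_{i,0}}\ket{\st_{i,0}}+\ket{x_{i,1}}\ket{\st_{i,1}}\bigr),
\]
where $(x_{i,0},x_{i,1})=\NTCF.\Invert(\NTCF.\td_i,y_i)$ and $\st_{i,b}$ has the decomposable form $\alpha_{i,1}^{x_{i,b}[1]}\|\cdots\|\alpha_{i,\lenx}^{x_{i,b}[\lenx]}$ with the blocks $\{\alpha_{i,j}^{b}\}$ output by $\SFE.\StaRcv(\SFE.\td_i,\SFE.\msg^{(1)}_i)$. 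Measuring this state in the Hadamard basis yields $(d_i,c_i)$ satisfying $\langle d_i, x_{i,0}\oplus x_{i,1}\rangle \oplus \langle c_i, \st_{i,0}\oplus \st_{i,1}\rangle = 0$ with probability $1$. The next step is to leverage the block-wise decomposition of $\st_{i,b}$ to rewrite $\langle c_i,\st_{i,0}\oplus\st_{i,1}\rangle$ as $\langle d^*_i, x_{i,0}\oplus x_{i,1}\rangle$ with $d^*_i[j]=\langle c_{i,j},\alpha_{i,j}^{0}\oplus\alpha_{i,j}^{1}\rangle$, exactly as outlined in the technical overview. This gives $\langle d_i\oplus d^*_i, x_{i,0}\oplus x_{i,1}\rangle=0$ deterministically.

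The remaining obligation, and the main (albeit mild) obstacle, is to argue that $\NTCF.\GoodSet(\NTCF.\td_i,y_i,d_i\oplus d^*_i)=\top$ with overwhelming probability. Here I would exploit that, with respect to the Hadamard measurement outcome, $d_i$ is effectively uniform over $\bin^{\lenx}$, so $d_i\oplus d^*_i$ is also uniform and the NTCF property then guarantees $\GoodSet$ returns $\top$ except with negligible probability. A union bound over $i\in\overline{S}$ yields $\DelVrfy(\dvk,\cert)=\top$ with overwhelming probability. The only subtlety I expect is to justify carefully that conditioning on the (earlier) measurements that reveal $\SFE.\crs_i$ and $\SFE.\msg^{(2)}_i$ does not bias the Hadamard outcome, which follows because those registers are classical ancillas in $\qdk$ and are thus disentangled from $\ket{\psi''_i}$.
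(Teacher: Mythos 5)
Your treatment of decryption correctness matches the paper's: trace the state through $\NTCF.\StateGen$, $\SFE.\qReceive{1}$, the SFE evaluation, and $V[\WPKE.\ct_i]$, and apply the correctness of each building block; that part is fine.

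There is, however, a genuine gap in your deletion-verification argument. You claim that $\ket{\psi''_i}$ is negligibly close to the \emph{uniform} superposition $\tfrac{1}{\sqrt{2}}(\ket{x_{i,0}}\ket{\st_{i,0}}+\ket{x_{i,1}}\ket{\st_{i,1}})$, citing NTCF state-generation correctness, unique state, and state recoverability — and then conclude the parity relation $\langle d_i, x_{i,0}\oplus x_{i,1}\rangle \oplus \langle c_i,\st_{i,0}\oplus\st_{i,1}\rangle=0$ holds \emph{with probability $1$}. Neither step is justified by the properties you invoke. After the lessee measures $\SFE.\msg^{(1)}_i=z^*$ inside $\SFE.\qReceive{1}$, the residual state is
\begin{align}
q_{i,0,z^*}\ket{x_{i,0}}\ket{\st_{i,0,z^*}}+q_{i,1,z^*}\ket{x_{i,1}}\ket{\st_{i,1,z^*}},
\qquad
q_{i,b,z^*}=\sqrt{\frac{p_{i,b,z^*}}{p_{i,0,z^*}+p_{i,1,z^*}}},
\end{align}
where $p_{i,b,z^*}$ is the probability that $\SFE.\Receive{1}(\SFE.\crs_i,x_{i,b})$ outputs $z^*$. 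For a fixed $z^*$ these amplitudes need not be equal, and the Hadamard measurement violates the parity relation with probability $\tfrac{1}{2}(q_{i,0,z^*}-q_{i,1,z^*})^2$, which can be as large as $1/2$. The paper's proof bounds the failure probability \emph{averaged over $z^*$} by $\tfrac{1}{4}\sum_{z^*}|p_{i,0,z^*}-p_{i,1,z^*}|$, which is exactly (twice) the statistical distance between the first messages generated from $x_{i,0}$ and from $x_{i,1}$ — and this is negligible precisely because of the \emph{statistical security against malicious senders in the hiding mode}. Your proposal never uses this hypothesis, even though it appears explicitly in the theorem statement for deletion verification correctness; that omission is the tell-tale sign of the missing step. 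The remainder of your outline (the $\GoodSet$ argument via near-uniformity of $d_i$ conditioned on the parity constraint, and the block-wise rewriting of $\langle c_i,\st_{i,0}\oplus\st_{i,1}\rangle$ as $\langle d^*_i,x_{i,0}\oplus x_{i,1}\rangle$ using state recoverability) agrees with the paper.
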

\begin{proof}
To show the decryption correctness and deletion correctness, 
we first observe that all the algorithms run the subsystems indexed by $i\in [2n]$, which run in parallel. 
Therefore, it suffices to show that each subsystem succeeds in decryption (i.e., $m'_i = m_i$) and deletion verification (i.e., $(d_i,c_i)$ passes the verification) with overwhelming probability assuming that the lessee follows the protocol honestly.
%We note that this parallelization is required for the security proof and not necessary for the functionality.
Here, we show this for the case of $i\in \overline{S}$. The case of $i\in S$ is easier to handle, since in this case, $\ket{\phi_i}$ collapses to a classical state corresponding to $x_i = \NTCF.\Invert(\NTCF.\td_i, y_i)$ with overwhelming probability and repeating the same discussion as the following in its collapsed form suffices. 
%In the following argument, we fix $i\not\in S$ and drop the subscript $i$ when it is clear. Namely, we denote $y$, $\ket{\psi}$, $x_0$ rather than $y_i$, $\ket{\psi_i}$, and $x_{i,0}$ for example. 
%
%In the following, we drop the subscript $i$ to simplify the notation (e.g., writing $y$ and $\ket{\psi}$ instead of $y_i$ and $\ket{\psi_i}$), noting that the same reasoning applies to all of $i\in [2n]$.

We first analyze the quantum state that the lessee has right after the interactive key generation protocol ends. 
We observe that by the correctness of state generation of $\NTCF$, 
with overwhelming probability, $y_i \in \cY_\pp$.
In addition, by the same property, 
$\ket{\psi_i}$ is negligibly close to $\frac{1}{\sqrt{2}}(\ket{x_{i,0}} + \ket{x_{i,1}}) $ in the trace distance, with overwhelming probability,
where $(x_{i,0}, x_{i,1}) = \NTCF.\Invert(\NTCF.\td_i, y_i)$. 
In the following, we replace $\ket{\psi_i}$ with $\frac{1}{\sqrt{2}}(\ket{x_{i,0}} + \ket{x_{i,1}}) $ and continue the analysis, since this change alters the probability of the decryption/verification succeeds only with negligible amount.  
Then, $\SFE.\qReceive{1}$ is applied to the state.
In the first step, we obtain a state that is negligibly close to the state obtained by applying the isometry defined by the following map to $\frac{1}{\sqrt{2}}(\ket{x_{i,0}} + \ket{x_{i,1}}) $:
\[
\ket{x_{i,b}} 
\mapsto 
\sum_{\SFE.\msg^{(1)}_i \in \Supp( \Receive{1}(\crs_i, x_{i,b}) )} 
\sqrt{ p_{x_{i,b} \to \SFE.\msg_i^{(1)}} }
\ket{x_{i,b}} 
\ket{\SFE.\st_{x_{i,b}\to \SFE.\msg_i^{(1)}} }
\ket{\SFE.\msg_i^{(1)}},
\]
where $p_{x_{i,b} \to \SFE.\msg_i^{(1)}}$ is the probability that $\SFE.\Receive{1}(\SFE.\crs_i, x_{i,b})$ outputs $\SFE.\msg_i^{(1)}$ and $\SFE.\st_{x_{i,b}\to \SFE.\msg^{(1)}_i}$ is the unique state corresponding to $x_{i,b}$ and $\SFE.\msg^{(1)}_i$.
%\synote{Will mention the negligible error.}
%satisfying $(\msg^{(1)},\st) \la \Receive{1}(\crs,x; (\st, \ernd) ) $.
We again ignore this negligible difference and proceed with the analysis, as this difference affects the probability of decryption/verification succeeding only by a negligible amount. 
We then have the following state:
\[
\sum_{z\in \calZ} \sqrt{\frac{p_{i,0,z}}{2} }\ket{x_{i,0}}\ket{\st_{i,0,z}}\ket{z} + 
\sqrt{\frac{p_{i,1,z}}{2} } \ket{x_{i,1}}\ket{\st_{i,1,z}}\ket{z},
\]
where we simplify the notation by denoting $\SFE.\msg^{(1)}_i$ by $z$, 
$\SFE.\st_{x_{i,b}\to \SFE.\msg^{(1)}_i}$ by $\st_{i,b,z}$,
$ p_{x_{i,b}\to \SFE.\msg^{(1)}_i}$ by $p_{i,b,z}$, 
and $\Supp( \Receive{1}(\crs_i, x_{i,0}) )\cup \Supp( \Receive{1}(\crs_i, x_{i,1}) )$ by $\calZ$.
We then measure the last register to obtain $z^*$, where the residual state is 
\[
\ket{\psi'_i} = 
q_{i,0,z^*}\ket{x_{i,0}}\ket{\st_{i,0,z^*}} 
+
q_{i,1,z^*}\ket{x_{i,1}}\ket{\st_{i,1,z^*}}. 
\]
In the above, we define
$q_{i,b,z^*} \seteq \sqrt{ p_{i,b,z^*}/(p_{i,0,z^*} + p_{i,1,z^*} ) }$.
We note that the probability that we obtain $z^*$ by the measurement is $(p_{i,0,z^*}+p_{i,1,z^*})/2$.

We then prove the decryption correctness and the deletion verification correctness separately.
\begin{description}
\item[Decryption correctness:]
We continue the analysis of the state that the lessee obtains. 
After $z^*$ is sent to the lessor, $\SFE.\Send( \SFE.\crs_i, z^*, C[\WPKE.\msk_i,y_i] ) \rrun \SFE.\msg^{(2)}_i$ is sent back to the lessee.
We observe that by the evaluation correctness of $\SFE$, we have 
$\SFE.\Receive{2}(\SFE.\crs_i, x_{i,b}, \st_{i,b,z^*}, \allowbreak \SFE.\msg_i^{(2)}) 
=\WPKE.\dk(x_{i,b})$.
By the definition of $\ket{\phi_i}$, we have
\begin{align}
\ket{\phi_i}
& = 
\sum_{b\in \bin}
q_{i,b,z^*}
\ket{x_{i,b}}\ket{\st_{i,b,z^*}} \ket{ \SFE.\Receive{2}(\SFE.\crs_i, x_{i,b}, \st_{i,b,z^*}, \SFE.\msg_i^{(2)}) }
\nonumber
\\
&= 
\sum_{b\in \bin}
q_{i,b,z^*}
\ket{x_{i,b}}\ket{\st_{i,b,z^*}} \ket{ \WPKE.\dk_i(x_{i,b}) }.
\end{align}
After applying $V[\WPKE.\ct_i]$ to $\ket{\phi_i} \ket{0^{\lenm}}$, we get 
\begin{align}
&
\sum_{b\in \bin}
q_{i,b,z^*}
\ket{x_{i,b}}\ket{\st_{i,b,z^*}} \ket{ \WPKE.\dk_i(x_{i,b}) }
\ket{ 
 \WPKE.\Dec(\WPKE.\dk_i(x_{i,b}),\WPKE.\ct_i)
}
\\
&=
\sum_{b\in \bin}
q_{i,b,z^*}
\ket{x_{i,b}}\ket{\st_{i,b,z^*}} \ket{ \WPKE.\dk_i(x_{i,b}) }
\ket{ 
 m_i
}
\end{align}
by the decryption correctness of $\WPKE$.
We can therefore obtain the decryption result $m_i$ by measuring the last register.

\item[Deletion verification correctness:] 
We first observe that $\ket{\psi''_i} = \ket{\psi'_i}$, since $\ket{\psi''_i}$ is obtained by applying a unitary and then its conjugate to $\ket{\psi'_i}$. 
The rest of the proof involves 3 main steps: (i) We first show that the lessee obtains $(d_i,c_i)$ such that
\begin{align}\label{eq:checking-equality}
& \langle d_i , (x_{i,0} \oplus x_{i,1}) \rangle \oplus 
\langle c_i , (\st_{i,0,z^*} \oplus \st_{i,1,z^*}) \rangle = 0 
\end{align}
with overwhelming probability,
(ii) $\NTCF.\GoodSet(\NTCF.\td_i,y_i, d_i \oplus d^*_i ) = \top$ with overwhelming probability,
and then (iii) \cref{eq:checking-equality} implies $\langle d_i \oplus d^*_i, x_{i,0}\oplus x_{i,1}\rangle=0$. 

We first show (i). By applying the Hadamard gates to $\ket{\psi'_i}$, 
we obtain 
\begin{align}
&
\frac{1}{2^{L/2}} 
\sum_{(d_i,c_i)\in \bin^L }
\left(
q_{i,0,z^*} \, (-1)^{\langle d_i, x_{i,0}\rangle \oplus \langle c_i, \st_{i,0,z^*}\rangle}
+
q_{i,1,z^*} \, (-1)^{\langle d_i, x_{i,1}\rangle \oplus \langle c_i, \st_{i,1,z^*}\rangle}
\right)
\ket{d_i}\ket{c_i}
\\
=
\frac{1}{2^{L/2}} 
&\bigg(\sum_{(d_i,c_i)\in T} (-1)^{\langle d_i, x_{i,0}\rangle \oplus \langle c_i, \st_{i,0,z^*}\rangle} \left( q_{i,0,z^*} + q_{i,1,z^*} \right) \ket{d_i}\ket{c_i} \\ 
&\quad+ 
\sum_{(d_i,c_i)\not\in T} (-1)^{\langle d_i, x_{i,0}\rangle \oplus \langle c_i, \st_{i,0,z^*}\rangle} 
\left( q_{i,0,z^*} - q_{i,1,z^*}\right)\ket{d_i}\ket{c_i}\bigg),
\label{eq:sum-of-T-and-non-T}
\end{align}

where 
$L=\lenst\lenx +\lenx$ is the length of each string $(d_i,c_i)$ and
$T$ is the set of $(d_i,c_i)$ that satisfies \cref{eq:checking-equality}.
Note that since $x_{i,0} \neq x_{i,1}$, we have $|T|=2^{L-1}$.
Therefore, 
the probability that \cref{eq:checking-equality} does not hold,
where the probability is taken over the choice of $z^*$, is:
\begin{align}
\Pr[(d_i,c_i)\not\in T] &= 
\sum_{z^*\in \cZ}
\frac{p_{i,0,z^*}+p_{i,1,z^*}}{2}
\sum_{(d_i,c_i)\in T}
\frac{1}{2^L} (q_{i,0,z^*} - q_{i,1,z^*})^2
\\
&=
\frac{1}{4} \sum_{z^*\in \cZ}
(p_{i,0,z^*}+p_{i,1,z^*}) \cdot \frac{ (\sqrt{p_{i,0,z^*}}-\sqrt{p_{i,1,z^*}})^2 }{p_{i,0,z^*}+p_{i,1,z^*}}
\\
& \leq 
\frac{1}{4}\sum_{z^*\in \cZ}
|p_{i,0,z^*} - p_{i,1,z^*}|
\end{align}
where the second line follows from $|T|=2^{L-1}$ and by the definition of $q_{i,b,z^*}$ and
the third line holds since we have $|a-b|^2 \leq |a^2 - b^2|$ for any $a,b\geq 0$.
Finally, the last quantity is negligible by the statistical security of $\SFE$ against the malicious senders, since the statistical distance between the first output $\SFE.\msg^{(1)}$ of
$\SFE.\Receive{1}(\SFE.\crs_i, x_{i,0})$ and that of $\SFE.\Receive{1}(\SFE.\crs_i, x_{i,1})$
is $\frac{1}{2}\sum_{z^*\in \cZ} |p_{i,0,z^*} - p_{i,1,z^*}|=\negl(\secp)$.

We then prove (ii).
We first fix $y_i$ and $z^*$. These fix the values of $(x_{i,0},x_{i,1})$ and $\{\alpha_{i,j}^b\}_{j\in [\lenx], b\in \bin}$.
We then assume $(d_i,c_i)\in T$, since this happens with overwhelming probability as we saw in (i).
We then fix arbitrary $c_i$. This determines the value of $d^*_i$.
By \cref{eq:sum-of-T-and-non-T}, we can see that $d_i$ is distributed uniformly at random under the condition that $(c_i,d_i)\in T$. 
Furthermore, for any $c_i$, half of $d_i \in \bin^\lenx$ satisfies $(c_i, d_i) \in T$.
We therefore have 
\begin{align}
&
\Pr_{d_i \text{ s.t. } (d_i,c_i)\in T} \left[  \NTCF.\GoodSet(\NTCF.\td_i,y_i, d_i \oplus d^*_i ) = \bot \right] 
\\
= & 
2\cdot \Pr_{d_i \lrun \bin^w }\left[
\NTCF.\GoodSet(\NTCF.\td_i,y_i, d_i \oplus d^*_i ) = \bot
\right] 
\\
= & 
2\cdot \Pr_{d_i \lrun \bin^w }\left[
\NTCF.\GoodSet(\NTCF.\td_i,y_i, d_i ) = \bot 
\right] 
\\
= & \negl(\secp).
\end{align}

We finally prove (iii). 
Recall that by the state recovery property in the hiding mode of $\SFE$, we have 
\[
\st_{i,b,z^*} = \alpha_{i,1}^{x_{i,b}[1]}\| \cdots \|\alpha_{i,\lenx}^{x_{i,b}[\lenx]},
\]
where $x_{i,b}[j]$ is the $j$-th bit of $x_{i,b}$.
We therefore have
\begin{align}
\left\langle c_{i}, \st_{i,0,z^*} \oplus \st_{i,1,z^*} \right\rangle 
&= \sum_{j \in [\lenx]} \left\langle c_{i,j}, \alpha_{i,j}^{x_{i,0}[j]} \oplus \alpha_{i,j}^{x_{i,1}[j]} \right\rangle \\
&= \sum_{j \in [\lenx]} \left\langle c_{i,j}, (x_{i,0}[j] \oplus x_{i,1}[j]) (\alpha_{i,j}^0 \oplus \alpha_{i,j}^1) \right\rangle \\
&= \sum_{j \in [\lenx]} (x_{i,0}[j] \oplus x_{i,1}[j]) d_i^*[j] \\
&= \left\langle d_i^*, x_{i,0} \oplus x_{i,1} \right\rangle,
\end{align}
where the second equation follows because $\alpha_{i,j}^{x_{i,0}[j]} \oplus \alpha_{i,j}^{x_{i,1}[j]}$ equals to $0^\lenst$ if $x_{i,0}[j] = x_{i,1}[j]$
and $\alpha_{i,j}^{0} \oplus \alpha_{i,j}^{1}$ otherwise.
Therefore, \cref{eq:checking-equality} implies $\langle d_i^*\oplus d_i,  x_{i,0} \oplus x_{i,1} \rangle =0$.
\end{description}
This completes the proof of \cref{th:pke-correctness}.
\end{proof}
%
%
% \shota{The following remarks will be moved/removed/edited later.}
% \begin{remark}[On the number of rounds in key generation.]\label{rem:non-interactive-version}
% The above protocol has interactive key generation whereas existing constructions have non-interactive ones. However, this could be made non-interactive if we treat $\{y_i,\sfe.\ct_i\}_{i\in [2n]}$ as a public key and 
% the lessor's second message as part of a ciphertext. We described the protocol in the above way to follow our basic idea. Note that we do not know how to remove the interaction for PRF and signatures. 
% \end{remark}
% \begin{remark}[A simplification using OT]
% In the above construction, we rely on (dual-mode) SFE, which is constructed from (dual-mode) OT and garbled circuit. 
% However, we could actually remove the garbled circuit and directly use OT if we additionally observe that a specific instantiation of $\WPKE$ in \cref{sec:WPKE} has a decomposable watermarked key, i.e., $\WPKE.\dk(x)$ is of the form:
% \[
% \WPKE.\dk(x)=\PKE.\dk_{1,x[1]}\|\PKE.\dk_{2,x[2]}\|\ldots\|\PKE.\dk_{\lenx,x[\lenx]}
% \]
% where $\ell$ is the bit-length of $m$ and 
% $\WPKE.\msk=\{\PKE.\dk_{i,b}\}_{i\in [\lenx],b\in\bit}$.
% In this case, it suffices to send the secret keys of the underlying PKE via OT and the garbled circuit is not necessary.
% While this observation would significantly improve the concrete efficiency, this does not change the assumption, so we use stronger primitive of SFE in the above. 
% \end{remark}

\subsection{Security Proof}
\begin{theorem}\label{thm-proof-PKESKL}
If $\NTCF$ satisfies cut-and-choose adaptive hardcore bit security, 
$\SFE$ satisfies the mode-indistinguishability and the extractability in the extraction mode, and $\WPKE$ satisfies one-wayness and parallel mark extractability, 
then our construction is OW-VRA secure.  
%\shota{Make it more precise.}
\end{theorem}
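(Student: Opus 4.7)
The plan is to prove the contrapositive: given a QPT adversary $\qA$ that wins $\expb{\PKESKL,\qA}{ow}{vra}(1^\secp)$ with non-negligible probability $\epsilon$, I will build a QPT adversary $\qB$ that wins $\expc{\NTCF,\qB}{cut}{and}{choose}(1^\secp,n)$ with probability $\epsilon-\negl(\secp)$, contradicting the cut-and-choose adaptive hardcore bit property of $\NTCF$. The reduction $\qB$ will embed its $2n$ NTCF instances $\{\NTCF.\pp_i\}_{i\in[2n]}$ into the encryption key by generating all remaining ingredients ($\SFE$ CRSs, $\WPKE$ keys, the secret subset $S$) on its own, perfectly simulating the first phase of the PKE-SKL game (in particular, $\qB$ itself plays the lessor in $\qIntKeyGen$ since it possesses all $\WPKE$ master secret keys). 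The first ``cut'' message $\{y_i,d_i\}_{i\in[2n]}$ expected by the cut-and-choose challenger will be extracted from the deletion certificate, and the ``choose'' response $\{x_i\}_{i\in S}$ will be extracted from the post-certificate residual state of $\qA$ using the parallel mark extractability of $\WPKE$.

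The core of the proof is a short sequence of hybrids that transforms the real game into one where the marks $\{x_i\}_{i\in S}$ can be read out. First, by mode indistinguishability of $\SFE$ I switch every $\SFE.\crs_i$ with $i\in S$ from the hiding mode to the extraction mode (this is the only step that requires $i\in S$ to be chosen so that the corresponding NTCF is injective, which is already the case by construction). Second, for each such $i\in S$, I replace $\SFE.\msg^{(2)}_i \la \SFE.\Send(\SFE.\crs_i,\SFE.\msg^{(1)}_i,C[\WPKE.\msk_i,y_i])$ with $\SFE.\msg^{(2)}_i\la \SFE.\Sim(\SFE.\crs_i,\SFE.\msg^{(1)}_i,C[\WPKE.\msk_i,y_i](x_i^{\mathsf{ext}}))$ where $x_i^{\mathsf{ext}}\la \SFE.\Extract(\SFE.\td_i,\SFE.\msg^{(1)}_i)$; by the extractability of $\SFE$ this change is computationally undetectable. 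After this step the second message depends on $\WPKE.\msk_i$ only through a single value $\WPKE.\Mark(\WPKE.\msk_i,x_i^{\mathsf{ext}})$ (and is $\bot$ when $\NTCF.\Chk$ rejects), which is exactly the shape needed to invoke the parallel mark extractability of $\WPKE$.

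Conditioned on $\qA$ producing an accepting certificate $\cert=\{d_i,c_i\}_{i\in[2n]}$ and subsequently winning the one-wayness game with probability $\epsilon$, I argue that $\qB$ succeeds in both coordinates of the cut-and-choose game. For the $\overline{S}$ part: by the deletion verification correctness analysis already performed in the proof of \cref{th:pke-correctness}, the equation $\langle d_i\oplus d^*_i, x_{i,0}\oplus x_{i,1}\rangle = 0$ with $\NTCF.\GoodSet(\NTCF.\td_i,y_i,d_i\oplus d_i^*)=\top$ is implied by $\DelVrfy(\dvk,\cert)=\top$, which is precisely what the cut-and-choose challenger checks with $\hat d_i:=d_i\oplus d_i^*$ (note that $d_i^*$ is computable by $\qB$ using $\SFE.\td_i$, so $\qB$ forwards $\{y_i,\hat d_i\}_{i\in \overline{S}}$). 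For the $S$ part: after the certificate is accepted, the residual state of $\qA$ together with the challenge ciphertext $\ct^*=\{\WPKE.\ct_i\}_{i\in[2n]}$ forms a parallel distinguisher/inverter for the $\WPKE$ instances at positions $i\in S$ with success probability $\epsilon$, since these positions embed the marks $x_i^{\mathsf{ext}}$ (which equal the unique preimages $x_i$ of $y_i$ in injective mode by correctness of $\SFE.\Extract$ on honestly generated $\SFE.\msg^{(1)}_i$). Applying the $\WPKE$ parallel extractor to this residual state yields $\{x_i\}_{i\in S}$ with probability $\epsilon-\negl(\secp)$, which $\qB$ sends as its final message to the cut-and-choose challenger and succeeds.

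The main obstacle I foresee is a subtle one: after the switch to extraction mode at positions $i\in S$, the value $x_i^{\mathsf{ext}}$ extracted from an honestly generated $\SFE.\msg^{(1)}_i$ must coincide with the preimage $x_i$ that $\NTCF$ commits $y_i$ to, since only then do the marks embedded in the ciphertexts align with the preimages the cut-and-choose challenger expects. This requires a careful analysis of the joint distribution produced by $\SFE.\qReceive{1}$ (which prepares $\SFE.\msg^{(1)}_i$ coherently from $\ket{\psi_i}$) and $\SFE.\Extract$; one handles it by noting that on honest first messages $\SFE.\Extract$ inverts the receiver, so at injective-mode positions the state $\ket{\phi_i}$ is a standard basis encoding of $(x_i,\st_{i,x_i},\WPKE.\dk_i(x_i))$, exactly the form the $\WPKE$ parallel-extraction game expects. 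A secondary subtlety is that the adversary's residual state after producing $\cert$ is a quantum state on which the $\WPKE$ parallel extractor must be run; this is exactly why we defined the extractor non-black-box-on-the-quantum-part in \cref{sec:WPKE}, passing $(U_\qA,\rho_\qA)$ rather than only oracle access.
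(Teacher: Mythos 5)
Your high-level strategy matches the paper's: use SFE extractability to replace the sender message at positions $i\in S$ by a simulation depending only on the mark, use parallel mark extractability of $\WPKE$ to pull the preimages $\{x_i\}_{i\in S}$ out of the post-certificate decryptor, convert the certificate into the Hadamard-basis equations at $\overline{S}$ via $\SFE.\StaRcv$, and feed both halves to the cut-and-choose adaptive hardcore bit game. (One small confusion: you say you switch $\SFE.\crs_i$ for $i\in S$ \emph{from hiding to extraction mode}; in the construction these positions are already in extraction mode, so no such hybrid is needed.) However, there is a genuine gap in your final reduction. You have $\qB$ generate ``the secret subset $S$ on its own,'' keep the SFE at positions $i\in S$ in extraction mode, run $\SFE.\Sim$ there, run $\DelVrfy$ itself, and forward only $\{y_i,\hat d_i\}_{i\in\overline{S}}$. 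But in $\expc{\NTCF,\qB}{cut}{and}{choose}$ the subset $S$ is chosen by the \emph{challenger}, determines the modes of the $\pp_i$ handed to $\qB$, and is revealed only \emph{after} $\qB$ commits to $\{y_i,d_i\}_{i\in[2n]}$ for all $2n$ positions; the NTCF trapdoors for $\overline{S}$ are likewise withheld until then. So $\qB$ cannot align its SFE modes with the (hidden) NTCF modes, cannot decide at which positions to simulate, and cannot run the NTCF-inversion part of $\DelVrfy$. This is not a presentational issue: if the reduction knew $S$ in advance the cut-and-choose property would be trivially breakable, so any correct reduction must be $S$-oblivious. The paper resolves this with a second half of the hybrid sequence ($\Hyb_5$--$\Hyb_8$): after the mark-extraction winning condition is in place, it \emph{undoes} the simulation (back to $\SFE.\Send$ everywhere), switches \emph{all} SFE CRSs to hiding mode via mode indistinguishability, and computes $d_i^*$ for all $i\in[2n]$, so that the final game can be simulated knowing neither $S$ nor any NTCF trapdoor, with the $\overline{S}$-verification delegated to the cut-and-choose challenger.

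A second, smaller gap: you never use the one-wayness hypothesis on $\WPKE$, yet it is needed. A malicious lessee can send $(y_i,\SFE.\msg^{(1)}_i)$ at a position $i\in S$ such that the extracted $x_i$ fails $\NTCF.\Chk$, in which case $C[\WPKE.\msk_i,y_i](x_i)=\bot$ and the mark extracted by $\WPKE.\qExt$ at that position is meaningless, so there is no guarantee the cut-and-choose challenger's check $\Chk(\pp_i,x'_i,y_i)=\top$ passes. The paper inserts an abort at such positions ($\Hyb_2$) and argues, by one-wayness of $\WPKE$, that an adversary receiving $\bot$ at position $i$ cannot recover the random $m_i$ and hence cannot win the OW game, so the abort changes the winning probability only negligibly. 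Your remark that $\SFE.\Extract$ recovers the honest preimage ``on honestly generated $\SFE.\msg^{(1)}_i$'' does not cover this adversarial case.
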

\begin{proof}
    We prove the theorem by a sequence of hybrids.
    For the sake of the contradiction, let us assume an adversary $\A$ that wins the OW-VRA security game with non-negligible probability $\epsilon$.
    In the following, the event that $\qA$ wins in $\Hyb_{\rm xx}$ is denoted by $\event_{\rm xx}$.

\begin{description}
  \item[$\Hyb_0$:] This is the original OW-VRA security game between the challenger and an adversary $\A$.  
  Namely, the game proceeds as follows.
\begin{enumerate}
    \item \label{Step:pke-proof-setup}
    The challenger computes $\ek$ as follows.
    \begin{itemize}
    \item It samples $(\WPKE.\ek_i, \WPKE.\msk_i)$ for $i\in [2n]$.
        \item It chooses random subset $S$ of $[2n]$ with size $n$.
        \item It runs $(\NTCF.\pp_i,\NTCF.\td_i) \lrun \NTCF.\FuncGen(1^\lambda, \mode)$ and $(\SFE.\crs_i,\SFE.\td_i) \lrun \SFE.\CRSGen(1^\secp, \mode)$ for $i\in [2n]$, where $\mode =1$ if $i\in S$ and $\mode =2$ if $i\in \overline{S}$.
    \end{itemize}
    Finally, it sends $\ek= \{\NTCF.\pp_i,\SFE.\crs_i, \WPKE.\ek_i \}_{i\in [2n]}$ to $\A$.
    \item The adversary $\A$ sends $\{y_i,\SFE.\msg^{(1)}_i\}_{i\in [2n]}$ to the challenger.
    \item \label{inner:pke-second-msg-to-adv}
    The challenger computes the message to the adversary as follows. 
    It first runs 
    \[\SFE.\Send(\SFE.\crs_i, \SFE.\msg^{(1)}_i, C[\WPKE.\msk_i,y_i] ) \rrun \SFE.\msg^{(2)}_i\] 
    for $i\in [2n]$.
    Then, it sends $\{\SFE.\msg^{(2)}_{i} \}_{i\in [2n]}$ to $\A$.
    
    \item Then, $\A$ submits the deletion certificate 
    $\cert = \{d_i, c_i\}_{i \in [2n]} $.

    \item Then, the challenger runs $\DelVrfy$ as follows.
    \label{Step:Del-verify-in-pke-proof}
    \begin{itemize}
%    \item If $\valid = 0$, it outputs $\bot$. 
%   
    \item For $i\in \overline{S}$, it runs $\SFE.\StaRcv(\SFE.\td_i, \SFE.\msg^{(1)}_i) \to \{ \alpha_{i,j}^{b} \}_{j\in [\lenx], b\in \bin } $. 
    \item For $i\in \overline{S}$, it computes 
    $d^*_i = (d^*_i[1],\ldots, d^*_i[\lenx]) \in \bin^\lenx$ as 
    $
    d^*_i[j] \seteq \langle c_{i,j}, (\alpha^0_{i,j} \oplus \alpha^1_{i,j})\rangle
    $.  
    \item For $i\in \overline{S}$, it runs $(x_{i,0},x_{i,1}) \lrun \NTCF.\Invert(\NTCF.\td_i, y_i) $.
    If the inversion fails for any of $i\in \overline{S}$, the output of $\DelVrfy$ is $\bot$.
    \item For $i\in \overline{S}$, it checks whether 
    \begin{equation}\label{eq:PKE-deletion-verify-in-the-proof}
    ( d_i \oplus d^*_i ) \cdot(x_{i,0}\oplus x_{i,1})=0
    ~
    \land
    ~
    \NTCF.\GoodSet(\NTCF.\td_i,y_i, d_i \oplus d^*_i ) = \top
    \end{equation}
    holds using $\NTCF.\td_i$. 
    If \cref{eq:PKE-deletion-verify-in-the-proof} holds for all $i\in \overline{S}$, the output is $\top$ and otherwise, $\bot$.
    \end{itemize}
    If the output of the algorithm is $\bot$, then it indicates that the output of the game is $0$ (i.e., $\A$ failed to win the game). 
    Otherwise, the challenger returns 
    $\dvk=\left( S, \{ \NTCF.\td_i ,  \SFE.\td_i\}_{i\in \overline{S}} \right)$ to $\A$
    and continues the game.
    
    \item \label{Step:PKE-step-6}
    Then, the challenger chooses a random message $m=m_1\|m_2\|\cdots \|m_{2n} \sample \bin^{2n\lenm}$ and computes $\WPKE.\ct_i \la \WPKE.\Enc(\WPKE.\ek_i,m_i)$ for $i\in [2n]$. Then, it sends $\ct=\{\WPKE.\ct_i\}_{i\in [2n]}$ to $\A$.
    \item \label{Step:PKE-step-7}
    $\A$ then outputs $m'=m'_1\| \cdots \|m'_{2n}$.
    The adversary wins the game if $m'=m$.
\end{enumerate}
By the definition, we have $\Pr[\event_{0}]= \epsilon$.
\item[$\Hyb_1$:]  
  In this hybrid, the challenger computes $\SFE.\msg^{(2)}_i$ differently for $i\in S$ in Step \ref{inner:pke-second-msg-to-adv} of the game. Namely, it is replaced with the following: 
\begin{enumerate}[label=\arabic*', start=3]
    \item
    \label{Step:PKE-step-3prime}
    The challenger computes the message to the adversary as follows. 
    \begin{itemize}
    \item It runs 
  $\SFE.\Extract(\SFE.\td_i, \SFE.\msg^{(1)}_i) \to x_i$ 
  and $\NTCF.\Chk(\NTCF.\pp_i, x_i, y_i) = \verdict_i$ for $i\in S$.
        
        \item It computes 
          \[
              \SFE.\msg^{(2)}_i \lrun 
              \begin{cases}
              \SFE.\Sim(\SFE.\crs_i, \SFE.\msg^{(1)}_i,1^{|C|},\gamma_i )
                  & \text{if }  i\in S \\
             \SFE.\Send(\SFE.\crs_i, \SFE.\msg^{(1)}_i, C[\WPKE.\msk_i,y_i] )
                  & \text{if } i\in \overline{S} 
              \end{cases},
        \]       
        where
  \[
  \gamma_i = 
  \begin{cases}
  \WPKE.\Mark(\WPKE.\msk_i, x_i)=
    \WPKE.\dk_i(x_i)  
      & \text{if }  \verdict_i = \top \\
    \bot 
      & \text{if } \verdict_i  = \bot. 
  \end{cases}
  \]
    \end{itemize}
    Finally, it sends $\{\SFE.\msg^{(2)}_{i} \}_{i\in [2n]}$ to $\A$.
\end{enumerate}

%  Note that by the change introduced in the previous hybrid, if the game proceeds to the point where $\SFE.\msg^{(2)}_i$ is computed, we have $x'_i = x_i$ for $i\in S$. 
  Since we have $\gamma_i = C[\WPKE.\msk_i,y_i](x_i)$, it follows that this game is indistinguishable from the previous game by a straightforward reduction to the extractability of $\SFE$ in the extraction mode. 
  Thus, we have $|\Pr[\event_{0}] - \Pr[\event_{1}]| = \negl(\secp)$.
  Note that in this game, $\WPKE.\msk_i$ is not used by the game for $i$ such that $i\in S$ and $\verdict_i = \bot$.
  
   \item[$\Hyb_2$:]  
    In this hybrid, the challenger aborts the game and sets the output of the game to be $0$ (i.e., $\qA$ loses the game), once it turns out that there is $i\in S$ such that $\NTCF.\Chk(\NTCF.\pp_i, x_i, y_i)=\bot$.
    Namely, we replace Step \ref{Step:PKE-step-3prime} in the previous hybrid with the following: 
\begin{enumerate}[label=\arabic*'', start=3]
    \item \label{Step:PKE-step-3''}
    The challenger computes the message to the adversary as follows. 
    \begin{itemize}
    \item It runs 
  $\SFE.\Extract(\SFE.\td_i, \SFE.\msg^{(1)}_i) \to x_i$ 
  and $\NTCF.\Chk(\NTCF.\pp_i, x_i, y_i) = \verdict_i$ for $i\in S$.
  \item If $\verdict_i = \bot$ for any of $i\in S$, it aborts the game and sets the output of the game to be $0$.
        \item It computes 
          \[
              \SFE.\msg^{(2)}_i \lrun 
              \begin{cases}
              \SFE.\Sim(\SFE.\crs_i, \SFE.\msg^{(1)}_i,1^{|C|}, 
              \WPKE.\dk_i(x_i)   )
                  & \text{if }  i\in S \\
             \SFE.\Send(\SFE.\crs_i, \SFE.\msg^{(1)}_i, C[\WPKE.\msk_i,y_i] )
                  & \text{if } i\in \overline{S}. 
              \end{cases}
        \]       
    \end{itemize}
    Finally, it sends $\{\SFE.\msg^{(2)}_{i} \}_{i\in [2n]}$ to $\A$.
\end{enumerate}

We claim $\Pr[\event_{2}]\geq  \Pr[\event_{1}] -\negl(\secp)$.
To show this, we observe that the only scenario where the adversary $\qA$ wins the game in the previous hybrid but not in the current one occurs when $\qA$ chooses $y_i$ such that $\verdict_i =\bot$ for some $i\in S$, yet still succeeds in inverting the ciphertext to get $\{m_j\}_{j\in [2n]}$. 
However, this implies that $\qA$ has inverted $\WPKE.\ct_i$ to obtain $m_i$ without getting any secret information corresponding to $\WPKE.\ek_i$.  
By a straightforward reduction to the one-wayness of $\WPKE$, the claim follows.

  \item[$\Hyb_3$:]  
  In this hybrid, we change the winning condition of the adversary. 
  Namely, we replace Step \ref{Step:PKE-step-6} and Step \ref{Step:PKE-step-7} of the game with the following. 
  \begin{enumerate}[label=\arabic*', start=6]
      \item
      The challenger parses the adversary $\qA$ right before Step~\ref{Step:PKE-step-6} of the game as a unitary circuit $U_\qA$ and a quantum state $\rho_\qA$.
      It then constructs a quantum decryption algorithm $\qD$ that takes as input $\{ \WPKE.\ct_i \}_{i\in S}$ as input and outputs $\{m'_i\}_{i\in S}$ 
      from $\{\WPKE.\ek_i\}_{i\in \overline{S}}$, $S$, and 
      $\A$ as follows.
     \begin{description}
         \item[$\qD( \{ \WPKE.\ct_i \}_{i\in S} )$:]
         It chooses $m_i \gets \cM$ for $i\in \overline{S}$ and computes $\WPKE.\ct_i \la \WPKE.\Enc(\WPKE.\ek_i,m_i)$ for $i\in \overline{S}$.
         Then, it runs $\A$ on input $ \{ \WPKE.\ct_i \}_{i\in [2n]} $ to obtain $m' = m'_1\|\cdots \| m'_{2n}$.
         Finally, it outputs $\{ m'_i\}_{i\in S}$.  
    \end{description}
     \item 
     The challenger parses
     $\qD$ as a pair of unitary $U_\qD$ and a quantum state $\rho_\qD$.
     Then, it runs the extraction algorithm as 
         $\WPKE.\qExt( \{\WPKE.\ek_i\}_{i\in S} , (U_\qD, \rho_\qD ) ) \to \{x'_i\}_{i\in S}$.
    \item \label{Step:PKE-step-8prime}
    If $x_i = x'_i$ holds for all $i\in S$, 
    the challenger sets the output of the game to be $1$. 
    Otherwise, it sets it to be $0$.
 \end{enumerate}
\ifnum\llncs=0
  As we show in \cref{lem:reduction-to-parallel-extractability}, 
  we have $\Pr[\event_{3}] \geq \Pr[\event_{2}]-\negl(\secp)$
  by the parallel extractability of $\WPKE$.
 \else
  As we show in \ifnum\cameraready=1 the full version, \else\cref{sec-omitted-proofs-PKESKL},\fi 
  we have $\Pr[\event_{3}] \geq \Pr[\event_{2}]-\negl(\secp)$
  by the parallel extractability of $\WPKE$.
 
 \fi
\item[$\Hyb_4$:]  
    In this game, we replace Step \ref{Step:PKE-step-8prime} of the above game with the following:
    \begin{enumerate}[label=\arabic*'', start=8]
        \item If $\NTCF.\Chk(\NTCF.\pp_i, x'_i, y_i)=\top$ holds for all $i\in S$, 
    the challenger sets the output of the game to be $1$. 
    Otherwise, it sets it to be $0$. 
    \end{enumerate}
    We observe that $x'_i = x_i$ in Step \ref{Step:PKE-step-8prime} implies $\NTCF.\Chk(\NTCF.\pp_i, x'_i, y_i) = \NTCF.\Chk(\NTCF.\pp_i, x_i, y_i) =\top$.
    To see the latter equality holds, recall that the game is set to abort if $\NTCF.\Chk(\NTCF.\pp_i, x_i, y_i) = \bot$ for any $i \in S$ 
    in Step \ref{Step:PKE-step-3''}. Thus, if the game continues to Step \ref{Step:PKE-step-8prime}, the equality holds.   
    We therefore have $\Pr[\event_{4}] \geq  \Pr[\event_{3}]$.
    (Actually, $\Pr[\event_{4}] = \Pr[\event_{3}]$ holds due to the uniqueness of $x_i$ that passes the check. However, showing the inequality suffices for our proof.)
\item[$\Hyb_5$:]  
    In this game, we switch Step \ref{Step:PKE-step-3''} in the previous hybrid back to Step \ref{Step:PKE-step-3prime}. 
    The difference from the previous hybrid is that it continues the game even if $\verdict_i = \bot$ for some $i\in S$.
    Since the adversary had no chance in winning the game in such a case in the previous hybrid, this change only increases the chance of the adversary winning. 
    We therefore have $\Pr[\event_{5}] \geq \Pr[\event_{4}]$.
  
\item[$\Hyb_6$:]  
    In this hybrid, we switch Step \ref{Step:PKE-step-3prime} in the previous hybrid back to Step \ref{inner:pke-second-msg-to-adv}.
    Namely, it computes $\SFE.\Send( \SFE.\crs_i, \allowbreak\SFE.\msg^{(1)}_i, C[\WPKE.\msk_i,y_i] ) \rrun \SFE.\msg^{(2)}_i$ for all $i\in [2n]$ again.
    Similarly to the change from $\Hyb_0$ to $\Hyb_1$, the winning probability of $\A$ in this game does not change significantly by the extractability of $\SFE$ in the extraction mode. %\shota{The name of the security notion should be fixed.}
    Namely, we have $|\Pr[\event_{6}] - \Pr[\event_{5}]| = \negl(\secp)$.
    Note that we do not need $\{\SFE.\td_i\}_{i\in S}$ any more to run the game.

\item[$\Hyb_7$:] 
    In this hybrid, we change the way $\SFE.\crs_i$ is sampled for $i\in S$.
    Namely, it is sampled as $(\SFE.\crs_i,\SFE.\td_i) \lrun \SFE.\CRSGen(1^\secp, 2)$ for all $i\in [2n]$. 
    Concretely, we replace Step \ref{Step:pke-proof-setup} with the following:
\begin{enumerate}[label=\arabic*']
    \item The challenger computes the first message to the adversary as follows.
    \begin{itemize}
    \item It samples $(\WPKE.\ek_i, \WPKE.\msk_i)$ for $i\in [2n]$.
        \item It chooses random subset $S$ of $[2n]$ with size $n$.
        \item It runs $(\NTCF.\pp_i,\NTCF.\td_i) \lrun \NTCF.\FuncGen(1^\lambda, \mode)$ and $(\SFE.\crs_i,\SFE.\td_i) \lrun \SFE.\CRSGen(1^\secp, 2)$ for $i\in [2n]$, where $\mode =1$ if $i\in S$ and $\mode =2$ if $i\in \overline{S}$.
    \end{itemize}
    Finally, it sends $\ek= \{\NTCF.\pp_i,\SFE.\crs_i, \WPKE.\ek_i\}_{i\in [2n]}$ to $\A$.
\end{enumerate}  
    Since we do not need $\SFE.\td_i$ for $i\in S$ to run $\Hyb_6$ and $\Hyb_7$, a straightforward reduction to the mode indistinguishability of $\SFE$ implies that $\Hyb_6$ and $\Hyb_7$ are computationally indistinguishable. 
    Therefore, we have $|\Pr[\event_{6}] - \Pr[\event_{7}]| = \negl(\secp)$.
    \item[$\Hyb_8$:] 
    In this game, we change the way how the challenger runs $\DelVrfy$.
    In particular, the challenger computes $\SFE.\StaRcv(\SFE.\td_i, \SFE.\msg^{(1)}_i) \to \{ \alpha_{i,j}^{b} \}_{j, b} $ and $d^*_i$ for all $i\in [2n]$, 
    instead of only for $i\in \overline{S}$.  
    Still, the inversion of $y_i$ and 
    the check of \cref{eq:PKE-deletion-verify-in-the-proof} are performed only for $i\in \overline{S}$. 
    Concretely, we replace Step \ref{Step:Del-verify-in-pke-proof} with the following.
\begin{enumerate}[label=\arabic*', start=5]
        \item Then, the challenger runs (the modified version of) $\DelVrfy$ as follows.
    \begin{itemize}
%    \item If $\valid = 0$, it outputs $\bot$. 
%   
    \item For $i\in [2n]$, it runs $\SFE.\StaRcv(\SFE.\td_i, \SFE.\msg^{(1)}_i) \to \{ \alpha_{i,j}^{b} \}_{j\in [\lenx], b\in \bin } $. 
    \item For $i\in [2n]$, it computes 
    $d^*_i = (d^*_i[1],\ldots, d^*_i[\lenx]) \in \bin^\lenx$ as 
    $
    d^*_i[j] \seteq \langle c_{i,j}, (\alpha^0_{i,j} \oplus \alpha^1_{i,j})\rangle
    $.  
    \item For $i\in \overline{S}$, it runs $(x_{i,0},x_{i,1}) \lrun \NTCF.\Invert(\NTCF.\td_i, y_i) $.
    If the inversion fails for any of $i\in \overline{S}$, the output of $\DelVrfy$ is $\bot$.
    \item For $i\in \overline{S}$, it checks whether 
    \begin{equation}\label{eq:PKE-deletion-verify-in-the-proof-2}
    ( d_i \oplus d^*_i ) \cdot(x_{i,0}\oplus x_{i,1})=0
    ~
    \land
    ~
    \NTCF.\GoodSet(\NTCF.\td_i,y_i, d_i \oplus d^*_i ) = \top
    \end{equation}
    holds using $\NTCF.\td_i$. 
    If \cref{eq:PKE-deletion-verify-in-the-proof-2} holds for all $i\in \overline{S}$, the output is $\top$ and otherwise, $\bot$.
    \end{itemize}
    If the output of the algorithm is $\bot$, then it indicates that the output of the game is $0$ (i.e., $\A$ failed to win the game). 
    Otherwise, the challenger returns 
    $\dvk=\left( S, \{ \NTCF.\td_i,  \SFE.\td_i \}_{i\in \overline{S}} \right)$ to $\A$
    and continues the game.
\end{enumerate}
    
    We can see that this change is only conceptual, since the outcome of the verification is unchanged.  
    We may therefore have
    $\Pr[\event_{8}] = \Pr[\event_{7}] $.
\end{description}
\ifnum\llncs=0
    Based on the above discussion, it is established that 
    $\Pr[\event_{8}] \geq \epsilon - \negl(\secp)$, which is non-negligible by our assumption.
    However, as we show in \cref{lem:PKE-reduction-to-ada-HC}, we have $\Pr[\event_{8}]=\negl(\secp)$ assuming cut-and-choose adaptive hardcore bit property of $\NTCF$.
    This results in a contradiction as desired.
\else
    Based on the above discussion, it is established that 
    $\Pr[\event_{8}] \geq \epsilon - \negl(\secp)$, which is non-negligible by our assumption.
    However, as we show in \ifnum\cameraready=1 the full version, \else \cref{sec-omitted-proofs-PKESKL},\fi we have $\Pr[\event_{8}]=\negl(\secp)$ assuming cut-and-choose adaptive hardcore bit property of $\NTCF$.
    This results in a contradiction as desired.
\fi
\end{proof}
\ifnum\llncs=0
It remains to prove \cref{lem:reduction-to-parallel-extractability} and \cref{lem:PKE-reduction-to-ada-HC}.
\ifnum\llncs=0
\else
\section{Omitted Proofs in Proof of \cref{thm-proof-PKESKL}}\label{sec-omitted-proofs-PKESKL}
\fi

\begin{lemma}\label{lem:reduction-to-parallel-extractability}
    If $\WPKE$ satisfies parallel extractability, we have $\Pr[\event_{3}] \geq \Pr[\event_{2}]-\negl(\secp)$. 
\end{lemma}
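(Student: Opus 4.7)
The plan is to reduce the gap between $\Hyb_2$ and $\Hyb_3$ to the parallel mark extractability of $\WPKE$. Concretely, I will construct a QPT reduction $\mathcal{B}$ playing the parallel one-wayness experiment of $\WPKE$ with parameter $n$, using the $\Hyb_2$ adversary $\mathcal{A}$ internally, show $\advb{\WPKE,\mathcal{B},n}{par}{ow}(\secp) \geq \Pr[\event_2]$, invoke parallel mark extractability to pass to the parallel extraction experiment losing only a negligible amount, and finally observe that the latter coincides with $\Hyb_3$.

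Internally choosing a uniformly random $S\subseteq[2n]$ of size $n$, the reduction $\mathcal{B}$ will relabel its challenger's $n$ encryption keys $\{\WPKE.\ek_i\}_{i\in[n]}$ as the keys for the positions in $S$ and generate the remaining $(\WPKE.\ek_i, \WPKE.\msk_i)$ for $i\in\overline{S}$ itself; the NTCF public parameters and SFE common reference strings are sampled in the appropriate modes determined by $S$ exactly as in $\Hyb_2$. After sending $\ek$ to $\mathcal{A}$ and receiving $\{y_i, \SFE.\msg^{(1)}_i\}_{i\in[2n]}$, the reduction runs $\SFE.\Extract$ and $\NTCF.\Chk$ on the $S$-positions to determine $x_i$ and $\verdict_i$; if some $\verdict_i=\bot$, it aborts as in $\Hyb_2$. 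Otherwise $\mathcal{B}$ forwards $\{x_i\}_{i\in S}$ to its OW challenger as the chosen marks, receives $\{\WPKE.\dk_i(x_i)\}_{i\in S}$, and uses them via $\SFE.\Sim$ to prepare the second-round SFE messages for the $S$-positions; the $\overline{S}$-positions are prepared honestly with $\SFE.\Send$ using the self-generated $\WPKE.\msk_i$. The deletion certificate from $\mathcal{A}$ is verified by $\mathcal{B}$ using the NTCF and SFE trapdoors it holds for $\overline{S}$; on failure $\mathcal{B}$ aborts, otherwise it hands $\dvk$ back to $\mathcal{A}$.

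When the OW challenger sends $\{\WPKE.\ct_i\}_{i\in S}$, $\mathcal{B}$ will complete the vector by encrypting self-chosen random messages under its own $\WPKE.\ek_i$ for $i\in\overline{S}$, feed the full vector to $\mathcal{A}$, and output the $S$-coordinates of $\mathcal{A}$'s answer. A straightforward inspection shows that the view of $\mathcal{A}$ inside $\mathcal{B}$ is identical to its view in $\Hyb_2$, so whenever $\mathcal{A}$ wins $\Hyb_2$---i.e., recovers all $2n$ messages---$\mathcal{B}$ also wins the parallel OW experiment, yielding $\advb{\WPKE,\mathcal{B},n}{par}{ow}(\secp) \geq \Pr[\event_2]$. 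Parallel mark extractability of $\WPKE$ then supplies a QPT extractor $\WPKE.\qExt$ with $\advb{\WPKE,\mathcal{B},\WPKE.\qExt,n}{par}{ext}(\secp) \geq \advb{\WPKE,\mathcal{B},n}{par}{ow}(\secp) - \negl(\secp)$.

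To finish, I will argue that the parallel extraction experiment for $\mathcal{B}$ is a restatement of $\Hyb_3$. Parsing $\mathcal{B}$ as a unitary circuit plus quantum state at the point just before it receives the OW challenge, its residual computation is precisely the quantum decryptor $\qD$ built in $\Hyb_3$ from $\mathcal{A}$ and $\{\WPKE.\ek_i\}_{i\in\overline{S}}$: both combine the challenge ciphertexts with self-generated ciphertexts for the $\overline{S}$-positions and read off the $S$-coordinates of $\mathcal{A}$'s output. Consequently, the extractor's output $\{x'_i\}_{i\in S}$ satisfies the parallel extraction winning condition iff it satisfies the winning condition of $\Hyb_3$, giving $\Pr[\event_3] = \advb{\WPKE,\mathcal{B},\WPKE.\qExt,n}{par}{ext}(\secp)$. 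Chaining the two inequalities yields the desired $\Pr[\event_3] \geq \Pr[\event_2] - \negl(\secp)$. The main subtlety to handle carefully will be this snapshot correspondence: all of $\mathcal{B}$'s classical bookkeeping (holding the $\overline{S}$-trapdoors, verifying $\cert$, routing $\dvk$) must be folded into the unitary $U_\mathcal{B}$ so that the extractor sees exactly the same internal state of $\mathcal{A}$ on which $\qD$ operates in $\Hyb_3$.
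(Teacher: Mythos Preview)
Your proposal is correct and follows essentially the same approach as the paper: construct a reduction $\mathcal{B}$ to the parallel one-wayness experiment that embeds the challenger's keys into the $S$-positions, self-generates the $\overline{S}$-positions, simulates $\Hyb_2$ perfectly for $\mathcal{A}$, and then observe that the parallel extraction experiment applied to $\mathcal{B}$ coincides with $\Hyb_3$ because $\mathcal{B}$'s residual behavior at the snapshot point is exactly the decryptor $\qD$. The paper's proof is structurally identical, including the final identification $\Pr[\event_3]=\advb{\WPKE,\mathcal{B},\qExt,n}{par}{ext}(\secp)$ via the same snapshot argument you flag as the main subtlety.
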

\begin{proof}
    We consider an adversary $\qB$ who plays the role of the challenger in $\Hyb_2$ for $\A$, while working as an adversary against the parallel one-way inversion experiment with the number of instances $n$. It proceeds as follows: 
    \begin{enumerate}
    \item Given $\{ \WPKE.\ek'_j \}_{j\in [n]}$ from the challenger, $\qB$ proceeds as follows. 
    \begin{itemize}
           \item $\qB$ chooses $S=\{i_1,\ldots, i_n\}\subset [2n]$ and samples $\{\NTCF.\pp_i,\NTCF.\td_i, \SFE.\crs_i, \SFE.\td_i \}_{i\in [2n]}$ as specified in $\Hyb_2$. 
           \item It sets $\WPKE.\ek_{i_j} \seteq \WPKE.\ek'_j$  for $j\in [n]$.
           \item It then samples $(\WPKE.\ek_{i}, \WPKE.\msk_{i}) \lrun \WPKE.\KG(1^\secp)$ for $i\in \overline{S}$.
           
           \item Then, it sends $\ek= \{\NTCF.\pp_i,\SFE.\crs_i, \WPKE.\ek_i \}_{i\in [2n]}$ to $\A$.
           \item Then, the adversary $\A$ sends $\{y_i,\SFE.\msg^{(1)}_i\}_{i\in [2n]}$ to $\qB$.
           \item $\qB$ then computes $x_i$ for $i\in S$ and $\verdict_i$ using the trapdoors for $\SFE$ and $\NTCF$ as in $\Hyb_2$. If $\verdict_i=\bot$ for any of $i\in S$, it aborts. 
        \item It then sets $\{x'_1,\ldots, x'_n\}\seteq \{x_{i_1},\ldots, x_{i_n}\}$ and sends it to its challenger.
       \end{itemize} 
       
        \item Given $\{\WPKE.\dk'_j(x'_{j})\}_{j\in [n]}$ from the challenger, $\qB$ proceeds as follows. 
        \begin{itemize}
        \item It sets $
        \WPKE.\dk_{i_j}(x_{i_j})\seteq \WPKE.\dk'_j(x'_{j})$ for $j\in [n]$.
         
        \item It computes $\SFE.\msg^{(2)}_i$ as follows: 
        \[
        \SFE.\msg^{(2)}_i \lrun 
        \begin{cases}
          \SFE.\Sim(\SFE.\crs_i, \SFE.\msg^{(1)}_i,1^{|C|}, \WPKE.\dk_i(x_i) ) & \text{if } i\in S \\
          \SFE.\Send(\SFE.\crs_i, \SFE.\msg^{(1)}_i, C[\WPKE.\msk_i, y_i] ) & \text{if } i\in \overline{S}.
        \end{cases}
        \]
        \item $\qB$ then sends $\{\SFE.\msg^{(2)}_{i} \}_{i\in [2n]}$ to $\qA$. 
        \item Then, $\A$ submits the deletion certificate $\cert$ to $\qB$.
        $\qB$ runs $\DelVrfy$ using the trapdoors for $\NTCF$ and $\SFE$.
        If the output is $\bot$, $\qB$ also outputs $\bot$ and aborts.
    Otherwise, $\qB$ returns 
    $\dvk=\left( S, \{ \NTCF.\td_i \}_{i\in \overline{S}} \right)$ to $\A$.
        \end{itemize}

        \item Then, the challenger sends $\{\WPKE.\ct'_j\}_{j\in [n]}$ to $\qB$.  
        \item $\qB$ computes its output as follows. 
        \begin{itemize}
        \item It sets $\WPKE.\ct_{i_j} \seteq \WPKE.\ct'_{j}$ for $j\in [n]$.
        \item It samples $m_i\lrun \cM$ and computes $\WPKE.\ct_i \lrun \WPKE.\Enc(\WPKE.\ek_i, m_i)$ for $i\in \overline{S}$.
        \item It then sends $\{\WPKE.\ct_{i} \}_{i\in [2n]}$ to $\A$.
        \item $\A$ output $(m'_1,\ldots, m'_{2n})$. 
        It then outputs $(m'_{i_1},\ldots, m'_{i_{n}})$ as its output.
        \end{itemize}
    \end{enumerate}
We can see that $\qB$ defined above perfectly simulates $\Hyb_2$ for $\A$ and $\qB$ wins the game if so does $\A$.
We therefore have $\advb{\WPKE,\qB,n}{par}{ow}(\secp) \geq \Pr[\event_2]$.
Furthermore, by the parallel extractability of $\WPKE$, we have 
$\advb{\WPKE,\qB,\qExt,n}{par}{ext}(\secp) \geq \advb{\WPKE,\qB,n}{par}{ow}(\secp) -\negl(\secp)$.
We can also see that $\Pr[\event_3]=\advb{\WPKE,\qB,\qExt,n}{par}{ext}(\secp)$ holds, because the way $\qB$ inverts the ciphertexts is exactly the same as the way $\qD$ does in $\hyb_3$.
By combining these equations, the lemma follows.
\end{proof}
\begin{lemma}\label{lem:PKE-reduction-to-ada-HC}
    If $\NTCF$ satisfies the cut-and-choose adaptive hardcore bit property, we have $\Pr[\event_8]=\negl(\lambda)$.
\end{lemma}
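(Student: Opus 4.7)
The plan is to construct a reduction $\cB$ against the cut-and-choose adaptive hardcore bit experiment that internally simulates $\hyb_8$ for $\qA$. The key observation enabling this is that in $\hyb_8$, all $\SFE$ CRSes are in the hiding mode (this was the purpose of switching to $\hyb_7$), and the NTCF parameters are distributed exactly as in the cut-and-choose experiment: injective mode for $i\in S$ and two-to-one mode for $i\in\overline{S}$. Moreover, in $\hyb_8$ the challenger never uses $\NTCF.\td_i$ for $i\in S$ when answering $\qA$, and only uses $\SFE.\td_i$ (which $\cB$ will generate itself) to compute $d^*_i$ via $\SFE.\StaRcv$.

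More concretely, $\cB$ proceeds as follows. First, $\cB$ receives $\{\NTCF.\pp_i\}_{i\in [2n]}$ from its challenger, generates $(\SFE.\crs_i,\SFE.\td_i)\lrun \SFE.\CRSGen(1^\secp,2)$ and $(\WPKE.\ek_i,\WPKE.\msk_i)\lrun \WPKE.\KG(1^\secp)$ for all $i\in [2n]$, and sends $\ek=\{\NTCF.\pp_i,\SFE.\crs_i,\WPKE.\ek_i\}_{i\in[2n]}$ to $\qA$. Upon receiving $\{y_i,\SFE.\msg^{(1)}_i\}_{i\in[2n]}$, $\cB$ simply replies with $\SFE.\msg^{(2)}_i\lrun \SFE.\Send(\SFE.\crs_i,\SFE.\msg^{(1)}_i,C[\WPKE.\msk_i,y_i])$ for every $i$, exactly as in $\hyb_8$. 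When $\qA$ submits a deletion certificate $\cert=\{d_i,c_i\}_{i\in[2n]}$, $\cB$ uses its self-generated $\{\SFE.\td_i\}$ to compute $d^*_i$ for every $i\in[2n]$ (as in the modified verification of $\hyb_8$), and then forwards $\{y_i,\,d_i\oplus d^*_i\}_{i\in[2n]}$ to its cut-and-choose challenger.

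If the challenger does not abort, it returns $S$ together with $\{\NTCF.\td_i\}_{i\in \overline{S}}$, which lets $\cB$ assemble $\dvk=(S,\{\NTCF.\td_i,\SFE.\td_i\}_{i\in \overline{S}})$ and hand it to $\qA$. At this point $\cB$ parses the residual $\qA$ as $(U_\qA,\rho_\qA)$, builds the quantum decryption algorithm $\qD$ exactly as in Step $6'$ of $\hyb_8$ (using $\{\WPKE.\ek_i\}_{i\in\overline{S}}$ it generated itself to synthesize filler ciphertexts), runs $\WPKE.\qExt(\{\WPKE.\ek_i\}_{i\in S},(U_\qD,\rho_\qD))\to \{x'_i\}_{i\in S}$, and finally submits $\{x'_i\}_{i\in S}$ to the cut-and-choose challenger. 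By construction the joint distribution of everything $\qA$ sees is identical to $\hyb_8$, so the event that $\qA$ is declared a winner in $\hyb_8$ coincides with the event that (i) $\NTCF.\GoodSet(\NTCF.\td_i,y_i,d_i\oplus d^*_i)=\top$ and $(d_i\oplus d^*_i)\cdot(x_{i,0}\oplus x_{i,1})=0$ hold for all $i\in \overline{S}$, and (ii) $\NTCF.\Chk(\NTCF.\pp_i,x'_i,y_i)=\top$ for all $i\in S$, which is precisely the winning condition of the cut-and-choose experiment for $\cB$.

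Therefore $\Pr[\event_8]\le \advc{\NTCF,\cB}{cut}{and}{choose}(\secp,n)=\negl(\secp)$ by the cut-and-choose adaptive hardcore property of $\NTCF$ (applicable since $n=\omega(\log \secp)$ in our parameter choice). The step that requires the most care is verifying that $\cB$'s simulation of $\hyb_8$ is truly perfect: in particular, one must check that all quantities used by the challenger of $\hyb_8$ outside of $\{\NTCF.\td_i\}_{i\in S}$ are either produced locally by $\cB$ or handed to it by the cut-and-choose challenger at exactly the right stage, and that the verification conditions of $\DelVrfy$ on $(d_i,c_i)$ translate bit-for-bit into the cut-and-choose predicate on $(y_i,d_i\oplus d^*_i)$. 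Both facts follow directly from the definitions.
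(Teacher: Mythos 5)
Your proposal is correct and follows essentially the same reduction as the paper: simulate $\hyb_8$ using self-generated $\WPKE$ keys and hiding-mode $\SFE$ CRSes (with trapdoors) for all $i\in[2n]$, forward $\{y_i, d_i\oplus d^*_i\}_{i\in[2n]}$ so that the cut-and-choose challenger performs the $\overline{S}$-part of deletion verification, and after receiving $S$ and $\{\NTCF.\td_i\}_{i\in\overline{S}}$ run the watermarking extractor on the induced decryptor to produce $\{x'_i\}_{i\in S}$. The identification of the winning events and the observation that the simulation is perfect match the paper's argument.
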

\begin{proof}
    To show the lemma, we construct an adversary $\qB$ who plays the role of the challenger in $\hyb_8$ for $\qA$ and tries to break the cut-and-choose adaptive hardcore bit property of $\NTCF$. 
    $\qB$ proceeds as follows.
    \begin{enumerate}
    \item Given $\{\NTCF.\pp_i\}_{i\in[2n]}$ from its challenger, $\qB$ computes the first message to its challenger as follows.
    \begin{itemize}
        \item It runs 
        $(\WPKE.\ek_i, \WPKE.\msk_i) \la \WPKE(1^\secp)$ and
        $(\SFE.\crs_i,\SFE.\td_i) \lrun \SFE.\CRSGen(1^\secp, 2)$ for $i\in [2n]$ and sends $\ek= \{\NTCF.\pp_i,\SFE.\crs_i, \WPKE.\ek_i\}_{i\in [2n]}$ to $\A$.
        \item $\qA$ returns $\{y_i,\SFE.\msg^{(1)}_i\}_{i\in [2n]}$ to $\qB$.
        \item $\qB$ computes $ \SFE.\Send( \SFE.\msg^{(1)}_i, C[\WPKE.\msk_i,y_i] ) \rrun \SFE.\msg^{(2)}_i $ for $i\in [2n]$. 
        It then sends $\{ \SFE.\msg^{(2)} \}_{i\in [2n]}$ to $\qA$.
        \item $\A$ then submits the deletion certificate $\cert = \{ d_i, c_i\}_{i \in [2n]} $ to $\qB$.
        \item $\qB$ runs $\SFE.\StaRcv(\SFE.\td_i, \SFE.\msg^{(1)}_i) \to \{ \alpha_{i,j}^{b} \}_{j\in [\lenx], b\in \bin } $ for $i\in [2n]$.  
    It then computes 
    $d^*_i$ from $c_i$ and $\{ \alpha_{i,j}^{b} \}_{j\in [\lenx], b\in \bin } $ for $i\in [2n]$ and sends $\{y_i, d_i\oplus d^*_i\}_{i\in [2n]}$ to the challenger.
    \end{itemize}
    \item 
    If $\qB$ passes the verification, the challenger gives  
    $S$ and $\{ \NTCF.\td_i \}_{i\in \overline{S}}$ to $\qB$. 
    Then, $\qB$ gives $\dvk = (S, \{ \NTCF.\td_i ,  \SFE.\td_i\}_{i\in \overline{S}})$ to $\A$.
    \item 
    $\qB$ then constructs a quantum decryption algorithm $\qD=(U_\qD, \rho_\qD)$ 
      from $\{\WPKE.\ek_i\}_{i\in \overline{S}}$, $S$, and 
      $\A$ as specified in $\hyb_3$. Then, it runs the extraction algorithm as $\WPKE.\qExt( \{\WPKE.\ek_i\}_{i\in S}, (U_\qD, \rho_\qD) ) \to \{x'_i\}_{i\in S}$.
      Then, $\qB$ submits $\{x'_i\}_{i\in S}$ to its challenger.
    \end{enumerate}
    We can see that $\qB$ perfectly simulates $\hyb_8$, where a part of the deletion verification algorithm (i.e., inversion of $y_i$ and checking of \cref{eq:PKE-deletion-verify-in-the-proof} for $i\in \overline{S}$) is delegated to its challenger. 
    Furthermore, it can be checked that $\qB$ wins the cut-and-choose adaptive hardcore bit game if and only if $\qA$ wins in $\hyb_8$.
    We therefore have $\advc{\NTCF,\qB}{cut}{and}{choose}(\secp,n)=\Pr[\event_8]$.
    Hence, the lemma follows.
\end{proof}

\else
\fi

\ifnum\llncs=0
\ifnum\llncs=0
\subsection{Construction with Non-Interactive Key Generation}
\label{sec:non-interactive-pke-skl}
\else
\section{Classical-Lessor PKE-SKL with Non-Interactive Key Generation}
\label{sec:non-interactive-pke-skl}
\fi
Here, we discuss that the construction of PKE-SKL with interactive key generation that we showed in \cref{sec:PKE-SKL-with-Interaction} can be readily modified into a construction with non-interactive key generation as per \cref{def-pke-skl-NI}.
The idea is to merging the lessor's operation with the encryption algorithm. 
For this change to be possible, we need that the encryptor has to know the decryption keys of $\WPKE$. To accomplish this, we postpone the generation of the $\WPKE$ encryption keys until the execution of the encryption algorithm.

More concretely, we give the description of the construction in the following. Let $\intPKESKL = \intPKESKL.(\Setup, \allowbreak \qIntKeyGen, \allowbreak \Enc, \Dec, \qDec, \Del, \DelVrfy)$ be the construction in \cref{sec:PKE-SKL-with-Interaction}.
\begin{description}
\item[$\Setup_{\mathsf{NI}}(1^\secp)$:] 
It samples $\{\NTCF.\pp_i,\SFE.\crs_i\}_{i\in [2n]}$ as in $\intPKESKL.\Setup(1^\secp)$. However, it does \emph{not} sample $\WPKE$ encryption keys.  
It sets $\pp = \{\NTCF.\pp_i,\SFE.\crs_i\}_{i\in [2n]}$, $\msk=\bot$ and 
$\dvk = \{\NTCF.\td_i,\SFE.\td_i\}_{i\in \overline{S}}$.
\item[$\qKG_{\mathsf{NI}}(\pp)$:] 
Upon receiving the input $\pp =\{\NTCF.\pp_i,\SFE.\crs_i\}_{i\in [2n]}$, it runs the first operation of the lessee in $\intPKESKL.\qIntKeyGen$ to obtain 
$\{y_i,\SFE.\msg^{(1)}_i\}_{i\in [2n]}$ and $\ket{\psi'}$.
Note that it is possible to run the lessee's algorithm even though encryption keys of $\WPKE$ are not generated. 
It sets $\ek = \{y_i,\SFE.\msg^{(1)}_i\}_{i\in [2n]}$ and $\qdk = \ket{\psi'}$.
\item[$\Enc(\ek,m)$:] 
It takes as input $\ek = \{y_i,\SFE.\msg^{(1)}_i\}_{i\in [2n]}$ and proceeds as follows. 
\begin{itemize}
    \item It first samples $\WPKE.\KG(1^\lambda) \to (\WPKE.\ek_i, \WPKE.\msk_i)$ for $i\in [2n]$. 
    \item It then runs the lessor's operation of $\intPKESKL.\qIntKeyGen$ to obtain $\{\SFE.\msg^{(2)}_{i} \}_{i\in [2n]}$. Note that it is possible to run the lessor's operation since it knows $\WPKE.\msk_i$ for all $i$. 
    \item It then runs $\intPKESKL.\ct\lrun \intPKESKL.\Enc(\intPKESKL.\ek, m)$.
    \item Finally, it outputs the ciphertext $\ct=(\{\WPKE.\ek_i,\SFE.\msg^{(2)}_{i} \}_{i\in [2n]}, \intPKESKL.\ct )$.
\end{itemize}
\item[$\qDec(\qdk,\ct)$:] 
It takes as input a ciphertext $\ct=(\{\WPKE.\ek_i,\SFE.\msg^{(2)}_{i} \}_{i\in [2n]}, \intPKESKL.\ct )$ and the decryption key $\qdk= \ket{\psi'}$, it runs the lessee's second operation on $\{\WPKE.\ek_i,\SFE.\msg^{(2)}_{i} \}_{i\in [2n]}$ and $\ket{\psi'}$ to obtain $\otimes_{i\in [2n]} \ket{\phi_i}$. 
It then sets $\intPKESKL.\qdk=\bigotimes_{i\in[2n]}\ket{\phi_i}\ket{\SFE.\crs_i}
\ket{\SFE.\msg^{(2)}_{i}}$ and runs $\intPKESKL.\qDec(\intPKESKL.\qdk, \allowbreak \intPKESKL.\ct ) =m$.
It finally outputs $m$.

\item[$\Del(\qdk)$:] 
It measures $\qdk= \ket{\psi'}$ in Hadamard basis to obtain $\{d_i,c_i\}_{i\in [2n]}$.
It then outputs $\cert = \{d_i, c_i\}_{i \in [2n]}$. 
\item[$\DelVrfy(\ek,\dvk,\cert)$:]
It parses $\ek = \{y_i,\SFE.\msg^{(1)}_i\}_{i\in [2n]}$. 
It runs $\intPKESKL.\DelVrfy(\tau,\dvk,\cert)$,
where $\tau$ is a transcript that starts from $\{y_i,\SFE.\msg^{(1)}_i\}_{i\in [2n]}$,\footnote{While this does not fully specify $\tau$, this is sufficient to run $\intPKESKL.\DelVrfy$ since it only uses $\{y_i,\SFE.\msg^{(1)}_i\}_{i\in \overline{S}}$ of $\tau$.} 
and outputs what it outputs.
\end{description}
The decryption correctness and deletion verification correctness of the scheme readily follows from that of $\intPKESKL$.\footnote{The above scheme does not support the decryption algorithm $\Dec$ using $\msk$. However, adding this is straightforward: We have $\Setup_{\mathsf{NI}}$ generate an encryption key $\PKE.\ek$ and a decryption key $\PKE.\dk$ for the plain PKE. We then include $\PKE.\ek$ in $\pp$, and $\PKE.\dk$ in $\msk$. The encryption algorithm encrypts $m$ using $\PKE.\ek$, and incorporates the resulting ciphertext $\PKE.\ct$ into $\ct$. Consequently, the decryption algorithm can decrypt $\PKE.\ct$ when provided with $\msk$.}  
The following theorem asserts the security of the above construction.
\begin{theorem}
If $\intPKESKL$ is OW-VRA secure, then so is the above construction.
\end{theorem}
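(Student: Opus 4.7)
The plan is to prove security by a direct reduction: any OW-VRA adversary $\qA$ against the non-interactive scheme can be converted into an OW-VRA adversary $\qB$ against the underlying interactive scheme $\intPKESKL$ with essentially the same advantage. The key observation is that the two constructions differ only in the scheduling of the lessor's moves: in $\intPKESKL$, the $\WPKE$ key pairs and the SFE second messages $\{\SFE.\msg^{(2)}_i\}$ are produced during the interactive key generation and are available to the lessee before encryption; in the non-interactive variant, exactly those values are instead bundled into the ciphertext $\ct^*$ and delivered at encryption time. Since $\qA$, by the non-interactive syntax, does not need $\{\WPKE.\ek_i\}$ to run $\qKG_{\mathsf{NI}}$ (which only consumes $\pp$), delaying their release is harmless.

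Concretely, $\qB$ would play against its OW-VRA challenger for $\intPKESKL$ as follows. Upon receiving $\intPKESKL.\ek = \{\NTCF.\pp_i, \SFE.\crs_i, \WPKE.\ek_i\}_{i \in [2n]}$, it strips the $\WPKE$ encryption keys and hands $\pp = \{\NTCF.\pp_i, \SFE.\crs_i\}_{i \in [2n]}$ to $\qA$ as the setup output of the non-interactive game. When $\qA$ returns its encryption key $\ek = \{y_i, \SFE.\msg^{(1)}_i\}_{i \in [2n]}$, $\qB$ forwards these to its own challenger as the lessee's first message in $\intPKESKL.\qIntKeyGen$, and receives back $\{\SFE.\msg^{(2)}_i\}_{i \in [2n]}$, which it stores. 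The deletion certificate $\cert$ submitted by $\qA$ is forwarded verbatim to the interactive challenger; if it is accepted, $\qB$ receives $\dvk$ together with a challenge ciphertext $\intPKESKL.\ct^*$ on a uniformly random plaintext, and then composes $\ct^* = (\{\WPKE.\ek_i, \SFE.\msg^{(2)}_i\}_{i \in [2n]}, \intPKESKL.\ct^*)$, handing $(\dvk, \ct^*)$ to $\qA$. Finally, $\qB$ outputs whatever message guess $m'$ that $\qA$ outputs.

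A straightforward inspection shows that the view of $\qA$ inside this reduction is distributed identically to its view in the OW-VRA experiment against the non-interactive scheme: the marginal distribution of $\pp$, $\{\SFE.\msg^{(2)}_i\}$, $\{\WPKE.\ek_i\}$, $\dvk$, and the challenge ciphertext are each sampled according to the same honest procedure (only the moment of sampling within $\qB$'s simulation differs from the nominal non-interactive ordering, but $\qA$ never observes this difference). Likewise, the deletion verification algorithm is literally inherited from $\intPKESKL.\DelVrfy$, so $\qA$'s certificate is accepted by $\qB$'s challenger exactly when it would have been accepted in the non-interactive game. Hence $\qB$ wins the interactive OW-VRA experiment whenever $\qA$ wins the non-interactive one, giving $\advb{\PKESKL_{\mathsf{NI}}, \qA}{ow}{vra}(\secp) \leq \advb{\intPKESKL, \qB}{ow}{vra}(\secp) = \negl(\secp)$.

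There is no real obstacle here: the statement is essentially a repackaging lemma, and the proof is a short syntactic simulation. The only point worth double-checking is that $\qA$ can indeed be fed $\pp$ without $\{\WPKE.\ek_i\}$ and still produce a well-formed $\ek$ (which is immediate from the construction of $\qKG_{\mathsf{NI}}$, whose lessee-side operation in $\intPKESKL.\qIntKeyGen$ depends only on $\NTCF.\pp_i$ and $\SFE.\crs_i$), and that malicious deviation by $\qA$ when producing $\ek$ is faithfully handled by the interactive challenger (which it is, since the interactive protocol treats any first message from the lessee as a possibly malicious $\{y_i, \SFE.\msg^{(1)}_i\}$).
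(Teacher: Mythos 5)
Your proof is correct and is essentially the paper's argument made explicit: the paper simply observes that the non-interactive OW-VRA game is the interactive one with a more restricted adversary (who must submit the certificate before seeing $\{\WPKE.\ek_i,\SFE.\msg^{(2)}_i\}_{i\in[2n]}$), and your reduction is exactly the simulation that this observation licenses, with the values withheld until the challenge phase. No gap; the approaches coincide.
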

\begin{proof}
    We observe that the OW-VRA security game for the above construction is the same as that for $\intPKESKL$, except that the adversary has to submit the deletion certificate 
    in the former before getting $\{\WPKE.\ek_i,\SFE.\msg^{(2)}_{i} \}_{i\in [2n]}$.
    Since the former restricts the adversary more than the latter, the security of $\intPKESKL$ immediately implies that of the above construction. 
\end{proof}
Using \cref{lem:ow-ind-non-interactive}, we can upgrade the construction to be IND-VRA secure. 
\else
\fi

% !TEX root = main.tex

\section{PRF-SKL with Classical Lessor}\label{sec-PRFSKL}
In this section, we show the construction of PRF-SKL with a classical lessor from the LWE assumption. 
\subsection{Construction}
We construct a UPF-SKL scheme with a classical lessor that satisfies UP-VRA security as per \cref{def:UPF-SKL}. 
Roughly, it requires that if an adversary submits a valid deletion certificate, then it can no longer break unpredictability of the scheme even given the deletion verification key.  
As shown by \cite{kitagawa2025simple}, 
UP-VRA security can be easily upgraded into PR-VRA security, its indistinguishability-based counterpart, by the quantum Goldreich-Levin lemma (\cref{lem:ow-ind}).
The construction is very similar to our PKE-SKL construction, where we replace the watermarkable PKE with watermarkable UPF.
Concretely, we use the following ingredients:
\begin{itemize}
 \item NTCF generators $\NTCF=\NTCF.(\FuncGen, \StateGen, \Chk, \Invert)$ defined as in \cref{sec:NTCF}.
 It is associated with a deterministic algorithm $\NTCF.\GoodSet$ (See \cref{def:NTCF-generator}). We assume that $\NTCF.\Chk$ is a deterministic algorithm. 
 We denote the input space by $\cX$ and assume $\cX \subseteq \bin^\lenx$ for some polynomial $\lenx = \lenx(\secp)$.
 It is known that NTCF generators can be constructed from LWE.
% \takashi{This can be constructed from LWE.} 
 \item 
 Special Dual-mode SFE $\SFE=\SFE.(\CRSGen,\Receive{1}, \Send, \Receive{2})$ that supports circuits with input space $\bin^\lenx$ as defined in \cref{sec:dual_mode_SFE}.
 It is associated with an extraction algorithm $\SFE.\Extract$, a simulation algorithm $\SFE.\Sim$, a state recovery algorithm $\SFE.\StaRcv$,
 and a ``coherent version" $\SFE.\qReceive{1}$ of $\SFE.\Receive{1}$, which has the efficient state superposition property. 
 We require that $\SFE.\Receive{2}$ and $\SFE.\Extract$ are deterministic.
% We assume that the input space of $\Receive{1}(\SFE.\crs, \cdot)$ is $\bin^\lenx$. 
 We denote that the length of each block $\alpha_{i,j}^b$ output by $\SFE.\StaRcv$ by $\lenst$. 
 As we show in \cref{sec:dual-mode-SFE-from-LWE}, this can be constructed from LWE.
% \takashi{This can be constructed from  LWE.} 
 \item Watermarkable UPF (with parallel extraction): 
 $\WPRF=\WPRF.(\KG, \Mark, \Eval)$ as defined in \cref{watermarkable_UPF}.
 Without loss of generality, 
 we assume that it also satisfies security as a plain UPF (See \cref{rem:security-as-plain-UPF}).
 It is associated with a parallel mark extraction algorithm $\WPRF.\qExt$. 
 We require that $\WPRF.\Mark$ and $\WPRF.\Eval$ are deterministic algorithms.
 We need the mark space to be $\bin^\lenx$ and denote the input space (resp., output space) by $\bin^\lenm$ (resp., $\bin^{\ell'}$). 
 The bit-length of the marked key is denoted by $\lenmsg$. 
 As we show in \cref{watermarkable_UPF}, this can be constructed from any OWF.
\end{itemize}

Let $n=n(\secp)$ be a positive integer such that $n=\omega(\log \secp)$.

Below is the description of a UP-VRA secure UPF-SKL scheme. 
The input space of the scheme is $\bin^{2n\lenm}$.
\begin{description}
\item[$\Setup(1^\secp)$:] 
On input the security parameter, it proceeds as follows. 
\begin{itemize}
\item 
For $i\in [2n]$, run $\WPRF.\KG(1^\lambda) \to (\WPRF.\msk_i, \WPRF.\extk_i)$.
\item Choose a uniform subset $S\subseteq [2n]$ such that $|S|=n$. 
In the following, we denote $[2n]\backslash S$ by $\overline{S}$.
\item 
For $i\in [2n]$, run $(\NTCF.\pp_i,\NTCF.\td_i) \lrun \NTCF.\FuncGen(1^\lambda, \mode)$ and $(\SFE.\crs_i,\SFE.\td_i) \lrun \SFE.\CRSGen(1^\secp, \mode)$, where 
$
\mode = 
\begin{cases}
  1 & \text{if $i\in S$}\\
  2 & \text{if $i\in \overline{S}$}.
\end{cases}
$

\item Output $\pp = \{\NTCF.\pp_i,\SFE.\crs_i\}_{i\in [2n]}$, 
$\msk = \{ \WPRF.\msk_i \}_{i\in [2n]}$, and 
$\dvk=(S,\{\NTCF.\td_i,\SFE.\td_i\}_{i\in \overline{S} } )$. 

\end{itemize}
\item[$\qIntKeyGen(\pp, \msk )$:]
The lessor and lessee run the following interactive protocol to generate $\qsk$.
Here, the lessor takes $\msk$ as input and the lessee takes $\pp$ as input. 

{\bf Lessee's first operation:}
The lessee parses $\pp = \{\NTCF.\pp_i,\SFE.\crs_i\}_{i\in [2n]}$ and runs the following algorithms.
\begin{itemize}
\item For each $i\in [2n]$, run $\NTCF.\StateGen(\NTCF.\pp_i) \rrun (y_i, \ket{\psi_i})$.
\item For each $i\in [2n]$, run $\SFE.\qReceive{1}(\SFE.\crs_i,\ket{\psi_i}) \to (\ket{\psi'_i},\SFE.\msg_i^{(1)})$.

% \item 
% For each $i\in [2n]$, 
% prepare a uniform superposition over randomness for SFE and 
% run the first-message-generation algorithm of the SFE under CRS $\SFE.\crs_i$ in superposition, where the second register of $\psi_i$ is used as a message, and measure the ciphertext $\SFE.\ct_i$.  
% This results in the following state:
%     \begin{itemize}
%     \item If $i\in S$, $\ket{\psi''_i}=\ket{b_i}\ket{x_i}\ket{\st_i}$, where $\st_i$ is the stae corresponding $(x_i,\SFE.\ct_i)$. 
%     \item If $i\notin S$, $\ket{\psi''_i}=\ket{0}\ket{x_{i,0}}\ket{\st_{i,0}}+\ket{1}\ket{x_{i,1}}\ket{\st_{i,1}}$, where $\st_{i,b}$ is the state corresponding to $(x_{i,b},\SFE.\ct_{i})$. 
%     \end{itemize}

\item Send $\{y_i,\SFE.\msg^{(1)}_i\}_{i\in [2n]}$ to the lessor. 
\end{itemize}

{\bf Lessor's operation:}
Given the message $\{y_i,\SFE.\msg^{(1)}_i\}_{i\in [2n]}$ from the lessee, the lessor proceeds as follows.
%Note that this operation requires the master secret key $\msk=\{ \WPRF.\msk_i \}_{i\in [2n]}$ that the lessor has.
\begin{itemize}

% \item For $i\in S$, run $\NTCF.\Invert(\NTCF.\td_i, y_i) \to x_i$ and $\SFE.\Extract(\SFE.\td_i, \SFE.\ct_i) \to (x'_i,\st_i)$.
%  Set $\valid =1$ if $x_i = x'_i \neq \bot$ holds for all $i\in S$. Otherwise, set $\valid =0$.\footnote{Even if the lessor finds that the lessee is cheating at this point (i.e., $\valid=0$), it continues with the protocol and defers catching the lessee until the deletion verification algorithm is executed. This point is important when we convert the protocol into a non-interactive one. See \cref{rem:non-interactive-version} for further discussion.
%  } 

\item 
For $i\in [2n]$, define $C[\WPRF.\msk_i,y_i]$ to be a circuit that takes $x$ as input and computes the following:
\begin{align}
C[\WPRF.\msk_i,y_i](x)=
\begin{cases} 
\WPRF.\Mark(\WPRF.\msk_i, x) & \text{if $\NTCF.\Chk(\NTCF.\pp_i, x, y_i) = \top$} \\ 
\bot & \text{if $\NTCF.\Chk(\NTCF.\pp_i, x, y_i) = \bot$}. \\ 
\end{cases}
\end{align}
%\footnote{Recall that we require $\WPRF.\Mark$ to be deterministic.} 

Without loss of generality, we assume that the size of $C[\WPRF.\msk_i,y_i]$ does not depend on $\WPRF.\msk_i$ and $y_i$, and denote it by $|C|$.
\item 
For $i\in [2n]$, run $\SFE.\Send( \SFE.\crs_i, \SFE.\msg^{(1)}_i, C[\WPRF.\msk_i,y_i] ) \rrun \SFE.\msg^{(2)}_i$.
\item Send $\{\SFE.\msg^{(2)}_{i} \}_{i\in [2n]}$ to the lessee.
\end{itemize}

{\bf Lessee's second operation:}
Given the message $\{\SFE.\msg^{(2)}_{i} \}_{i\in [2n]}$, the lessee proceeds as follows.
It uses the secret state $\ket{\psi'_i}$ in the following.
\begin{itemize}
\item Construct the unitary $U[\SFE.\crs_i, \SFE.\msg_i^{(2)}]$ defined by the following map:
\[
\ket{x}\ket{\SFE.\st} \ket{0^{\lenmsg}} \mapsto 
\ket{x}\ket{\SFE.\st} \ket{ \SFE.\Receive{2}(\SFE.\crs_i, x, \SFE.\st, \SFE.\msg_i^{(2)}) }.
\]
%where $\lenmsg$ is the length of the decryption key of $\WPRF$.\footnote{Recall that we assumed that $\SFE.\Receive{2}$ is a deterministic algorithm.}

\item 
For $i\in [2n]$, apply the unitary to compute 
\[
\ket{\phi_i}\defeq U[\SFE.\crs_i, \SFE.\msg_i^{(2)}]\left( \ket{\psi'_i}\ket{0^{\lenmsg}} \right).  
\]

% \item 
% For $i\in [2n]$, run the output derivation algorithm of SFE in superposition to obtain
% $\ket{\psi'_i}$, where:
%     \begin{itemize}
%     \item If $i\in S$, $\ket{\psi'_i}=\ket{b_i}\ket{x_i}\ket{\st_i}\ket{\sk_i(x_i)}$, 
%     \item If $i\notin S$, $\ket{\psi'_i}=\ket{0}\ket{x_{i,0}}\ket{\st_{i,0}}\ket{\sk_i(x_{i,0})}+\ket{1}\ket{x_{i,1}}\ket{\st_{i,1}}\ket{\sk_i(x_{i,1})}$.
%     \end{itemize}
 \end{itemize}

{\bf Final output:}
The lessee privately obtains a secret key 
\[
\qsk=\bigotimes_{i\in[2n]}\ket{\phi_i}\ket{\SFE.\crs_i}
\ket{\SFE.\msg^{(2)}_{i}}.\footnote{$\SFE.\crs_i$ and $\SFE.\msg^{(2)}_{i}$ are not needed for the evaluation. They are used only for the deletion.}\] 

We write $\tau$ to denote the transcript of the execution of $\qIntKeyGen$.

\item[$\Eval(\msk,s):$]
Upon receiving $\msk=\{\WPRF.\msk_i\}_{i\in [2n]}$ and an input $s=s_1\|s_2\|\ldots \|s_{2n} \in \bin^{2n\lenm}$, 
compute $t_i \seteq \WPRF.\Eval(\WPRF.\msk_i,s_i)$ for $i\in[2n]$ and output $t=t_1\| \cdots \|t_{2n}$.
\item[$\qLEval(\qsk,s)$:]
Given $\qsk = \bigotimes_{i\in[2n]}\ket{\phi_i}\ket{\SFE.\crs_i}
\ket{\SFE.\msg^{(2)}_{i}}$ and an input $s=s_1\|s_2\|\ldots \|s_{2n}$, the evaluation algorithm performs the following for each $i\in [2n]$.
\begin{itemize}
    \item Construct the unitary $V[s_i]$ defined by the following map:
    \[
    \ket{x}\ket{\SFE.\st}\ket{\WPRF.\key}\ket{0^{\ell'}}
    \mapsto 
    \ket{x}\ket{\SFE.\st}\ket{\WPRF.\key}\ket{ \WPRF.\Eval(\WPRF.\key,s_i) }.
    \]
%    where $\lenm$ is the length of the message space of $\WPRF$.
    \item Compute $V[s_i] \left( \ket{\phi_i} \ket{0^{\ell'}} \right)$
     and measure the last $\ell'$ qubits to obtain $t_i$.
%     Let the residual state be $\ket{\phi'_i} $. 
\end{itemize}
It finally outputs $t=t_1\|t_2\|\ldots \|t_{2n}$.
%and $\qdk' = \bigotimes_{i\in[2n]} \ket{\phi'_i} \ket{\SFE.\crs_i}  \ket{\SFE.\msg^{(2)}_{i}}$. 

\item[$\Del(\qdk)$:]
Given $\qdk=\bigotimes_{i\in[2n]}\ket{\phi_i}\ket{\SFE.\crs_i}
\ket{\SFE.\msg^{(2)}_{i}}$, the deletion algorithm performs the following for each $i\in [2n]$.
\begin{itemize}
    \item Measure the corresponding registers to recover the classical strings $\SFE.\crs_i$ and $\SFE.\msg^{(2)}_{i}$.
    \item Compute $ U[\SFE.\crs_i,\SFE.\msg_i^{(2)}]^\dagger \ket{\phi_i} $
    and remove the last $\lenmsg$ qubits to obtain $\ket{\psi''_i}$.
    \item Measure $\ket{\psi''_i}$ in Hadamard basis to obtain $(d_i,c_i)$, where each $d_i\in \bin^{\lenx} $ is obtained by measuring the register corresponding to an element in $\cX$
and each $c_i \in \bin^{\lenst \lenx }$ is obtained by measuring the register corresponding to the SFE state.
\end{itemize}
The final output of the algorithm is given by $\cert = \{d_i, c_i\}_{i \in [2n]}$,

\item[$\DelVrfy(\tau,\dvk,\cert)$:]
Given the transcript $\tau$ of $\qIntKeyGen$,  the deletion verification key $\dvk=\left( S, \{ \NTCF.\td_i ,  \SFE.\td_i\}_{i\in \overline{S}} \right)$, and the deletion certificate $\cert = \{d_i, c_i\}_{i \in [2n]}$, 
the deletion verification algorithm works as follows.
\begin{itemize}
    \item Retrieve $\{y_i,\SFE.\msg^{(1)}_i\}_{i\in \overline{S}}$ from $\tau$. 
    \item For $i\in \overline{S}$, run $\NTCF.\Invert(\NTCF.\td_i, y_i) \to (x_{i,0},x_{i,1})$. 
    If the inversion fails for any $i$, output $\bot$.
    \item For $i\in \overline{S}$, run $\SFE.\StaRcv(\SFE.\td_i, \SFE.\msg^{(1)}_i) \to \{ \alpha_{i,j}^{b} \}_{j\in [\lenx], b\in \bin } $, where $\alpha_{i,j}^{b} \in \bin^{\lenst }$. 
    \item For $i\in \overline{S}$, compute the vector $d^*_i = (d^*_i[1],\ldots, d^*_i[\lenx]) \in \bin^\lenx$ as 
    \[
    d^*_i[j] \seteq 
    \langle c_{i,j} , (\alpha^0_{i,j} \oplus \alpha^1_{i,j})\rangle ,
    \]
    where $c_i$ is parsed as $c_i = c_{i,1}\| \cdots \| c_{i,\lenx} $, where each block is in $\bin^\lenst$.
    \item For $i\in \overline{S}$, check whether 
    \[
    \langle d_i \oplus d^*_i , (x_{i,0}\oplus x_{i,1}) \rangle
    =0
    ~
    \land
    ~
    \NTCF.\GoodSet(\NTCF.\td_i, y_i, d_i \oplus d^*_i  )=\top
    \]
    holds. 
%    Note that the latter condition can be checked using $\NTCF.\td_i$.
    If it holds for all $i\in \overline{S}$, it outputs $\top$.
    Otherwise, it outputs $\bot$.
\end{itemize}
\end{description}

\paragraph{Correctness}
The following theorem asserts the evaluation correctness and the deletion verification correctness of the above scheme.
\begin{theorem}\label{th:prf-correctness}
The above construction satisfies the evaluation correctness if 
$\NTCF$ satisfies the state generation correctness, $\SFE$ satisfies evaluation correctness, and $\WPRF$ satisfies the evaluation correctness.
Furthermore, the above construction satisfies the deletion verification correctness if $\NTCF$ satisfies the state generation correctness, $\SFE$ satisfies the statistical security against malicious senders,
and $\SFE$ supports state recovery in the hiding mode.
\end{theorem}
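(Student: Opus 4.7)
The plan is to mirror the proof of \cref{th:pke-correctness} for PKE-SKL, since the present construction differs only in that the watermarkable PKE is replaced by a watermarkable UPF, and since the overall protocol (NTCF state preparation, lifting the claw state through coherent SFE, and the Hadamard-basis deletion measurement) is identical. As in that proof, the $2n$ subsystems indexed by $i \in [2n]$ are independent and run in parallel, so it suffices to argue that each subsystem individually produces the correct evaluation output and a certificate that passes verification, with overwhelming probability. The case $i \in S$ (injective mode) is strictly simpler, because $\ket{\psi_i}$ collapses to the classical state $\ket{x_i}$ with $x_i = \NTCF.\Invert(\NTCF.\td_i, y_i)$; the same analysis that follows goes through, applied to a single classical branch. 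I therefore concentrate on $i \in \overline{S}$.

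For evaluation correctness, I would first reproduce the state-tracking argument in the proof of \cref{th:pke-correctness}: the correctness of $\NTCF.\StateGen$ replaces $\ket{\psi_i}$ by (something negligibly close to) $\tfrac{1}{\sqrt 2}(\ket{x_{i,0}}+\ket{x_{i,1}})$; then applying $\SFE.\qReceive{1}$, measuring its first message to obtain $z^*$, and applying the unitary $U[\SFE.\crs_i,\SFE.\msg_i^{(2)}]$ yields, up to negligible error,
\begin{align}
\ket{\phi_i} = \sum_{b\in\bin} q_{i,b,z^*}\,\ket{x_{i,b}}\ket{\st_{i,b,z^*}}\ket{\WPRF.\key_i(x_{i,b})}.
\end{align}
Applying $V[s_i]$ coherently writes $\WPRF.\Eval(\WPRF.\key_i(x_{i,b}), s_i)$ on the output register in each branch. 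Here the key point is the \emph{strong} evaluation correctness of $\WPRF$ from \cref{watermarkable_UPF}, which guarantees that for overwhelming choice of $(\WPRF.\msk_i,\WPRF.\extk_i)$ the equation $\WPRF.\Eval(\WPRF.\key_i(x), s) = \WPRF.\Eval(\WPRF.\msk_i, s)$ holds \emph{simultaneously for all $x \in \cX$}. Hence both branches write the same value $t_i = \WPRF.\Eval(\WPRF.\msk_i, s_i) = \Eval(\msk, s)_i$, and measuring the output register produces $t_i$ without collapsing the superposition (so the key remains reusable, cf.\ \cref{rem:reusability＿PRF}).

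For deletion verification correctness, the proof is essentially identical to the PKE case and I would lift it verbatim. Concretely I would show the same three facts about the post-measurement outcome $(d_i,c_i)$ on $\ket{\psi''_i} = \ket{\psi'_i}$ (equal because $\ket{\phi_i}$ is obtained from $\ket{\psi'_i}$ by a unitary followed by its adjoint): (i) $\langle d_i, x_{i,0}\oplus x_{i,1}\rangle \oplus \langle c_i, \st_{i,0,z^*}\oplus \st_{i,1,z^*}\rangle = 0$ with overwhelming probability, via the amplitude computation and using the statistical sender security of $\SFE$ in hiding mode to bound $\tfrac12\sum_{z^*}|p_{i,0,z^*}-p_{i,1,z^*}|$; (ii) $\NTCF.\GoodSet(\NTCF.\td_i,y_i, d_i\oplus d_i^*) = \top$ with overwhelming probability, using that $d_i$ is uniform conditioned on (i) and the defining property of $\GoodSet$; and (iii) the substitution identity $\langle c_i, \st_{i,0,z^*}\oplus \st_{i,1,z^*}\rangle = \langle d_i^*, x_{i,0}\oplus x_{i,1}\rangle$, which follows from the state recoverability in the hiding mode of $\SFE$ together with the block-decomposition form $\st_{i,b,z^*} = \alpha_{i,1}^{x_{i,b}[1]}\|\cdots\|\alpha_{i,\lenx}^{x_{i,b}[\lenx]}$. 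Combining (i) and (iii) yields $\langle d_i \oplus d_i^*, x_{i,0}\oplus x_{i,1}\rangle = 0$, which together with (ii) is exactly the predicate $\DelVrfy$ checks.

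The main obstacle, and essentially the only step that is genuinely different from the PKE proof, is the evaluation correctness argument: one must be careful that the coherent application of $V[s_i]$ does not collapse the superposition over $x_{i,0}, x_{i,1}$. This is where the \emph{strong} correctness of $\WPRF$ (equality for all marks simultaneously) is essential — the weaker per-mark correctness would not suffice, since the two branches must write identical classical strings on the output register to avoid entanglement. Our $\WPRF$ construction in \cref{watermarkable_UPF} is designed to meet this stronger condition, so this obstacle is discharged and the theorem follows.
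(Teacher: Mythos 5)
Your proposal is correct and follows essentially the same route as the paper: the paper likewise reduces to the per-subsystem analysis, omits the deletion verification argument as identical to \cref{th:pke-correctness}, and proves evaluation correctness by tracking the state through $\SFE.\qReceive{1}$ and $V[s_i]$ and then invoking the strong (for-all-marks) evaluation correctness of $\WPRF$ so that both branches write the common value $\WPRF.\Eval(\WPRF.\msk_i,s_i)$. Your added emphasis that the weaker per-mark correctness would not suffice matches the paper's own footnote on this point.
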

\begin{proof}
    We omit the proof of deletion verification correctness, since it is completely the same as that for PKE-SKL shown in 
    \cref{th:pke-correctness}. 
    We then prove the evaluation correctness. 
    As observed in the proof for \cref{th:pke-correctness}, 
    all the algorithms run the subsystems indexed by $i\in [2n]$, which run in parallel. 
Therefore, it suffices to show that each subsystem succeeds in evaluation with overwhelming probability assuming that the lessee follows the protocol honestly.
By the same argument as \cref{th:pke-correctness}, we can represent 
$\ket{\psi'_i}$ as
$\ket{\psi'_i} = 
q_{i,0,z^*}\ket{x_{i,0}}\ket{\st_{i,0,z^*}} 
+
q_{i,1,z^*}\ket{x_{i,1}}\ket{\st_{i,1,z^*}}$,
where $(x_{i,0}, x_{i,1}) = \NTCF.\Invert(\NTCF.\td_i, y_i)$, $z^*= \SFE.\msg^{(1)}_i$, and $\st_{i,b,z^*}$ is the unique state corresponding to
$x_{i,b}$ and $z^*=\SFE.\msg^{(1)}_i$. 
The definition of $q_{i,b,z^*}$ is not relevant here, but it can be found in the proof of \cref{th:pke-correctness}. 
By the evaluation correctness of $\SFE$, we have\footnote{
Precisely, the equation does not hold in strict sense, since we ignore negligible difference in trace distance in the proof of \cref{th:pke-correctness}. However, this can be ignored as this difference only negligibly affects the probability that the verification passes.
} 
\begin{align}
\ket{\phi_i}
&= 
\sum_{b\in \bin}
q_{i,b,z^*}
\ket{x_{i,b}}\ket{\st_{i,b,z^*}} \ket{ \SFE.\Receive{2}(\SFE.\crs_i, x_{i,b}, \st_{i,b,z^*}, \SFE.\msg_i^{(2)}) }
\\
&=
\sum_{b\in \bin}
q_{i,b,z^*}
\ket{x_{i,b}}\ket{\st_{i,b,z^*}} \ket{
\WPRF.\key_i(x_{i,b})
}.
\end{align}
After applying  $V[s_i]$, the above state becomes 
\begin{align}
&
\sum_{b\in \bin}
q_{i,b,z^*}
\ket{x_{i,b}}\ket{\st_{i,b,z^*}} \ket{
\WPRF.\key_i(x_{i,b})
}
\ket{ \WPRF.\Eval(\WPRF.\key_i(x_{i,b}),s_i) }
\\
&= 
\sum_{b\in \bin}
q_{i,b,z^*}
\ket{x_{i,b}}\ket{\st_{i,b,z^*}} \ket{
\WPRF.\key_i(x_{i,b})
}
\ket{ \WPRF.\Eval(\WPRF.\msk,s_i) }.
\end{align}
Here, we use the fact that $\WPRF.\Eval(\WPRF.\key_i(x_{i,b}),s_i) = { \WPRF.\Eval(\WPRF.\msk_i,s_i) }$ holds for \emph{all} $x$ with overwhelming probability over the randomness of $\WPRF.\KG$ by the evaluation correctness of $\WPRF$.
By measuring the last register, one can get $\WPRF.\Eval(\WPRF.\msk,s_i)$ as desired. 
\end{proof}

\subsection{Security Proof}
\begin{theorem}
\label{thm-proof-skl-prf}
If $\NTCF$ satisfies cut-and-choose adaptive hardcore bit security, 
$\SFE$ satisfies the mode-indistinguishability and the extractability in the extraction mode, and $\WPRF$ satisfies unpredictability as a plain UPF and parallel mark extractability, 
then our construction is UP-VRA secure.  
%\shota{Make it more precise.}
\end{theorem}
\begin{proof}
    We prove the theorem by a sequence of hybrids.
    For the sake of the contradiction, let us assume an adversary $\A$ that wins the UP-VRA security game with non-negligible probability $\epsilon$.
    In the following, the event that $\qA$ wins in $\Hyb_{\rm xx}$ is denoted by $\event_{\rm xx}$.

\begin{description}
  \item[$\Hyb_0$:] This is the original UP-VRA security game between the challenger and an adversary $\A$.  
  Namely, the game proceeds as follows.
\begin{enumerate}
    \item \label{Step:prf-proof-setup}
    The challenger computes the first message to the adversary as follows.
    \begin{itemize}
    \item 
    For $i\in [2n]$, it runs $\WPRF.\KG(1^\lambda) \to (\WPRF.\msk_i, \WPRF.\extk_i)$.
        \item It chooses a random subset $S$ of $[2n]$ with size $n$.
        \item It runs $(\NTCF.\pp_i,\NTCF.\td_i) \lrun \NTCF.\FuncGen(1^\lambda, \mode)$ and $(\SFE.\crs_i,\SFE.\td_i) \lrun \SFE.\CRSGen(1^\secp,  \mode)$ for $i\in [2n]$, where $\mode =1$ if $i\in S$ and $\mode =2$ if $i\in \overline{S}$.
    \end{itemize}
    Finally, it sends $\pp=\{\NTCF.\pp_i,\SFE.\crs_i\}_{i\in [2n]}$ to $\A$.
    It privately keeps the master secret key $\msk = \{ \WPRF.\msk_i \}_{i\in [2n]}$ and the deletion verification key $\dvk=\left( S, \{ \NTCF.\td_i ,  \SFE.\td_i\}_{i\in \overline{S}} \right)$.
    \item 
    Until $\qA$ receives the challenge input at Step~\ref{Step:PRF-challenge-is-given} of the game, $\qA$ has access to the evaluation oracle $\Eval(\msk,\cdot )$, which takes as input $\hats=\hats_1\| \cdots \|\hats_{2n} \in \bin^{2n \lenm }$ and returns $\hatt=\hatt_1\| \cdots \|\hatt_{2n}$, where $\hatt_i=\WPRF.\Eval(\WPRF.\msk_i,\hats_i)$. 
    \item The adversary $\A$ sends $\{y_i,\SFE.\msg^{(1)}_i\}_{i\in [2n]}$ to the challenger.
    \item \label{inner:PRF-second-msg-to-adv}
    The challenger computes the message to the adversary as follows. 
    It first runs 
    \[
    \SFE.\Send(\SFE.\crs_i, \SFE.\msg^{(1)}_i, C[\WPRF.\msk_i,y_i] ) \rrun \SFE.\msg^{(2)}_i 
    \]
    for $i\in [2n]$. Then, it sends $\{\SFE.\msg^{(2)}_{i} \}_{i\in [2n]}$ to $\A$.
    
    \item Then, $\A$ submits the deletion certificate 
    $\cert = \{d_i, c_i\}_{i \in [2n]} $.

    \item Then, the challenger runs $\DelVrfy$ as follows.
    \label{Step:PRF-Del-verify-in-prf-proof}
    \begin{itemize}
%    \item If $\valid = 0$, it outputs $\bot$. 
%   
    \item For $i\in \overline{S}$, it runs $\SFE.\StaRcv(\SFE.\td_i, \SFE.\msg^{(1)}_i) \to \{ \alpha_{i,j}^{b} \}_{j\in [\lenx], b\in \bin } $. 
    \item For $i\in \overline{S}$, it computes 
    $d^*_i = (d^*_i[1],\ldots, d^*_i[\lenx]) \in \bin^\lenx$ as 
    $
    d^*_i[j] \seteq \langle c_{i,j}, (\alpha^0_{i,j} \oplus \alpha^1_{i,j})\rangle
    $.  
    \item For $i\in \overline{S}$, it runs $(x_{i,0},x_{i,1}) \lrun \NTCF.\Invert(\NTCF.\td_i, y_i) $.
    If the inversion fails for any $i\in \overline{S}$, the output of $\DelVrfy$ is $\bot$.
    \item For $i\in \overline{S}$, it checks whether 
    \begin{equation}\label{eq:PRF-deletion-verify-in-the-proof}
    \langle d_i \oplus d^*_i  , x_{i,0}\oplus x_{i,1} \rangle =0 
    ~
    \land
    ~
    \NTCF.\GoodSet(\NTCF.\td_i,y_i, d_i \oplus d^*_i ) = \top
    \end{equation}
    holds using $\NTCF.\td_i$. 
    If \cref{eq:PRF-deletion-verify-in-the-proof} holds for all $i\in \overline{S}$, the output is $\top$ and otherwise, $\bot$.
    \end{itemize}
    If the output of the algorithm is $\bot$, then it indicates that the output of the game is $0$ (i.e., $\A$ failed to win the game). 
    Otherwise, the challenger returns 
    $\dvk$ to $\A$
    and continues the game.
    
    \item \label{Step:PRF-challenge-is-given}
    Then, the challenger chooses a random input $s=s_1\|s_2\|\cdots \|s_{2n} \sample \bin^{2n\lenm}$ and sends $s$ to $\A$.
    After receiving the challenge input, $\qA$ loses its oracle access to the evaluation oracle.
    \item \label{Step:PRF-winning-condition}
    $\A$ then outputs $t'=t'_1\| \cdots \|t'_{2n}$.
    The adversary wins the game if $t'=t$.
\end{enumerate}
By the definition, we have $\Pr[\event_{0}]= \epsilon$.
\item[$\Hyb_1$:]  
  In this hybrid, the challenger computes $\SFE.\msg^{(2)}_i$ differently for $i\in S$ in Step \ref{inner:PRF-second-msg-to-adv} of the game. Namely, it is replaced with the following: 
\begin{enumerate}[label=\arabic*', start=4]
    \item
    \label{Step:PRF-step-4prime}
    The challenger computes the second message to the adversary as follows. 
    \begin{itemize}
    \item It runs 
  $\SFE.\Extract(\SFE.\td_i, \SFE.\msg^{(1)}_i) \to x_i$ 
  and $\NTCF.\Chk(\NTCF.\pp_i, x_i, y_i) = \verdict_i$ for $i\in S$.
        
        \item It computes 
          \[
              \SFE.\msg^{(2)}_i \lrun 
              \begin{cases}
              \SFE.\Sim(\SFE.\crs_i, \SFE.\msg^{(1)}_i,1^{|C|}, \gamma_i )
                  & \text{if }  i\in S \\
             \SFE.\Send(\SFE.\crs_i, \SFE.\msg^{(1)}_i, C[\WPRF.\msk_i,y_i] )
                  & \text{if }  i\in \overline{S} 
              \end{cases},
        \]       
        where
  \[
  \gamma_i = 
  \begin{cases}
  \WPRF.\Mark(\WPRF.\msk_i, x_i)=
    \WPRF.\key_i(x_i)  
      & \text{if }  \verdict_i = \top \\
    \bot 
      & \text{if } \verdict_i  = \bot. 
  \end{cases}
  \]
    \end{itemize}
    Finally, it sends $\{\SFE.\msg^{(2)}_{i} \}_{i\in [2n]}$ to $\A$.
\end{enumerate}

%  Note that by the change introduced in the previous hybrid, if the game proceeds to the point where $\SFE.\msg^{(2)}_i$ is computed, we have $x'_i = x_i$ for $i\in S$. 
  Since we have $\gamma_i = C[\WPRF.\msk_i,y_i](x_i)$, it follows that this game is indistinguishability from the previous game by the straightforward reduction to the extractability of $\SFE$ in the extraction mode. 
  Thus, we have $|\Pr[\event_{0}] - \Pr[\event_{1}]| = \negl(\secp)$.
  Note that in this hybrid, no marked key of $\WPRF$ is necessary for running the game for $i$ such that $i\in S$ and $\verdict_i = \bot$.
  
   \item[$\Hyb_2$:]  
    In this hybrid, the challenger aborts the game and sets the output of the game to be $0$ (i.e., $\qA$ loses the game), once it turns out that there is $i\in S$ such that $\NTCF.\Chk(\NTCF.\pp_i, x_i, y_i)=\bot$.
    Namely, we replace Step \ref{Step:PRF-step-4prime} in the previous hybrid with the following: 
\begin{enumerate}[label=\arabic*'', start=4]
    \item \label{Step:PRF-step-4''}
    The challenger computes the second message to the adversary as follows. 
    \begin{itemize}
    \item It runs 
  $\SFE.\Extract(\SFE.\td_i, \SFE.\msg^{(1)}_i) \to x_i$ 
  and $\NTCF.\Chk(\NTCF.\pp_i, x_i, y_i) = \verdict_i$ for $i\in S$.
  \item If $\verdict_i = \bot$ for any $i\in S$, it aborts the game and sets the output of the game to be $0$.
        \item It computes 
          \[
              \SFE.\msg^{(2)}_i \lrun 
              \begin{cases}
              \SFE.\Sim(\SFE.\crs_i, \SFE.\msg^{(1)}_i, 1^{|C|},  \WPRF.\key_i(x_i)   )
                  & \text{if }  i\in S \\
             \SFE.\Send(\SFE.\crs_i, \SFE.\msg^{(1)}_i, C[\WPRF.\msk_i,y_i] )
                  & \text{if } i\in \overline{S}. 
              \end{cases}
        \]       
    \end{itemize}
    Finally, it sends $\{\SFE.\msg^{(2)}_{i} \}_{i\in [2n]}$ to $\A$.
\end{enumerate}

We claim $\Pr[\event_{2}]\geq  \Pr[\event_{1}] -\negl(\secp)$.
To show this, we observe that the only scenario where the adversary $\qA$ wins the game in the previous hybrid but not in the current one occurs when $\qA$ chooses $y_i$ such that $\verdict_i =\bot$ for some $i\in S$, yet still succeeds in predicting $t$. 
However, this implies that $\qA$ has predicted the value of $\WPRF.\Eval(\WPRF.\msk_i,s_i)$ without getting any marked key for the $i$-th instance of $\WPRF$. 
Note that the probability that $\qA$ has made an evaluation query to $\WPRF.\Eval(\WPRF.\msk_i, \cdot )$ on input $s_i$ is negligible, since $s_i$ is chosen uniformly at random and $\qA$ is not allowed to make evaluation queries once it receives $s_i$.
By a straightforward reduction to the unpredicatability of $\WPRF$, the claim follows.
%\synote{We may assume standard PRF security for $\WPRF$.}

  \item[$\Hyb_3$:]  
  In this hybrid, we change the winning condition of the adversary. 
  Namely, we replace Step \ref{Step:PRF-challenge-is-given} and Step \ref{Step:PRF-winning-condition} of the game with the following. 
  \begin{enumerate}[label=\arabic*', start=7]
      \item
      The challenger parses the adversary $\qA$ right before Step~\ref{Step:PRF-challenge-is-given} of the game as a unitary circuit $U_\qA$ and a quantum state $\rho_\qA$.
      It then constructs a quantum predictor $\qP$ for $\WPRF$ that takes as input $\{s_i\}_{i\in S}$ as input and outputs $\{ t'_i \}_{i\in S}$ 
      from $\A$ as follows.
     \begin{description}
         \item[$\qP( \{s_i\}_{i\in S} )$:]
         It chooses $s_i \gets \bin^\lenm$ for $i\in \overline{S}$ and runs $\A$ on input $ \{ s_i \}_{i\in [2n]} $ to obtain $\{t'_i\}_{i\in 2n}$.
         Finally, it outputs $\{ t'_i\}_{i\in S}$.  
    \end{description}
     \item 
     The challenger parses
     $\qP$ as a pair of unitary $U_\qP$ and a quantum state $\rho_\qP$.
     Then, it runs the extraction algorithm as 
         $\WPRF.\qExt( \{\WPRF.\extk_i\}_{i\in S} , (U_\qP, \rho_\qP ) ) \to \{x'_i\}_{i\in S}$.
    \item \label{Step:PRF-step-9prime}
    If $x_i = x'_i$ holds for all $i\in S$, 
    the challenger sets the output of the game to be $1$. 
    Otherwise, it sets it to be $0$.
 \end{enumerate}
  As we show in \cref{lem:reduction-to-par-ext-PRF}, 
  we have $\Pr[\event_{3}] \geq \Pr[\event_{2}]-\negl(\secp)$
  by the parallel mark extractability of $\WPRF$.
\item[$\Hyb_4$:]  
    In this game, we replace Step \ref{Step:PRF-step-9prime} of the above game with the following:
    \begin{enumerate}[label=\arabic*'', start=9]
        \item If $\NTCF.\Chk(\NTCF.\pp_i, x'_i, y_i)=\top$ holds for all $i\in S$, 
    the challenger sets the output of the game to be $1$. 
    Otherwise, it sets it to be $0$. 
    \end{enumerate}
    We observe that $x'_i = x_i$ in Step \ref{Step:PRF-step-9prime} implies $\NTCF.\Chk(\NTCF.\pp_i, x'_i, y_i) = \NTCF.\Chk(\NTCF.\pp_i, x_i, y_i) =\top$.
    To see the latter equality holds, recall that the game is set to abort if $\NTCF.\Chk(\NTCF.\pp_i, x_i, y_i) = \bot$ for any $i \in S$ 
    in Step~\ref{Step:PRF-step-4''}. Thus, if the game continues to Step \ref{Step:PRF-step-9prime}, the equality holds.   
    We therefore have $\Pr[\event_{4}] \geq  \Pr[\event_{3}]$.
    (Actually, $\Pr[\event_{4}] = \Pr[\event_{3}]$ holds due to the uniqueness of $x_i$ that passes the check. However, showing the inequality suffices for our proof.)
\item[$\Hyb_5$:]  
    In this game, we switch Step \ref{Step:PRF-step-4''} in the previous hybrid back to Step \ref{Step:PRF-step-4prime}. 
    The difference from the previous hybrid is that it continues the game even if $\verdict_i = \bot$ for some $i\in S$.
    Since the adversary had no chance in winning the game in such a case in the previous hybrid, this change only increases the chance of the adversary winning. 
    We therefore have $\Pr[\event_{5}] \geq \Pr[\event_{4}]$.
  
\item[$\Hyb_6$:]  
    In this hybrid, we switch Step \ref{Step:PRF-step-4prime} in the previous hybrid back to Step \ref{inner:PRF-second-msg-to-adv}.
    Namely, it computes $\SFE.\Send( \SFE.\crs_i, \allowbreak\SFE.\msg^{(1)}_i, C[\WPRF.\msk_i,y_i] ) \rrun \SFE.\msg^{(2)}_i$ for all $i\in [2n]$ again.
    Similarly to the change from $\Hyb_0$ to $\Hyb_1$, the winning probability of $\A$ in this game does not change significantly by the extractability of $\SFE$ in the extraction mode. %\shota{The name of the security notion should be fixed.}
    Namely, we have $|\Pr[\event_{6}] - \Pr[\event_{5}]| = \negl(\secp)$.
    Note that we do not need $\{\SFE.\td_i\}_{i\in S}$ any more to run the game.

\item[$\Hyb_7$:] 
    In this hybrid, we change the way $\SFE.\crs_i$ is sampled for $i\in S$.
    Namely, it is sampled as $(\SFE.\crs_i,\SFE.\td_i) \lrun \SFE.\CRSGen(1^\secp, 2)$ for all $i\in [2n]$. 
    Concretely, we replace Step \ref{Step:prf-proof-setup} with the following:
\begin{enumerate}[label=\arabic*']
    \item \label{Step:prf-proof-setup-1-prime}
    The challenger computes the first message to the adversary as follows.
    \begin{itemize}
        \item 
        For $i\in [2n]$, it runs $\WPRF.\KG(1^\lambda) \to (\WPRF.\msk_i, \WPRF.\extk_i)$.
        \item It chooses a random subset $S$ of $[2n]$ with size $n$.
        \item It runs $(\NTCF.\pp_i,\NTCF.\td_i) \lrun \NTCF.\FuncGen(1^\lambda, \mode)$ and $(\SFE.\crs_i,\SFE.\td_i) \lrun \SFE.\CRSGen(1^\secp, 2)$ for $i\in [2n]$, where $\mode =1$ if $i\in S$ and $\mode =2$ if $i\in \overline{S}$.
    \end{itemize}
    Finally, it sends $\pp=\{\NTCF.\pp_i,\SFE.\crs_i\}_{i\in [2n]}$ to $\A$.
    It privately keeps the master secret key $\msk = \{ \WPRF.\msk_i \}_{i\in [2n]}$ and the deletion verification key $\dvk=\left( S, \{ \NTCF.\td_i ,  \SFE.\td_i\}_{i\in \overline{S}} \right)$.  
\end{enumerate}  
    Since we do not need $\SFE.\td_i$ for $i\in S$ to run $\Hyb_6$ and $\Hyb_7$, a straightforward reduction to the mode indistinguishability of $\SFE$ implies that $\Hyb_6$ and $\Hyb_7$ are computationally indistinguishable. 
    Therefore, we have $|\Pr[\event_{6}] - \Pr[\event_{7}]| = \negl(\secp)$.
    \item[$\Hyb_8$:] 
    In this game, we change the way how the challenger runs $\DelVrfy$.
    In particular, the challenger computes $\SFE.\StaRcv(\SFE.\td_i, \SFE.\msg^{(1)}_i) \to \{ \alpha_{i,j}^{b} \}_{j, b} $ and $d^*_i$ for all $i\in [2n]$, 
    instead of only for $i\in \overline{S}$.  
    Still, the inversion of $y_i$ and 
    the check of \cref{eq:PRF-deletion-verify-in-the-proof} are performed only for $i\in \overline{S}$. 
    Concretely, we replace Step \ref{Step:PRF-Del-verify-in-prf-proof} with the following.
\begin{enumerate}[label=\arabic*', start=6]
        \item Then, the challenger runs (the modified version of) $\DelVrfy$ as follows.
    \begin{itemize}
%    \item If $\valid = 0$, it outputs $\bot$. 
%   
    \item For $i\in [2n]$, it runs $\SFE.\StaRcv(\SFE.\td_i, \SFE.\msg^{(1)}_i) \to \{ \alpha_{i,j}^{b} \}_{j\in [\lenx], b\in \bin } $. 
    \item For $i\in [2n]$, it computes 
    $d^*_i = (d^*_i[1],\ldots, d^*_i[\lenx]) \in \bin^\lenx$ as 
    $
    d^*_i[j] \seteq \langle c_{i,j}, (\alpha^0_{i,j} \oplus \alpha^1_{i,j})\rangle
    $.  
    \item For $i\in \overline{S}$, it runs $(x_{i,0},x_{i,1}) \lrun \NTCF.\Invert(\NTCF.\td_i, y_i) $.
    If the inversion fails for any $i\in \overline{S}$, the output of $\DelVrfy$ is $\bot$.
    \item For $i\in \overline{S}$, it checks whether 
    \begin{equation}
    e_i= ( d_i \oplus d^*_i ) \cdot(x_{i,0}\oplus x_{i,1})
    ~
    \land
    ~
    \NTCF.\GoodSet(\NTCF.\td_i,y_i, d_i \oplus d^*_i ) = \top
    \end{equation}
    holds using $\NTCF.\td_i$. 
    If the above holds for all $i\in \overline{S}$, the output is $\top$ and otherwise, $\bot$.
    \end{itemize}
    If the output of the algorithm is $\bot$, then it indicates that the output of the game is $0$ (i.e., $\A$ failed to win the game). 
    Otherwise, the challenger returns 
    $\dvk=\left( S, \{ \NTCF.\td_i,  \SFE.\td_i \}_{i\in \overline{S}} \right)$ to $\A$
    and continues the game.
\end{enumerate}
    
    We can see that this change is only conceptual, since the outcome of the verification is unchanged.  
    We may therefore have
    $\Pr[\event_{8}] = \Pr[\event_{7}] $.
\end{description}
    Based on the above discussion, it is established that 
    $\Pr[\event_{8}] \geq \epsilon - \negl(\secp)$, which is non-negligible by our assumption.
    However, we can show $\Pr[\event_{8}]=\negl(\secp)$ assuming the cut-and-choose adaptive hardcore bit property of $\NTCF$, by a very similar reduction to \cref{lem:PKE-reduction-to-ada-HC}.
    This results in a contradiction as desired.
\end{proof}
It remains to prove \cref{lem:reduction-to-par-ext-PRF}.
\begin{lemma}\label{lem:reduction-to-par-ext-PRF}
    If $\WPRF$ satisfies parallel mark extractability, we have $\Pr[\event_{3}] \geq \Pr[\event_{2}]-\negl(\secp)$. 
\end{lemma}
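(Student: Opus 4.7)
The plan is to mimic the proof of \cref{lem:reduction-to-parallel-extractability} for the PKE case, constructing an adversary $\qB$ that plays the role of the challenger in $\Hyb_2$ for $\A$ while simultaneously acting as an adversary against the parallel prediction experiment of $\WPRF$ with $n$ instances. The key new wrinkle relative to the PKE case is handling the evaluation oracle queries that $\A$ makes throughout the game.

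Concretely, $\qB$ proceeds as follows. When its own challenger signals the start of the parallel prediction game, $\qB$ fixes a random subset $S=\{i_1,\ldots,i_n\}\subset [2n]$ and samples $\{\NTCF.\pp_i,\NTCF.\td_i,\SFE.\crs_i,\SFE.\td_i\}_{i\in [2n]}$ exactly as in $\Hyb_2$. For $i\in \overline{S}$, $\qB$ additionally samples $(\WPRF.\msk_i,\WPRF.\extk_i)\lrun \WPRF.\KG(1^\secp)$ locally. For $i_j\in S$, the $j$-th $\WPRF$ instance of its own challenger is implicitly used (without knowing $\WPRF.\msk_{i_j}$). $\qB$ then sends $\pp=\{\NTCF.\pp_i,\SFE.\crs_i\}_{i\in [2n]}$ to $\A$. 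Whenever $\A$ queries the evaluation oracle on $\hats=\hats_1\|\cdots\|\hats_{2n}$, $\qB$ computes $\hatt_i = \WPRF.\Eval(\WPRF.\msk_i,\hats_i)$ locally for $i\in \overline{S}$, and forwards $\hats_{i_j}$ to its own $j$-th evaluation oracle for $i=i_j\in S$ to obtain $\hatt_{i}$. It then returns $\hatt=\hatt_1\|\cdots\|\hatt_{2n}$ to $\A$.

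When $\A$ sends $\{y_i,\SFE.\msg^{(1)}_i\}_{i\in [2n]}$, $\qB$ computes $x_i \gets \SFE.\Extract(\SFE.\td_i,\SFE.\msg^{(1)}_i)$ and $\verdict_i \gets \NTCF.\Chk(\NTCF.\pp_i,x_i,y_i)$ for $i\in S$, and aborts (setting the output to $0$) if any $\verdict_i=\bot$. Otherwise, $\qB$ submits $(x_{i_1},\ldots,x_{i_n})$ as its chosen marks to its own challenger, receives the marked keys $\{\WPRF.\key'_j(x_{i_j})\}_{j\in[n]}$, and sets $\WPRF.\sk_{i_j}(x_{i_j}) := \WPRF.\key'_j(x_{i_j})$. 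It then produces $\SFE.\msg^{(2)}_i$ as in $\Hyb_2$: for $i\in S$, using $\SFE.\Sim(\SFE.\crs_i,\SFE.\msg^{(1)}_i,\WPRF.\sk_i(x_i))$, and for $i\in \overline{S}$, honestly via $\SFE.\Send$ using the locally known $\WPRF.\msk_i$. Upon receiving $\cert$ from $\A$, $\qB$ runs $\DelVrfy$ using its NTCF and SFE trapdoors; if the verification fails, $\qB$ aborts, otherwise it hands $\dvk$ to $\A$ and continues. Evaluation queries after this point continue to be answered as above.

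Once $\A$ declares readiness for the challenge, $\qB$ also declares readiness to its own challenger and receives challenge inputs $\{s'_j\}_{j\in [n]}$. It sets $s_{i_j}:=s'_j$, samples $s_i\lrun\bin^\lenm$ for $i\in\overline{S}$, and runs $\A$ on $s=s_1\|\cdots\|s_{2n}$ to obtain $(t'_1,\ldots,t'_{2n})$. Finally, $\qB$ outputs $(t'_{i_1},\ldots,t'_{i_n})$. By construction, $\qB$ perfectly simulates $\Hyb_2$ for $\A$, and $\qB$ wins in the parallel prediction game of $\WPRF$ whenever $\A$ wins in $\Hyb_2$; hence $\advb{\WPRF,\qB,n}{par}{pre}(\secp) \geq \Pr[\event_2]$. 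Applying parallel mark extractability of $\WPRF$ gives $\advb{\WPRF,\qB,\qExt,n}{par}{ext}(\secp) \geq \advb{\WPRF,\qB,n}{par}{pre}(\secp)-\negl(\secp)$, and by inspection the extraction experiment with $\qB$ corresponds exactly to $\Hyb_3$: the predictor $\qP$ constructed from $\qA$ in $\Hyb_3$ is the same object as the non-uniform QPT $\qB$ right before it submits its challenge, so $\Pr[\event_3]=\advb{\WPRF,\qB,\qExt,n}{par}{ext}(\secp)$. Combining these yields the lemma.

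The main obstacle will be verifying that the evaluation-oracle simulation is clean—in particular, that $\qB$'s access pattern to its own evaluation oracles respects the timing constraints of the parallel prediction experiment (queries must cease when the challenge is received), and that no implicit dependence on the unknown master keys $\{\WPRF.\msk_{i_j}\}_{j\in [n]}$ leaks into the simulation (e.g., through the $\SFE.\Send$ computation for $i\in S$, which is precisely why we replaced it with $\SFE.\Sim$ in $\Hyb_1$). Both of these align naturally with the hybrid structure, so the reduction goes through without additional technical effort.
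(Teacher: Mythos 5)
Your proposal is correct and follows essentially the same route as the paper's proof: the paper likewise builds a reduction $\qB$ that embeds the $n$ external $\WPRF$ instances into the positions indexed by $S$, answers $\qA$'s evaluation queries by splitting them between its own oracles (for $i\in S$) and locally sampled keys (for $i\in \overline{S}$), submits the extracted $x_{i_j}$ as marks, and then observes that $\advb{\WPRF,\qB,n}{par}{pre}(\secp)\geq \Pr[\event_2]$ while $\Pr[\event_3]=\advb{\WPRF,\qB,\qExt,n}{par}{ext}(\secp)$ because the predictor $\qP$ of $\hyb_3$ coincides with $\qB$ just before the challenge. The oracle-timing and key-independence concerns you flag are indeed the only points needing care, and they are resolved exactly as you describe.
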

\begin{proof}
    We consider an adversary $\qB$ who plays the role of the challenger in $\Hyb_2$ for $\A$, while working as an adversary against the parallel prediction experiment with the number of instances $n$. It proceeds as follows: 
    \begin{enumerate}
    \item At the beginning of the game, the challenger runs $\WPRF.\KG(1^\secp)\ra (\WPRF.\msk'_j,\WPRF.\extk'_j)$ for $j\in [n]$. 
    $\qB$ is then given oracle access to $\WPRF.\Eval(\WPRF.\msk'_j,\cdot)$ for each $j\in [n]$.
    $\qB$ proceeds as follows:
    \begin{itemize}
        \item $\qB$ chooses a random subset $S=\{i_1,\ldots, i_n\}$ of $[2n]$ with size $n$.
        \item It implicitly sets 
    $\WPRF.\msk_{i_j} = \WPRF.\msk'_{j} $ and $\WPRF.\extk_{i_j} = \WPRF.\extk'_{j} $ for $j\in [n]$.
    \item It runs $\WPRF.\KG(1^\secp)\ra (\WPRF.\msk_i,\WPRF.\extk_i)$ for $i\in \overline{S}$. 
    \item It chooses $\{\NTCF.\pp_i,\NTCF.\td_i\}_{i\in [2n]}$ and $\{\SFE.\crs_i,\SFE.\td_i\}_{i\in [2n]} $ as in $\hyb_2$.
    Finally, $\qB$ sends $\pp = \{\NTCF.\pp_i,\SFE.\crs_i\}_{i\in [2n]}$ to $\qA$.
    \end{itemize}
 \item 
    Until $\qA$ receives the challenge input from $\qB$, $\qA$ has access to the evaluation oracle $\Eval(\msk,\cdot )$ throughout the game.
    To answer the query on input $\hats=\hats_1\| \cdots \|\hats_{2n} \in \bin^{2n \lenm }$,
    $\qB$ computes $\hatt_i =\WPRF.\Eval(\WPRF.\msk_i,\hats_i)$ by itself if $i\in \overline{S}$ and by making an evaluation query to the oracle $\WPRF.\Eval(\WPRF.\msk_i,\cdot )$ on input $\hats_i$ if $i\in S$.
    Then it returns $\hatt=\hatt_1\| \cdots \|\hatt_{2n}$ to $\qA$.
        
       \item 
       At some point, $\qA$ sends $\{y_i,\SFE.\msg^{(1)}_i\}_{i\in [2n]}$ to $\qB$.
       $\qB$ then chooses its initial message to its challenger by the following steps. 
       \begin{itemize}
           \item It then computes $x_j$ and $\verdict_j$ for $j\in S$ using the trapdoors for $\NTCF$ and $\SFE$ as in $\Hyb_2$. If $\verdict_j=\bot$ for any $j\in S$, it aborts. 
        \item It then submits $\{x'_1,\ldots, x'_n\}\seteq \{x_{i_1},\ldots, x_{i_n}\}$ to its challenger.
       \end{itemize} 
        
        \item Given $\{\WPRF.\key'_j(x'_{j})\}_{j\in [n]}$ from the challenger, $\qB$ proceeds as follows. 
        \begin{itemize}
        \item It sets $
        \WPRF.\key_{i_j}(x_{i_j})\seteq \WPRF.\key'_j(x'_{j})$ for $j\in [n]$.
        \item It computes $\SFE.\msg^{(2)}_i$ as follows: 
        \[
        \SFE.\msg^{(2)}_i \lrun 
        \begin{cases}
          \SFE.\Sim(\SFE.\crs_i,\SFE.\msg^{(1)}_i, 1^{|C|}, \WPRF.\key_i(x_i) ) & \text{if } i\in S \\
          \SFE.\Send(\SFE.\crs_i, \SFE.\msg^{(1)}_i, C[\WPRF.\msk_i, y_i] ) & \text{if } i\in \overline{S}.
        \end{cases}
        \]
        \item $\qB$ then sends $\{ \SFE.\msg^{(2)}_{i} \}_{i\in [2n]}$ to $\qA$. 
        \item Then, $\A$ submits the deletion certificate $\cert$ to $\qB$.
        $\qB$ runs $\DelVrfy$ using the trapdoors for $\NTCF$ and $\SFE$.
        If the output is $\bot$, $\qB$ also outputs $\bot$ and aborts.
    Otherwise, $\qB$ returns 
    $\dvk=\left( S, \{ \NTCF.\td_i,\SFE.\td_i \}_{i\in \overline{S}} \right)$ to $\A$.
        \end{itemize}

        \item Then, the challenger sends $\{s'_j\}_{j\in [n]}$ to $\qB$.  
        \item $\qB$ computes its output as follows. 
        \begin{itemize}
        \item It sets $s_{i_j} \seteq s'_{j}$ for $j\in [n]$ and samples $s_i\lrun \bin^\ell$ for $i\in \overline{S}$.
        \item It then sends $\{s_{i} \}_{i\in [2n]}$ to $\A$.
        \item $\A$ outputs $(t'_1,\ldots, t'_{2n})$. 
        It then outputs $(t'_{i_1},\ldots, t'_{i_{n}})$ as its output.
        \end{itemize}
    \end{enumerate}
We can see that $\qB$ defined above perfectly simulates $\Hyb_2$ for $\A$ above
and $\qB$ wins the game whenever $\A$ does.
We therefore have $\advb{\WPRF,\qB,n}{par}{pre}(\secp) \geq \Pr[\event_2]$.
Furthermore, by the parallel mark extractability of $\WPRF$, we have 
$\advb{\WPRF,\qB,\qExt,n}{par}{ext}(\secp) \geq \advb{\WPRF,\qB,n}{par}{pre}(\secp) -\negl(\secp)$.
We can also see that $\Pr[\event_3]=\advb{\WPRF,\qB,\qExt,n}{par}{ext}(\secp)$ holds, because the way $\qB$ predicts the outputs is exactly the same as the way $\qP$ does in $\hyb_3$.
By combining these equations, the lemma follows.
\end{proof}
% !TEX root = main.tex

\section{DS-SKL with Classical Lessor}\label{sec-DSSKL}
In this section, we show the construction of DS-SKL with a classical lessor from the LWE assumption. 
\subsection{Construction}
We construct a DS-SKL scheme with a classical lessor that satisfies RUF-VRA security as per \cref{def:RUF-VRA}. 
The construction is very similar to our PKE-SKL construction, where we replace the watermarkable PKE with watermarkable DS.
Concretely, we use the following ingredients:
\begin{itemize}
 \item NTCF generators $\NTCF=\NTCF.(\FuncGen, \StateGen, \Chk, \Invert)$ defined as in \cref{sec:NTCF}.
 It is associated with a deterministic algorithm $\NTCF.\GoodSet$ (See \cref{def:NTCF-generator}). 
 We assume that $\NTCF.\Chk$ is a deterministic algorithm. 
 We denote the input space by $\cX$ and assume $\cX \subseteq \bin^\lenx$ for some polynomial $\lenx = \lenx(\secp)$.
 It is known that NTCF generators can be constructed from LWE.
% \takashi{This can be constructed from LWE.} 
 \item 
 Special Dual-mode SFE $\SFE=\SFE.(\CRSGen,\Receive{1}, \Send, \Receive{2})$ that supports circuits with input space $\bin^\lenx$ as defined in \cref{sec:dual_mode_SFE}.
 It is associated with an extraction algorithm $\SFE.\Extract$, a simulation algorithm $\SFE.\Sim$, a state recovery algorithm $\SFE.\StaRcv$,
 and a ''coherent version" $\SFE.\qReceive{1}$ of $\SFE.\Receive{1}$, which has the efficient state superposition property. 
 We require that $\SFE.\Receive{2}$ and $\SFE.\Extract$ are deterministic.
% We assume that the input space of $\Receive{1}(\SFE.\crs, \cdot)$ is $\bin^\lenx$. 
 We denote that the length of each block $\alpha_{i,j}^b$ output by $\SFE.\StaRcv$ by $\lenst$. 
 As we show in \cref{sec:dual-mode-SFE-from-LWE}, this can be constructed from LWE.
% \takashi{This can be constructed from  LWE.} 
 \item Watermarkable DS (with parallel extraction): 
 $\WDS=\WDS.(\KG, \Mark, \Sign, \Vrfy)$ as defined in \cref{watermarkable_DS}.
 Without loss of generality, 
 we assume that this satisfies EUF-CMA security as a plain DS (See \cref{rem:WDS-security-as-plain-DS}).
 It is associated with a parallel mark extraction algorithm $\WDS.\qExt$. 
We require that $\WDS.\Mark$ be a deterministic algorithm. This can be assumed without loss of generality because the randomness needed to run 
$\WDS.\Mark$ can be included into $\WDS.\msk$.
 We assume that it has coherent-signability and thus it is associated with a coherent signing quantum algorithm $\WDS.\QSign$.
 We need the mark space to be $\bin^\lenx$ and denote the message space by $\bin^\lenm$. 
 The bit length of the marked signing key is denoted by $\lenmsg$. 
 As we show in \cref{watermarkable_DS}, this can be constructed from the SIS assumption.

\end{itemize}
Let $n=n(\secp)$ be a positive integer such that $n=\omega(\log \secp)$.

Below is the description of the construction.
\begin{description}
\item[$\Setup(1^\secp)$:] 
On input the security parameter, it proceeds as follows. 
\begin{itemize}
\item 
For $i\in [2n]$, run $\WDS.\KG(1^\lambda) \to (\WDS.\vk_i, \WDS.\msk_i, \WDS.\extk_i)$.
\item Choose a uniform subset $S\subseteq [2n]$ such that $|S|=n$. 
In the following, we denote $[2n]\backslash S$ by $\overline{S}$.
\item 
For $i\in [2n]$, run $(\NTCF.\pp_i,\NTCF.\td_i) \lrun \NTCF.\FuncGen(1^\lambda, \mode)$ and $(\SFE.\crs_i,\SFE.\td_i) \lrun \SFE.\CRSGen(1^\secp, \mode)$, where 
$
\mode = 
\begin{cases}
  1 & \text{if $i\in S$}\\
  2 & \text{if $i\in \overline{S}$}.
\end{cases}
$

\item Output $\svk = \{\NTCF.\pp_i,\SFE.\crs_i, \WDS.\vk_i\}_{i\in [2n]}$, 
$\msk = \{ \WDS.\msk_i \}_{i\in [2n]}$, and 
$\dvk=(S,\{\NTCF.\td_i, \allowbreak \SFE.\td_i\}_{i\in \overline{S} } )$. 

\end{itemize}
\item[$\qIntKeyGen(\svk, \msk )$:]
The lessor and lessee run the following interactive protocol to generate $\qsigk$.
Here, the lessor takes $\msk$ as input and the lessee takes $\svk$ as input. 

{\bf Lessee's first operation:}
The lessee parses $\svk = \{\NTCF.\pp_i,\SFE.\crs_i, \WDS.\vk_i\}_{i\in [2n]}$ and runs the following algorithms.
\begin{itemize}
\item For each $i\in [2n]$, run $\NTCF.\StateGen(\NTCF.\pp_i) \rrun (y_i, \ket{\psi_i})$.
\item For each $i\in [2n]$, run $\SFE.\qReceive{1}(\SFE.\crs_i,\ket{\psi_i}) \to (\ket{\psi'_i},\SFE.\msg_i^{(1)})$.

% \item 
% For each $i\in [2n]$, 
% prepare a uniform superposition over randomness for SFE and 
% run the first-message-generation algorithm of the SFE under CRS $\SFE.\crs_i$ in superposition, where the second register of $\psi_i$ is used as a message, and measure the ciphertext $\SFE.\ct_i$.  
% This results in the following state:
%     \begin{itemize}
%     \item If $i\in S$, $\ket{\psi''_i}=\ket{b_i}\ket{x_i}\ket{\st_i}$, where $\st_i$ is the stae corresponding $(x_i,\SFE.\ct_i)$. 
%     \item If $i\notin S$, $\ket{\psi''_i}=\ket{0}\ket{x_{i,0}}\ket{\st_{i,0}}+\ket{1}\ket{x_{i,1}}\ket{\st_{i,1}}$, where $\st_{i,b}$ is the state corresponding to $(x_{i,b},\SFE.\ct_{i})$. 
%     \end{itemize}

\item Send $\{y_i,\SFE.\msg^{(1)}_i\}_{i\in [2n]}$ to the lessor. 
\end{itemize}

{\bf Lessor's operation:}
Given the message $\{y_i,\SFE.\msg^{(1)}_i\}_{i\in [2n]}$ from the lessee, the lessor proceeds as follows.
\begin{itemize}

% \item For $i\in S$, run $\NTCF.\Invert(\NTCF.\td_i, y_i) \to x_i$ and $\SFE.\Extract(\SFE.\td_i, \SFE.\ct_i) \to (x'_i,\st_i)$.
%  Set $\valid =1$ if $x_i = x'_i \neq \bot$ holds for all $i\in S$. Otherwise, set $\valid =0$.\footnote{Even if the lessor finds that the lessee is cheating at this point (i.e., $\valid=0$), it continues with the protocol and defers catching the lessee until the deletion verification algorithm is executed. This point is important when we convert the protocol into a non-interactive one. See \cref{rem:non-interactive-version} for further discussion.
%  } 

\item 
For $i\in [2n]$, define $C[\WDS.\msk_i,y_i]$ to be a circuit that takes $x$ as input and computes the following:
\begin{align}
C[\WDS.\msk_i,y_i](x)=
\begin{cases} 
\WDS.\Mark(\WDS.\msk_i, x) & \text{if $\NTCF.\Chk(\NTCF.\pp_i, x, y_i) = \top$} \\ 
\bot & \text{if $\NTCF.\Chk(\NTCF.\pp_i, x, y_i) = \bot$}. \\ 
\end{cases}
\end{align}
%\footnote{Recall that we require $\WPRF.\Mark$ to be deterministic.} 

Without loss of generality, we assume that the size of $C[\WDS.\msk_i,y_i]$ does not depend on $\WDS.\msk_i$ and $y_i$, and denote it by $|C|$.

\item 
For $i\in [2n]$, run $\SFE.\Send( \SFE.\crs_i, \SFE.\msg^{(1)}_i, C[\WDS.\msk_i,y_i] ) \rrun \SFE.\msg^{(2)}_i$.
\item Send $\{\SFE.\msg^{(2)}_{i} \}_{i\in [2n]}$ to the lessee.
\end{itemize}

{\bf Lessee's second operation:}
Given the message $\{\SFE.\msg^{(2)}_{i} \}_{i\in [2n]}$, the lessee proceeds as follows.
It uses the secret state $\ket{\psi'_i}$ in the following.
\begin{itemize}
\item Construct the unitary $U[\SFE.\crs_i, \SFE.\msg_i^{(2)}]$ defined by the following map:
\[
\ket{x}\ket{\SFE.\st} \ket{0^{\lenmsg}} \mapsto 
\ket{x}\ket{\SFE.\st} \ket{ \SFE.\Receive{2}(\SFE.\crs_i, x, \SFE.\st, \SFE.\msg_i^{(2)}) }.
\]
%where $\lenmsg$ is the length of the decryption key of $\WPRF$.\footnote{Recall that we assumed that $\SFE.\Receive{2}$ is a deterministic algorithm.}

\item 
For $i\in [2n]$, apply the unitary to compute 
\[
\ket{\phi_i}\defeq U[\SFE.\crs_i, \SFE.\msg_i^{(2)}]\left( \ket{\psi'_i}\ket{0^{\lenmsg}} \right).  
\]

% \item 
% For $i\in [2n]$, run the output derivation algorithm of SFE in superposition to obtain
% $\ket{\psi'_i}$, where:
%     \begin{itemize}
%     \item If $i\in S$, $\ket{\psi'_i}=\ket{b_i}\ket{x_i}\ket{\st_i}\ket{\sk_i(x_i)}$, 
%     \item If $i\notin S$, $\ket{\psi'_i}=\ket{0}\ket{x_{i,0}}\ket{\st_{i,0}}\ket{\sk_i(x_{i,0})}+\ket{1}\ket{x_{i,1}}\ket{\st_{i,1}}\ket{\sk_i(x_{i,1})}$.
%     \end{itemize}
 \end{itemize}

{\bf Final output:}
The lessee privately obtains a secret key 
\[
\qsigk=\bigotimes_{i\in[2n]}\ket{\phi_i}\ket{\SFE.\crs_i}
\ket{\SFE.\msg^{(2)}_{i}}.\footnote{$\SFE.\crs_i$ and $\SFE.\msg^{(2)}_{i}$ are not needed for signing. They are used only for deletion.}\] 

We write $\tau$ to denote the transcript of the execution of $\qIntKeyGen$.

\item[$\Sign(\msk,m):$]
Upon receiving $\msk=\{\WDS.\msk_i\}_{i\in [2n]}$ and an input $m=m_1\|m_2\|\ldots \|m_{2n} \in \bin^{2n\lenm}$, 
compute $\WDS.\sigma_i \gets \WDS.\Sign(\WDS.\msk_i,m_i)$ for $i\in[2n]$ and output $\sigma = ( \WDS.\sigma_1, \ldots, \WDS.\sigma_{2n} ) $.
\item[$\qSign(\qsigk,m)$:]
Given $\qsigk = \bigotimes_{i\in[2n]}\ket{\phi_i}\ket{\SFE.\crs_i}
\ket{\SFE.\msg^{(2)}_{i}}$ and an input $m=m_1\|m_2\|\ldots \|m_{2n}$, the signing algorithm performs the following for each $i\in [2n]$.
Then, it runs 
\[
\WDS.\QSign(\ket{\phi_i}, m_i) \to ( \ket{\phi'_i}, \WDS.\sigma_i )
\]
for $i\in [2n]$.
It finally outputs 
$\qsigk' = \bigotimes_{i\in[2n]}\ket{\phi'_i}\ket{\SFE.\crs_i}
\ket{\SFE.\msg^{(2)}_{i}}$
and $\sigma = ( \WDS.\sigma_1, \ldots, \WDS.\sigma_{2n} ) $.
%and $\qdk' = \bigotimes_{i\in[2n]} \ket{\phi'_i} \ket{\SFE.\crs_i}  \ket{\SFE.\msg^{(2)}_{i}}$. 

\item[$\SigVrfy(\sigvk,m,\sigma)\ra \top/\bot$:]
Given $\svk = \{\NTCF.\pp_i,\SFE.\crs_i, \WDS.\vk_i\}_{i\in [2n]}$, $m=m_1\|m_2\|\ldots \|m_{2n}$, and $\sigma = ( \WDS.\sigma_1, \ldots, \WDS.\sigma_{2n} ) $, 
it outputs $\top$ if $\WDS.\Vrfy(\WDS.\vk_i, m_i, \WDS.\sigma_i) = \top$ holds for all $i\in [2n]$. Otherwise, it outputs $\bot$.

\item[$\Del(\qsigk)$:]
Given $\qsigk=\bigotimes_{i\in[2n]}\ket{\phi_i}\ket{\SFE.\crs_i}
\ket{\SFE.\msg^{(2)}_{i}}$, the deletion algorithm performs the following for each $i\in [2n]$.
\begin{itemize}
    \item Measure the corresponding registers to recover the classical strings $\SFE.\crs_i$ and $\SFE.\msg^{(2)}_{i}$.
    \item Compute $ U[\SFE.\crs_i,\SFE.\msg_i^{(2)}]^\dagger \ket{\phi_i} $
    and remove the last $\lenmsg$ qubits to obtain $\ket{\psi''_i}$.
    \item Measure $\ket{\psi''_i}$ in Hadamard basis to obtain $(d_i,c_i)$, where each $d_i\in \bin^{\lenx} $ is obtained by measuring the register corresponding to an element in $\cX$
and each $c_i \in \bin^{\lenst \lenx }$ is obtained by measuring the register corresponding to the SFE state.
\end{itemize}
The final output of the algorithm is given by $\cert = \{d_i, c_i\}_{i \in [2n]}$,

\item[$\DelVrfy(\tau,\dvk,\cert)$:]
Given the transcript $\tau$ of $\qIntKeyGen$, the deletion verification key $\dvk=\left( S, \{ \NTCF.\td_i ,  \SFE.\td_i\}_{i\in \overline{S}} \right)$, and the deletion certificate $\cert = \{d_i, c_i\}_{i \in [2n]}$, 
the deletion verification algorithm works as follows.
\begin{itemize}
    \item Retrieve $\{y_i,\SFE.\msg^{(1)}_i\}_{i\in \overline{S}}$ from $\tau$. 
    \item For $i\in \overline{S}$, run $\NTCF.\Invert(\NTCF.\td_i, y_i) \to (x_{i,0},x_{i,1})$. 
    If the inversion fails for any $i$, output $\bot$.
    \item For $i\in \overline{S}$, run $\SFE.\StaRcv(\SFE.\td_i, \SFE.\msg^{(1)}_i) \to \{ \alpha_{i,j}^{b} \}_{j\in [\lenx], b\in \bin } $, where $\alpha_{i,j}^{b} \in \bin^{\lenst }$. 
    \item For $i\in \overline{S}$, compute the vector $d^*_i = (d^*_i[1],\ldots, d^*_i[\lenx]) \in \bin^\lenx$ as 
    \[
    d^*_i[j] \seteq 
    \langle c_{i,j} , (\alpha^0_{i,j} \oplus \alpha^1_{i,j})\rangle ,
    \]
    where $c_i$ is parsed as $c_i = c_{i,1}\| \cdots \| c_{i,\lenx} $, where each block is in $\bin^\lenst$.
    \item For $i\in \overline{S}$, check whether 
    \[
    \langle d_i \oplus d^*_i , (x_{i,0}\oplus x_{i,1}) \rangle
    =0
    ~
    \land
    ~
    \NTCF.\GoodSet(\NTCF.\td_i, y_i, d_i \oplus d^*_i  )=\top
    \]
    holds. 
%    Note that the latter condition can be checked using $\NTCF.\td_i$.
    If it holds for all $i\in \overline{S}$, it outputs $\top$.
    Otherwise, it outputs $\bot$.
\end{itemize}
\end{description}

\paragraph{Correctness}
The following theorem asserts the signature verification correctness,
the reusability with static signing keys, and the deletion verification correctness of the above scheme.
\begin{theorem}\label{th:sig-skl-correctness}
The above construction satisfies the signature verification correctness if 
$\NTCF$ satisfies the state generation correctness, $\SFE$ satisfies evaluation correctness, and $\WDS$ satisfies the correctness and the coherent signability.
Under the same condition, the construction satisfies the reusability with static signing keys. 
Furthermore, the construction satisfies the deletion verification correctness if $\NTCF$ satisfies the state generation correctness, $\SFE$ satisfies the statistical security against malicious senders,
and $\SFE$ supports state recovery in the hiding mode.
\end{theorem}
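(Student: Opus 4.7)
The plan is to follow the template set by the correctness proof for the PKE-SKL scheme (\cref{th:pke-correctness}) and adapt only the parts that differ because signing, unlike PKE decryption, is randomized and must be handled coherently.

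First, the deletion verification correctness is essentially identical to the PKE-SKL case: the deletion procedure, the structure of $\ket{\psi'_i}$ after the interactive key generation, and the verification equation $\langle d_i \oplus d^*_i, x_{i,0} \oplus x_{i,1}\rangle = 0$ with $\NTCF.\GoodSet$ accepting, are all unchanged. Hence the three-step argument from the PKE-SKL proof---(i) deriving $\langle d_i, x_{i,0}\oplus x_{i,1}\rangle \oplus \langle c_i, \st_{i,0,z^*}\oplus \st_{i,1,z^*}\rangle = 0$ via the statistical security of $\SFE$ against malicious senders, (ii) invoking $\NTCF.\GoodSet$ on a uniform $d_i$ via the correctness of the $\NTCF$ generator, and (iii) rewriting $\langle c_i, \st_{i,0,z^*}\oplus \st_{i,1,z^*}\rangle = \langle d^*_i, x_{i,0}\oplus x_{i,1}\rangle$ via the state recoverability of $\SFE$ in the hiding mode---goes through verbatim.

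Next, for signature verification correctness, I would first mirror the PKE-SKL analysis to conclude that, with overwhelming probability, for each $i \in \overline{S}$ the subsystem $\ket{\phi_i}$ is negligibly close (in trace distance) to
\[
\sum_{b\in\bin} q_{i,b,z^*}\,\ket{x_{i,b}}\ket{\st_{i,b,z^*}}\ket{\WDS.\sigk_i(x_{i,b})},
\]
while for $i\in S$ the state collapses to the classical basis state $\ket{x_i}\ket{\st_{i,z^*}}\ket{\WDS.\sigk_i(x_i)}$. In both cases we may view the state as a superposition of the form $\sum_z \alpha_z\ket{z}\ket{\WDS.\sigk_i(x_z)}$ by grouping the $(x, \st)$ registers into a single ``$z$'' register. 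Applying $\WDS.\QSign(\ket{\phi_i}, m_i)$ then yields a signature $\WDS.\sigma_i$: the first clause of coherent signability (\cref{def:coherent_signing} item~\ref{item:coherent_one}) guarantees that on every classical branch the produced signature is distributed as $\WDS.\Sign(\WDS.\sigk_i(x_{i,b}), m_i)$, which by signature verification correctness of $\WDS$ passes $\WDS.\Vrfy(\WDS.\vk_i, m_i, \cdot)$ with overwhelming probability; combined with the near-invariance of the superposed state from clause~\ref{item:coherent_two}, a measurement-before-vs-after argument shows that the signature produced from the superposition verifies with overwhelming probability as well. The overall signature $(\WDS.\sigma_1, \ldots, \WDS.\sigma_{2n})$ therefore verifies. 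The correctness of $\Sign(\msk, m)$ is immediate from the signature verification correctness of $\WDS$.

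Finally, reusability with static signing keys follows directly from the second clause of coherent signability applied to each tensor factor: the state $\ket{\phi'_i}$ returned by $\WDS.\QSign(\ket{\phi_i}, m_i)$ is within negligible trace distance of $\ket{\phi_i}$ (this includes the $i\in S$ case, which is simply the one-term-supported specialization of that clause). Since the classical registers $\SFE.\crs_i$ and $\SFE.\msg^{(2)}_i$ are untouched, a standard tensor-product trace-distance bound yields $\|\qsigk - \qsigk'\|_{tr} = \negl(\secp)$. The main conceptual obstacle is the signature-correctness argument for the $2$-to-$1$ branches, since coherent signability does not directly assert that the emitted signature is a valid signature when the input key is in superposition; however, the combination of clause~\ref{item:coherent_one} (applied branchwise) with clause~\ref{item:coherent_two} (no disturbance of the state) lets us transfer the classical correctness to the superposed setting at the cost of a negligible loss.
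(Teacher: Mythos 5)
Your proposal is correct and follows essentially the same route as the paper's proof: reduce the state analysis to that of the PKE-SKL correctness theorem, inherit the deletion-verification argument verbatim, and then invoke the two clauses of coherent signability of $\WDS$ — the first (branchwise) for validity of the emitted signature and the second for non-disturbance, which gives both signature verification correctness and reusability with static signing keys. The only cosmetic difference is that the paper explicitly pads the two-branch superposition into the full $\sum_{t\in\bin^L}\alpha_t\ket{t}\ket{\WDS.\sk_i(x'_t)}$ form required by the coherent-signability definition (assigning zero amplitude and a dummy mark $x^*$ to unused indices), whereas you describe this regrouping informally.
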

\begin{proof}
    We omit the proof of deletion verification correctness, since it is completely the same as that for PKE-SKL shown in 
    \cref{th:pke-correctness}. 
    We therefore show the signature verification correctness and the reusability with static signing keys.
    It is straightforward to see that a signature generated by $\msk$ passes the verification. 
    We therefore consider the case where a signature is generated by $\qsigk$.
    As observed in the proof for \cref{th:pke-correctness}, 
    all the algorithms run the subsystems indexed by $i\in [2n]$, which run in parallel. 
Therefore, it suffices to show that each $\WDS.\sigma_i$ passes the verification with overwhelming probability and generation of $\WDS.\sigma_i$ almost does not change the state $\ket{\phi_i}$, assuming that the lessee follows the protocol honestly.

By the same argument as \cref{th:pke-correctness}, we can represent 
$\ket{\psi'_i}$ as
$\ket{\psi'_i} = 
q_{i,0,z^*}\ket{x_{i,0}}\ket{\st_{i,0,z^*}} 
+
q_{i,1,z^*}\ket{x_{i,1}}\ket{\st_{i,1,z^*}}$,
where $(x_{i,0}, x_{i,1}) = \NTCF.\Invert(\NTCF.\td_i, y_i)$, where $z^*= \SFE.\msg^{(1)}_i$, and $\st_{i,b,z^*}$ is the unique state corresponding to
$x_{i,b}$ and $z^*=\SFE.\msg^{(1)}_i$. 
The definition of $q_{i,b,z^*}$ is not relevant here, but it can be found in the proof of \cref{th:pke-correctness}. 
By the evaluation correctness of $\SFE$, we have\footnote{
Precisely, the equation does not hold in the strict sense, since we ignore negligible difference in trace distance in the proof of \cref{th:pke-correctness}. However, this can be ignored as this difference only negligibly affects the probability that the verification passes.
} 
%\shota{Underconstruction}
\begin{align}
\ket{\phi_i}
&= 
\sum_{b\in \bin}
q_{i,b,z^*}
\ket{x_{i,b}}\ket{\st_{i,b,z^*}} \ket{ \SFE.\Receive{2}(\SFE.\crs_i, x_{i,b}, \st_{i,b,z^*}, \SFE.\msg_i^{(2)}) }
\\
&=
\sum_{b\in \bin}
q_{i,b,z^*}
\ket{x_{i,b}}\ket{\st_{i,b,z^*}} \ket{
\WDS.\sigk_i(x_{i,b})
}.
\end{align}
Then, $\WDS.\QSign$ is applied on $\ket{\phi_i}$ and $m_i$. 
Let us consider $L=\lenx+\lenst\lenx$ and consider $\{\alpha_t\in \CC \}_{t\in \bin^L}$ and $\{x'_t\in \bin^\lenx\}_{t\in \bin^L}$ indexed by $t$ such that
\begin{align}
\alpha_t
=
  \begin{cases}
    q_{i,b,z^*} & \text{if } t= x_{i,b}\|\st_{i,b,z^*}~ \text{for some } b\in\bin \\
    0 & \text{otherwise}
  \end{cases},
  \quad
  x'_t=
  \begin{cases}
    x_{i,b} & \text{if } t= x_{i,b}\|\st_{i,b,z^*}~ \text{for some } b\in\bin \\
    x^* & \text{otherwise}
  \end{cases}
\end{align}
where $x^*$ is an arbitrary string.
Then, we can write $\ket{\phi_i}$ as $\ket{\phi_i}=\sum_{t\in \bin^L}\alpha_t \ket{t}\ket{\WDS.\sigk_i(x'_{t})}$.
By the first property of coherent signability satisfied by $\WDS$, 
$\WDS.\sigma_i$ is in $\cup_{t}\Supp(\WDS.\Sign(\WDS.\sigk_i(x'_t),m_i))$.
By the correctness of $\WDS$, $\WDS.\sigma_i$ passes the verification. 
Furthermore, by the second property of coherent signability satisfied by $\WDS$, 
$\ket{\phi'_i}$ obtained by applying $\WDS.\QSign$ is negligibly close to $\ket{\phi_i}$ in trace distance. 
\end{proof}

\subsection{Security Proof}
The following theorem asserts the security of the above construction.
\begin{theorem}
\label{thm-proof-skl-ds}
If $\NTCF$ satisfies cut-and-choose adaptive hardcore bit security, 
$\SFE$ satisfies the mode-indistinguishability and the extractability in the extraction mode, and $\WDS$ satisfies parallel mark extractability and EUF-CMA security as a plain DS, 
then our construction is RUF-VRA secure.  
%\shota{Make it more precise.}
\end{theorem}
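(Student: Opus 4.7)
The plan is to follow essentially the same hybrid structure used in the proofs of OW-VRA security for our PKE-SKL scheme and UP-VRA security for our UPF-SKL scheme, with modifications to handle the signing oracle that the RUF-VRA adversary has access to. Assume for contradiction a QPT adversary $\qA$ that wins the RUF-VRA game with non-negligible probability $\epsilon$.

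First, I would set up $\hyb_0$ as the RUF-VRA experiment written out explicitly, keeping track of the signing oracle $\Sign(\msk, \cdot)$ that answers each query $\hatm = \hatm_1\|\cdots\|\hatm_{2n}$ by returning $(\WDS.\Sign(\WDS.\msk_i, \hatm_i))_{i \in [2n]}$. Then, in $\hyb_1$, for $i \in S$, I replace the honest second message $\SFE.\Send(\SFE.\crs_i,\SFE.\msg^{(1)}_i, C[\WDS.\msk_i,y_i])$ with $\SFE.\Sim(\SFE.\crs_i,\SFE.\msg^{(1)}_i, \gamma_i)$, where $\gamma_i = \WDS.\Mark(\WDS.\msk_i, x_i)$ if $\NTCF.\Chk(\NTCF.\pp_i, x_i, y_i) = \top$ and $\bot$ otherwise, using $x_i$ extracted by $\SFE.\Extract$. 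This is indistinguishable from $\hyb_0$ by extractability of $\SFE$ in the extraction mode.

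In $\hyb_2$, the challenger aborts whenever $\verdict_i = \bot$ for some $i \in S$. The argument that this change only negligibly reduces the winning probability requires the EUF-CMA security of $\WDS$ as a plain digital signature: in such a situation $\qA$ produces a valid signature on the random challenge $m^*_i$ for instance $i^* \in S$ without ever receiving a marked signing key for that instance, so a reduction can embed its $\WDS$ challenge verification key into $\WDS.\vk_{i^*}$ (guessing $i^*$) while generating all other parameters itself; signing queries to instance $i^*$ are forwarded to the EUF-CMA signing oracle, and by an averaging argument $\qA$'s forgery yields a valid forgery on the challenger-chosen message. Next, in $\hyb_3$ I reformulate the winning condition via parallel mark extraction: from $\qA$ I build a quantum forger $\qD = (U_\qD, \rho_\qD)$ which on input $\{m_i\}_{i \in S}$ samples $\{m_i\}_{i \in \overline S}$ uniformly, runs $\qA$ to obtain $\{\sigma_i\}_{i \in [2n]}$, and outputs $\{\sigma_i\}_{i \in S}$. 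Then the challenger runs $\WDS.\qExt(\{\WDS.\extk_i\}_{i \in S}, (U_\qD, \rho_\qD)) \to \{x'_i\}_{i \in S}$ and declares victory iff $x'_i = x_i$ for all $i \in S$. The main subtlety here is that the reduction to parallel mark extractability must simulate the $\Sign(\msk, \cdot)$ oracle of the RUF-VRA game; this is done by forwarding each signing query componentwise to the parallel-extractability game's signing oracles $\WDS.\Sign(\WDS.\msk_i, \cdot)$ for $i \in S$ and using self-generated $\WDS.\msk_i$ for $i \in \overline S$ (analogous to \cref{lem:reduction-to-par-ext-PRF}). The parallel mark extractability then gives $\Pr[\event_3] \geq \Pr[\event_2] - \negl(\secp)$.

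Finally, I proceed through the remaining hybrids exactly as in the PKE-SKL proof: $\hyb_4$ replaces the check $x'_i = x_i$ with $\NTCF.\Chk(\NTCF.\pp_i, x'_i, y_i) = \top$ (using that $\hyb_2$ guarantees $\NTCF.\Chk(\NTCF.\pp_i, x_i, y_i) = \top$ whenever the game continues); $\hyb_5$ undoes the abort of $\hyb_2$ (can only increase winning probability); $\hyb_6$ undoes the switch to $\SFE.\Sim$ of $\hyb_1$ (extractability of $\SFE$ again); $\hyb_7$ switches $\SFE.\crs_i$ to the hiding mode for all $i \in S$ by mode indistinguishability of $\SFE$ (no $\SFE.\td_i$ for $i \in S$ is used in $\hyb_6$); and $\hyb_8$ is a conceptual change computing $\SFE.\StaRcv$ and $d^*_i$ for all $i \in [2n]$. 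At this point, the game is identical to the PKE-SKL case (with $\WDS$ in place of $\WPKE$), so the same reduction as \cref{lem:PKE-reduction-to-ada-HC} to the cut-and-choose adaptive hardcore bit property of $\NTCF$ yields $\Pr[\event_8] = \negl(\secp)$, contradicting $\Pr[\event_8] \geq \epsilon - \negl(\secp)$. The main obstacle is the consistent simulation of the signing oracle across all reductions, especially the reduction to parallel mark extractability in $\hyb_3$ and the reduction to plain EUF-CMA in $\hyb_2$; both are handled by the fact that the relevant games provide signing oracles that the reduction can relay.
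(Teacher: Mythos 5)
Your proposal is correct and follows essentially the same hybrid sequence as the paper's proof: the same eight hybrids in the same order, the same reduction to plain EUF-CMA security to justify the abort when $\verdict_i=\bot$ for some $i\in S$ (with the same observation that the uniformly random challenge message is queried to the signing oracle only with negligible probability), the same construction of a quantum forger fed to $\WDS.\qExt$ for the parallel-mark-extractability step, and the same final reduction to the cut-and-choose adaptive hardcore bit property. Your identification of signing-oracle simulation as the main new obstacle relative to the PKE-SKL proof, and its resolution by relaying queries to the oracles provided in the respective security games, also matches the paper.
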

\begin{proof}
    We prove the theorem by a sequence of hybrids.
    For the sake of the contradiction, let us assume an adversary $\A$ that wins the RUF-VRA security game with non-negligible probability $\epsilon$.
    In the following, the event that $\qA$ wins in $\Hyb_{\rm xx}$ is denoted by $\event_{\rm xx}$.

\begin{description}
  \item[$\Hyb_0$:] This is the original RUF-VRA security game between the challenger and an adversary $\A$.  
  Namely, the game proceeds as follows.
\begin{enumerate}
    \item \label{Step:DS-proof-setup}
    The challenger computes the first message to the adversary as follows.
    \begin{itemize}
    \item 
    For $i\in [2n]$, it runs $\WDS.\KG(1^\lambda) \to (\WDS.\vk_i, \WDS.\msk_i, \WDS.\extk_i)$.
        \item It chooses random subset $S$ of $[2n]$ with size $n$.
        \item It runs $(\NTCF.\pp_i,\NTCF.\td_i) \lrun \NTCF.\FuncGen(1^\lambda, \mode)$ and $(\SFE.\crs_i,\SFE.\td_i) \lrun \SFE.\CRSGen(1^\secp, \mode)$ for $i\in [2n]$, where $\mode =1$ if $i\in S$ and $\mode =2$ if $i\in \overline{S}$.
    \end{itemize}
    Finally, it sends $\svk=\{\NTCF.\pp_i,\SFE.\crs_i, \WDS.\vk_i \}_{i\in [2n]}$ to $\A$.
    It privately keeps the master secret key $\msk = \{ \WDS.\msk_i \}_{i\in [2n]}$ and the deletion verification key $\dvk=\left( S, \{ \NTCF.\td_i ,  \SFE.\td_i\}_{i\in \overline{S}} \right)$.

    \item 
    Until $\qA$ receives the challenge input at Step~\ref{Step:DS-challenge-is-given} of the game, $\qA$ has an access to the signing oracle $\Sign(\msk,\cdot )$, which takes as input $\hatm=\hatm_1\| \cdots \|\hatm_{2n} \in \bin^{2n \lenm }$ and returns $\hatsigma =\WDS.\hatsigma_1\| \cdots \|\WDS.\hatsigma_{2n}$, where $\WDS.\hatsigma_i \lrun \WDS.\Sign(\WDS.\msk_i,\hatm_i)$. 
    \item The adversary $\A$ sends $\{y_i,\SFE.\msg^{(1)}_i\}_{i\in [2n]}$ to the challenger.
    \item \label{inner:DS-second-msg-to-adv}
    The challenger computes the message to the adversary as follows. 
    It first runs 
    \[
    \SFE.\Send(\SFE.\crs_i, \SFE.\msg^{(1)}_i, C[\WDS.\msk_i,y_i] ) \rrun \SFE.\msg^{(2)}_i
    \]       
    for $i\in [2n]$.
    Then, it sends $\{\SFE.\msg^{(2)}_{i} \}_{i\in [2n]}$ to $\A$.
    
    \item Then, $\A$ submits the deletion certificate 
    $\cert = \{d_i, c_i\}_{i \in [2n]} $.

    \item Then, the challenger runs $\DelVrfy$ as follows.
    \label{Step:DS-Del-verify-in-prf-proof}
    \begin{itemize}
%    \item If $\valid = 0$, it outputs $\bot$. 
%   
    \item For $i\in \overline{S}$, it runs $\SFE.\StaRcv(\SFE.\td_i, \SFE.\msg^{(1)}_i) \to \{ \alpha_{i,j}^{b} \}_{j\in [\lenx], b\in \bin } $. 
    \item For $i\in \overline{S}$, it computes 
    $d^*_i = (d^*_i[1],\ldots, d^*_i[\lenx]) \in \bin^\lenx$ as 
    $
    d^*_i[j] \seteq \langle c_{i,j}, (\alpha^0_{i,j} \oplus \alpha^1_{i,j})\rangle
    $.  
    \item For $i\in \overline{S}$, it runs $(x_{i,0},x_{i,1}) \lrun \NTCF.\Invert(\NTCF.\td_i, y_i) $.
    If the inversion fails for any $i\in \overline{S}$, the output of $\DelVrfy$ is $\bot$.
    \item For $i\in \overline{S}$, it checks whether 
    \begin{equation}\label{eq:DS-deletion-verify-in-the-proof}
    ( d_i \oplus d^*_i ) \cdot(x_{i,0}\oplus x_{i,1})=0
    ~
    \land
    ~
    \NTCF.\GoodSet(\NTCF.\td_i,y_i, d_i \oplus d^*_i ) = \top
    \end{equation}
    holds using $\NTCF.\td_i$. 
    If \cref{eq:DS-deletion-verify-in-the-proof} holds for all $i\in \overline{S}$, the output is $\top$ and otherwise, $\bot$.
    \end{itemize}
    If the output of the algorithm is $\bot$, then it indicates that the output of the game is $0$ (i.e., $\A$ failed to win the game). 
    Otherwise, the challenger returns 
    $\dvk$ to $\A$
    and continues the game.
    
    \item \label{Step:DS-challenge-is-given}
    Then, the challenger chooses a random input $m=m_1\|m_2\|\cdots \|m_{2n} \sample \bin^{2n\lenm}$ and sends $m$ to $\A$.

    After receiving the challenge input, $\qA$ loses its oracle access to the signing oracle.
    \item \label{Step:DS-winning-condition}
    $\A$ then outputs $\sigma'=\WDS.\sigma'_1\| \cdots \|\WDS.\sigma'_{2n}$.
    The challenger sets the output of the game to be $1$ (i.e., indicating that $\qA$ wins the game) if $\WDS.\Vrfy(\WDS.\vk_i, m_i, \WDS.\sigma'_i) = \top$ holds for all $i\in [2n]$. Otherwise, it outputs $\bot$.
\end{enumerate}
By the definition, we have $\Pr[\event_{0}]= \epsilon$.
\item[$\Hyb_1$:]  
  In this hybrid, the challenger computes $\SFE.\msg^{(2)}_i$ differently for $i\in S$ in Step \ref{inner:DS-second-msg-to-adv} of the game. Namely, it is replaced with the following: 
\begin{enumerate}[label=\arabic*', start=4]
    \item
    \label{Step:DS-step-4prime}
    The challenger computes the second message to the adversary as follows. 
    \begin{itemize}
    \item It runs 
  $\SFE.\Extract(\SFE.\td_i, \SFE.\msg^{(1)}_i) \to x_i$ 
  and $\NTCF.\Chk(\NTCF.\pp_i, x_i, y_i) = \verdict_i$ for $i\in S$.
        
        \item It computes 
          \[
              \SFE.\msg^{(2)}_i \lrun 
              \begin{cases}
              \SFE.\Sim(\SFE.\crs_i, \SFE.\msg^{(1)}_i, 1^{|C|}, \gamma_i )
                  & \text{if }  i\in S \\
             \SFE.\Send(\SFE.\crs_i, \SFE.\msg^{(1)}_i, C[\WDS.\msk_i,y_i] )
                  & \text{if } i\in \overline{S} 
              \end{cases},
        \]       
        where
  \[
  \gamma_i = 
  \begin{cases}
  \WDS.\Mark(\WDS.\msk_i, x_i)=
    \WDS.\sigk_i(x_i)  
      & \text{if }  \verdict_i = \top \\
    \bot 
      & \text{if } \verdict_i  = \bot. 
  \end{cases}
  \]
    \end{itemize}
    Finally, it sends $\{\SFE.\msg^{(2)}_{i} \}_{i\in [2n]}$ to $\A$.
\end{enumerate}

%  Note that by the change introduced in the previous hybrid, if the game proceeds to the point where $\SFE.\msg^{(2)}_i$ is computed, we have $x'_i = x_i$ for $i\in S$. 
  Since we have $\gamma_i = C[\WDS.\msk_i,y_i](x_i)$, it follows that this game is indistinguishable from the previous game by the straightforward reduction to the extractability of $\SFE$ in the extraction mode. 
  Thus, we have $|\Pr[\event_{0}] - \Pr[\event_{1}]| = \negl(\secp)$.
  Note that the simulation of this game does not require any marked signing key for the $i$-th instance of $\WDS$ if $i\in S$ and $\verdict_i =\bot$.

   \item[$\Hyb_2$:]  
    In this hybrid, the challenger aborts the game and sets the output of the game to be $0$ (i.e., $\qA$ loses the game), once it turns out that there is $i\in S$ such that $\NTCF.\Chk(\NTCF.\pp_i, x_i, y_i)=\bot$.
    Namely, we replace Step \ref{Step:DS-step-4prime} in the previous hybrid with the following: 
\begin{enumerate}[label=\arabic*'', start=4]
    \item \label{Step:DS-step-4''}
    The challenger computes the second message to the adversary as follows. 
    \begin{itemize}
    \item It runs 
  $\SFE.\Extract(\SFE.\td_i, \SFE.\msg^{(1)}_i) \to x_i$ 
  and $\NTCF.\Chk(\NTCF.\pp_i, x_i, y_i) = \verdict_i$ for $i\in S$.
  \item If $\verdict_i = \bot$ for any $i\in S$, it aborts the game and sets the output of the game to be $0$.
        \item It computes 
          \[
              \SFE.\msg^{(2)}_i \lrun 
              \begin{cases}
              \SFE.\Sim(\SFE.\crs_i, \SFE.\msg^{(1)}_i, 1^{|C|},  \WDS.\sigk_i(x_i)   )
                  & \text{if }  i\in S \\
             \SFE.\Send(\SFE.\crs_i, \SFE.\msg^{(1)}_i, C[\WDS.\msk_i,y_i] )
                  & \text{if } i\in \overline{S}. 
              \end{cases}
        \]       
    \end{itemize}
    Finally, it sends $\{\SFE.\msg^{(2)}_{i} \}_{i\in [2n]}$ to $\A$.
\end{enumerate}

We claim $\Pr[\event_{2}]\geq  \Pr[\event_{1}] -\negl(\secp)$.
To show this, we observe that the only scenario where the adversary $\qA$ wins the game in the previous hybrid but not in the current one occurs when $\qA$ chooses $y_i$ such that $\verdict_i =\bot$ for some $i\in S$, yet still succeeds in signing on $\{m_j\}_{j\in [2n]}$. 
However, this implies that $\qA$ has successfully forged a signature $\sigma_i$ satisfying $\WDS.\Vrfy(\WDS.\vk_i,m_i,\sigma_i)$ without getting marked signing key for the $i$-th instance of $\WDS$. 
Note that the probability that $\qA$ has made a signing query to $\WDS.\Sign(\WDS.\msk_i, \cdot )$ on message $m_i$ is negligible, since $m_i$ is chosen uniformly at random and $\qA$ is not allowed to make signing queries once it receives $m_i$.
By a straightforward reduction to the EUF-CMA security of $\WDS$, the claim follows.  
%\synote{Alternatively, we may assume $\WDS$ is EUF-CMA, which is simpler.}

  \item[$\Hyb_3$:]  
  In this hybrid, we change the winning condition of the adversary. 
  Namely, we replace Step \ref{Step:DS-challenge-is-given} and Step \ref{Step:DS-winning-condition} of the game with the following. 
  \begin{enumerate}[label=\arabic*', start=7]
      \item
      The challenger parses the adversary $\qA$ right before Step~\ref{Step:DS-challenge-is-given} of the game as a unitary circuit $U_\qA$ and a quantum state $\rho_\qA$.
      It then constructs a quantum forger $\qF$ for $\WDS$ that takes $\{m_i\}_{i\in S}$ as input and outputs $\{ \WDS.\sigma'_i \}_{i\in S}$ 
      from $\A$ as follows.
     \begin{description}
         \item[$\qF( \{m_i\}_{i\in S} )$:]
         It chooses $m_i \gets \bin^\lenm$ for $i\in \overline{S}$ and runs $\A$ on input $ \{ m_i \}_{i\in [2n]} $ to obtain $\{\WDS.\sigma'_i\}_{i\in [2n]}$.
         Finally, it outputs $\{ \WDS.\sigma'_i\}_{i\in S}$.  
    \end{description}
     \item 
     The challenger parses
     $\qF$ as a pair of unitary $U_\qF$ and a quantum state $\rho_\qF$.
     Then, it runs the extraction algorithm as 
         $\WDS.\qExt( \{\WDS.\extk_i\}_{i\in S} , (U_\qF, \rho_\qF ) ) \to \{x'_i\}_{i\in S}$.
    \item \label{Step:DS-step-9prime}
    If $x_i = x'_i$ holds for all $i\in S$, 
    the challenger sets the output of the game to be $1$. 
    Otherwise, it sets it to be $0$.
 \end{enumerate}
  By the parallel extractability of $\WDS$,  
  we have $\Pr[\event_{3}] \geq \Pr[\event_{2}]-\negl(\secp)$.
  The proof is almost the same as 
  \cref{lem:reduction-to-par-ext-PRF} and thus omitted.
\item[$\Hyb_4$:]  
    In this game, we replace Step \ref{Step:DS-step-9prime} of the above game with the following:
    \begin{enumerate}[label=\arabic*'', start=9]
        \item If $\NTCF.\Chk(\NTCF.\pp_i, x'_i, y_i)=\top$ holds for all $i\in S$, 
    the challenger sets the output of the game to be $1$. 
    Otherwise, it sets it to be $0$. 
    \end{enumerate}
    We observe that $x'_i = x_i$ in Step \ref{Step:DS-step-9prime} implies $\NTCF.\Chk(\NTCF.\pp_i, x'_i, y_i) = \NTCF.\Chk(\NTCF.\pp_i, x_i, y_i) =\top$.
    To see the latter equality holds, recall that the game is set to abort if $\NTCF.\Chk(\NTCF.\pp_i, x_i, y_i) = \bot$ for any $i \in S$ 
    in Step~\ref{Step:DS-step-4''}. Thus, if the game continues to Step \ref{Step:DS-step-9prime}, the equality holds.   
    We therefore have $\Pr[\event_{4}] \geq  \Pr[\event_{3}]$.
    (Actually, $\Pr[\event_{4}] = \Pr[\event_{3}]$ holds due to the uniqueness of $x_i$ that passes the check. However, showing the inequality suffices for our proof.)
\item[$\Hyb_5$:]  
    In this game, we switch Step \ref{Step:DS-step-4''} in the previous hybrid back to Step \ref{Step:DS-step-4prime}. 
    The difference from the previous hybrid is that it continues the game even if $\verdict_i = \bot$ for some $i\in S$.
    Since the adversary had no chance in winning the game in such a case in the previous hybrid, this change only increases the chance of the adversary winning. 
    We therefore have $\Pr[\event_{5}] \geq \Pr[\event_{4}]$.
  
\item[$\Hyb_6$:]  
    In this hybrid, we switch Step \ref{Step:DS-step-4prime} in the previous hybrid back to Step \ref{inner:DS-second-msg-to-adv}.
    Namely, it computes $\SFE.\Send( \SFE.\crs_i, \allowbreak\SFE.\msg^{(1)}_i, C[\WDS.\msk_i,y_i] ) \rrun \SFE.\msg^{(2)}_i$ for all $i\in [2n]$ again.
    Similarly to the change from $\Hyb_0$ to $\Hyb_1$, the winning probability of $\A$ in this game does not change significantly by the extractability of $\SFE$ in the extraction mode. %\shota{The name of the security notion should be fixed.}
    Namely, we have $|\Pr[\event_{6}] - \Pr[\event_{5}]| = \negl(\secp)$.
    Note that we do not need $\{\SFE.\td_i\}_{i\in S}$ any more to run the game.

\item[$\Hyb_7$:] 
    In this hybrid, we change the way $\SFE.\crs_i$ is sampled for $i\in S$.
    Namely, it is sampled as $(\SFE.\crs_i,\SFE.\td_i) \lrun \SFE.\CRSGen(1^\secp, 2)$ for all $i\in [2n]$. 
    Concretely, we replace Step \ref{Step:DS-proof-setup} with the following:
\begin{enumerate}[label=\arabic*']
    \item \label{Step:DS-proof-setup-1-prime}
    The challenger computes the first message to the adversary as follows.
    \begin{itemize}
        \item 
        For $i\in [2n]$, it runs $\WDS.\KG(1^\lambda) \to (\WDS.\vk_i, \WDS.\msk_i, \WDS.\extk_i)$.
        \item It chooses random subset $S$ of $[2n]$ with size $n$.
        \item It runs $(\NTCF.\pp_i,\NTCF.\td_i) \lrun \NTCF.\FuncGen(1^\lambda, \mode)$ and $(\SFE.\crs_i,\SFE.\td_i) \lrun \SFE.\CRSGen(1^\secp, 2)$ for $i\in [2n]$, where $\mode =1$ if $i\in S$ and $\mode =2$ if $i\in \overline{S}$.
    \end{itemize}
    Finally, it sends $\svk=\{\NTCF.\pp_i,\SFE.\crs_i, \WDS.\vk_i\}_{i\in [2n]}$ to $\A$.
    It privately keeps the master secret key $\msk = \{ \WDS.\msk_i \}_{i\in [2n]}$ and the deletion verification key $\dvk=\left( S, \{ \NTCF.\td_i ,  \SFE.\td_i\}_{i\in \overline{S}} \right)$.  
\end{enumerate}  
    Since we do not need $\SFE.\td_i$ for $i\in S$ to run $\Hyb_6$ and $\Hyb_7$, a straightforward reduction to the mode indistinguishability of $\SFE$ implies that $\Hyb_6$ and $\Hyb_7$ are computationally indistinguishable. 
    Therefore, we have $|\Pr[\event_{6}] - \Pr[\event_{7}]| = \negl(\secp)$.
    \item[$\Hyb_8$:] 
    In this game, we change the way how the challenger runs $\DelVrfy$.
    In particular, the challenger computes $\SFE.\StaRcv(\SFE.\td_i, \SFE.\msg^{(1)}_i) \to \{ \alpha_{i,j}^{b} \}_{j, b} $ and $d^*_i$ for all $i\in [2n]$, 
    instead of only for $i\in \overline{S}$.  
    Still, the inversion of $y_i$ and 
    the check of \cref{eq:DS-deletion-verify-in-the-proof} are performed only for $i\in \overline{S}$. 
    Concretely, we replace Step \ref{Step:DS-Del-verify-in-prf-proof} with the following.
\begin{enumerate}[label=\arabic*', start=6]
        \item Then, the challenger runs (the modified version of) $\DelVrfy$ as follows.
    \begin{itemize}
%    \item If $\valid = 0$, it outputs $\bot$. 
%   
    \item For $i\in [2n]$, it runs $\SFE.\StaRcv(\SFE.\td_i, \SFE.\msg^{(1)}_i) \to \{ \alpha_{i,j}^{b} \}_{j\in [\lenx], b\in \bin } $. 
    \item For $i\in [2n]$, it computes 
    $d^*_i = (d^*_i[1],\ldots, d^*_i[\lenx]) \in \bin^\lenx$ as 
    $
    d^*_i[j] \seteq \langle c_{i,j}, (\alpha^0_{i,j} \oplus \alpha^1_{i,j})\rangle
    $.  
    \item For $i\in \overline{S}$, it runs $(x_{i,0},x_{i,1}) \lrun \NTCF.\Invert(\NTCF.\td_i, y_i) $.
    If the inversion fails for any $i\in \overline{S}$, the output of $\DelVrfy$ is $\bot$.
    \item For $i\in \overline{S}$, it checks whether 
    \begin{equation}
    ( d_i \oplus d^*_i ) \cdot(x_{i,0}\oplus x_{i,1})=0
    ~
    \land
    ~
    \NTCF.\GoodSet(\NTCF.\td_i,y_i, d_i \oplus d^*_i ) = \top
    \end{equation}
    holds using $\NTCF.\td_i$. 
    If the above holds for all $i\in \overline{S}$, the output is $\top$ and otherwise, $\bot$.
    \end{itemize}
    If the output of the algorithm is $\bot$, then it indicates that the output of the game is $0$ (i.e., $\A$ failed to win the game). 
    Otherwise, the challenger returns 
    $\dvk=\left( S, \{ \NTCF.\td_i,  \SFE.\td_i \}_{i\in \overline{S}} \right)$ to $\A$
    and continues the game.
\end{enumerate}
    
    We can see that this change is only conceptual, since the outcome of the verification is unchanged.  
    We therefore have
    $\Pr[\event_{8}] = \Pr[\event_{7}] $.
\end{description}
    Based on the above discussion, it is established that 
    $\Pr[\event_{8}] \geq \epsilon - \negl(\secp)$, which is non-negligible by our assumption.
    However, we can show $\Pr[\event_{8}]=\negl(\secp)$ assuming the cut-and-choose adaptive hardcore bit property of $\NTCF$, by a very similar reduction to \cref{lem:PKE-reduction-to-ada-HC}.
    This results in a contradiction as desired.
\end{proof}

% \begin{lemma}\label{lem:reduction-to-par-ext-DS}
%     If $\WDS$ satisfies parallel extractability, we have $\Pr[\event_{3}] \geq \Pr[\event_{2}]-\negl(\secp)$. 
% \end{lemma}
% \begin{proof}
%     ddd
% \end{proof}
\else

\fi
\noindent
\textbf{Acknowledgment.}  
Generative AI tools were employed to improve the readability and presentation of this manuscript. We thank Duo Xu for answering our questions about \cite{takeuchi2025computationalcertifieddeletionproperty}. We thank anonymous reviewers of CRYPTO 2026 and TQC 2026 for their helpful comments. 
Shota Yamada is supported by JPMJCR22M1 and JPMJKP24U3.

	\ifnum\llncs=1
\bibliographystyle{splncs04}
\bibliography{abbrev3,crypto,siamcomp_jacm,other}
	\else
\bibliographystyle{alpha} 
\bibliography{abbrev3,crypto,siamcomp_jacm,other}
	\fi

\ifnum\cameraready=0
	\ifnum\llncs=0
	%%%%%% Full version region %%%%%%
	\appendix
% !TEX root = main.tex

\newcommand{\tlC}{\widetilde{C}}

\section{Special Dual-Mode SFE from LWE}\label{sec:dual-mode-SFE-from-LWE}
In this section, we show a construction of special dual-mode SFE from the LWE assumption.

In general, by relying on well-known Yao's protocol, SFE can be constructed from oblivious transfer (OT) by additionally using garbled circuits which can be based on OWFs.
We can prove that the resulting scheme becomes special dual-mode SFE if the underlying OT satisfies an analogous special dual-mode property.
In this section, we first formally define special dual-mode OT and provide its realization assuming the LWE assumption based on the construction by \cite{C:PeiVaiWat08}.
Then, we provide a sketch on how to achieve dual-mode SFE from dual-mode OT using GC.

\subsection{Definition of Special Dual-mode OT}\label{sec:def-dual-mode-OT}
In this subsection, we define special dual-mode OT.  
\begin{definition}[Special Dual-mode OT]\label{def:dual-mode-OT}
A special dual-mode OT scheme $\OT$ is a tuple of four algorithms $\OT=\OT.(\CRSGen, \Receive{1},\allowbreak \Send, \Receive{2})$.  
Below, let $\bit^\ell$ be the string  space of $\OT$. %\shota{How about string?} \takashi{okay}
\begin{description}
%\item[$\Setup(1^\secp,1^{\numkey})\ra\msk$:] The setup algorithm takes a security parameter $1^\lambda$ and a collusion bound $1^{\numkey}$, and outputs a master secret key $\msk$.
\item[$\CRSGen(1^\secp,\mode )\ra (\crs,\td)$:] The CRS generation algorithm is a PPT algorithm that takes a security parameter $1^\lambda$ and $\mode\in \{1,2\}$, and outputs strings $\crs$ and $\td$.
$\mode=1$ indicates that the system in the ``extractable mode", while $\mode=2$ indicates ``hiding mode".

\item[$\Receive{1}(\crs,b)\ra(\msg^{(1)},\st)$:] This is a classical PPT algorithm supposed to be run by a receiver to generate the first message of the protocol. 
It takes as input the CRS $\crs$ and a bit $b\in \bin$ and outputs the first message $\msg^{(1)}$ and a secret state $\st$. 
\item[$\Send(\crs,\msg^{(1)}, (z_0,z_1) )\ra\msg^{(2)}$:] This is a classical PPT algorithm supposed to be run by a sender to generate the second message of the protocol. 
It takes as input the CRS $\crs$, a message $\msg^{(1)}$ from the receiver, 
and a pair of strings $(z_0,z_1)\in\bit^{\ell\times 2}$ and outputs the second message $\msg^{(2)}$.
\item[$\Receive{2}(\crs,b, \st, \msg^{(2)})\ra z$:] This is a classical PPT algorithm supposed to be run by a receiver to derive the output. 
It takes as input the CRS $\crs$, a bit $b\in \bin$, a state $\st$, and a message $\msg^{(2)}$ from the sender, and outputs a string $z$. 
\end{description}
We require dual-mode OT to satisfy the following properties.
\begin{description}
\item[Evaluation correctness:]For all $b \in \bin$, $\mode \in \{1,2\}$, and a pair of strings $(z_0,z_1)\in \bit^{\ell\times 2}$, we have
\begin{align}
\Pr\left[
\Receive{2}(\crs,b, \st, \msg^{(2)}) = z_b
\ \middle |
\begin{array}{ll}
(\crs,\td) \la \CRSGen(1^\secp,\mode ) \\
(\msg^{(1)},\st) \la \Receive{1}(\crs,b)  \\ 
\msg^{(2)} \la \Send(\crs,\msg^{(1)}, (z_0,z_1) )
\end{array}
\right] 
=1.
\end{align}
\item[Unique state:]
For all 
$\mode\in \{1,2\}$, 
$(\crs,\td) \in \Supp(\CRSGen(1^\secp,\mode))$, $b\in \bin$,  and $(\msg^{(1)},\st) \in \Supp(\Receive{1}(\crs,b))$, $\st$ is the unique state that corresponds to 
$\crs$, $b$, and $\msg^{(1)}$,  that is, there is no $\st^\prime\ne \st$ such that $(\msg^{(1)},\st^\prime) \in \Supp(\Receive{1}(\crs,b))$. 
We denote the unique state $\st$ such that  $(\msg^{(1)},\st) \in \Supp(\Receive{1}(\crs,b))$ by $\st_{b\to \msg^{(1)}}$.\footnote{The state also depend on $\crs$, but we omit it from the notation.}
\item[Mode indistinguishability:]
For all QPT algorithm $\cA$, we have 
    \begin{align}
    \abs{
    \Pr[\cA(\crs)=1: (\crs,\td) \la \CRSGen(1^\secp,1) ]-
    \Pr[\cA(\crs)=1: (\crs,\td) \la \CRSGen(1^\secp,2) ]
    }\leq  \negl(\secp).
    \end{align}
\item[Statistical security against malicious senders in the hiding mode:]
For all $\crs$ output by $\CRSGen(1^\secp, 2)$,  we require the following statistical indistinguishability:
\[
\msg^{(1)}_0 \approx_s 
\msg^{(1)}_1
\quad
\mbox{where}
\quad 
(\msg^{(1)}_b,\st)\lrun \Receive{1}(\crs,b) \quad \mbox{for} \quad b\in \bin.
\]
\item[State recoverability in the hiding mode:]
%We require that the randomness used by $\Receive{1}$ is divided into two parts $\st\in \bin^{\lenst \lenx}$ for some polynomial $u=\poly(\secp)$ and $\ernd$. While the former is output by $\Receive{1}$ as its secret state and reused by $\Receive{2}$, the latter is ephemeral and is discarded.  
%Namely, we have $\Receive{1}(\crs, x; (\st, \ernd)) = (\msg^{(1)},\st)$. 
%Furthermore, we require that 
There exists a classical PPT algorithm $\StaRcv$ that takes as input the trapdoor $\td$ and the first message of the receiver $\msg^{(1)}$ and outputs a pair of strings $(\alpha^0,\alpha^1)$  or $\bot$, where the latter indicates that the state recovery failed. 
We require that for all $b\in \bin$, the following holds: 
\begin{align}
\Pr\left[
\st = \alpha^{b} 
\ \middle |
\begin{array}{ll}
(\crs,\td) \la \CRSGen(1^\secp,2 ) \\
(\msg^{(1)},\st) \la \Receive{1}(\crs,b)  \\ 
(\alpha^0,\alpha^1) \lrun \StaRcv(\td, \msg^{(1)})
\end{array}
\right] 
=1.
\end{align}
%Note that the above requirement implies that for all $\msg^{(1)}$ and $b\in \bin$, there exists at most one $\st \in \bin^{\lenst \lenx}$ such that there exists randomness $\ernd$satisfying $(\msg^{(1)},\st) \la \Receive{1}(\crs,x; \ernd)$.  
%
\item[Extractability in the extraction mode:]
For classical algorithms $\Extract$ and $\Sim$, let us consider the following experiment formalized by the experiment $\expb{\OT,\qA, \Extract, \Sim}{ext}{sim}(1^\secp,\coin)$ between an adversary $\qA$ and the challenger:
\begin{enumerate}
            \item  The challenger runs $(\crs, \td)\sample \CRSGen(1^\secp, 1)$ and sends $(\crs, \td)$ to $\qA$. 
            \item $\qA$ sends $\msg^{(1)}$ and a pair of messages $(z_0,z_1)\in\bit^{\ell\times 2}$.  
            \item The challenger runs $\Ext(\td, \msg^{(1)}) \to b$ 
            and 
            it computes 
            \[
            \msg^{(2)} \lrun 
            \begin{cases}
              \Send(\crs, \msg^{(1)}, (z_0,z_1)), & \text{if } \coin = 0 \\ 
              \Sim(\crs, \msg^{(1)},z_b ) & \text{if } \coin = 1. 
            \end{cases}
            \]
            Then, it gives $\msg^{(2)}$ to $\qA$.
            
            \item $\qA$ outputs a guess $\coin^\prime$ for $\coin$. The challenger outputs $\coin'$ as the final output of the experiment.
        \end{enumerate}
        For any QPT $\qA$, it holds that
\begin{align}
\advb{\OT,\qA, \Extract, \Sim}{ext}{sim}(\secp) \seteq \abs{\Pr[\expb{\OT,\qA, \Extract, \Sim}{ext}{sim} (1^\secp,0) = 1] - \Pr[\expb{\OT,\qA, \Extract, \Sim }{ext}{sim} (1^\secp,1) = 1] }\leq \negl(\secp).
\end{align} 
\takashi{In fact we can achieve security against unbounded adv?}

\item[Efficient state superposition:]
We require that there exists a QPT ``coherent version" $\qReceive{1}$ of $\Receive{1}$ that is given a quantum state $\ket{\psi}=\sum_{b\in \bin}\alpha_b\ket{b}$ 
and $\crs$ and works as follows:

It first generates a state that is within negligible trace distance of the following state:

\[
\sum_{b\in \bit}\alpha_b\sum_{\msg^{(1)} \in \Supp( \Receive{1}(\crs, b) )} 
\sqrt{ p_{b \to \msg^{(1)}} }
\ket{b} 
\ket{\st_{b\to \msg^{(1)}} }\ket{\msg^{(1)}} 
\]
where $p_{b\to \msg^{(1)}}$ is the probability that $\Receive{1}(\crs, b)$ outputs $\msg^{(1)}$. 
Then, it measures the last register to obtain $\msg^{(1)}$ and outputs the residual quantum state $\ket{\psi'}$.

\if0
We require that there exists a ``coherent version" $\qReceive{1}$ of $\Receive{1}$ that is given a quantum state $\ket{\psi}$ that is represented by a qubits of length $\ell$ and $\crs$ and works as follows:
It first implements the following efficient isometry up to negligible error: 
\[
\ket{b} \mapsto 
\sum_{\msg^{(1)} \in \Supp( \Receive{1}(\crs, b) )} 
\sqrt{ p_{b \to \msg^{(1)}} }
\ket{b} 
\ket{\st_{b\to \msg^{(1)}} }\ket{\msg^{(1)}} 
~
\mbox{for all $b\in \bin$},
\]
where $p_{b\to \msg^{(1)}}$ is the probability that $\Receive{1}(\crs, b)$ outputs $\msg^{(1)}$.  %and $\st_{b\to \msg^{(1)}}$ is the unique state such that there exists randomness $\ernd$satisfying $(\msg^{(1)},\st) \la \Receive{1}(\crs,x; \ernd) $.
Then, it measures the last register to obtain $\msg^{(1)}$ and outputs the residual quantum state $\ket{\psi'}$.
\fi

We remark that implementing the above by simply running the classical algorithm in the superposition does not work, since it may leave the entanglement with the randomness for $\Receive{1}$. 
\end{description}
\end{definition}

\subsection{Dual-mode OT from LWE}\label{sec:dual-mode_OT_LWE}
In this subsection, we show that the dual-mode OT based on LWE in \cite{C:PeiVaiWat08} satisfies our requirements of special dual-mode OT as in \Cref{def:dual-mode-OT} under a certain parameter regime.
%As already mentioned, this suffices to give special dual-mode SFE as in \Cref{def:dual-mode-SFE} using Yao's protocol. 
\paragraph{\bf Lattice preliminaries}\label{sec:lattice_pre} 
%A lattice $\Lambda\subseteq \RR^m$ is a discrete additive subgroup of $\RR^m$. 
%A shifted lattice is a subset that can be written as $\Lambda+\vt=\{\vx+\vt:\vx\in \Lambda\}$ by using a lattice $\Lambda$ and a vector $\vt\in \RR^n$. 
We review definitions and lemmata on lattice-based cryptography. 

For $q\in \NN$, we write $\ZZ_q$ to mean the  additive group of integers modulo $q$. We represent elements of $\ZZ_q$ by elements in $\ZZ\cap (-q/2,q/2]$ and often identify the elements of $\ZZ_q$ and their representatives. 
%For $\mA\in \ZZ_q^{n\times m}$ and $\vy\in \ZZ_q^n$, we define  $\Lambda_\mA^\bot:=\{\vx\in \ZZ^m:\mA\vx=\vzero \mod q\}$ and $\Lambda_\mA^\vy:=\{\vx\in \ZZ^m:\mA\vx=\vy \mod q\}$. It is easy to see that $\Lambda_\mA^\bot$ is a lattice and $\Lambda_\mA^\vy$ is a shifted lattice written as $\Lambda_\mA^\vy=\Lambda_\mA^\bot+ \vt$ for any $\vt\in \Lambda_\mA^\vy$.   
For a vector $\vx\in \mathbb{R}^n$, $\|\vx\|$ denotes the  Euclidean norm of $\vx$ and $\|\vx\|_{\infty}$ denotes the infinity norm of $\vx$, that is, $\|\vx\|_{\infty}\seteq \max_{i\in [n]}|x_i|$. 
%we define $\|\vx\|_{\infty}\seteq \max_{i\in [n]}|x_i|$.  Similarly, for a matrix $\mA=(a_{i,j})_{i\in[n],j\in[m]}\in \mathbb{R}^{n\times m}$, we define $\|\mA\|_\infty\seteq \max_{i\in[n],j\in[m]}a_{i,j}$.  We often use the inequality $\|\mA\mB\|_\infty\le m\|\mA\|_\infty\|\mB\|_\infty$ for $\mA\in \RR^{n\times m}$ and $\mB\in\RR^{m\times \ell}$. 

For $m\in \NN$, the Gaussian function $\rho_\sigma:\RR^m \ra [0,1]$ with a scaling parameter $\sigma>0$ is defined as follows:
\[
\rho_{\sigma}(\vx)\seteq \exp(-\pi\|\vx\|^2/\sigma^2).
\]
%For any coset  $\Lambda-\vt$, we define 
%\[
%\rho_{\sigma}(\Lambda-\vt):=\rho_\sigma
%\]
For any $q\in \mathbb{N}$ and $\sigma>0$,  
we define the truncated discrete Gaussian distribution $D_{\ZZ_q,\sigma}$ with the support $\{x\in\ZZ_q:|x|\le \sigma\}$
by the following probability density function: 
\[
D_{\ZZ_q,\sigma}(x)=\frac{\rho_\sigma(x)}{\sum_{y\in \ZZ_q:|y|\le \sigma}\rho_\sigma(y)}.
\] 
More generally, for any $q,m\in \mathbb{N}$ and $\sigma>0$,  we define the truncated discrete Gaussian distribution $D_{\ZZ_q^m,\sigma}$ with the support $\{\vx\in\ZZ_q^m:\|\vx\|_\infty\le \sigma\}$
by the following probability density function: 
\[
D_{\ZZ_q^m,\sigma}(x_1,\ldots,x_m)=\prod_{i\in[m]}D_{\ZZ_q,\sigma}(x_i).
\]

\begin{lemma}[Noise smudging~\cite{JACM:BCMVV21}]\label{lem:smudging} 
For $q,m\in \mathbb{N}$, $\sigma>0$, and $\ve\in \ZZ_q^m$ such that $\|\ve\|\le \sigma\sqrt{m}$,   the statistical distance between $D_{\ZZ_q^m,\sigma}$ and $D_{\ZZ_q^m,\sigma}+\ve$ is at most $\sqrt{2\left(1-\exp\left(-2\pi\sqrt{m}\|\ve\|/\sigma\right)
\right)}$ \takashi{Is there a better citation?}
\end{lemma}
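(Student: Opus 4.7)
Proof plan: The plan is to bound the total-variation distance via the Hellinger/Bhattacharyya inequality
\[
\Delta(P,Q) \;\leq\; \sqrt{2\bigl(1 - BC(P,Q)\bigr)}, \qquad BC(P,Q) \;=\; \sum_{\vx} \sqrt{P(\vx)\,Q(\vx)},
\]
which follows by applying Cauchy--Schwarz to the identity $|P(\vx)-Q(\vx)| = |\sqrt{P(\vx)}-\sqrt{Q(\vx)}|\cdot(\sqrt{P(\vx)}+\sqrt{Q(\vx)})$ and using $\sum_\vx P + \sum_\vx Q = 2$. Writing $P = D_{\ZZ_q^m,\sigma}$, $Q = P + \ve$, and $T = \{\vy \in \ZZ_q^m : \|\vy\|_\infty \leq \sigma\}$ for the truncation box with common normalizer $S = \sum_{\vy \in T}\rho_\sigma(\vy)$, it will suffice to establish the Bhattacharyya lower bound $BC(P, Q) \geq \exp\bigl(-2\pi\sqrt{m}\|\ve\|/\sigma\bigr)$.

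The core computation expands
\[
\sqrt{\rho_\sigma(\vx)\rho_\sigma(\vx-\ve)} \;=\; \rho_\sigma(\vx)\cdot \exp\!\Bigl(\tfrac{\pi\langle\vx,\ve\rangle}{\sigma^2} - \tfrac{\pi\|\ve\|^2}{2\sigma^2}\Bigr),
\]
and lower bounds the exponent uniformly for $\vx$ in the common support $T \cap (T+\ve)$. Two estimates do the work: (i) Cauchy--Schwarz together with $\|\vx\| \leq \sqrt{m}\|\vx\|_\infty \leq \sqrt{m}\,\sigma$ gives $\langle\vx,\ve\rangle \geq -\sqrt{m}\,\sigma\|\ve\|$, and (ii) the hypothesis $\|\ve\| \leq \sigma\sqrt{m}$ gives $\|\ve\|^2 \leq \sqrt{m}\,\sigma\|\ve\|$. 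Combining (i) and (ii) yields the pointwise bound $\sqrt{\rho_\sigma(\vx)\rho_\sigma(\vx-\ve)} \geq \rho_\sigma(\vx)\cdot \exp\bigl(-2\pi\sqrt{m}\|\ve\|/\sigma\bigr)$ for every $\vx \in T$ (actually with the tighter exponent $-\tfrac{3\pi}{2}\sqrt{m}\|\ve\|/\sigma$). Summing over $\vx \in T \cap (T+\ve)$ and dividing by $S$ then gives
\[
BC(P, Q) \;\geq\; \exp\!\bigl(-2\pi\sqrt{m}\|\ve\|/\sigma\bigr) \cdot \Pr_{\vx \sim P}\bigl[\vx \in T + \ve\bigr].
\]

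The step I expect to be the main obstacle is the probability factor $\Pr_{\vx \sim P}[\vx \in T+\ve]$, which is typically strictly below $1$ because the truncation boxes of $P$ and $Q$ differ by the shift $\ve$. The plan is to absorb this factor using the slack already built into the pointwise bound: since the computation above is tight up to an extra $\tfrac{\pi}{2}\sqrt{m}\|\ve\|/\sigma$ in the exponent, it suffices to show that the $P$-mass of the thin shell $T\setminus(T+\ve)$ is at most $1 - \exp\!\bigl(-\tfrac{\pi}{2}\sqrt{m}\|\ve\|/\sigma\bigr)$ in the regime where the stated bound is nontrivial. This reduces to a coordinate-wise comparison of the discrete Gaussian mass on the union of boundary slabs $\bigcup_i \{\vx \in T : x_i \in [-\sigma, -\sigma + e_i)\cup(\sigma - e_i, \sigma]\}$, which by a union bound and the product structure of $D_{\ZZ_q^m,\sigma}$ is controlled by $\sum_i |e_i|/\sigma \leq \sqrt{m}\|\ve\|/\sigma$. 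Outside the nontrivial regime the stated bound satisfies $\sqrt{2(1-\exp(-2\pi\sqrt{m}\|\ve\|/\sigma))} \geq 1 \geq \Delta(P,Q)$ and the lemma holds vacuously, so the two cases together cover every admissible $\ve$.
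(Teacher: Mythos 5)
The paper does not prove this lemma at all; it is imported by citation from \cite{JACM:BCMVV21} (their ``noise smudging'' lemma), so the only thing to compare against is the proof in that reference. Your route is essentially the same as theirs: bound the total variation distance by $\sqrt{2(1-BC)}$ via Cauchy--Schwarz, expand $\sqrt{\rho_\sigma(\vx)\rho_\sigma(\vx-\ve)}=\rho_\sigma(\vx)\exp(\pi\langle\vx,\ve\rangle/\sigma^2-\pi\|\ve\|^2/2\sigma^2)$, and lower bound the exponent by $-\tfrac{3\pi}{2}\sqrt{m}\|\ve\|/\sigma$ using $|\langle\vx,\ve\rangle|\le\|\vx\|\|\ve\|\le\sqrt{m}\sigma\|\ve\|$ and $\|\ve\|^2\le\sqrt{m}\sigma\|\ve\|$. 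That part of your computation is correct and matches the source.

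Where you genuinely add something is the support-mismatch factor $\Pr_{\vx\sim P}[\vx\in T+\ve]$, which the cited proof effectively sums away by pretending the shifted density equals $\rho_\sigma(\vx-\ve)/S$ even outside the shifted support. You correctly identify this as the real gap and your repair works: in the regime where $2\pi\sqrt{m}\|\ve\|/\sigma\ge\ln 2$ the claimed bound exceeds $1$ and there is nothing to prove, and otherwise $\sqrt{m}\|\ve\|/\sigma<\ln 2/(2\pi)\approx 0.11$, so the per-coordinate boundary slabs have width $|e_i|\ll\sigma$, the union bound gives $\Pr[\vx\notin T+\ve]\lesssim\sum_i|e_i|/\sigma\le\sqrt{m}\|\ve\|/\sigma=\tfrac{2}{\pi}u$ with $u=\tfrac{\pi}{2}\sqrt{m}\|\ve\|/\sigma$, and $\tfrac{2}{\pi}u<1-e^{-u}$ throughout that range. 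The one detail a full writeup must pin down is the per-point mass bound behind ``slab mass $\le|e_i|/\sigma$'': you need the one-dimensional normalizer $S_1=\sum_{|y|\le\sigma}\rho_\sigma(y)$ to be at least $c\sigma$ for a constant $c$ close to $1$, which holds for $\sigma\ge 1$ (and the degenerate small-$\sigma$ cases are swallowed by the trivial regime); the margin $0.91u$ versus roughly $0.7u$ is comfortable enough to absorb such a constant. Also note the paper's two definitions of the truncation region ($|x|\le\sigma\sqrt{m}$ per coordinate versus $\|\vx\|_\infty\le\sigma$) are mutually inconsistent as written; your argument is robust to either choice since both yield $\|\vx\|\le\sqrt{m}\sigma$ on the support, but you should fix one convention explicitly.
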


\begin{definition}[Learning with errors]
Let $n,m,q\in\mathbb{N}$ and $\chi$ be a distribution over $\ZZ_q^m$, where $n,m,q,\chi$ depends on the security parameter $\secp$.    
We say that the learning with errors (LWE) problem $\LWE_{n,m,q,\chi}$ is quantumly hard if for any QPT adversary $\qA$, 
\[
\abs{\Pr[\qA(\mA,\vs^\trans \mA+\ve^\trans)= 1]-
\Pr[\qA(\mA,\vu)= 1]
}\le \negl(\secp)
\]
where $\mA\gets \ZZ_q^{n\times m}$, $\vs\gets \ZZ_q^n$, and $\ve\gets \chi$. 
\end{definition}

\begin{lemma}[Lattice trapdoors~\cite{EC:MicPei12}]\label{lem:lattice_trapdoor}
    Let $n,m,q\in \NN$ with $m=\Omega(n\log q)$. There exists a pair of PPT algorithms $(\TrapGen,\Invert)$ satisfying the following. 
    \begin{itemize}
        \item $\TrapGen(1^n,1^m,q)\rightarrow (\mA,\mT):$ On input $1^n$, $1^m$, and $q$, it outputs a matrix $\mA\in \ZZ_q^{n\times m}$ and a trapdoor $\mT$. The statistical distance between the distribution of $\mA$  generated by $\TrapGen(1^n,1^m,q)$ and the uniform distribution over $\ZZ_q^{n\times m}$ is negligible in $n$. 
        \item $\Invert(\mA,\mT,\vv)\rightarrow (\vs',\ve'):$ On input $\mA$, $\mT$, and $\vv$, it returns vectors $\vs' \in \ZZ_q^n$ and $\ve'\in \ZZ_q^m$. 
        The following holds for some constant $C>0$: 
        If $(\mA,\mT)$ is generated by $\TrapGen(1^n,1^m,q)$ and $\vv=\vs^\trans\mA+\ve^\trans$ for $\vs\in \ZZ_q^n$ and $\ve\in \ZZ_q^m$ such that $\|\ve\|\le q/(C\sqrt{n\log q})$, $\Invert(\mA,\mT,\vv)$ outputs $(\vs,\ve)$ with an overwhelming probability in $n$.
    \end{itemize}
\end{lemma}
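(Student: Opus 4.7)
The plan is to follow the Micciancio--Peikert trapdoor construction~\cite{EC:MicPei12} based on a gadget matrix. First, I would introduce the gadget matrix $\mG \in \ZZ_q^{n\times w}$ with $w = n\lceil \log q \rceil$, defined as $\mG = \mI_n \otimes \vg^\trans$ where $\vg = (1, 2, 4, \ldots, 2^{\lceil \log q \rceil - 1})$. The key property of $\mG$ is that the LWE inversion problem for $\mG$---recovering $\vs$ from $\vs^\trans \mG + \ve^\trans$ when $\|\ve\|_\infty \le q/4$---admits a simple coordinate-wise decoding algorithm that reads off the bits of $\vs$.

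For $\TrapGen$, I would sample $\bar{\mA} \gets \ZZ_q^{n \times (m-w)}$ uniformly and $\mR \in \ZZ^{(m-w) \times w}$ from a distribution of short matrices (e.g., entries drawn uniformly from $\{-1, 0, 1\}$), and output $\mA = [\,\bar{\mA} \mid \mG - \bar{\mA}\mR\,]$ together with the trapdoor $\mT = \mR$. Statistical closeness of $\mA$ to uniform follows from a standard application of the leftover hash lemma to $[\,\bar{\mA} \mid \bar{\mA}\mR\,]$, which demands $m - w = \Omega(n \log q)$; this is ensured by the hypothesis $m = \Omega(n \log q)$ upon taking the hidden constant large enough.

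For $\Invert$, the main identity is that for $\mA$ as above we have
\[
\vv^\trans \begin{bmatrix} \mR \\ \mI \end{bmatrix} = \vs^\trans\bigl(\bar{\mA}\mR + \mG - \bar{\mA}\mR\bigr) + \ve^\trans \begin{bmatrix} \mR \\ \mI \end{bmatrix} = \vs^\trans \mG + \ve'^\trans,
\]
where the induced noise $\ve'^\trans = \ve^\trans [\,\mR^\trans \mid \mI\,]^\trans$ satisfies $\|\ve'\|_\infty = O(\sqrt{m} \cdot \|\ve\|)$ because the entries of $\mR$ are bounded by $1$. Applying gadget-based decoding to $\vs^\trans \mG + \ve'^\trans$ recovers $\vs$, and then $\ve$ is read off as $\vv^\trans - \vs^\trans \mA$.

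The main technical obstacle is calibrating the constant $C$: gadget inversion succeeds when $\|\ve'\|_\infty \le q/4$, so we need $\|\ve\| \le q/(C\sqrt{n \log q})$ for $C$ chosen in accordance with a spectral-norm bound on $\mR$. The standard route is to invoke random-matrix concentration (e.g., Vershynin-style bounds) to get $\|\mR\|_2 = O(\sqrt{m})$ with overwhelming probability; combined with $m = \Theta(n \log q)$, this yields the claimed $q/(C\sqrt{n \log q})$ threshold. I expect the careful tracking of these noise-growth constants---together with ensuring compatibility across the uniformity argument for $\bar{\mA}$ and the correctness argument for gadget decoding---to be the most delicate bookkeeping step, though all pieces are essentially standard.
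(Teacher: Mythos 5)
The paper states this lemma as an imported result from~\cite{EC:MicPei12} and gives no proof of its own, so there is nothing internal to compare against; your sketch is a faithful reconstruction of the standard Micciancio--Peikert gadget-trapdoor argument ($\mA=[\,\bar{\mA}\mid \mG-\bar{\mA}\mR\,]$ with $\mT=\mR$, leftover hash lemma for uniformity, and the identity $\vv^\trans\bigl[\begin{smallmatrix}\mR\\ \mI\end{smallmatrix}\bigr]=\vs^\trans\mG+\ve'^\trans$ for inversion), and it is correct. Two small remarks on the bookkeeping. First, the Vershynin-style spectral bound is unnecessary: each column of $\bigl[\begin{smallmatrix}\mR\\ \mI\end{smallmatrix}\bigr]$ has Euclidean norm at most $\sqrt{m}$, so Cauchy--Schwarz already gives $\|\ve'\|_\infty\le\sqrt{m}\,\|\ve\|$ deterministically, which is all you use. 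Second, your derivation yields the tolerance $\|\ve\|\le q/(C\sqrt{m})$, which matches the stated $q/(C\sqrt{n\log q})$ only when $m=\Theta(n\log q)$ (as you in fact write), whereas the lemma's hypothesis is merely $m=\Omega(n\log q)$; for $m\gg n\log q$ the constant $C$ would have to absorb the ratio $\sqrt{m/(n\log q)}$ and is then no longer an absolute constant. This is a slight imprecision in the lemma as stated rather than a flaw in your argument, but it is worth noting since the paper later instantiates $m=3n^2$.
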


\begin{lemma}[\cite{STOC:GenPeiVai08,C:PeiVaiWat08}\footnote{In \cite{STOC:GenPeiVai08,C:PeiVaiWat08}, they consider the untruncated discrete Gaussian distribution but this does not affect the statement since the truncated and untruncated discrete Gaussian distributions are negligibly close.}]\label{lem:messy}
Let $n,m\in \mathbb{N}$ and $q$ be a prime. For any $\delta,\tau>0$, let $M_{\delta,\tau}\subseteq \ZZ_q^{n\times m}\times \ZZ_q^m$ be the subset consisting of $(\mA,\vv)\in \ZZ_q^{n\times m}\times \ZZ_q^m$ such that the distribution $\{(\mA\vt,\vv^\trans \vt):\vt\gets D_{\ZZ_q^m,\tau}\}$  has statistical distance within at most $\delta$ from the uniform distribution.    There is a PPT algorithm $\IsMessy$ with  single-bit outputs that satisfies the following. 
If $m\ge 2(n+1)\log q$ and $\tau\ge \sqrt{qm}\log^2 m$, there is $\delta=\negl(n)$ such that  for an overwhelming fraction of $(\mA,\mT)\gets \TrapGen(1^n,1^m,q)$,  the following properties hold.
\begin{itemize}
%\item We have $\Pr[(\mA,\vv)\in M_{\delta,\tau} \mid (\mA,\vv)\gets \ZZ_q^{n\times m}\times \ZZ_q^m]\ge 1-2q^{-n}$.
%\item There is a PPT algorithm $\IsMessy$ with  single-bit outputs such that for an overwhelming fraction of $(\mA,\mT)\gets \TrapGen(1^n,1^m,q)$, the following hold
%\begin{enumerate}
\item For all but an at most $(1/2\sqrt{q})^m$ fraction of $\vv\in \ZZ_q^m$, $\IsMessy(\mA,\mT,\vv)$ returns $1$ with overwhelming probability.  
\item For any $\vv\in \ZZ_q^m$ such that $(\mA,\vv)\notin M_{\delta,\tau}$, $\IsMessy(\mA,\mT,\vv)$ returns $0$ with an overwhelming probability. 
%\end{enumerate}
\end{itemize}
\end{lemma}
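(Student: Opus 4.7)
The plan is to follow the approach of Gentry--Peikert--Vaikuntanathan~\cite{STOC:GenPeiVai08} and Peikert--Vaikuntanathan--Waters~\cite{C:PeiVaiWat08}, since the statement is essentially a repackaging of their construction. The key idea is that the ``messy'' property of a pair $(\mA,\vv)$ admits a clean algebraic characterization: it is equivalent (up to appropriate parameters) to the condition that the $(n{+}1)$-dimensional $q$-ary lattice $\Lambda_q^\perp([\mA;\vv^\top])$ has a small smoothing parameter. By the smoothing lemma for discrete Gaussians, this is exactly what ensures that $\{(\mA\vt,\vv^\top\vt):\vt\gets D_{\ZZ_q^m,\tau}\}$ is within statistical distance $\delta=\negl(n)$ of uniform over $\ZZ_q^{n+1}$. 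The role of the trapdoor $\mT$ will be to provide a short basis (or equivalent inversion capability) that lets us distinguish efficiently whether this smoothing condition holds.

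The algorithm $\IsMessy(\mA,\mT,\vv)$ will therefore try to ``decode'' $\vv$ against $\mA$ using $\mT$: using \cref{lem:lattice_trapdoor}, attempt to write $\vv^\top=\vs^\top\mA+\ve^\top$ for some short $\ve$ (concretely, $\|\ve\|\le q/(C\sqrt{n\log q})$). If the inversion succeeds, then $\vv$ lies close to the row-span of $\mA$, so $\vv^\top\vt$ is highly correlated with $\mA\vt$ through $\vs$ and the pair fails to be messy; output $0$. Otherwise, the failure of inversion certifies that $\vv$ is far enough from $\mA$'s row-span for the $(n{+}1)$-dimensional smoothing condition to hold, and we output $1$. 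Correctness in the ``soundness'' direction (second bullet) follows because any $\vv$ with $(\mA,\vv)\notin M_{\delta,\tau}$ must, by the contrapositive of the smoothing argument, admit such a short decomposition, which the trapdoor-based inverter will find with overwhelming probability.

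The ``completeness'' bullet—that all but a $(1/2\sqrt{q})^m$ fraction of $\vv$ are accepted as messy—follows from a volume/counting argument. The set of bad $\vv$ of the form $\mA^\top\vs+\ve$ with $\vs\in\ZZ_q^n$ and $\|\ve\|\le\sigma\sqrt{m}$ has cardinality at most $q^n\cdot(2\sigma\sqrt{m}+1)^m$; dividing by $q^m$ and using the hypothesis $m\ge 2(n+1)\log q$ to absorb the $q^n$ factor (along with a suitable choice of $\sigma$ relative to $\tau$) yields the claimed $(1/2\sqrt{q})^m$ bound. The hypothesis $\tau\ge\sqrt{qm}\log^2 m$ is precisely what is needed so that the smoothing parameter of $\Lambda_q^\perp([\mA;\vv^\top])$ is below $\tau$ for every messy $\vv$, giving the negligible $\delta$.

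The main obstacle I expect is keeping the two parameter regimes consistent: the Gaussian width $\sigma$ used implicitly inside $\IsMessy$ (governed by the inversion radius $q/(C\sqrt{n\log q})$ from \cref{lem:lattice_trapdoor}) must be large enough that ``inversion failed'' really certifies messiness, yet small enough that the volume bound produces the $(1/2\sqrt q)^m$ fraction. Reconciling these two requirements is the delicate quantitative step; once it is pinned down, the rest of the argument is a direct application of the smoothing lemma and the trapdoor machinery already laid out in \cref{lem:lattice_trapdoor}.
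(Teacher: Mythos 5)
First, note that the paper does not prove this lemma: it is imported verbatim (with a footnote about truncation) from \cite{STOC:GenPeiVai08,C:PeiVaiWat08}, so there is no in-paper proof to compare against. Measured against the argument in those works, your high-level architecture is the right one — messiness is indeed characterized by the smoothing parameter of the $(n{+}1)$-dimensional $q$-ary lattice attached to $[\mA;\vv^\trans]$, the trapdoor is used for bounded-distance decoding, and the first bullet is a counting/volume bound — and your parameter worries do reconcile (an acceptance threshold of roughly $\sqrt{q}$ in Euclidean norm sits comfortably inside the decoding radius $q/(C\sqrt{n\log q})$ of \cref{lem:lattice_trapdoor}, and is small enough that the counting bound yields the $(1/2\sqrt{q})^m$ fraction under $m\ge 2(n+1)\log q$).

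The genuine gap is in your soundness step. The dual lattice controlling the distribution $(\mA\vt,\vv^\trans\vt)$ is generated by \emph{all} rows of $[\mA;\vv^\trans]$, so its short vectors have the form $\mA^\trans\vs+z\vv \bmod q$ for arbitrary $z\in\ZZ_q$, not only $z=1$. Consequently $(\mA,\vv)\notin M_{\delta,\tau}$ only guarantees that \emph{some nonzero multiple} $z\vv$ lies close to $\Lambda(\mA^\trans)$; it does not guarantee that $\vv$ itself admits a decomposition $\vv^\trans=\vs^\trans\mA+\ve^\trans$ with short $\ve$. Your claim that ``any $\vv$ with $(\mA,\vv)\notin M_{\delta,\tau}$ must admit such a short decomposition'' is therefore false: take $\vv=z^{-1}(\mA^\trans\vs+\ve)$ for a tiny $\ve$ and, say, $z=2$; then $2\vv$ is within noise distance of the lattice (a single large Fourier coefficient already pushes the statistical distance to nearly $1/2$, so the pair is not messy), yet $z^{-1}\ve\bmod q$ is generically huge, so $\vv$ is far from $\Lambda(\mA^\trans)$, your decoder fails, and your $\IsMessy$ wrongly outputs $1$. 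The cited works repair this by running the decoder on every nonzero multiple $z\vv$ and outputting $0$ if any succeeds (the union bound over $z$ costs a factor $q\le q^{n+1}\le q^{m/2-n-1}/(2(2B+1))^m$-slack already absorbed by $m\ge 2(n+1)\log q$, so the first bullet survives); be aware, though, that this enumeration is polynomial time only for polynomial $q$, whereas this paper instantiates $q\approx 2^{\secp}$, so even the standard fix requires additional care in this parameter regime.
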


\paragraph{\bf Construction} 
It is easy to see that all the properties required for special dual-mode OT are preserved under parallel repetition. Therefore, it suffices to construct it for the case of $\ell=1$. 
The scheme is described below, where the parameters are specified below the description. 
\begin{description} 
\item[$\CRSGen(1^\secp,\mode )$:]
Given the security parameter $1^\secp$ and $\mode\in \{1,2\}$, generate $(\mA,\mT)\gets \TrapGen(1^n,1^m,q)$. 
\begin{itemize}
        \item If $\mode=1$ (i.e., in the extractable mode), choose $\vv\gets \ZZ_q^m$. 
        \item If $\mode=2$ (i.e., in the hiding mode), choose $\vs\gets \ZZ_q^n$ and 
        $\ve\gets D_{\ZZ_q^m,\sigma}$
        and compute $\vv^\trans=\vs^\trans\mA+\ve^\trans$.
\end{itemize}
Output $\crs=(\mA,\vv)$ and $\td=(\mA,\mT,\vv)$.  

\item[$\Receive{1}(\crs,b)$:] 
On input $\crs=(\mA,\vv)$ and $b\in \bit$, choose $\vr\gets \ZZ_q^n$
and   $\ve^\prime\gets D_{\ZZ_q^m,\sigma^\prime}$
, compute $\vu_b^\trans=\vr^\trans\mA+{\ve^\prime}^\trans$
and $\vu_{1-b}^\trans=\vu_b^\trans+(-1)^b\vv^\trans$
, and output $\msg^{(1)}=\vu_0$ and $\st=\vr$.  
\item[$\Send(\crs,\msg^{(1)}, (z_0,z_1) )\ra\msg^{(2)}$:] 
On input $\crs$, $\msg^{(1)}=\vu_0$,  and $(z_0,z_1)\in\bit^2$, let $\vu_1=\vu_0+\vv$, 
choose $\vt_b\gets D_{\ZZ_q^m,\tau}$,  compute $\vw_b=\left(\begin{matrix}
\mA\\
\vu_b^\trans
\end{matrix}\right)\vt_b+
\left(\begin{matrix}
\mathbf{0}\\
z_b\lfloor q/2\rfloor \mod q
\end{matrix}\right)
$ for $b\in \bit$, and output $\msg^{(2)}=(\vw_0,\vw_1)$.  
\item[$\Receive{2}(\crs,b, \st, \msg^{(2)})\ra z$:] 
On input $\crs$, $b\in\bit$, $\st=\vr$, and $\msg^{(2)}=(\vw_0,\vw_1)$, 
let $\vw_{b,[1,n]}\in \ZZ_q^n$ be the vector consisting of the first $n$ coordinates of $\vw_b$ and $w_{b,n+1}\in \ZZ_q$ be the remaining coordinate, 
output $0$ if $\left|w_{b,n+1}-\vr^\trans\vw_{b,[1,n]} \mod q\right|<q/4$ and output $1$ otherwise. 
\end{description}
\paragraph{\bf Parameters}
We choose the parameters $n,m,q,\sigma,\sigma^\prime,\tau$ so that the following hold.
\begin{itemize}
\item $m\ge 2(n+1)\log q$
\item $\LWE_{n,m,q,D_{\ZZ_q^m,\sigma}}$ is quantumly hard.
We need $O(q/\sigma) < 2^{n^\theta}$ for some $0<\theta<1$. 
\item $\sigma'/\sigma=2^{\omega(\log \secp)}$
\item $\sigma^\prime\le q/(C\sqrt{nm\log q})$ where $C$ is the constant in \Cref{lem:lattice_trapdoor}. 
\item $\tau\ge \sqrt{qm}\log^2 m$
\item $m^2\sigma'\tau\le q/8$
\end{itemize}
For example, we can choose the parameters as follows (for sufficiently large $\secp$):
\[
n=\secp^c, \quad 
m=3n^2,\quad q\approx 2^\secp,  \quad 
\sigma=n^2\quad
\sigma^\prime=2^{\secp /6}, \quad
\tau=2^{2\secp /3}.
\]
where $c>1$ is some constant. 
%\synote{Taking $q=2^n$ may not be secure due to LLL. I changed the parameters so that the noise-to-modulus ratio is subexpo in $n$ rather than exp in $n$. }\takashi{You are right! I somehow misunderstood that $2^n/n$ was subexponential.}

\paragraph{\bf Security}
\begin{theorem}
If $\LWE_{n,m,q,D_{\ZZ_q^m,\sigma}}$ is quantumly hard, 
    then the above scheme satisfies the requirements of special dual-mode OT (\Cref{def:dual-mode-OT}).  
\end{theorem}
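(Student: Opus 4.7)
The plan is to verify each of the seven properties of \cref{def:dual-mode-OT} in turn for the PVW-style construction, where the chosen parameters are essentially tailored so that every property goes through, and then to isolate the efficient state superposition property as the main new technical step. Three of the properties are routine and I would dispatch them quickly. Evaluation correctness follows by computing $w_{b,n+1} - \vr^\trans \vw_{b,[1,n]} \equiv {\ve'}^\trans \vt_b + z_b \lfloor q/2 \rfloor \pmod{q}$ and bounding $|{\ve'}^\trans \vt_b| \le m \sigma' \tau \le q/8$. Mode indistinguishability is a direct LWE reduction using that $\mA$ from $\TrapGen$ is statistically close to uniform, so one can replace $\vv = \vs^\trans \mA + \ve^\trans$ with uniform. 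Statistical security against malicious senders in the hiding mode follows from noise smudging (\cref{lem:smudging}): the $b=0$ and $b=1$ distributions of $\msg^{(1)} = \vu_0$ differ by $\vv$, and the gap $\sigma'/\sigma = 2^{\omega(\log \secp)}$ makes the smudging error negligible after substituting $\vr - \vs$ for $\vr$.

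The unique state property and the two trapdoor-based properties use the lattice toolkit. Unique state reduces to uniqueness of LWE decoding: if $(\vr_1, \ve'_1), (\vr_2, \ve'_2) \in \Supp(\Receive{1}(\crs, b))$ yield the same $\vu_0$, then $(\vr_1 - \vr_2)^\trans \mA = (\ve'_2 - \ve'_1)^\trans$ with right-hand-side norm at most $2 \sigma' \sqrt{m} \le q/(C\sqrt{n \log q})$, which by \cref{lem:lattice_trapdoor} forces $\vr_1 = \vr_2$; the argument is oblivious to the distribution of $\vv$ and so applies in both modes. State recoverability in the hiding mode is implemented by $\StaRcv(\td, \vu_0)$ that runs $\Invert(\mA, \mT, \vu_0)$ and $\Invert(\mA, \mT, \vu_0 + \vv)$ and outputs the recovered secret parts as $(\alpha^0, \alpha^1)$; in the hiding mode both inputs are valid LWE samples. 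Extractability in the extractable mode uses $\IsMessy$ (\cref{lem:messy}): the extractor declares $b = 0$ if $(\mA, \vu_0)$ is non-messy and $b = 1$ otherwise, which is well-defined since $\vv$ being uniform makes at least one of $(\mA, \vu_0), (\mA, \vu_0 + \vv)$ messy with overwhelming probability (otherwise $\vv$ itself would be an LWE sample with short error). The simulator $\Sim$ then generates $\vw_b$ honestly from $z_b$ and samples $\vw_{1-b}$ from a distribution that the messy property guarantees is statistically close to uniform and hence independent of $z_{1-b}$; indistinguishability from $\Send$ follows by a standard hybrid argument.

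The step I expect to be the main obstacle is efficient state superposition, since the classical sampling of $\Receive{1}$ would leave the output entangled with the randomness. My plan is: (i)~prepare an approximate discrete-Gaussian superposition $\sum_{\ve'} \sqrt{D_{\ZZ_q^m,\sigma'}(\ve')} \ket{\ve'}$ up to negligible trace distance using standard Gaussian state preparation; (ii)~starting from $\sum_b \alpha_b \ket{b}$, tensor in the uniform superposition over $\vr \in \ZZ_q^n$ and the Gaussian superposition over $\ve'$; (iii)~coherently compute $\vu_0 = \vr^\trans \mA + {\ve'}^\trans - b \vv^\trans$ into a fresh register; (iv)~uncompute the $\ve'$ register via the classical inverse relation $\ve' = \vu_0 - \vr^\trans \mA + b \vv^\trans$, a function of $(b, \vr, \vu_0, \crs)$, which XORs the register back to $\ket{0}$. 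Combined with the unique state property, so that after grouping by $\vu_0$ exactly one $\vr$ carries nonzero weight for each $(b, \vu_0)$ in the honest support, the resulting state is negligibly close to $\sum_b \alpha_b \sum_{\vu_0} \sqrt{p_{b \to \vu_0}} \ket{b} \ket{\st_{b \to \vu_0}} \ket{\vu_0}$. The hard part will be carefully tracking the trace-distance error from the Gaussian superposition preparation together with the negligible failure probability of the uniqueness argument, so as to confirm the total error remains negligible for our parameter regime.
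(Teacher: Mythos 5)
Your proposal is correct and follows essentially the same route as the paper: it verifies the seven properties one by one using the same ingredients (the lattice-trapdoor inversion for unique state and state recoverability, noise smudging for sender security, $\IsMessy$ for extraction, and a Grover--Rudolph Gaussian superposition followed by uncomputation of the error register for efficient state superposition). The only place where the paper is more careful is the claim that at least one of $(\mA,\vu_0)$, $(\mA,\vu_0+\vv)$ is messy: rather than your heuristic that $\vv$ would otherwise be an LWE sample, the paper uses the counting bound from \cref{lem:messy} (the non-messy set $G$ has $|G|\le(\sqrt{q}/2)^m$, so the bad $\vv$'s number at most $|G|^2\le(q/4)^m$), which gives the needed ``for all adversarial $\vu_0$'' guarantee over the choice of $\vv$.
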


\begin{description}
\item[Evaluation correctness:]For any $b \in \bin$, $\mode \in \{1,2\}$, and $(z_0,z_1)\in \bit^{2}$,
suppose that $(\crs,\td) \la \CRSGen(1^\secp,\mode )$, 
$(\msg^{(1)},\st) \la \Receive{1}(\crs,b)$, 
$\msg^{(2)} \la \Send(\crs,\msg^{(1)}, (z_0,z_1) )$. 
Using the notations in the description of the scheme, we have 
\begin{align}
&\left|w_{b,n+1} -\vr^\trans\vw_{b,[1,n]}\mod q\right|\\
&=\left|\vu_b^\trans\vt_b+z_b\lfloor q/2\rfloor- 
\vr^\trans \mA \vt_b \mod q\right|\\
&=\left|(\vr^\trans\mA+{\ve^\prime}^\trans)\vt_b+z_b\lfloor q/2\rfloor- 
\vr^\trans \mA \vt_b \mod q\right|\\
&=\left|{\ve^\prime}^\trans\vt_b+z_b\lfloor q/2\rfloor \mod q \right|
\end{align}
Since $\ve^\prime$ is sampled from $D_{\ZZ_q^m,\sigma'}$ and $\vt$ is sampled from $D_{\ZZ_q^m,\tau}$, we have $\|\ve^\prime\|_\infty\le \sqrt{m}\sigma^\prime$ and $\|\vt_b\|_\infty\le \sqrt{m}\tau$.  
Thus, 
$|{\ve^\prime}^\trans\vt_b|\le m^2\sigma^\prime \tau\le q/8$. 
Thus, when $z_b=0$, we have  
$
\left|w_{b,n+1} -\vr^\trans\vw_{b,[1,n]}\mod q\right|\le q/8<q/4.  
$
Similarly, 
 when $z_b=1$, we have  
$
\left|w_{b,n+1} -\vr^\trans\vw_{b,[1,n]}\mod q\right|\ge \lfloor q/2\rfloor -q/8>q/4.  
$
The above together imply the evaluation correctness. 
\item[Unique state:]
Let 
$\mode\in \{1,2\}$, 
$(\crs=(\mA,\vv),\td=(\mA,\mT,\vv)) \in \Supp(\CRSGen(1^\secp,\mode))$, $b\in \bin$,  and $(\msg^{(1)}=\vu_0,\st=\vr) \in \Supp(\Receive{1}(\crs,b))$.
Then we have $\vu_0=\vr^\trans\mA+{\ve^\prime}^\trans-b\vv^\trans$ for some $\ve^\prime$ such that $\|\ve^\prime\|_\infty\le \sigma^\prime\le q/(C\sqrt{nm\log q})$. 
By \Cref{lem:lattice_trapdoor}, there is a PPT algorithm that recovers $\st=\vr$ from $\mA$, $\mT$, $b$, $\vv$, and $\msg^{(1)}=\vu_0$. This implies the uniqueness of the state. 
\item[Mode indistinguishability:]
The only difference between the two modes is that $\vv$ is uniformly random when $\mode=1$ and is an LWE instance when $\mode=2$. Thus, the mode indistinguishability directly follows from the assumed quantum hardness of $\LWE_{n,m,q,D_{\ZZ_q^m,\sigma}}$. 
\item[Statistical security against malicious senders in the hiding mode:]
Fix $\crs=(\mA,\vv)$ in the support of $\CRSGen(1^\secp,2)$. 
It satisfies $\vv^\trans=\vs^\trans\mA+\ve^\trans$ for some $\vs\in \ZZ_q^n$ and 
        $\ve\in \ZZ_q^m$ such that $\|\ve\|_\infty\le \sigma$.  
Suppose $(\msg^{(1)}_b,\st)\lrun \Receive{1}(\crs,b)$ for $b\in \bin$. 
Then we can write 
$$\msg^{(1)}_0=\vr_0^\trans\mA+{\ve_0^\prime}^\trans$$ 
and 
$$\msg^{(1)}_1=\vr_1^\trans\mA+{\ve_1^\prime}^\trans-\vv
=(\vr_1^\trans-\vs^\trans)\mA+({\ve_1^\prime}^\trans-\ve^\trans)$$  
for 
$\vr_0,\vr_1\gets \ZZ_q^n$
and  $\ve_0^\prime,\ve_1^\prime\gets D_{\ZZ_q^m,\sigma^\prime}$. 
Since $\vr_1^\trans$ is uniformly random, $(\vr_1^\trans-\vs^\trans)$ is also uniformly random. Moreover, 
since $\|\ve\|\le \sqrt{m}\sigma$, 
the statistical distance between the distributions of ${\ve_0^\prime}^\trans$ and $({\ve_1^\prime}^\trans-\ve^\trans)$ is at most 
$\sqrt{2\left(1-\exp\left(-2\pi m \sigma/\sigma^\prime\right)
\right)}\le \sqrt{4\pi m\sigma/\sigma'} =\negl(\secp)$
by \Cref{lem:smudging}  
and $\sigma'/\sigma=2^{\omega(\log \secp)}$.  Thus, the distributions of $\msg_0^{(1)}$ and $\msg_1^{(1)}$ are statistically negligibly close in $\secp$. 
\item[State recoverability in the hiding mode:]
Let $\StaRcv$ be an algorithm that works as follows. 
\begin{description}
\item[$\StaRcv(\td,\msg^{(1)})$:]
On input $\td=(\mA,\mT,\vv)$ and $\msg^{(1)}=\vu_0$, 
compute $\vu_1=\vu_0+\vv$, 
run $(\vr_b,\ve_b^\prime) \gets \Invert(\mA,\mT,\vu_b)$ for $b\in\bit$ and output $(\vr_0,\vr_1)$.  
\end{description}
Suppose $(\crs,\td) \la \CRSGen(1^\secp,2 )$ and 
$(\msg^{(1)},\st) \la \Receive{1}(\crs,b)$.
Then $\msg^{(1)}=\vu_0=\vr^\trans\mA+{\ve^\prime}^\trans-b \vv$  and $\st=\vr$ where $\vr\gets \ZZ_q^n$
and   $\ve^\prime\gets D_{\ZZ_q^m,\sigma^\prime}$. 
Then we have $\vu_b^\trans=\vu_0^\trans+b \vv=\vr^\trans\mA+{\ve^\prime}^\trans$.
Thus, for $(\vr_0,\vr_1)$ output by $\StaRcv(\td,\msg^{(1)})$, we have $\vr_b=\vr$ by \Cref{lem:lattice_trapdoor} and  $\|\ve^\prime\|\le \sqrt{m}\sigma^\prime \le q/(C\sqrt{n\log q})$. 
\item[Extractability in the extraction mode:]
Let $\Ext$ and $\Sim$ be algorithms that work as follows. 
\begin{description}
    \item[$\Ext(\td,\msg^{(1)})$:]
    On input $\td=(\mA,\mT,\vv)$ and $\msg^{(1)}=\vu_0$, compute $\vu_1=\vu_0+\vv$ 
    and run $\beta_b\gets \IsMessy(\mA,\mT,\vu_b)$. 
    If $\beta_0=1$, output $1$, else if $\beta_1=1$, output $0$, otherwise output $\bot$. 
       \item[$\Sim(\crs,\msg^{(1)},z_b)$:]
On input $\crs=(\mA,\vv)$, $\msg^{(1)}=\vu_0$,  %, %$b\in \bit$, 
and $z_b\in \bit$,  
Run $(\vw_0,\vw_1)\gets \Send(\crs,\msg^{(1)}, (z_b,z_b) )$ and output $(\vw_0,\vw_1)$.  
%That is, let $\vu_1=\vu_0+\vv$,  choose $\vt_0,\vt_1\gets D_{\ZZ_q^m,\tau}$,  compute $\vw_b=\left(\begin{matrix}\mA\\\vu_b^\trans\end{matrix}\right)\vt_b+\left(\begin{matrix}\mathbf{0}\\z\lfloor q/2\rfloor \mod q\end{matrix}\right)$  
%for $b\in \bit$,   
%choose $\vw_{1-b}\gets \ZZ_q^n$, 
%and output $\msg^{(2)}=(\vw_0,\vw_1)$.  
Note that it uses the same $z_b$ for generating both $\vw_0$ and $\vw_1$. 
\end{description}
Fix $(\mA,\mT)$ that satisfies the two items of \Cref{lem:messy}.  
Let $G\subseteq  \ZZ_q^m$ be the subset in the first item of \Cref{lem:messy}, that is, for 
any $\vv\notin G$, 
$\IsMessy(\mA,\mT,\vv)$ returns $1$ with overwhelming probability.  
Since $G$ consists of at most $(1/2\sqrt{q})^m$ fraction of elements of $\ZZ_q^m$,  $|G|\le (\sqrt{q}/2)^m$. 
Therefore, we have 
\[
\#\{\vv\in \ZZ_q^m: \exists \vu_0\in \ZZ_q^m\text{~s.t.~} \vu_0\in G \land \vu_1\in G \}\le |G|^2\le (q/4)^{m}
\]
where $\vu_1=\vu_0+\vv$. 
Thus, we have 
\[
\Pr_{\vv\gets \ZZ_q^m}[\exists \vu_0\in \ZZ_q^m\text{~s.t.~} \vu_0\in G \land \vu_0+\vv\in G ]\le 4^{-m}. 
\]
Thus, when $\td$ is honestly generated, for any $\msg^{(1)}$, $\Ext(\td,\msg^{(1)})$ returns a non-$\bot$ bit with overwhelming probability. 
 By the second item of \Cref{lem:messy}, 
 when $\Ext(\td,\msg^{(1)})$ returns $b\in \bit$, 
$(\mA,\vu_{1-b})\in M_{\delta,\tau}$ with overwhelming probability where $M_{\delta,\tau}$ is as defined in \Cref{lem:messy}. 
Note that the only difference between $\Send(\crs,\msg^{(1)},(z_0,z_1))$ and $\Sim(\crs,\msg^{(1)},z_b)$ is that $\vw_{1-b}$ is generated using $z_{1-b}$ in the former but using $z_b$ in the latter. 
By the definition of $M_{\delta,\tau}$, when $(\mA,\vu_{1-b})\in M_{\delta,\tau}$, the distribution of $\vw_{1-b}$ generated in either way  %generated by $\Send(\crs,\msg^{(1)},(z_0,z_1))$ 
has statistical distance at most $\delta=\negl(n)$ from the uniform distribution. 
%On the other hand, $\Sim(\td,b,z_b)$ runs $\Send(\crs,\msg^{(1)},(z_0,z_1))$
%Since the only difference between $\Send(\crs,\msg^{(1)},(z_0,z_1))$ and $\Sim(\td,b,z_b)$ is that the way of generating $\vw_{1-b}$ is changed to uniformly sampling from $\ZZ_q^n$. 
Thus, the above implies  extractability in the extraction mode.

\item[Efficient state superposition:]
It suffices to show that 
there is a QPT algorithm that is given $b\in\bit$, $\mA$, and $\vv$, and generates a state that is within negligible trace distance of the following state: 
%the following isometry is implementable in QPT with a negligible error given $\mA$ and $\vv$: 
\[
\sum_{\vr\in \ZZ_q^n,\ve^\prime\in \Supp(D_{\ZZ_q^m,\sigma^\prime})} 
\sqrt{ p_{\vr,\ve^\prime} }
\ket{b} 
\ket{\vr} \ket{\vr^\trans\mA+{\ve^\prime}^\trans+(-1)^b\vv^\trans} 
~
\]
where $p_{\vr,\ve^\prime}$ is the probability that $\Receive{1}(\crs,b)$ chooses the randomness $(\vr,\ve^\prime)$. 

\takashi{I expanded the explanation below.}
This can be done as follows. First, as shown in \cite{JACM:Regev09,JACM:BCMVV21}, the Grover-Rudolph algorithm \cite{GR02} enables us to generate the following  Gaussian superposition up to negligible error:
\[
\sum_{\ve^\prime\in \Supp(D_{\ZZ_q^m,\sigma^\prime})} 
\sqrt{ D_{\ZZ_q^m,\sigma^\prime}(\ve^\prime) }\ket{\ve^\prime}.
\]
We prepare $\ket{b}$ in another register and 
generate
a uniform superposition over $\vr\in \ZZ_q^n$ in yet another register. Then, the whole system can be written as follows:
\begin{align}
&\ket{b}\otimes q^{-n/2}\sum_{\vr\in \ZZ_q^n}\ket{\vr}\otimes \sum_{\ve^\prime\in \Supp(D_{\ZZ_q^m,\sigma^\prime})} 
\sqrt{ D_{\ZZ_q^m,\sigma^\prime}(\ve^\prime) }\ket{\ve^\prime}\\
=&\sum_{\vr\in \ZZ_q^n,\ve^\prime\in \Supp(D_{\ZZ_q^m,\sigma^\prime})} 
\sqrt{ p_{\vr,\ve^\prime}}\ket{b}\ket{\vr}\ket{\ve^\prime}
\end{align}
where the equality follows from $p_{\vr,\ve^\prime}=q^{-n}D_{\ZZ_q^m,\sigma^\prime}(\ve^\prime)$.

By using $\mA$ and $\vv$, we can coherently compute  $\vr^\trans\mA+{\ve^\prime}^\trans+(-1)^b\vv^\trans$ in another register. This yields the following state:
\begin{align}
\sum_{\vr\in \ZZ_q^n,\ve^\prime\in \Supp(D_{\ZZ_q^m,\sigma^\prime})} 
\sqrt{ p_{\vr,\ve^\prime}}\ket{b}\ket{\vr}\ket{\ve^\prime}\ket{\vr^\trans\mA+{\ve^\prime}^\trans+(-1)^b\vv^\trans}.
\end{align}
Finally, uncomputing the third register results in the desired state.

\if0
It is easy to see that this is reduced to preparing the Gaussian superposition below:
\[
\sum_{\ve^\prime\in \Supp(D_{\ZZ_q^m,\sigma^\prime})} 
\sqrt{ D_{\ZZ_q^m,\sigma^\prime}(\ve^\prime) }\ket{\ve^\prime}.
\]
As shown in \cite{JACM:Regev09,JACM:BCMVV21}, this is efficiently doable with a negligible error by using the Grover-Rudolph algorithm \cite{GR02}. \fi
\end{description}

\subsection{From OT to SFE via Garbled Circuit}

We explain how to transform special dual-mode OT into special dual-mode SFE using garbled circuits.
We only provide a sketch since the technique is fairly standard.

\paragraph{Garbled circuit.}
We first quickly review the notion of garbled circuit.
A circuit garbling scheme $\GC$ for circuits is a tuple $(\Garble, \Eval)$ of two PPT algorithms with the following syntax.
\begin{itemize}
\item A garbling algorithm $\Garble$, given a security parameter $1^\sep$ and a circuit $C$ with $n$-bit input, outputs a garbled circuit $\tlC$ and input labels $\{\lab_{i,b}\}_{i \in [n],b \in \bin}$.
\item The evaluation algorithm, given a garbled circuit $\tlC$ and $n$ input labels $\{\lab_i\}_{i \in [n]}$,
outputs $y$. 
\end{itemize}
As correctness, we require $\Eval(\Ctilde, \{\lab_{i,x[i]}\}_{i \in [n]}) = C(x)$ holds for every circuit $C $ with $n$-bit input and $x \in \bin^n$,
where $(\Ctilde,\{\lab_{i,b}\}_{i \in [n],b \in \bin}) \allowbreak \la \Garble(1^\sep, C)$.
As security, we require that for any circuit $C$ and input $x$, $(\Ctilde,\{\lab_{i,x[i]}\}_{i \in [n]})$ can be simulated by a computationally indistinguishable distribution given only $1^{|C|}$ and $C(x)$. 
A circuit garbling scheme can be realized based on OWFs~\cite{FOCS:Yao86}.

\paragraph{Transformation based on garbled circuit.}
We now sketch how to transform a special dual-mode OT scheme $\OT=\OT.(\CRSGen, \Receive{1},\allowbreak \Send, \Receive{2})$ into a special dual-mode SFE scheme $\SFE=\SFE.(\CRSGen, \Receive{1},\allowbreak \Send, \Receive{2})$ using a circuit garbling scheme $\GC=(\Garble,\Eval)$.
\begin{description}
\item[$\SFE.\CRSGen$:] It runs $\OT.\CRSGen$ $n$ times and gets $(\OT.\crs_i,\OT.\td_i)_{i\in[n]}$. It sets $\SFE.\crs:=(\OT.\crs_i)_{i\in[n]}$ and $\SFE.\td:=(\OT.\td_i)_{i\in[n]}$.

\item[$\SFE.\Receive{1}$:]Given $\SFE.\crs:=(\OT.\crs_i)_{i\in[n]}$ and $x\in\bit^n$, it executes $(\OT.\msg^{(1)}_i,\OT.\st_i)\gets\OT.\Receive{1}(\OT.\crs_i,x[i])$ for every $i\in[n]$ and outputs $\SFE.\msg^{(1)}:=(\OT.\msg^{(1)}_i)_{i\in[n]}$ and $\SFE.\st:=(\OT.\st_i)_{i\in[n]}$.

\item[$\SFE.\Send$:]Given $\SFE.\crs:=(\OT.\crs_i)_{i\in[n]}$, $\SFE.\msg^{(1)}:=(\OT.\msg^{(1)}_i)_{i\in[n]}$, and a circuit $C$, it first executes $(\tlC,(\lab_{i,b})_{i\in[n],b\in\bit})\gets\Garble(1^\secp,C)$. It then runs $\OT.\msg^{(2)}_i\gets\OT.\Send(\OT.\crs_i, \OT.\msg^{(1)}_i,(\lab_{i,0},\lab_{i,1}))$ for every $i\in[n]$ and outputs $\SFE.\msg^{(2)}:=(\tlC, (\OT.\msg^{(2)}_i)_{i\in[n]})$.

\item[$\SFE.\Receive{2}$:]Given $\SFE.\crs:=(\OT.\crs_i)_{i\in[n]}$, $x\in\bit^n$, $\SFE.\st:=(\OT.\st_i)_{i\in[n]}$, and $\SFE.\msg^{(2)}:=(\tlC, (\OT.\msg^{(2)}_i)_{i\in[n]})$, it executes $\lab_i\gets\OT.\Receive{2}(\OT.\crs_i,x[i],\OT.\st_i,\OT.\msg^{(2)}_i)$ for every $i\in[n]$ and outputs $y\gets\Eval(\tlC, (\lab_i)_{i\in[ n]})$.

\end{description}

We can easily check that the evaluation correctness of $\SFE$ follows from that of $\OT$ and $\GC$.
$\SFE$ uses $n$ instances of $\OT$ in parallel and uses $\GC$ on top of it.
Especially, the CRS, the trapdoor, and the first message of $\SFE$ consist of just $n$ copies of those of $\OT$.
By this construction, the unique state property, mode indistinguishability, statistical security against malicious senders in the hiding mode, state recoverability in the hiding mode, and efficient state superposition property of $\SFE$ directly follow from those of $\OT$, respectively.
Finally, we can prove the extractability in the extraction mode of $\SFE$ based on that of $\OT$ and the security of $\GC$.
Roughly speaking, the sender message $\SFE.\msg^{(2)}$ of $\SFE$ is composed of $(\tlC, (\OT.\msg^{(2)}_i)_{i\in[n]})$, and the first component can be simulated from $1^{|C|}$ and $C(x)$ by the security of $\GC$ and the second one can be simulated by the extractability in the extraction mode of $\OT$ together with the simulated labels of $\GC$.

\if0
\begin{description}
\item[Setup:]
In the hiding mode, CRS consists of 
a matrix $\mA$ and 
an LWE instance $\vv^\trans=\vs^\trans\mA+\ve^\trans$, and the trapdoor is the trapdoor of $\mA$. 

In the extractable mode, CRS consists of 
a matrix $\mA$ and a uniformly random vector $\vv$, 
and the trapdoor is the trapdoor of $\mA$. 
\item[Receiver's message:] 
Let $b\in \bit$ be the choice bit. Generate an LWE instance $\vu_b=\vr^\trans \mA+\ve'^\trans$, set $\vu_{1-b}=\vu_b+(-1)^b\vv$, send $(\vu_0,\vu_1)$ to the sender and keeps the state $\st=\vr$.
\item[Sender's message:] 
Let $(m_0,m_1)$ be a pair of messages.
For $b\in \bit$, encrypt $m_b$ by Regev encryption, where $(\mA,\vu_0)$ is regarded as a public key, to generate $\ct_b$. Then send $(\ct_0,\ct_1)$ to the receiver.
\item[Receiver's output derivation:]
Decrypt $\ct_b$ using $\vr$ to obtain $m_b$.
\end{description}

It satisfies the desired property:
\begin{itemize} 
\item {\bf Mode indistinguishability.}
This follows from the LWE assumption.
\item {\bf Statistical security against malicious senders in the hiding mode}:  
In the hiding mode, if $\|\ve'\|\gg \|\ve\|$, both $\vu_0$ and $\vu_1$ are distributed almost identically by the standard smudging argument. 
\item {\bf State recoverability in the hiding mode:}
In the hiding mode, for any $b\in \bit$,  
with overwhelming probability over the choice of $\mA,\vv,\vu_b$, there is a unique $\vr$ corresponding to them because the solution of the LWE problem is unique. Moreover, the LWE secret can be recovered by using a trapdoor of $\mA$ by a standard technique. 
\item 
{\bf Extractability in the extractable mode}: 
In the extractable mode, either of $\vu_b$ is an LWE instance but $\vu_{1-b}$ is not with an overwhelming probability. Therefore, $b$ can be extracted by using the trapdoor. 
\item{\bf Efficient state superposition:}
This means the following: 
Given $\mA,\vv$, 
we can efficiently implement an isometry that maps $\ket{b}$ to  
\[
\sum_{\vr,\ve'}P_{\vr,\ve'}\ket{b}\ket{\vr}\ket{\vr^\trans \mA+\ve'^\trans}\ket{\vr^\trans \mA+\ve'^\trans+(-1)^b \vv}
\]
where $P_{\vr,\ve'}$ is the probability that the receiver chooses $(\vr,\ve')$ as its randomness. This is possible because we can efficiently generate a Gaussian superposition. \takashi{If we simply run the classical algorithm of the receiver in superposition, there would be a remaining entanglement with a randomness for sampling $\ve'$, but this can be avoided by directly generating a Gaussian superposition.}
\end{itemize}
\begin{remark}
    "Decomposability of states" of the SFE follows from the fact that the receiver's message of SFE is bit-wise parallel repetition of that of OT. 
\end{remark}
\fi

 %\input{omitted_proofs_lattice}
%%% appendix files here

\else
%%%%% LNCS-style submission version region %%%%%
	\newpage
	 	\appendix
	 	\setcounter{page}{1}
 	{
	\noindent
 	\begin{center}
	{\Large SUPPLEMENTAL MATERIALS}
	\end{center}
 	}
	\setcounter{tocdepth}{2}

	\setcounter{tocdepth}{2}
	\tableofcontents

	\fi
	%%%%%%% Camera-ready region (no appendix) %%%%%%
\fi

\end{document}